\newcommand{\notyet}[1]{}
\newcommand{\squeezelist}{\setlength{\itemsep}{0pt}}
\newtheorem{thm}{Theorem}
\newtheorem{theorem}{{\bf Theorem}}
\newtheorem{lemma}[thm]{Lemma}
\newtheorem{proposition}[thm]{Proposition}
\newtheorem{definition}[thm]{Definition}
\newcommand\connect[1]{\texttt{connect}}
\newcommand\T{{\mathcal T}}
\newcommand\N{{\mathcal N}}
\newcommand\XE{{\mathcal E}}
\newcommand\XX{{\mathcal X}}
\newcommand\OO{{\mathcal O}}
\newcommand\hand{{\sc Hand}}
\newcommand\head{{\sc Head}}
\title{Unfolding Polycube Trees with Constant Refinement\thanks{Partial results for polycube trees (previously called \emph{orthotrees}) of degree 3 or less have appeared in~\cite{DF18}.}}
\author{
Mirela Damian\thanks{Department of Computer Science, Villanova University, Villanova, PA, \tt{mirela.damian@villanova.edu}}
\and
Robin Flatland\thanks{Department of Computer Science, Siena College, Loudonville, NY, \tt{flatland@siena.edu}}
}
\begin{document}
\thispagestyle{empty}
\date{}
\maketitle

\begin{abstract}
We show that every polycube tree can be unfolded with a $4 \times 4$ refinement of the grid faces. 
This is the first constant refinement unfolding result for polycube trees that are not required to be well-separated.  
\end{abstract}

%\vspace{-.1in}
\section{Introduction}
\label{sec:intro}
An \emph{unfolding} of a polyhedron is obtained by cutting its surface in such a way that it can be flattened
in the plane as a simple non-overlapping polygon called a \emph{net}. An \emph{edge unfolding} allows only cuts  
along the polyhedron's edges, while a \emph{general unfolding} 
allows cuts 
anywhere on the surface.
Edge cuts alone are not sufficient to
guarantee an unfolding for non-convex polyhedra~\cite{Bern-Demaine-Eppstein-Kuo-Mantler-Snoeyink-2003,BDDLOORW1998},
however it is unknown whether all non-convex
polyhedra have a general unfolding.
In contrast, all convex polyhedra
have a general unfolding~\cite[Sec.~24.1.1]{Demaine-O'Rourke-2007}, but it is 
unknown whether they all have an edge unfolding~\cite[Ch.~22]{Demaine-O'Rourke-2007}.

% orthogonal polyhedra unfolding
Prior work on unfolding algorithms for non-convex objects has focused on orthogonal polyhedra. 
This class consists of polyhedra whose edges and faces all meet at right angles.
Because not all orthogonal polyhedra have
edge unfoldings~\cite{BDDLOORW1998}, the unfolding algorithms typically
use additional non-edge cuts that  
follow one of two models.
In the \emph{grid unfolding model}, the surface is
subdivided into rectangular \emph{grid faces} by adding
edges where axis-perpendicular planes through each vertex
intersect the surface, and cuts along these added edges
are also allowed.       
%In the \emph{grid unfolding model}, the orthogonal polyhedron is sliced by axis-perpendicular planes passing through each vertex, and cuts are allowed along
%the slicing lines where the planes intersect the surface.
In the \emph{grid refinement model}, each grid face
under the grid unfolding model is further subdivided by
an $(a \times b)$ orthogonal grid, for some positive integers $a, b \ge 1$, and cuts are also allowed
along any of these grid lines.  
%In the third model, the orthogonal polyhedron is assumed to be  
%formed by gluing unit cubes together face to face, and 
%cuts are allowed along any of the cube edges.  and in addition there may be further grid refinement of each cube face.
%

% arbitrary genus-0 algorithms
A series of algorithms have been developed for unfolding arbitrary genus-$0$ orthogonal polyhedra, with each successive algorithm requiring less grid refinement.  
%A key technique common to each of these algorithms is the determination of a spiral path on the surface of the polyhedron that unfolds to a planar monotone staircase, from which the rest of the surface attaches (without overlap) above and below. 
The first such algorithm~\cite{Damian-Flatland-O'Rourke-2007-epsilon} required an exponential amount of grid refinement. This was reduced to quadratic refinement in~\cite{Damian-Demaine-Flatland-2014-delta}, and then to linear in~\cite{Chang2015}.   These ideas were further extended in~\cite{Damian-Demaine-Flatland-2017-O'Rourke-genus2} to unfold arbitrary genus-$2$ orthogonal polyhedra with linear refinement.  

%specialized subclasses, constant or no refinement
The only unfolding algorithms for orthogonal polyhedra that use sublinear refinement are for specialized orthogonal shape classes.  
For example, there exist algorithms for unfolding orthostacks using $1 \times 2$ refinement~\cite{BDDLOORW1998} and  Manhattan Towers using $4 \times 5$ refinement~\cite{Damian-Flatland-O'Rourke-2005-manhattan}.  There also exist unfolding algorithms for several
classes of polyhedra composed of rectangular boxes.  For example, orthotubes~\cite{BDDLOORW1998} and one layer block structures~\cite{Liou-Poon-Wei-2014-onelayer} built of unit cubes with an arbitrary number of unit holes can both be unfolded with cuts restricted to the box edges.  
%
%%%%%%%%%%%%%%%%%%%%%%%%%%%%%%%%%Figure Begin
\begin{figure}[ht]
\centering
\includegraphics[page=1,width=\linewidth]{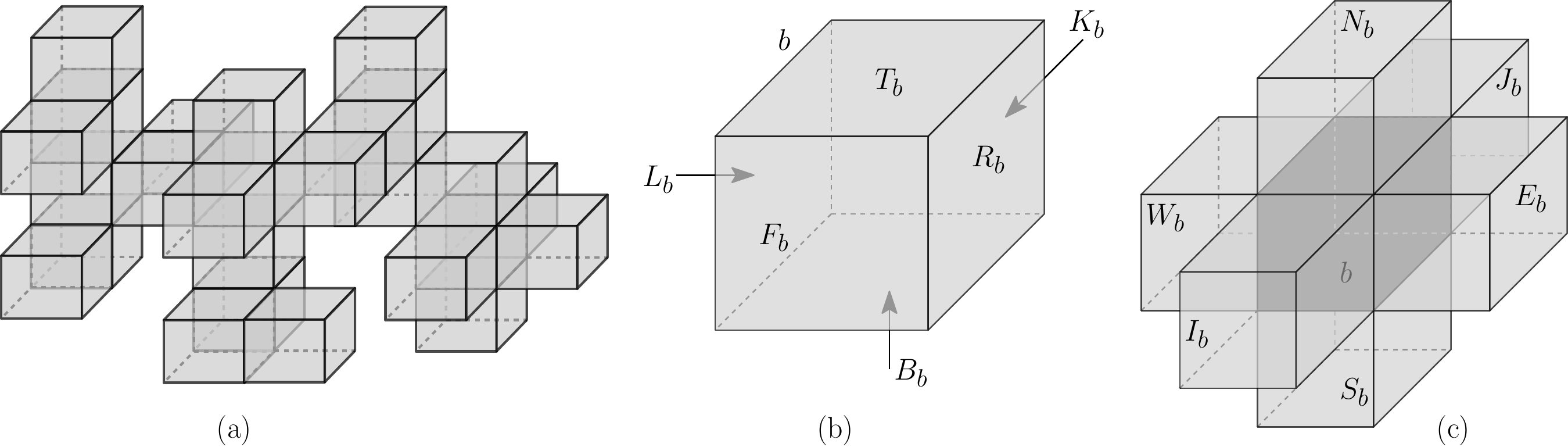}
\caption{(a) A simple polycube tree example. Notation for: (b) $b$'s faces (c) $b$'s neighbors.}
\label{fig:orthotree}
\end{figure}
%%%%%%%%%%%%%%%%%%%%%%%%%%%%%%%%%Figure End
%
Our focus here is on a class of orthogonal polyhedra known as polycube trees.
A \emph{polycube tree} $\OO$ is composed of axis-aligned unit cubes (boxes) glued face to face, whose surface 
is a $2$-manifold and whose dual graph $\T$ is a tree. 
(See~\autoref{fig:orthotree}a for an example.)
In the grid unfolding model, cuts are allowed along any of the 
cube edges. 
Each node in $\T$ is a box in $\OO$ and two nodes are connected by an edge 
if the corresponding boxes are adjacent  in $\OO$ (i.e., if they share a face). 
In this paper we will use the terms \emph{box} and \emph{node} interchangeably. 
The \emph{degree} of a box $b \in \OO$ is defined as the degree of its corresponding node in the dual tree $\T$. 
We select any node of degree one to be the \emph{root} of $\T$. 

In a polycube tree, each box can be classified as either a \emph{leaf}, a \emph{connector}, or a \emph{junction}.
A leaf is a box of degree one; a connector is a box of degree two whose two adjacent boxes
are attached on opposite faces; all other boxes are junctions.  

Because polycube trees are orthogonal polyhedra, they can be unfolded using the general algorithm in~\cite{Chang2015}
with linear refinement. 
Algorithms for unfolding polycube trees using less than linear refinement
have been limited to polycube trees that are
\emph{well-separated}, meaning that no two junction boxes are adjacent. 
In~\cite{Damian-Flatland-Meijer-O'Rourke-2005-orthotrees}, the authors provide an algorithm for grid unfolding well-separated polycube trees.  
Recent work in~\cite{Ho-Chang-Yen-2017-orthographs} shows that the related class of well-separated orthographs (which allow arbitrary genus)
can be unfolded with a $2 \times 1$ refinement.  

In this paper we provide an algorithm for unfolding all polycube trees %of degree up to three 
using a $4 \times 4$ refinement of the cube faces. 
For each box $b$ in $\T$, the algorithm unfolds $b$ and the boxes in the subtree rooted at $b$ recursively. 
Intuitively, the algorithm unfolds surface pieces of $b$ along a carefully constructed
path. When the path reaches a child box of $b$, the child is recursively unfolded and then the path continues on
$b$ again to the next child (if there is one).  The unfolding of $b$ and its subtree is contained within a rectangular region having
two staircase-like bites taken out of it.
This is the first sublinear refinement unfolding result for the class of all polycube trees,
regardless of whether they are well-separated or not.

\section{Terminology}

% notation for faces and adjacent boxes
For any box $b \in \OO$, 
$R_b$ and $L_b$ are the \emph{right} and \emph{left} faces of $b$ (orthogonal to the $x$-axis);
$F_b$ and $K_b$ are the \emph{front} and \emph{back} faces of $b$ (orthogonal to the $z$-axis); and
$T_b$ and $B_b$ are the \emph{top} and \emph{bottom} faces of $b$ (orthogonal to the $y$-axis).
See~\autoref{fig:orthotree}b.
We use a different notation for boxes adjacent to $b$, to clearly distinguish them from faces: 
$E_b$ and $W_b$ are the \emph{east} and \emph{west} neighbors of $b$ (adjacent to
$R_b$ and $L_b$, resp.);
$N_b$ and $S_b$ are the \emph{north} and \emph{south} neighbors of $b$ (adjacent to
$T_b$ and $B_b$, resp.); and
$I_b$ and $J_b$ are the \emph{front} and \emph{back} neighbors of $b$ (adjacent to
$F_b$ and $K_b$, resp.). See~\autoref{fig:orthotree}c. 
We omit the subscript whenever the box $b$ is clear from the context. 
We use combined notations to refer to the east neighbor of $N$ as $NE$, the back neighbor of $NE$ as $NEJ$, and so on.

%
%%%%%%%%%%%%%%%%%%%%%%%%%%%%%%%%%Figure Begin
\begin{figure*}[t]
\centering
\includegraphics[width=\linewidth]{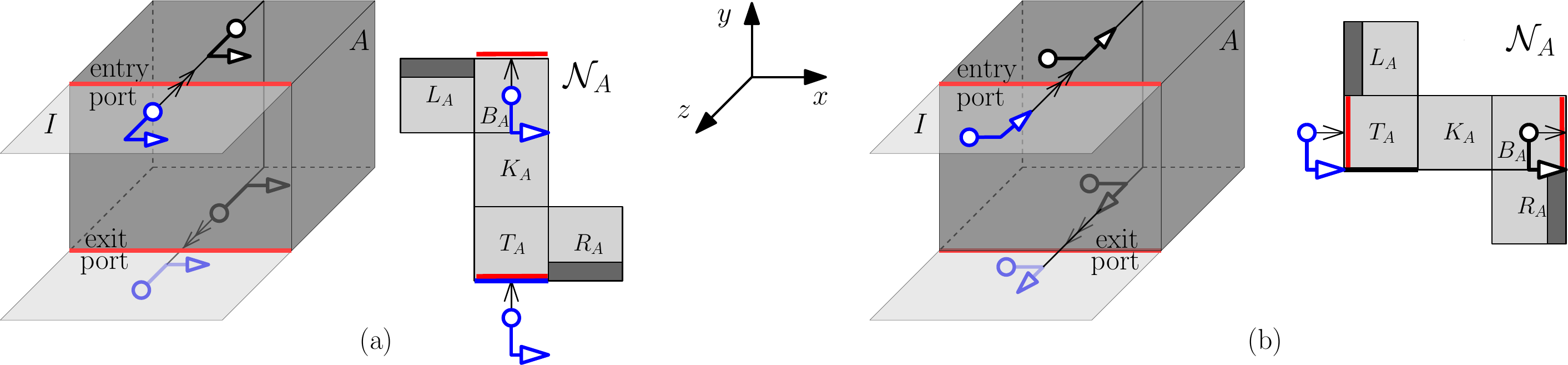}
\caption{(a) \head-first and (b) \hand-first unfolding of leaf box; dark-shaded pieces can be removed without disconnecting the nets.}
\label{fig:degree1}
\end{figure*}
%%%%%%%%%%%%%%%%%%%%%%%%%%%%%%%%%Figure End
%

% open/closed faces, entry/exit ports, unfolding path
If a face of a box $b \in \OO$  is also a face of $\OO$, we call it an \emph{open face}; otherwise, we call it a \emph{closed face}. 
On the closed face shared by $b$ with its parent box in $\T$, we identify a pair of opposite edges, one called the \emph{entry port} and the other called the \emph{exit port} (shown in red and labeled in \autoref{fig:degree1}).  The unfolding of $b$ is determined by an \emph{unfolding path} that starts on $b$ near $b$'s entry port, recursively 
visits all boxes in the subtree $\T_b \subseteq \T$ rooted at $b$, and ends on $b$ near $b$'s exit port. 
We denote by $\N_b$ the unfolding net produced by a recursive unfolding of $b$. For simplicity, we will sometimes omit the word ``recursive'' when referring to a recursive unfolding of a box $b$ and simply call it an \emph{unfolding of} $b$, with the understanding that all boxes in $\T_b$ get unfolded during the process. 

% L-shaped guide and unfolding path directions
To make it easier to visualize the unfolding path, we use an $L$-shaped guide (or simply $L$-guide) with two orthogonal pointers, namely a \emph{\hand\ pointer} and a \emph{\head\ pointer}. (See~\autoref{fig:degree1}, where the  \head\ and the \hand\ pointers are represented by the circle and the arrow, respectively.) 
With very few exceptions, the unfolding path extends in the direction of one of the two pointers. Whenever the unfolding path follows the direction of the \hand, we say that it extends \hand\emph{-first}; 
otherwise, it extends \head\emph{-first}.  %When describing the unfolding path
%on $\OO$, we use the terms \emph{forward} to indicate movement
%of the $L$-guide in the \hand\ direction, \emph{backward} to indicate the direction opposite to the \hand,
%\emph{upward} to indicate the \head\ direction, and \emph{downward} to indicate the direction opposite to the \head.
%
Surface pieces traversed in the direction of the \hand\  (\head) will flatten out horizontally (vertically) in the plane.

As a simple example, consider the unfolding of a leaf box $A$ from~\autoref{fig:degree1}a. The $L$-guide is shown positioned on top of $A$'s parent box $I$ at the entry port. 
The unfolding path extends \head-first across the top, back, and bottom faces of $A$, and ends on the bottom of $A$ at the exit port.   
The resulting unfolding 
net $\N_A$ consists of $A$'s open faces $T_A$, $K_A$, $B_A$, $L_A$, and $R_A$. In all unfolding
illustrations, the outer surface of $\OO$ is shown. 
When describing and illustrating the unfolding of a box $A$, we will assume without loss of generality that the box is in \emph{standard position} (as in \autoref{fig:degree1}a), with its parent $I_A$ attached to its front face $F_A$ and its entry (exit) port on the
top (bottom) edge of $F_A$.

%
%%%%%%%%%%%%%%%%%%%%%%%%%%%%%%%%%Figure Begin
\begin{figure}[ht]
\centering
\includegraphics[page=2,width=\linewidth]{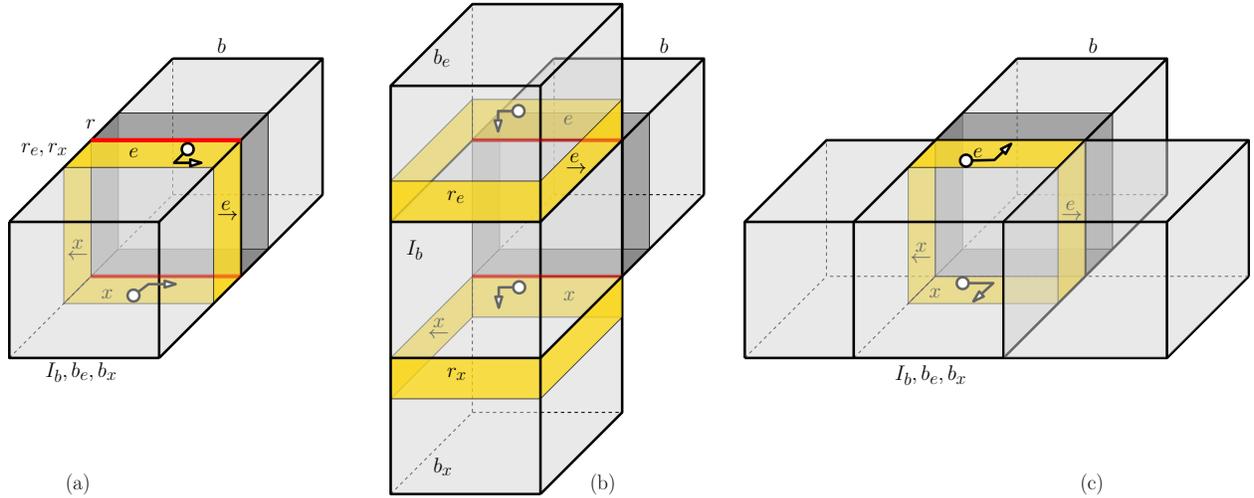}
\caption{ Box $b$ in standard position with parent $I_b$ and ring $r$ (a) entry and exit boxes $b_e$, $b_x$ coincide with parent $I_b$; entry ring $r_e$ coincides with exit ring $r_x$; entry ring face $e \in r_e$ is the top and its successor $\xrightarrow{e} \in r_e$ is the right face of $r_e$;  exit ring face $x \in r_x$ is the bottom and its predecessor $\xleftarrow{x} \in r_x$ is the left face of $r_x$
(b) entry box $b_e$ with entry ring $r_e$ and entry face $e$ lies north of $I_b$; 
exit box $b_x$ with exit ring $r_x$ and exit face $x$ lie south of $I_b$; $\xrightarrow{e}$ is the successor of $e$ on the entry ring $r_e$, and $\xleftarrow{x}$ is the predecessor of $x$ on the exit ring $r_x$ (c) $\xrightarrow{e}$ and $\xleftarrow{x}$ are closed ($e$ and $x$ are always open, by definition).}
\label{fig:rings}
\end{figure}
%%%%%%%%%%%%%%%%%%%%%%%%%%%%%%%%%Figure End
%
% definitions of ring and entry and exit boxes, rings, and faces
The \emph{ring} $r$ of a box $b$ includes  all the points on the surface of $b$ (not necessarily on the 
surface of $\OO$) that are within distance $1/4$ of the closed face shared with $b$'s parent.
Thus, $r$ consists of four $1/4 \times 1$ rectangular pieces (which we call \emph{ring faces}) connected in a cycle.
(See~\autoref{fig:rings}a, where $r$ is the shaded band on $b$'s surface wrapping around $b$'s front face; box $b$ is shown in standard position, so its parent $I_b$ attaches to $b$'s front face.)
The \emph{entry box} $b_e$ of $b$ is the box containing the open face in $\T \setminus \T_b$ adjacent to $b$'s entry port.
Note that $b_e$ may be $b$'s parent (as in~\autoref{fig:rings}a),
but this is not necessary (see~\autoref{fig:rings}b, where $b_e$ is the box on top of $b$'s parent $I_b$). 

The \emph{entry ring} $r_e$ of $b$ includes all points of $b_e$ that are 
within distance $1/4$ of the closed face of $b_e$ adjacent to $b$'s entry port. 
(Refer to~\autoref{fig:rings}.) 
The face $e$ of $r_e$ adjacent to $b$'s entry port is the \emph{entry ring face}.
Similarly, the \emph{exit box} $b_x$ of $b$ is the box containing the open face in $\T \setminus \T_b$ adjacent to $b$'s exit port.
Note that $b_x$ may be $b$'s parent (as in~\autoref{fig:rings}a), 
but this is not necessary (see~\autoref{fig:rings}b, where $b_x$ is the box south of $b$'s parent $I_b$). 
The \emph{exit ring} $r_x$ of $b$ includes all points of $b_x$ that are 
within distance $1/4$ of the closed face of $b_x$ adjacent to $b$'s exit port. 
The face $x$ of $r_x$ adjacent to $b$'s exit port is the \emph{exit ring face}.
%See \autoref{fig:netconnections}
%where the rings $r_e$ and $r_x$ are shown in yellow and the entry and exit ring faces $e$ and $x$ are labeled. 
Note that both $e$ and $x$ are \emph{open} ring faces (by definition).  When unclear from context, we will use subscripts (i.e., $e_b$ and $x_b$) to specify the entry and exit faces of a particular box $b$.

In a \head-first unfolding of a box $b$, the $L$-guide begins on the entry ring face $e$ with the \head\ 
pointing toward the entry port, and it ends on the exit ring face $x$ with the \head\ pointing away from the exit port; the \hand\ has the same orientation at the start and end of the unfolding. (See Figures~\ref{fig:degree1}a,~\ref{fig:rings}a.) 
Similarly, in a \hand-first unfolding, the $L$-guide begins on the entry ring face $e$ with the \hand\ pointing toward the entry port, and it ends on the exit ring face $x$ with the \hand\ pointing away from the exit port; the \head\ has the same orientation at the start and end of the unfolding. (See Figures~\ref{fig:degree1}b,~\ref{fig:rings}b.)
In standard position, the \hand\ in a \head-first unfolding will point either east or west.  If it points east (west) we say that the unfolding is a \hand-east (west), \head-first unfolding.  Similarly, in a \hand-first unfolding, the \head\ will either point east or west. If it points east (west), we say the unfolding is a \head-east (west), \hand-first unfolding. 

% definition of successor of e and predecessor of x
In a \head-first (\hand-first) unfolding of $b$ with entry ring face $e$, $\xrightarrow{e}$ is the 
ring face of $r_e$ encountered immediately after $e$ when cycling around $r_e$ in the direction pointed to by the \hand~(\head) of the $L$-guide as positioned on $e$ at the start of $b$'s unfolding.   Similarly, 
in a \head-first (\hand-first) unfolding of $b$ with exit ring face $x$, $\xleftarrow{x}$ is the ring face of $r_x$ encountered just before $x$ when cycling around $r_x$ in the direction pointed to by the \hand~(\head) of the $L$-guide as positioned on $x$  at the end of $b$'s unfolding path.~\autoref{fig:rings}  shows
$\xrightarrow{e}$ and $\xleftarrow{x}$ labeled.  Note that, although $e$ and $x$ are
open ring faces by definition, $\xrightarrow{e}$ and $\xleftarrow{x}$ may be closed 
(see~\autoref{fig:rings}c for an example).

\section{Inductive Regions}
%Let $b \in \T$ be a box to be unfolded recursively. 
\noindent
Let $b \in \T$ be an box to be unfolded recursively.

%As defined in the previous section, $e$ is the entry ring face adjacent to $b$'s entry port and  $x$ is the exit ring face adjacent to $b$'s exit port.
%

\begin{definition}
\label{def:region}
\emph{
A \emph{\head-first inductive region} for $b$ is a rectangle at least three units wide and three units tall, with two staircase bites taken out of the lower left and upper right corners, as shown in~\autoref{fig:regions}a.
The \emph{entry (exit)} port of the inductive region is the lower left (upper right) horizontal segment that lies strictly inside the bounding box of the region. If $b$ is not a leaf, 
the unit cells labeled $\XE_b$ and $\XX_b$ in~\autoref{fig:regions}a are conditionally included in the inductive region as follows: 
\begin{itemize}
\item If the successor $\xrightarrow{e}$ of the entry ring face $e$ is closed, 
then $\XE_b$ is included as part of the inductive region, otherwise, $\XE_b$ is not part of the inductive region. 
In the latter case, we refer to the unit segment right of the entry port as the \emph{entry port extension}.
\item If the predecessor $\xleftarrow{x}$ of the exit ring face $x$ is closed, then 
$\XX_b$ is included as part of the inductive region, 
otherwise, $\XX_b$ is not part of the inductive region. 
In the latter case, we refer to the unit segment left of the exit port as the \emph{exit port extension}.
\end{itemize}
}
\end{definition}

%A \emph{\head-first inductive region} for $b$ is an orthogonally convex polygon shaped as in~\autoref{fig:regions}a. Its bounding box is at least three units wide and at least three units tall. The lower (upper) convex vertex that lies strictly inside the bounding box is at unit vertical and horizontal distance from the lower left (upper right) corner of the bounding box, one unit away from the clockwise adjacent (reflex) vertex, and two units away from the counterclockwise adjacent (reflex) vertex. If the successor $\xrightarrow{e}$ of the entry ring face $e$ is open, then the unit cell  labeled $\XE_b$ in~\autoref{fig:regions}a is not part of the inductive region; otherwise, $\XE_b$ is included as part of the inductive region. Similarly, if the predecessor $\xleftarrow{x}$ of the exit ring face $x$ is open, then the unit cell  labeled $\XX_b$ in~\autoref{fig:regions}a is not part of the inductive region; otherwise, $\XX_b$ is included as part of the inductive region.

See~\autoref{fig:netconnections} for a few examples. A \head-first unfolding of $b$ produces a net $\N_b$ that fits within the \head-first inductive region and whose entry (exit) port aligns to the left (right) with the entry (exit) port of the inductive region.

%
%%%%%%%%%%%%%%%%%%%%%%%%%%%%%%%%%Figure Begin
\begin{figure}[ht]
\centering
\includegraphics[page=3,width=\linewidth]{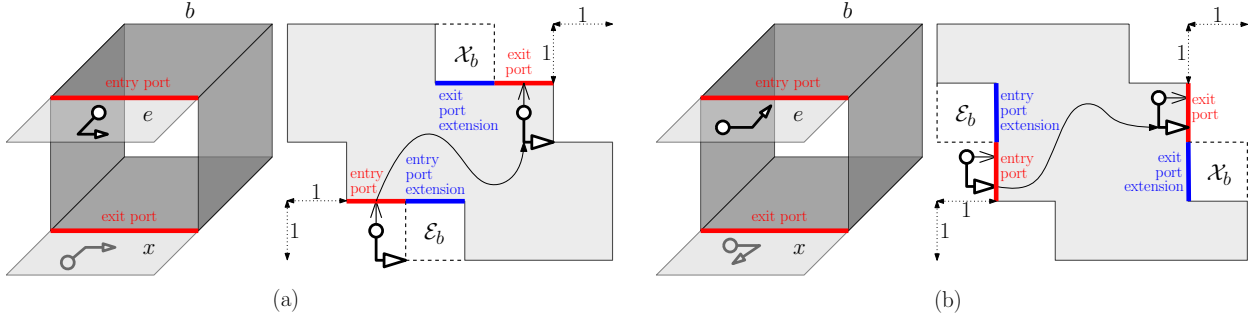}
\caption{Inductive region for (a) \head-first unfolding (b) \hand-first unfolding. 
%(c) transition space between inductive regions.
}
\label{fig:regions}
\end{figure}
%%%%%%%%%%%%%%%%%%%%%%%%%%%%%%%%%Figure End
%

A \emph{\hand-first inductive region} for $b$ is an orthogonally convex polygon shaped as in~\autoref{fig:regions}b. 
Its shape is isometric to that of a \head-first inductive region, and one can be obtained from the other through a clockwise $90^{\circ}$-rotation, followed by a vertical reflection. The unit cells $\XE_b$ and $\XX_b$ in~\autoref{fig:regions}b are conditionally included in the inductive region according to the rules stated in~\autoref{def:region}. 

\begin{lemma}
%Let $\N_b$ be the unfolding net produced by a recursive \hand-east (west), \head-first recursive unfolding of $b$. 
%If $\N_b$ is rotated clockwise (counterclockwise) by $90^\circ$ and then reflected vertically (horizontally), then the result is a \head-east (west), \hand-first recursive unfolding of $b$.
Let $b$ be an arbitrary box in $\OO$, and let $d$ be the box corresponding to $b$ in a horizontal reflection of $\OO$. Let $\N_d$ be the unfolding net produced by a \head-first unfolding of $d$. If $\N_{d}$ is rotated counterclockwise by $90^{\circ}$ and then reflected horizontally, then the result is a \hand-first unfolding of $b$. 
\label{lem:hand}
\end{lemma}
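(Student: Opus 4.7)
The key observation for my plan is that the composite ``counterclockwise $90^{\circ}$ rotation followed by horizontal reflection'' applied to the plane equals reflection across the line $y=x$; call this composite $T$. Under $T$, horizontal and vertical directions are interchanged, and this is precisely the interchange between the roles of the \hand\ pointer (which is horizontal during a \head-first unfolding) and the \head\ pointer (which is vertical during a \head-first unfolding). The plan is therefore to show that $T(\N_d)$ satisfies every defining condition of a \hand-first unfolding of $b$, leveraging the fact that $b$ and $d$ correspond under horizontal reflection of $\OO$.

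The first step is to record what horizontal reflection of $\OO$ (say, across the $yz$-plane) preserves. It preserves the dual tree $\T$, the chosen root, every box's degree, and the leaf/connector/junction classification; only the east--west labels are swapped. In particular, if $b$ is in standard position, so is $d$: its parent is attached to $F_d$, the entry and exit ports lie on the top and bottom edges of $F_d$, and each face of $d$ is open (resp.\ closed) iff the corresponding face of $b$ is open (resp.\ closed). Consequently the entry and exit rings together with the ring faces $e$, $x$, $\xrightarrow{e}$, and $\xleftarrow{x}$ correspond bijectively between $b$ and $d$, with open/closed status preserved.

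Next I would verify the $L$-guide pointer correspondence at the entry and exit ring faces. In a \head-first unfolding of $d$, the \head\ on $e_d$ points toward the entry port, which in standard position is a downward vertical pointer in the plane; applying $T$ turns this into a leftward horizontal pointer, which is exactly the \hand\ pointing toward the entry port in the \hand-first convention for $b$ (compare~\cref{fig:rings}a with~\cref{fig:rings}b). A symmetric analysis handles the exit ring face. The \hand\ of $d$'s $L$-guide, which is horizontal and fixed in orientation throughout $d$'s unfolding, becomes a vertical fixed-orientation pointer in $T(\N_d)$, matching the behavior required of the \head\ during a \hand-first unfolding of $b$. These pointer checks together show that $T(\N_d)$ meets the $L$-guide specification of a \hand-first unfolding of $b$.

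Finally I would verify that $T(\N_d)$ sits inside the correct inductive region. By hypothesis $\N_d$ fits inside the \head-first inductive region for $d$, and by~\cref{def:region} together with the paragraph immediately after it, the \hand-first inductive region for $b$ is precisely the image under $T$ of the \head-first inductive region. Hence $T(\N_d)$ fits inside the \hand-first inductive region for $b$, with entry and exit ports aligned as required; the conditional inclusion of the cells $\XE_b$ and $\XX_b$ transfers correctly because the open/closed status of $\xrightarrow{e}$ and $\xleftarrow{x}$ is invariant under horizontal reflection of $\OO$. The one delicate point is bookkeeping: tracking how the east--west swap in $\OO$ pairs the sub-type labels (\hand-east/\hand-west \head-first for $d$ versus \head-east/\head-west \hand-first for $b$); once these labels are matched consistently, the lemma follows from the exactness of the symmetry.
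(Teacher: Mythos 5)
Your proposal is correct and follows essentially the same route as the paper's proof: both arguments hinge on the observation that the combined rotation-plus-reflection swaps the \head\ and \hand\ roles, and that the ring faces $\xrightarrow{e}$ and $\xleftarrow{x}$ (hence the conditional inclusion of $\XE$ and $\XX$ in the inductive region) are the same before and after the transformation. Your write-up is more explicit about the bookkeeping under the horizontal reflection of $\OO$, but the substance is identical.
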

\begin{proof}
First note that, when applied to the L-guide, the combined ($90^\circ$-rotation, reflection) transformation switches the \head\ and \hand\ positions. This implies that the successor $\xrightarrow{e}$ of $d$'s entry ring face is the same before and after the combined ($90^\circ$-rotation, reflection) transformation, because it extends in the direction of the \hand~(\head) in a \head-first (\hand-first) unfolding. Similarly, the predecessor $\xleftarrow{x}$ of $d$'s exit ring face is the same before and after the transformation. Thus the rules from~\autoref{def:region} for including $\XE_d$ and $\XX_d$ in the inductive region for $d$ refer to the same ring faces before and after the transformation. These together show that, when applied to the unfolding net, this transformation turns a \head-first recursive unfolding of $d$ into a \hand-first recursive unfolding of $b$.
\end{proof}

\medskip
Lemma~\ref{lem:hand} enables us to focus the rest of the paper on \head-first unfoldings only, with the understanding that the results transfer to \hand-first unfoldings. 
%Throughout the rest of the paper, we restrict our attention to \head-first unfoldings, and use our results on \head-first unfoldings as needed.

\section{Net Connections}
We now discuss the type of connections that each \head-first unfolding 
net $\N_b$ associated with a box $b$ must provide to ensure that it connects to the rest of $\T$'s unfolding. 
To do so, we need a few more definitions.

%%%%%%%%%%%%%%%%%%%%%%%%%%%%%%%%%Figure Begin
\begin{figure*}[ht]
\centering
\includegraphics[page=4,width=\linewidth]{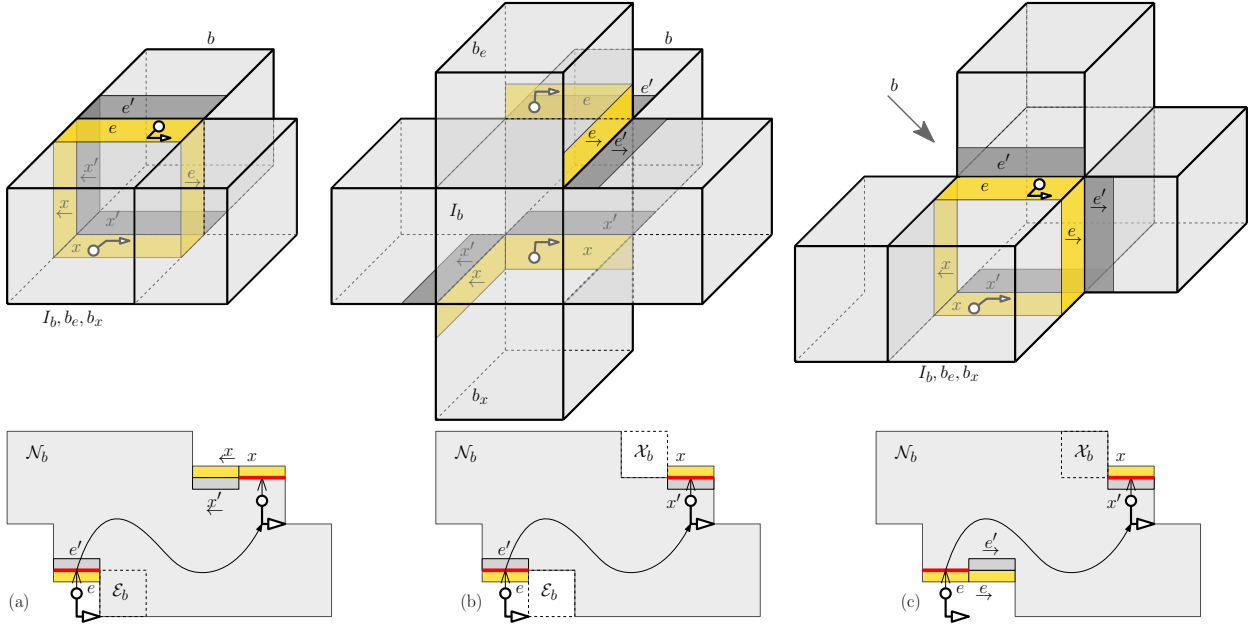}
\caption{
Net connections.
(a) Type-1 entry connection, because $\xrightarrow{e}$ is closed (so $\XE_b$ is part of the inductive region); type-2 exit connection, because $\xleftarrow{x}$ is open and adjacent to $\T_b$ (type-1 exit connection would also be allowed here) 
(b) Type-1 entry and exit connections, because $\xrightarrow{e}$ and $\xleftarrow{x}$ are non-adjacent to $\T_b$; they are both open, so $\XE_b$ and $\XX_b$ are not part of the inductive region
(c) Type-2 entry connection, because $\xrightarrow{e}$ is open and adjacent to $\T_b$ (type-1 entry connection would also be allowed here);  type-1 exit connection, because $\xleftarrow{x}$ is closed (so $\XX_b$ is part of the inductive region). Note that the strips $e$, $x$, $\xrightarrow{e}$ and $\xleftarrow{x}$ highlighted along the nets do not necessarily attach to $\N_b$; they are included here for the purpose of illustrating the definitions.
}
\label{fig:netconnections}
\end{figure*}
%%%%%%%%%%%%%%%%%%%%%%%%%%%%%%%%%Figure End
%

Let $e'$ ($x'$) be the open ring face of $\T_b$ that is adjacent to $e$ ($x$) along the 
entry (exit) port.
If $\xrightarrow{e}$ ($\xleftarrow{x}$) is open, let
$\xrightarrow{e'}$ ($\xleftarrow{x'}$) be the open ring face 
%of $\T_b$
adjacent to it along its side of unit length (see \autoref{fig:netconnections}). Note that, although 
$e$ and $\xrightarrow{e}$ are ring faces from the same box by definition, ring faces
$e'$ and $\xrightarrow{e'}$ may be from different boxes (as in \autoref{fig:netconnections}b,c), and similarly for 
$x'$ and $\xleftarrow{x'}$. 
Although these definitions may seem a bit intricate at this point, they will greatly simplify the description of our approach. 

If $b$ is not the root of $\T$, to ensure that $b$'s net connects to the rest of $\T$'s unfolding, it must provide type-1 or type-2 connection pieces placed along the boundary inside its inductive region.  These connections are defined as follows:
\begin{itemize}
\squeezelist
\item A \emph{type-$1$ entry connection} consists of the ring face $e'$ placed alongside the entry port. (See~\autoref{fig:netconnections}(a,b) for examples.)  
\item A \emph{type-$1$ exit connection} consists of the ring face $x'$ placed alongside the exit port. (See~\autoref{fig:netconnections}(b,c) for examples.)
\item A \emph{type-$2$ entry connection} is used when the ring face $\xrightarrow{e}$ is open and adjacent to $T_b$, and  
consists of the ring face $\xrightarrow{e'}$ placed alongside the entry port extension. 
(See~\autoref{fig:netconnections}c for an example.)
\item A \emph{type-$2$ exit connection} is used when the ring face $\xleftarrow{x}$ is open and adjacent to $\T_b$, and consists of the ring face $\xleftarrow{x'}$ placed alongside the exit port extension. 
(See~\autoref{fig:netconnections}a for an example.)
\end{itemize}
The unfolding of $b$ begins (ends) on the type-$1$ or type-$2$ entry (exit) connection of $b$'s net. As we
will show, the existence of these connections is enough to guarantee that $b$'s net connects to the rest of $\T$'s unfolding.  In most cases, the connection to the rest of $\T$'s unfolding will be made along the port or port extension side of $b$'s type-$1$
or type-$2$ connection.  In some 
cases though, the connection will be made along the left (right) side of $b$'s type-$1$ entry (exit) connection.

\section{Unfolding Invariants} 
\label{sec:invariants}
We will make use of the following invariants tied to a recursive unfolding of a box $b \in \T$ other than the root box: 
\begin{enumerate}
\item[]
\begin{itemize}
%\squeezelist
\item [(I1)] 
The recursive %\head-first (\hand-first) {}
unfolding of $b$ produces an unfolding net $\N_b$ that fits within the % \head-first (\hand-first) 
inductive region and includes all open faces of $\T_b$, with cuts restricted to a 
$4 \times 4$ refinement of the box faces.

\item [(I2)]
The unfolding net $\N_b$ provides the following entry and exit connections (see~\autoref{fig:netconnections}):
\begin{itemize}
\squeezelist
\item [(a)] 
If $\xrightarrow{e}$ is open and adjacent to a face in $\T_b$, then $\N_b$ provides either a type-1 or type-2 entry connection. Otherwise, $\N_b$ provides a type-1 entry connection. 
\item[(b)] 
If $\xleftarrow{x}$ is open and adjacent to a face in $\T_b$, then $\N_b$ provides either a type-1 or type-2 exit connection. Otherwise, $\N_b$ provides a type-1 exit connection. 
\end{itemize} 

\item [(I3)] 
Open faces of $b$'s ring that are not used in $N_b$'s entry and exit connections can be removed from $\N_b$ without disconnecting $\N_b$. 
\end{itemize}
\end{enumerate}
Invariant (I3) is sometimes employed in gluing two nets together, particularly in cases where the exit port of a box $b$ does not align with the entry port of the box $b'$ next visited by the unfolding path. In such cases, the unfolding algorithm may use ring pieces of $b$ and $b'$ identified by (I3) to form a bridge between their corresponding nets. Referring forward to~\autoref{fig:NEWSdegree5}~for example, the strips $R_N$ and $T_E$ are removed from their respective nets $\N_N$ and $\N_E$ and used to connect the two nets. Similarly, the strips $B_W$ and $L_S$ are removed from $\N_W$ and $\N_S$ and used to connect $\N_W$ and $\N_S$.

The following proposition follows immediately from the definition of the invariants (I1)-(I3) above. 
\begin{proposition}
\label{prop:ih}
If a net $\N_b$ satisfies invariants (I1)-(I3), then the net obtained after a $180^\circ$-rotation of $\N_b$ also satisfies invariants (I1)-(I3) (with entry and exit switching roles). 
\end{proposition}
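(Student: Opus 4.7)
The plan is to verify each of the three invariants for the rotated net $\N_b'$ (obtained by a $180^\circ$-rotation of $\N_b$) by exploiting the central symmetry of the inductive region. I would first record the geometric observation that drives everything: the inductive region of \cref{fig:regions}a is centrally symmetric, in the sense that the staircase bite in its lower-left corner is the $180^\circ$-rotation of the staircase bite in its upper-right corner. Consequently, a $180^\circ$-rotation about the center of the bounding rectangle is an automorphism of the region's shape, and it swaps the entry port with the exit port. The two optional unit cells $\XE_b$ and $\XX_b$ are also swapped by the rotation, so the region for ``$b$ with entry and exit switched'' is the same set of points as the rotated region for $b$.

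For (I1) I would argue as follows. Since $\N_b$ fits inside the inductive region, and the rotation is an automorphism of the region (once entry/exit roles are swapped), $\N_b'$ fits inside the inductive region corresponding to the swapped roles. The set of open faces of $\T_b$ is unchanged, and a rigid motion preserves the underlying $4\times 4$ refinement (it only relabels the grid). The only subtle point is to check that the cells $\XE_b$ and $\XX_b$ are correctly included in the swapped region: after switching the roles of entry and exit, the original exit ring face $x$ plays the role of the new entry ring face $e$, and its predecessor $\xleftarrow{x}$ plays the role of the new successor $\xrightarrow{e}$. Hence the ``$\xrightarrow{e}$ closed'' condition for including the new $\XE_b$ is exactly the original ``$\xleftarrow{x}$ closed'' condition that controlled $\XX_b$; the rotation has moved $\XX_b$ into the position of $\XE_b$ in the new orientation, matching the conditions of \cref{def:region}.

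For (I2) I would trace how the four connection types transform. A type-$1$ entry connection of $\N_b$ is the ring face $e'$ placed alongside the entry port; after a $180^\circ$-rotation and the swap of entry/exit roles, $e'$ plays the role of $x'$ and sits alongside the new exit port, which is exactly a type-$1$ exit connection for the rotated net. The same argument applied to $x'$ yields a type-$1$ entry connection. For type-$2$ connections, the piece $\xrightarrow{e'}$ along the entry port extension rotates into the position of $\xleftarrow{x'}$ along the new exit port extension (and vice versa), and the gating conditions (``$\xrightarrow{e}$ open and adjacent to $\T_b$'' vs.\ ``$\xleftarrow{x}$ open and adjacent to $\T_b$'') swap accordingly, so the allowed combinations prescribed by (I2)(a) and (I2)(b) are just exchanged. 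Invariant (I3) is then immediate: removability of ring faces from $\N_b$ without disconnection is a purely topological property of the plane graph of the net, and is therefore invariant under any rigid motion, in particular under the $180^\circ$-rotation.

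The only real obstacle is the bookkeeping in the second step, namely making sure that each of the two inclusion rules for $\XE_b$ and $\XX_b$, and each of the two kinds of connection (type-$1$ and type-$2$, on entry and on exit), maps cleanly onto its counterpart under the rotation-plus-role-swap. Once that correspondence is laid out carefully, all three invariants transfer automatically and the proposition follows.
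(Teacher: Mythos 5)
Your proposal is correct and follows the only natural line of argument: the paper itself offers no proof, merely asserting that the proposition ``follows immediately from the definition of the invariants,'' and your verification spells out exactly the central symmetry of the inductive region and the entry/exit role-swap bookkeeping that the authors leave implicit. Nothing in your argument diverges from what the paper intends.
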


\begin{lemma}
Let $\xi$ be the unfolding path and $\N_b$ the unfolding net produced by a recursive unfolding of $b$. Let $\overleftarrow{\xi}$ be the unfolding path traversed in reverse, starting at the exit port of $\N_b$ and ending at the entry port of $N$, with the \head\ and \hand\ pointing in opposite direction. If $\N_b$ satisfies the invariants (I1)-(I3), then the unfolding net induced by $\overleftarrow{\xi}$ also satisfies invariants (I1)-(I3). 
\label{lem:ih-symmetry}
\end{lemma}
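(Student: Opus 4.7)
The plan is to reduce the lemma to Proposition~\ref{prop:ih} by showing that the net produced by the reversed unfolding path $\overleftarrow{\xi}$ is, up to rigid motion, the $180^\circ$-rotation of $\N_b$.

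First, I would observe that the set of surface cuts produced by a recursive unfolding of $b$ is determined solely by the collection of face transitions made by the unfolding path, not by the direction in which those transitions are traversed. Since $\overleftarrow{\xi}$ uses exactly the same face transitions as $\xi$ in reverse order, it induces the same cut set on the surface of $\T_b$ and hence produces the same planar net, considered as an abstract polygon. In particular, all open faces of $\T_b$ appear, and the cuts remain restricted to the $4 \times 4$ refinement, so the refinement budget in (I1) is automatically preserved.

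Next, I would verify that when this net is placed with its new entry port (the old exit port) on the lower left of an inductive region and its new exit port (the old entry port) on the upper right, the resulting placement is exactly the $180^\circ$-rotation of the original placement of $\N_b$. Three ingredients go into this: (a) reversing the $L$-guide's \head\ and \hand\ at the starting endpoint and negating them at every subsequent face transition is consistent with a global $180^\circ$-rotation of the plane, so a \head-first (\hand-first) unfolding of $b$ remains \head-first (\hand-first) after reversal; (b) the inductive region from~\cref{fig:regions}a, with staircase bites at its lower left and upper right, is invariant under $180^\circ$-rotation, with entry and exit ports swapping roles; and (c) the successor $\xrightarrow{e}$ of the new entry ring face coincides with the predecessor $\xleftarrow{x}$ of the old exit ring face, and symmetrically at the other end, so the conditional $\XE$/$\XX$ rules of~\cref{def:region} swap consistently and the type-1/type-2 classifications of entry and exit connections in (I2) simply interchange.

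Once this correspondence is established, invariants (I1), (I2), and (I3) for the reversed net follow directly from Proposition~\ref{prop:ih} applied to $\N_b$. The main obstacle is the second step: carefully tracking how the \head\ and \hand\ transform across each face transition under the reversed traversal and confirming that the resulting planar embedding differs from that of $\N_b$ by exactly a $180^\circ$-rotation rather than some other isometry. This is largely bookkeeping, but it must be done with care at the few exceptional transitions where the path does not follow one of the $L$-guide's pointers, so that the reversal still matches the rotated placement of $\N_b$.
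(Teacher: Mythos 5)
Your proposal is correct and follows essentially the same route as the paper: the paper's proof simply asserts that the net induced by $\overleftarrow{\xi}$ is a $180^\circ$-rotation (``diagonal flip'') of $\N_b$ and then invokes Proposition~\ref{prop:ih}, which is exactly your reduction. Your additional bookkeeping about cut sets, the $L$-guide, and the $\XE$/$\XX$ rules just fills in the justification that the paper leaves implicit.
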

\begin{proof}
The unfolding net $\overleftarrow{\N_b}$ induced by $\overleftarrow{\xi}$ is a diagonal flip ($180^\circ$-rotation) of $\N_b$. This along with~\autoref{prop:ih} implies that $\overleftarrow{\N_b}$ satisfies invariants (I1)-(I3). 
\end{proof}

\section{Main Result}
This section introduces our main result, which uses~\autoref{thm:main1} below. We note here that~\autoref{thm:main1} makes references to upcoming lemmas, which are organized into separate sections for clarity and ease of reference. So the main role of~\autoref{thm:main1} is to organize all unfolding cases into a structure that outlines the proof technique detailed in subsequent sections.

\begin{theorem} 
\label{thm:main1}
Any box $A \in \T$ other than the root satisfies invariants (I1)-(I3) (listed in Section~\ref{sec:invariants}).
\end{theorem}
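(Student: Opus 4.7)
The plan is to prove \cref{thm:main1} by strong induction on the number of boxes in the subtree $\T_A \subseteq \T$ rooted at $A$. The base case is $|\T_A| = 1$, i.e.\ $A$ is a leaf box, and the construction already sketched in \cref{fig:degree1} directly yields a net that fits inside the inductive region defined in \cref{def:region}, uses only cuts on the $4\times 4$ refinement, and carries type-1 entry/exit connections alongside the ports. A direct check against (I3) shows that the darkly shaded ring strips on the leaf can be removed without disconnecting the net, so all three invariants hold.

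For the inductive step I would assume (I1)--(I3) for every proper descendant of $A$, and then case-split on the structure of $A$. By \cref{lem:hand} it suffices to handle head-first unfoldings; by \cref{prop:ih} and \cref{lem:ih-symmetry} I may further assume the entry/exit ports are in any convenient relative orientation on $A$. The case analysis is driven by (i) the degree of $A$ in $\T$ (connector of degree $2$, or junction of degree $3,4,5,6$), and (ii) for each degree, the subset of faces of $A$ on which the children sit (e.g.\ for a degree-$2$ connector, whether the child is $N$, $S$, $E$, $W$, $I$, or $J$), together with (iii) the openness status of the ring-face neighbors $\xrightarrow{e_A}$ and $\xleftarrow{x_A}$, which dictates which of the optional cells $\XE_A,\XX_A$ belong to the target inductive region and which of type-1 or type-2 connections are required by (I2). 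Each of these configurations is handled by a dedicated lemma in the subsequent sections; the role of \cref{thm:main1} is simply to certify that these lemmas cover every possibility.

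Within each configuration, the construction follows the same template: design a head-first unfolding path $\xi$ on the open faces of $A$ that (a) begins on $e_A$ with the $L$-guide oriented appropriately, (b) visits each child box $C_i$ in some order, at each visit handing control over to the recursive unfolding of $\T_{C_i}$ supplied by the inductive hypothesis (possibly after a hand-first/head-first swap via \cref{lem:hand} or a reversal via \cref{lem:ih-symmetry}), and (c) returns to $A$ and ends on $x_A$. Where the natural exit port of $\N_{C_i}$ does not match the entry port of $\N_{C_{i+1}}$, invariant (I3) applied to both children lets me discard spare ring strips and glue the two nets along a bridging rectangle on $A$'s surface, exactly as in the forward reference \cref{fig:NEWSdegree5}. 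Finally one verifies that the assembled $\N_A$ lies inside $A$'s head-first inductive region, uses only $4\times 4$ grid cuts, and exposes a ring strip at each port that realizes the type-1 or type-2 connection demanded by (I2), and that all leftover open ring pieces of $A$ are attached by a single edge so (I3) is preserved.

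The hard part will be the high-degree junctions, in particular degree-$5$ and degree-$6$ boxes where children occupy nearly every face of $A$: the unfolding path has very little ``free'' surface on $A$ to route between consecutive subtrees, the staircase bites of the target inductive region bound the available horizontal/vertical budget, and every bridging rectangle must be justified by invariant (I3) on both sides. Keeping within the $4\times 4$ refinement budget while simultaneously producing the correct combination of type-1/type-2 entry and exit connections demanded by (I2)(a,b) is where the delicate case analysis of the subsequent sections is genuinely needed; once those lemmas are established, \cref{thm:main1} follows by combining them as outlined above.
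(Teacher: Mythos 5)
Your proposal is correct and follows essentially the same strategy as the paper: a strong induction (the paper inducts on the height of $\T_A$ rather than its size, an immaterial difference), with the leaf box as base case verified directly against (I1)--(I3), and the inductive step delegated to a case analysis by degree that is settled by the per-degree lemmas, using \cref{lem:hand} and \cref{lem:ih-symmetry} to cut down the number of orientations. The only quibble is that in standard position the front neighbor $I$ is the parent, so it should not appear in your list of possible child positions for a degree-2 box.
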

\begin{proof}
The proof is by strong induction on the height $h$ of $\T_A$. The base case corresponds to $h = 0$ (i.e, $A$ is a leaf).

Consider the unfolding of leaf box $A$ depicted in~\autoref{fig:degree1}a: starting at $A$'s entry port, the unfolding path simply moves \head-first until it reaches $A$'s exit port. 
We now show that, when laid flat in the plane, the open faces of $A$ form a net $\N_A$ that satisfies invariants (I1)-(I3).  First note that the net $\N_{A}$ from~\autoref{fig:degree1}a fits within the inductive region and includes all open faces of $A$, therefore invariant (I1) is satisfied. 
To check (I2), observe that $\N_A$ provides type-1 entry and exit connections, since
$e' \in T_A$ and $x' \in B_A$ are positioned alongside the entry
and exit ports. To check (I3), observe that the open ring faces of $A$ not
used in $A$'s entry or exit connections are the dark-shaded pieces from~\autoref{fig:degree1}a, 
and their removal does not disconnect $\N_A$.
Thus $\N_A$ also satisfies all three invariants.

The inductive hypothesis states that the theorem holds for any dual subtree of height $h$ or less. To prove the inductive step, we consider a dual subtree $\T_A$ of height $h+1$, and prove that the theorem holds for the root $A$ of $\T_A$.

First note that, because $A$ is not the root of $\T$, $A$ has a parent in $\T$. Also, since the height of $\T_{A}$ is at least 1, $A$ has at least one child in $\T_A$. By the inductive hypothesis, each child of $A$ satisfies invariants (I1)-(I3). 
We discuss five cases, depending on the degree of $A$.
\begin{enumerate}
\item $A$ is of degree 2: this case is settled by~\autoref{thm:degree2}. 
\item $A$ is of degree 3: this case is settled by~\autoref{thm:degree3}. 
\item $A$ is of degree 4: this case is settled by~\autoref{thm:degree4}. 
\item $A$ is of degree 5: this case is settled by~\autoref{thm:degree5}. 
\item $A$ is of degree 6: this case is settled by~\autoref{thm:degree6}. 
\end{enumerate}
Having exhausted all cases, we conclude the result of this theorem. 
\end{proof}

\begin{theorem} \emph{{\bf [Main result.]}} 
Any polycube tree $\OO$ can be unfolded into a net using a $4 \times 4$ refinement. 
\label{thm:main}
\end{theorem}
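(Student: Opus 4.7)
The plan is to deduce \cref{thm:main} from \cref{thm:main1} by handling the root of the dual tree $\T$ separately from the inductive argument. If $\OO$ is a single box, any standard cross unfolding of the cube suffices and uses only unit refinement, so assume $\T$ has at least two nodes. Then $\T$ contains a degree-one node, which we pick as the root $R$, and $R$ has a unique child $C$. Applying \cref{thm:main1} to $C$ produces a net $\N_C$ satisfying invariants (I1)--(I3): $\N_C$ contains every open face of $\T_C$, fits within its head-first inductive region using only a $4 \times 4$ refinement, and supplies an entry connection along its entry port. What remains is to attach $R$'s five open faces to $\N_C$.

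To carry out the attachment, I would unfold $R$ using the head-first leaf pattern of \cref{fig:degree1}a. Placing $R$ in standard position with its unique closed face shared with $C$ as the back face $K_R$, the five open faces $T_R, F_R, B_R, L_R, R_R$ unfold into a cross: a vertical strip $T_R$--$F_R$--$B_R$ with $L_R$ and $R_R$ as lateral wings on $F_R$. In this cross, the top edge of $T_R$ coincides with its back edge, which is the top edge of $K_R$, which is precisely $R$'s entry-port edge; correspondingly, the entry ring face $e$ of $C$ (a strip on $R$'s surface) is the top $1/4 \times 1$ strip of $T_R$. I would then position this cross directly below $\N_C$ so that the top edge of $T_R$ abuts the entry port of $\N_C$; then $e$ on the $R$-side meets the entry-connection ring face provided by $\N_C$ (namely $e'$ in the type-1 case, or $\xrightarrow{e'}$ across the entry-port extension in the type-2 case) along a shared unit segment, gluing the two subnets into a single connected planar figure.

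The main subtleties to verify are non-overlap, the $4 \times 4$ refinement bound, and that $\N_C$'s exit port causes no issues. Non-overlap holds because $\N_C$ lies entirely within its inductive region, whose lower-left staircase bite leaves the area immediately below the entry port free; the $3 \times 3$ cross of $R$ can therefore be placed below (and, if necessary, extending outside) $\N_C$'s bounding box without entering any cell of $\N_C$. The refinement bound holds since $\N_C$ satisfies it by (I1) while $R$'s cross requires only unit cells. The exit port of $\N_C$ simply becomes part of the boundary of the final net and needs no external attachment: the only surface outside $\T_C$ belongs to $R$, and all of $R$'s open faces have already been attached on the entry-port side, so invariants (I1)--(I3) guarantee $\N_C$'s internal connectedness without it. The combined figure is thus a simple, non-overlapping planar net containing every open face of $\OO$ and requiring only a $4 \times 4$ refinement. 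The hardest part will be making the geometric argument for non-overlap fully rigorous, in particular checking that the lateral wings $L_R,R_R$ of $R$'s cross stay clear of any cells of $\N_C$ that may sit low in the inductive region due to the shape of the lower-left staircase bite.
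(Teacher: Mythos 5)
There is a genuine gap in the attachment step. Your gluing of the root's cross to $\N_C$ happens only at the entry port, via $e \in T_R$ meeting the entry connection of $\N_C$, and you implicitly assume that connection is of type~1 (i.e., that $e'$ sits alongside the entry port). But for the root with its child attached at $K_R$, the ring face $\xrightarrow{e_C}$ lies on $R_R$, is open, and is adjacent to a face of $\T_C$ (it shares an edge with $R_C$), so invariant (I2)(a) only guarantees that $\N_C$ provides a type-1 \emph{or} a type-2 entry connection --- and several of the paper's unfoldings (e.g., the type-2 entry connection in \cref{lem:NJdegree3}) genuinely produce only the type-2 kind. In that case the connecting ring face is $\xrightarrow{e'_C}$, placed alongside the entry port \emph{extension} (over the excluded cell $\XE_C$), and nothing need be placed alongside the entry port itself; the top strip of $T_R$ in your cross then touches no face of $\N_C$ and the net is disconnected. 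Your parenthetical that $e$ ``meets $\xrightarrow{e'}$ across the entry-port extension in the type-2 case'' is not correct: on the surface $\xrightarrow{e'}$ is edge-adjacent to $\xrightarrow{e}\in R_R$, not to $e\in T_R$, and it is not positioned under the entry port. The paper's proof handles exactly this by placing a strip of $R_A$ in the excluded cell $\XE_J$ (alongside the entry port extension) and, symmetrically, $L_A$ in $\XX_J$ at the exit port, so that whichever connection type $\N_J$ supplies, a matching root face is waiting there.

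This also explains why the paper distributes the root's faces around \emph{both} ports rather than packing them into a single $3\times 3$ cross at the entry: the cells $\XE_C$ and $\XX_C$ are guaranteed free precisely because $\xrightarrow{e_C}$ and $\xleftarrow{x_C}$ are open, whereas cells below-and-right of the entry port (beyond $\XE_C$) lie inside the inductive region and may be occupied by $\N_C$, so the right wing $R_R$ of your cross has no guaranteed clearance --- the overlap worry you flag at the end is real and is not resolvable without rearranging the layout. Your observation that the exit port needs no attachment for connectivity is true, but the correct construction uses it anyway to park $B_R$ and $L_R$ safely.
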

\begin{proof}
%
%%%%%%%%%%%%%%%%%%%%%%%%%%%%%%%%%Figure Begin
\begin{figure}[ht]
\centering
\includegraphics[width=0.6\linewidth]{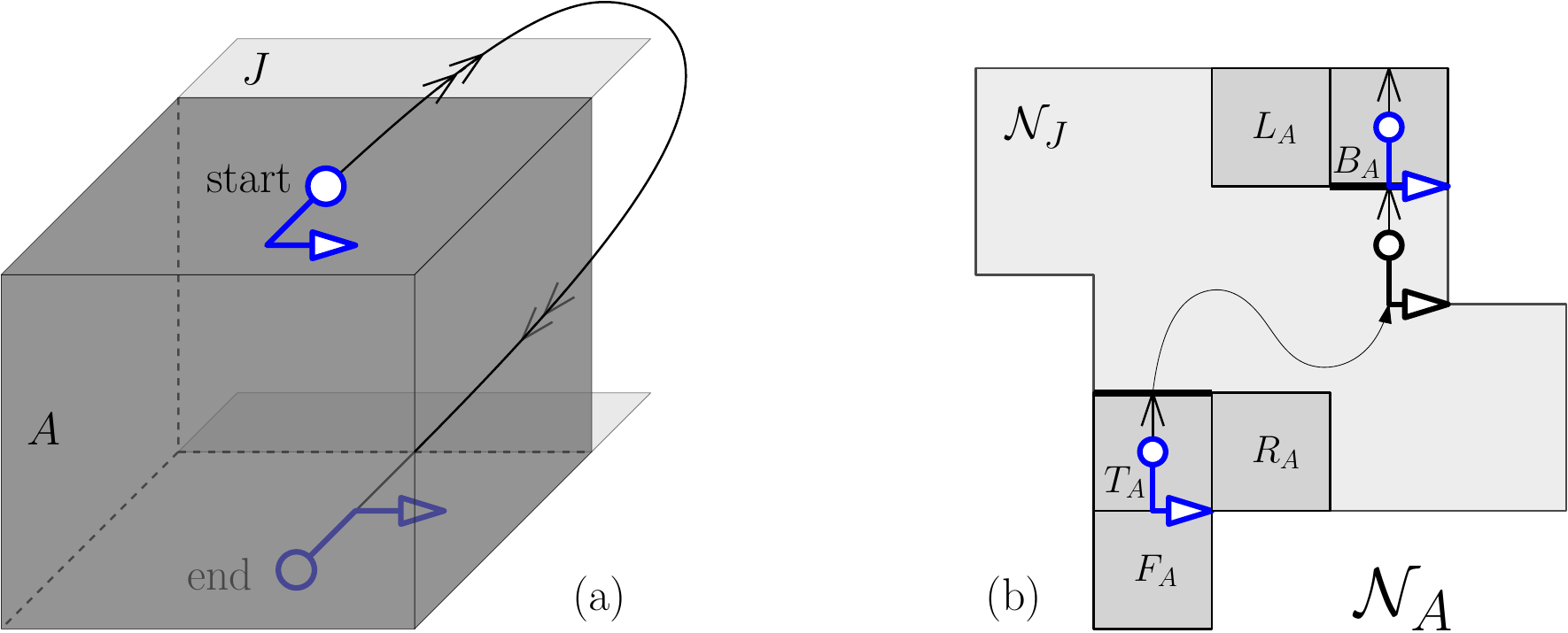}
\caption{Unfolding of root $A$ with back child $J$ (a) unfolding path (b) unfolding net $\N_A$.}
\label{fig:root}
\end{figure}
%%%%%%%%%%%%%%%%%%%%%%%%%%%%%%%%%Figure End
%
Let $\T$ be the dual tree of $\OO$ and let $A \in \T$ be the root of $\T$ (by definition, $A$ is a node of degree one in $\T$). Assume 
without loss of generality that $A$ has a back child $J$ (if this is not the case, reorient $\OO$ to make this assumption hold).
A recursive unfolding of $A$ is depicted in~\autoref{fig:root}a: starting \head-first on the top face of $A$, the unfolding path recursively visits $J$ and returns to the bottom face of $A$. The resulting net takes the shape depicted in~\autoref{fig:root}b. 

By~\autoref{thm:main1}, $J$ satisfies invariants (I1)-(I3), so its net $\N_J$ takes the shape depicted in~\autoref{fig:root}b. 
Notice that $e_J \in T_A$ and $x_J \in B_A$. Since $\xrightarrow{e_J} \in R_A$ and $\xleftarrow{x_J} \in L_A$ are both open,
the unit squares $\XE_J$ and $\XX_J$ (occupied in~\autoref{fig:root}b by $R_A$ and $L_A$, respectively) do not belong to the inductive region for $J$. 
Furthermore, since $\xrightarrow{e_J}$ and $\xleftarrow{x_J}$ are adjacent to $\T_J$,
invariant (I2) applied to $J$ tells us that $\N_J$ provides either type-1 or type-2 entry and exit connections. 
If of type-1, the entry (exit) connection attaches to $T_A$ ($B_A$); otherwise, it attaches to $R_A$ ($L_A$). In either case, the surface piece $\N_A$ depicted in~\autoref{fig:root}b is connected. Invariant (I1) applied to $J$ tells us that $\N_J$ is a net that includes all open faces in the subtree $\T_J$ rooted at $J$ and uses a $4 \times 4$ refinement. This along with the fact that the open faces of $A$ attach to $\N_J$ without overlap settles this theorem. 
\end{proof}

\medskip
\noindent
The need for a $4 \times 4$ refinement will become clear later in Section~\ref{sec:degree3}, where we discuss a case that requires a $4$-refinement along one dimension (depicted in~\autoref{fig:NJdegree3}a).

\section{Unfolding Algorithm}
\label{sec:alg}
Our unfolding algorithm uses an unfolding path that begins on the top face of the root box of $\T$, recursively visits all nodes in the subtree rooted at the (unique) child of the root box, and ends on the bottom face of the root box (as depicted in~\autoref{fig:root}). The result is a net that includes all open faces of $\OO$ (as established by ~\autoref{thm:main}).

\medskip
This section is dedicated to proving the five results referenced by~\autoref{thm:main1}. The unfolding algorithm is implicit in the proofs of these results. Here we provide a complete discussion for boxes of degrees 1, 2 and 6.  For boxes of degree 3, 4, and 5, we select only a few representative cases that exemplify our main ideas. The reader can refer to the appendix %~\cite{DF20-arxiv} 
for the remaining cases, which are very similar. (We do not include all cases here in order to avoid repetitiveness and improve the flow and clarity of our techniques.) 

%%%%%%%%%%%%%%%%%%%%%%%%%%%%%%%%%%%%%%
%              DEGREE 2
%%%%%%%%%%%%%%%%%%%%%%%%%%%%%%%%%%%%%%

\subsection{Unfolding Degree-2 Nodes}
\label{sec:degree2}
In this section we describe the recursive unfolding of a box $A \in \T$ of degree $2$, and show that it satisfies the invariants (I1)-(I3) listed in Section~\ref{sec:invariants}. 

\begin{theorem}
\label{thm:degree2}
Let $A \in \T$ be a degree-2 box. If $A$'s child satisfies invariants (I1)-(I3), then 
$A$ satisfies invariants (I1)-(I3).
\end{theorem}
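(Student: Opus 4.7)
The plan is to enumerate the cases for the position of $A$'s unique child and verify each satisfies invariants (I1)--(I3). In standard position (parent $I_A$ on $F_A$), the child sits on one of the other five faces of $A$: it lies on $K_A$ in the \emph{connector} subcase (so the child is $J_A$), and on one of $T_A$, $B_A$, $L_A$, $R_A$ in the four \emph{junction} subcases. By~\cref{lem:hand} it suffices to handle \head-first unfoldings, and by~\cref{lem:ih-symmetry} together with the left/right and top/bottom reflectional symmetries of $\OO$ I would collapse the four junction subcases down to one or two representative configurations.

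For the connector subcase I would route the \head-first unfolding path from the entry port along the top strip of $F_A$, across $T_A$, onto $K_A$ where it recursively visits $\T_{J_A}$, and then back across $B_A$ and onto the bottom strip of $F_A$ to reach the exit port; the side faces $L_A$ and $R_A$ hang as flaps off of $T_A$, $K_A$, and $B_A$. By the inductive hypothesis $\N_{J_A}$ fits inside its own inductive region, which was defined precisely so that, when prepended and appended with the horizontal strips from $A$'s faces, the combined footprint is contained in $A$'s inductive region; invariants (I1)--(I3) for $A$ then follow from direct inspection, with any of $A$'s ring pieces not used in its entry/exit connections being either side flaps or short strip pieces that are removable without disconnecting $\N_A$. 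For the junction subcases the path must detour through the child while still returning to the exit port: my construction first traverses from the entry port along $A$'s ring toward the child, recursively unfolds the child, and then continues along $A$'s remaining surface back to the exit port. The entry/exit port extensions together with the type-1 versus type-2 choices spelled out in~\cref{def:region} and~\cref{fig:netconnections} are what reconcile the child's net boundary with the boundary of $A$'s inductive region, and invariant (I3) applied to the child supplies the spare ring pieces needed to bridge the child's net back onto $A$'s surface.

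The main obstacle I anticipate is verifying (I2) in the junction subcases when $\xrightarrow{e_A}$ or $\xleftarrow{x_A}$ is closed, for then the unit cells $\XE_A$ and/or $\XX_A$ must lie inside $\N_A$ (per~\cref{def:region}), and the type of connection provided by the recursive unfolding of the child is dictated by geometry rather than freely chosen. Enumerating the combinations (each of $\xrightarrow{e_A}$ and $\xleftarrow{x_A}$ open or closed, and each open one adjacent to $\T_A$ or not) and checking that in every combination the chosen path produces a connection of the type demanded by (I2) for $A$ is the tedious but essentially mechanical heart of the argument. Once a concrete path is fixed, verification of (I1) reduces to a footprint bound on the constructed net, while (I3) reduces to identifying which of $A$'s ring strips sit on the boundary of $\N_A$ and confirming that the unused ones can be detached without disconnecting it.
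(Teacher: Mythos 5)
Your proposal follows essentially the same route as the paper: its proof of this theorem is exactly a five-way case split on the position of $A$'s child (east, west, back, north, south), with the west case reduced to the east case via \cref{lem:ih-symmetry} and the south case to the north case by vertical reflection, and each remaining representative case ($E$, $J$, $N$) handled by a concrete path plus the same (I1)--(I3) bookkeeping you describe. Two cosmetic caveats: $F_A$ is closed, so no strip of it appears in the net (the path merely starts and ends at the ports on its top and bottom edges), and the two surviving junction configurations ($E$ and $N$) are genuinely distinct and require different paths, so the collapse is to two representatives, not one.
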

\begin{proof}
Our analysis is split into four different cases, depending on the position of $A$'s child (note that $A$'s parent contributes one unit to $A$'s degree):
\begin{itemize}
%\squeezelist
\item[]
\begin{itemize}
\item[]
\begin{enumerate}
\squeezelist
\item[Case 2.1] $E$ is a child of $A$. This case is settled by~\autoref{lem:Edegree2}. 

\item[Case 2.2] $W$ is a child of $A$. This case is settled by~\autoref{lem:Wdegree2}.
\item[Case 2.3] $J$ is a child of $A$. This case is settled by~\autoref{lem:Jdegree2}.
\item[Case 2.4]  $N$ is a child of $A$. This case is settled by~\autoref{lem:Ndegree2}. 
\end{enumerate}  
\end{itemize}
\end{itemize}
The case where $S$ is a child of $A$ is a vertical reflection of Case 2.4. 
\end{proof}

%We devote the rest of the section to cases 2.1 and 2.2, and defer case 2.3 to Appendix~\ref{sec:degree2-appendix} (since it is a degenerate case of another casewe will show that case 2.3 is a degenerate . 
%
%%%%%%%%%%%%%%%%%%%%%%%%%%%%%%%%%Figure Begin
\begin{figure}[ht]
\centering
\includegraphics[page=1,width=\linewidth]{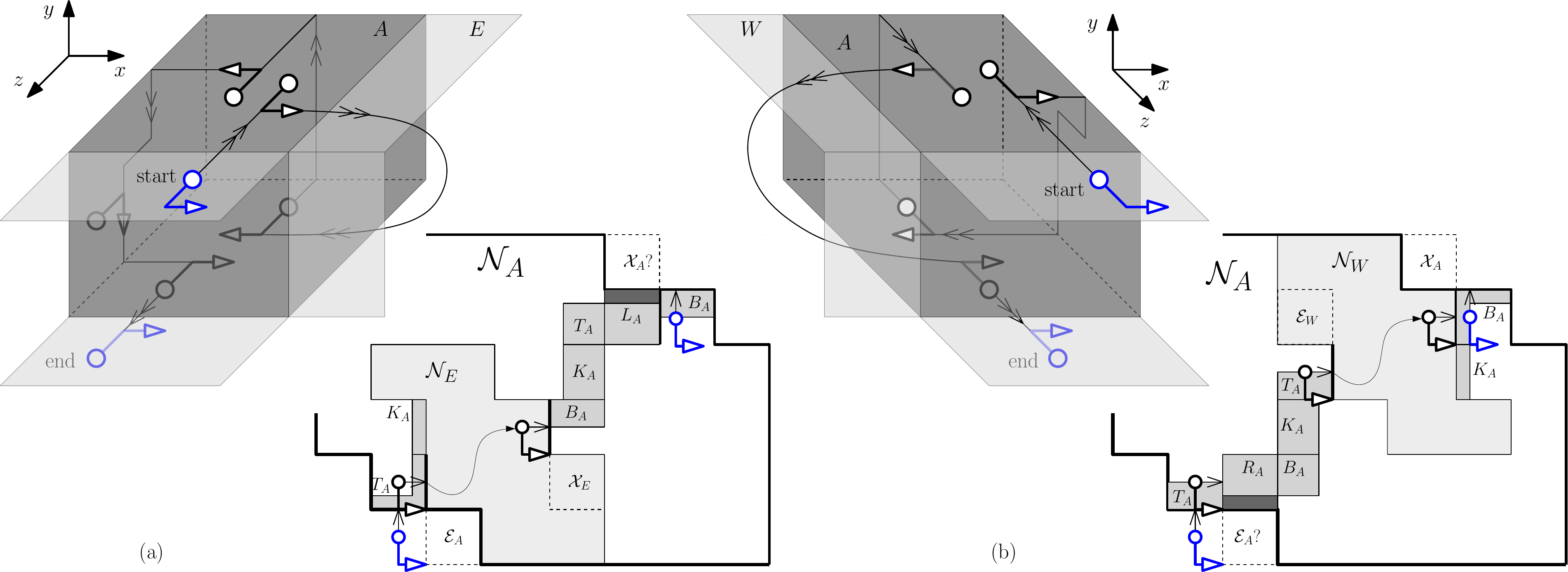}
\caption{Unfolding of degree-2 box $A$ with parent $I$ and child (a) $E$ (b) $W$.}
\label{fig:EWdegree2}
\end{figure}
%%%%%%%%%%%%%%%%%%%%%%%%%%%%%%%%%Figure End
%
\begin{lemma}
Let $A \in \T$ be a degree-2 node with parent $I$ and child $E$ \emph{(Case 2.1)}. 
If $E$ satisfies invariants (I1)-(I3), then $A$ satisfies invariants (I1)-(I3).
\label{lem:Edegree2}
\end{lemma}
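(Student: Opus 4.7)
The plan is to exhibit an explicit \head-first unfolding path for $A$ that routes through $E$'s subtree, and then to verify invariants (I1)--(I3) on the resulting net $\N_A$. Starting at the entry ring face on $T_I$ with \head\ pointing toward the entry port and \hand\ pointing east toward $E$, the path advances \head-first onto $T_A$, then turns east along $T_A$ and transitions into $E$ by crossing $E$'s entry port on the top edge of $R_A = L_E$. Invoking the inductive hypothesis on $E$, its subtree is recursively unfolded into a net $\N_E$ satisfying (I1)--(I3) and fitting inside $E$'s inductive region. The path then returns to $A$ through $E$'s exit port on the bottom edge of $R_A$ and sweeps across $A$'s remaining open faces ($B_A$, $K_A$, $L_A$, and any residual ring strips of $T_A$) to terminate at $A$'s exit port on the bottom edge of $F_A$.

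For (I1), I would lay out $\N_A$ by placing $\N_E$ into the eastern portion of $A$'s inductive region and arranging $A$'s open faces on the western side, stacked vertically in the order the unfolding path traverses them. The $4\times 4$ refinement is sufficient since $A$'s own contribution involves only quarter-unit ring strips together with full faces. For (I2), the type-1 entry connection $e'_A$ is a ring strip of $T_A$ placed alongside the entry port and is included in $\N_A$ by construction; the type-1 exit connection $x'_A$ is the analogous strip of $B_A$ alongside the exit port. When $\xrightarrow{e_A}$ (respectively $\xleftarrow{x_A}$) is open and adjacent to $\T_A$, I would instead (or additionally) provide a type-2 connection by retaining the appropriate ring piece of $A$ alongside the port extension. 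For (I3), the unused open ring faces of $A$ are peripheral quarter strips attached along a single edge to the main body of $\N_A$, so their removal leaves $\N_A$ connected.

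The principal obstacle is the case analysis induced by the conditional inclusion of the unit cells $\XE_A$, $\XX_A$ (driven by the open/closed status of $\xrightarrow{e_A}$ and $\xleftarrow{x_A}$) together with the corresponding cells $\XE_E$, $\XX_E$ for $E$. Each combination yields a slightly different staircase profile for both $A$'s and $E$'s inductive regions, and I must show that $\N_E$'s footprint nests cleanly inside $A$'s without overlap with $A$'s surrounding pieces. Where $\N_E$'s entry or exit is of type-2, I expect to invoke invariant (I3) applied to $E$ to trim ring strips of $E$ that would otherwise collide with $A$'s pieces, and to use \cref{prop:ih} or \cref{lem:ih-symmetry} to reorient $\N_E$ when its natural orientation does not match $A$'s routing. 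This case-by-case bookkeeping, though conceptually routine once the overall layout is fixed, is the bulk of the technical work.
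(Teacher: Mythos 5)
Your proposal follows essentially the same route as the paper: the same \head-first path across $T_A$ into $E$'s recursive unfolding and back over $B_A$, $K_A$, $L_A$ to the exit port, with the invariants checked via the inductive hypothesis on $E$ and the type-1/type-2 connection machinery. One simplification you missed: in this configuration $\xrightarrow{e_E}\in K_A$ is forced open and $\xleftarrow{x_E}\in F_A$ is forced closed, so the case analysis over $\XE_E$, $\XX_E$ (and over $\XE_A$, $\XX_A$, which $\N_A$ never occupies) does not actually arise; the only freedom is whether $\N_E$'s entry connection is type-1 or type-2, and the paper accommodates both simply by placing $e_E\in T_A$ along $\N_E$'s entry port and $\xrightarrow{e_E}\in K_A$ along its entry port extension, with no need to invoke (I3) on $E$ to trim colliding strips.
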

\begin{proof}
\autoref{lem:hand} allows us to restrict our attention to \head-first unfoldings of $A$. The unfolding for this case is depicted in~\autoref{fig:EWdegree2}a: starting at $A$'s entry port, the unfolding path moves \head-first
to $T_A$, then proceeds \hand-first to recursively unfold $E$; 
from $E$'s exit ring face  on $B_A$, it proceeds \head-first up $K_A$ to
$T_A$; from $T_A$, it proceeds \hand-first down $L_A$ to $B_A$, and then
moves \head-first on $B_A$ to $A$'s exit port.  
We now show that, when visited in this order and laid flat in the plane, 
the open faces in $\T_A$ form a net $\N_A$ that satisfies invariants (I1)-(I3).

First note that the net $\N_A$ in~\autoref{fig:EWdegree2}a provides type-1 entry and exit connections, 
since $e'_A \in T_A$ and $x'_A \in B_A$ are positioned alongside its entry and exit ports. This shows that $\N_A$ satisfies invariant (I2).
Also note that (I3) is satisfied, because the only open ring face of $A$ not used in $\N_A$'s entry or exit connections is the piece of $L_A$ dark-shaded in \autoref{fig:EWdegree2}a (located below $\N_A$'s exit port extension), which can be removed from $\N_A$ without disconnecting $\N_A$.

It remains to show that $\N_A$ satisfies invariant (I1). 
We begin with the following set of observations showing that 
the net $\N_E$ produced by the recursive unfolding of $E$ connects to the pieces
of $T_A$, $K_A$, and $B_A$ placed alongside its boundary:
\begin{itemize}
\item Observe first that the entry (exit) port in the recursive unfolding of $E$  is the top (bottom) edge of $R_A$. With this entry (exit) port, $E$'s entry (exit) ring face $e_E$ ($x_E$) is on $T_A$ ($B_A$) and its successor (predecessor) $\xrightarrow{e_E}$ ($\xleftarrow{x_E}$) is on $K_A$ ($F_A$).
\item Since $\xrightarrow{e_E} \in K_A$ is open, the unit square $\XE_E$ (occupied by $\xrightarrow{e_E}$ in~\autoref{fig:EWdegree2}a) is not part of the inductive region for $E$.
Since $\xrightarrow{e_E}$ is also adjacent to $\T_E$, invariant (I2) applied to $E$ tells us 
that $\N_E$ provides either a type-1 or type-2 entry connection.  If $\N_E$ provides a type-1 entry connection, then $e_E'$ is located alongside its entry port, and it connects (by definition) to $e_E \in T_A$  located on the other side of its entry port  (see \autoref{fig:EWdegree2}a); if $\N_E$ provides a type-2 connection, then $\xrightarrow{e'_E}$ is located alongside its entry port extension, and it connects (by definition) to $\xrightarrow{e_E} \in K_A$ located  on the other side of its entry port extension. %(see \autoref{fig:EWdegree2}a). {}
\item Since $\xleftarrow{x_E} \in F_A$ is closed, $\XX_E$ is part of $E$'s inductive region and
the invariant (I2) applied to $E$ tells us 
that $\N_E$  provides a type-1 exit connection.  This means that $x_E'$ is located alongside $\N_E$'s
exit port, and it connects (by definition) to the piece
of $x_E \in B_A$ located on the other side of its exit port (see \autoref{fig:EWdegree2}a).
\end{itemize}

Because invariant (I1) tells us that $\N_E$ is connected and because the pieces of $A$ placed alongside $\N_E$ connect to $\N_E$'s entry and exit connections, we can conclude that $\N_A$ is connected.  By invariant (I1) applied to $E$, the net $\N_E$
includes all open faces in $\T_E$ using a $4 \times 4$ refinement. This along with the fact that $\N_A$ includes $T_A, L_A, B_A,$ and $K_A$ (which are $A$'s open faces) using a $4 \times 4$ refinement shows that $\N_A$ includes all open faces of $T_A$ using a $4\times4$ refinement.   Finally, $\N_A$ fits within $A$'s inductive region as illustrated in \autoref{fig:EWdegree2}a, noting that no part of $\N_A$ lies within the cells marked $\XE_A$ and $\XX_A$ (which renders a discussion of whether or not these cells are part of its inductive region unnecessary). 
Thus we can conclude that $\N_A$ satisfies invariant (I1).
\end{proof}

%
%%%%%%%%%%%%%%%%%%%%%%%%%%%%%%%%%Figure Begin
%\begin{figure}[ht]
%\centering
%\includegraphics[page=2,width=0.7\linewidth]{degree2/EWdegree2.pdf}
%\caption{Unfolding of degree-2 box $A$ with parent $I$ and child $W$.}
%\label{fig:Wdegree2}
%\end{figure}
%%%%%%%%%%%%%%%%%%%%%%%%%%%%%%%%%Figure End
%

\begin{lemma}
Let $A \in \T$ be a degree-2 node with parent $I$ and child $W$ \emph{(Case 2.2)}.
If $W$ satisfies invariants (I1)-(I3), then $A$ satisfies invariants (I1)-(I3).
\label{lem:Wdegree2}
\end{lemma}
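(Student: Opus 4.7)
The plan is to mirror the argument of~\cref{lem:Edegree2} (Case 2.1) left-to-right. Since $W$ sits on $A$'s west face rather than the east face, I would use a \head-first unfolding of $A$ whose path is the horizontal reflection of the one used for the east child: starting at $A$'s entry port, move \head-first onto $T_A$; then \hand-first (with the \hand\ pointing west) across $L_A$ to recursively unfold $W$; from $W$'s exit ring face on $B_A$, continue \head-first up $K_A$ back to $T_A$; from $T_A$, proceed \hand-first down $R_A$ to $B_A$; and finally move \head-first along $B_A$ to $A$'s exit port. Laying the traversed faces flat should produce the net depicted in~\cref{fig:EWdegree2}b.

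Invariants (I2) and (I3) for $A$ are then essentially immediate. The faces $e'_A \in T_A$ and $x'_A \in B_A$ sit alongside the entry and exit ports, providing type-1 entry and exit connections, which settles (I2). For (I3), the only open ring face of $A$ not used for a connection is a small piece of $R_A$ placed just below the exit port extension; because it is attached to $\N_A$ only along a single short edge, it can be removed without disconnecting the net.

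The substantive step is (I1). With the proposed path, $W$'s entry and exit ports are the top and bottom edges of $L_A$, so $e_W \in T_A$, $x_W \in B_A$, $\xrightarrow{e_W} \in K_A$ (open), and $\xleftarrow{x_W} \in F_A$ (closed). Since $\xrightarrow{e_W}$ is open and adjacent to $\T_W$, the cell $\XE_W$ is omitted from $W$'s inductive region, and invariant (I2) applied to $W$ yields either a type-1 connection (attaching $\N_W$ to the $T_A$ strip across the entry port) or a type-2 connection (attaching $\N_W$ to the $K_A$ strip across the entry port extension); either way the attachment is made. Since $\xleftarrow{x_W}$ is closed, $\XX_W$ is part of $W$'s inductive region and (I2) forces a type-1 exit connection, attaching $\N_W$ to the $B_A$ strip across its exit port. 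Combined with the connectivity and $4\times 4$ refinement guaranteed by (I1) for $W$, and with the observation that the surrounding $T_A, K_A, B_A, R_A$ strips also respect a $4\times 4$ grid and lie entirely within $A$'s \head-first inductive region without intruding into the corner cells $\XE_A$ or $\XX_A$, this gives (I1) for $A$.

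The main obstacle I anticipate is purely bookkeeping: correctly identifying which ring face of $W$ lies on which face of $A$ after the reflection, and verifying that every open/closed pattern of $\xrightarrow{e_W}$ and $\xleftarrow{x_W}$ leads to a connection delivered by (I2) that meshes with the strips already laid out in $\N_A$. Because the proposed path is genuinely a left-right mirror of the one used for the east-child case, no new geometric case distinction appears and no additional refinement beyond $4\times 4$ is introduced, so the proof essentially reads off the figure once these correspondences are checked.
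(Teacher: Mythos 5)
Your instinct that Case~2.2 should follow from Case~2.1 by symmetry is right, but the symmetry you chose --- a left--right mirror of the path, with the \hand\ re-pointed west --- is the wrong one, and this is a genuine gap rather than a cosmetic difference. The orientation of the $L$-guide at $A$'s entry port (in particular, which way the \hand\ points) is dictated by the parent's unfolding, not chosen by $A$; the paper's case analysis in standard position implicitly fixes the \hand\ to point east (every figure has $\xrightarrow{e_A}$ on $R_I$, and the appendix even labels unfoldings ``\hand-east''), so Case~2.2 is precisely the configuration ``\hand\ east, child on the west.'' With the \hand\ pointing east, the guide cannot make your proposed first move (\hand-first toward $L_A$); and if you instead flip the \hand\ to point west, you have merely re-derived Case~2.1 on the horizontally reflected orthotree, not handled the new configuration. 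The planar consequence is fatal either way: a horizontal reflection is orientation-reversing, so the mirrored net either has its faces flipped over or is forced to extend to the \emph{left} of the entry port --- but the \head-first inductive region of \cref{def:region} is chiral, with staircase bites at the lower-left and upper-right corners only, so there is no room there and invariant (I1) fails. Your bookkeeping of $e_W$, $\xrightarrow{e_W}$ and $\xleftarrow{x_W}$ is internally consistent, but it is checking the wrong unfolding.

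The paper's proof uses the orientation-preserving symmetry instead: the child-$W$ configuration is the $180^\circ$ rotation of the child-$E$ configuration about the front--back axis, under which the entry and exit ports swap, so the Case~2.2 path is the Case~2.1 path traversed in reverse. \cref{lem:ih-symmetry} (a reverse traversal yields the $180^\circ$-rotated net, which still satisfies (I1)--(I3) with entry and exit exchanging roles, via \cref{prop:ih}) combined with \cref{lem:Edegree2} then finishes the proof in two lines. Note that in the paper's \cref{fig:EWdegree2}b the path consequently reaches $W$ near the \emph{end} of the traversal, after the faces $R_A$ and $K_A$ have been laid out, not at the beginning as in your proposal.
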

\begin{proof}
The unfolding for this case is depicted in~\autoref{fig:EWdegree2}b.  
Note that this unfolding path can be obtained by rotating the path from~\autoref{fig:EWdegree2}a by $180^\circ$.
This along with Lemmas 4 and 5 %~\autoref{lem:ih-symmetry} and~\autoref{lem:Edegree2} 
%~\autoref{lem:ih-symmetry,lem:Edegree2} 
implies that the net $\N_A$ from~\autoref{fig:EWdegree2}b satisfies invariants (I1)-(I3).  
%
%A few more minor modifications are necessary to accommodate for the fact that, in the \hand-east unfolding, 
%$\xrightarrow{e_E} \in K_A$ is open (and so $\XE_E$ does not belong to the inductive region for $E$) and therefore $\N_E$ may provide a type-1 or a type-2 entry connection. Similarly, in the \hand-west unfolding, $\xleftarrow{x_E} \in K_A$ is open (and so $\XX_E$ does not belong to the inductive region for $E$) and thus $\N_E$ may provide a type-1 or a type-2 exit connection. These accommodations are reflected in~\autoref{fig:EWdegree2}. Arguments similar to the ones used in the proof of~\autoref{lem:EJdegree3} show that the net $\N_A$ from~\autoref{fig:EWdegree2} satisfy the inductive hypothesis. 
\end{proof}

%
%%%%%%%%%%%%%%%%%%%%%%%%%%%%%%%%%Figure Begin
\begin{figure}[ht]
\centering
\includegraphics[width=0.9\linewidth]{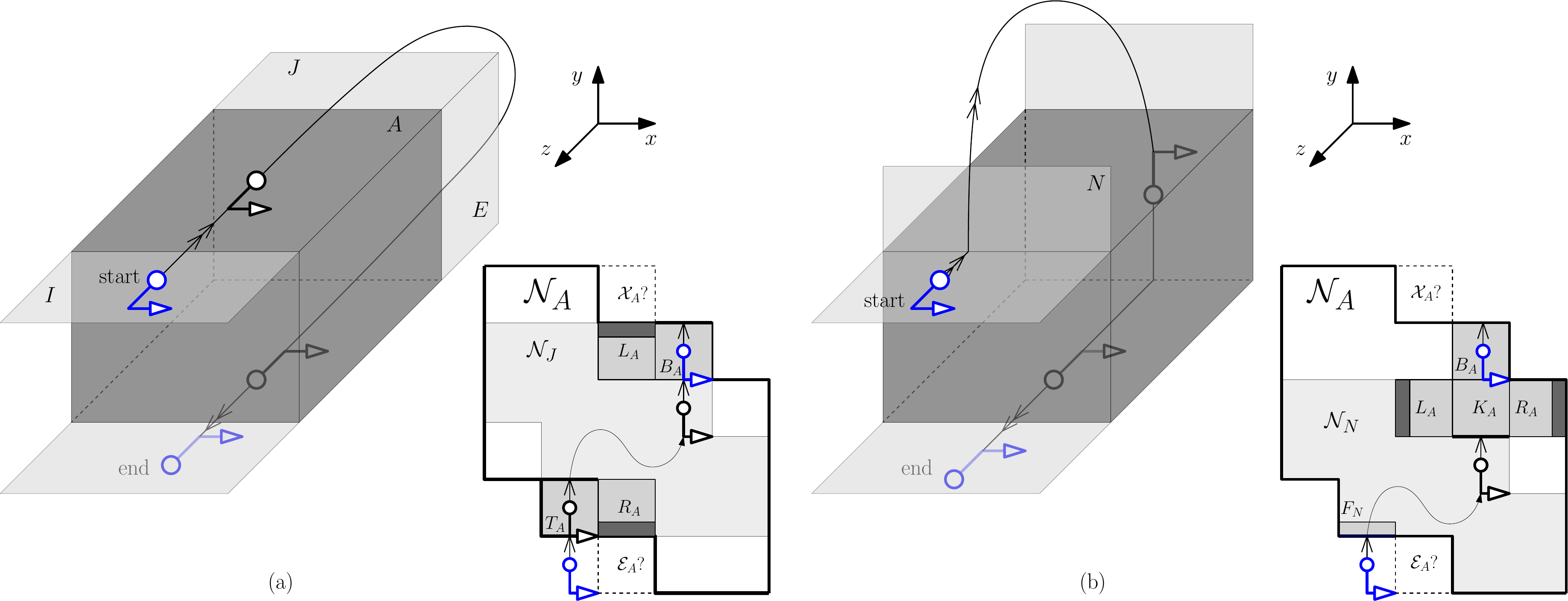}
\caption{Unfolding of degree-2 box $A$ with parent $I$ and child (a) $J$ (b) $N$. 
}
\label{fig:JNdegree2}
\end{figure}
%%%%%%%%%%%%%%%%%%%%%%%%%%%%%%%%%Figure End
%

\begin{lemma}
Let $A \in \T$ be a degree-2 node with parent $I$ and child $J$ \emph{(Case 2.3)}.
If $J$ satisfies invariants (I1)-(I3), then $A$ satisfies invariants (I1)-(I3).
\label{lem:Jdegree2}
\end{lemma}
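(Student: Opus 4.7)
The plan is to handle Case~2.3 by direct analogy with \cref{lem:Edegree2}, taking advantage of the fact that here the parent attaches at $F_A$ and the child attaches at the opposite face $K_A$, so $A$ is a connector and the unfolding path passes straight through it without any lateral detour around its child. By \cref{lem:hand} it suffices to exhibit a \head-first unfolding. The path of \cref{fig:JNdegree2}a starts at the entry port on the top edge of $F_A$, moves \head-first across $T_A$ into $J$'s entry port on the top edge of $K_A$, recursively unfolds $J$, re-emerges at $J$'s exit port on the bottom edge of $K_A$, and continues \head-first across $B_A$ to $A$'s exit port. The remaining open faces $L_A$ and $R_A$ are placed to the left and right of $\N_J$ in the plane, exactly as in the root case of \cref{fig:root}b.

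Invariants (I2) and (I3) are immediate from the picture. The ring strip $e'_A$ on $T_A$ (resp.\ $x'_A$ on $B_A$) sits alongside the entry (resp.\ exit) port and supplies a type-1 entry (resp.\ exit) connection. The only open ring faces of $A$ not used in these connections are the ring strips of $L_A$ and $R_A$ adjacent to $F_A$; each of these lies on the boundary of $\N_A$, so removing it does not disconnect the net.

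For invariant (I1), I would collect the entry/exit data for $J$'s recursive unfolding: $e_J$ and $x_J$ are the ring strips of $T_A$ and $B_A$ adjacent to $K_A$, while $\xrightarrow{e_J}$ and $\xleftarrow{x_J}$ lie on $R_A$ and $L_A$ respectively (as already worked out in the proof of \cref{thm:main} for the analogous root configuration). Because $A$ has degree $2$ with no east or west neighbor, both $\xrightarrow{e_J}$ and $\xleftarrow{x_J}$ are open, so the unit cells $\XE_J$ and $\XX_J$ are \emph{not} part of $J$'s inductive region, which is exactly what lets $R_A$ and $L_A$ occupy those slots in the plane. Each of these ring faces is also adjacent across its unit edge to a corresponding side ring face of $J$ in $\T_J$. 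Applying invariant (I2) inductively to $J$, the net $\N_J$ provides either a type-1 or a type-2 entry connection (attaching to $e_J \in T_A$ or to $\xrightarrow{e_J} \in R_A$, respectively), and symmetrically for the exit; either choice yields a valid, overlap-free attachment of $\N_J$ to the surrounding $A$-pieces. Combined with invariant (I1) on $J$, this shows that $\N_A$ is a connected net containing all open faces of $\T_A$, fits inside $A$'s \head-first inductive region, and uses only a $4 \times 4$ refinement.

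The main thing to watch is the four-way case analysis at the two connection points, checking that whichever of type-1 or type-2 the induction delivers for $J$'s entry and exit, the attached piece meets the adjacent strip of $T_A$, $B_A$, $R_A$, or $L_A$ cleanly along the shared unit edge. Once this bookkeeping is dispatched as in \cref{lem:Edegree2}, no new geometric or combinatorial obstacle arises, and the lemma follows.
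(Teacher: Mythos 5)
Your proposal matches the paper's proof of this lemma essentially step for step: the same unfolding path from \cref{fig:JNdegree2}a (modeled on the root case), the same type-1 entry/exit connections for (I2), the same observation that $\xrightarrow{e_J}\in R_A$ and $\xleftarrow{x_J}\in L_A$ are open and adjacent to $\T_J$ so that $\XE_J$, $\XX_J$ leave room for $R_A$, $L_A$ while (I2) on $J$ guarantees a type-1 or type-2 attachment, and the same removable dark-shaded ring strips for (I3). No substantive differences.
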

\begin{proof}
Consider the unfolding depicted in~\autoref{fig:JNdegree2}a, and notice its similarity with the unfolding of the root box from~\autoref{fig:root}. We show that the unfolding $\N_A$ from~\autoref{fig:JNdegree2}a satisfies invariants (I1)-(I3). 
%The cells marked $\XE_A$ ($\XX_A$) may or may not belong to $\N_A$'s inductive region, depending on whether $\xrightarrow{e_A}$ ($\xleftarrow{x_A}$) is open or closed.

Note that $\N_A$ provides a type-1 entry connection ($e'_A \in T_A$) and a type-1 exit connection ($x'_A \in B_A$), and therefore it satisfies invariant (I2). Since $\xrightarrow{e_J} \in R_A$ ($\xleftarrow{x_J} \in L_A$) is open,
$\XE_J$ ($\XX_J$) is not part of $J$'s inductive region. Furthermore, since $\xrightarrow{e_J}$ ($\xleftarrow{x_J}$)
is adjacent to $\T_J$, invariant (I2) applied to $J$ tells us that $\N_J$ provides a type-1 or type-2 entry (exit) connection, which attaches to $T_A$ or $R_A$ ($B_A$ or $L_A$). Thus the net $\N_A$ is connected. 

By invariant (I1), $\N_J$ covers all open faces in $\T_J$ using a 
$4 \times 4$ refinement. Since $\N_A$ includes the open faces of $A$ without any refinement, 
we conclude that $\N_A$ includes all open faces of $\T_A$. Noting that $\N_A$ fits within $A$'s inductive region
(and doesn't use the cells marked $\XE_A$ and $\XX_A$), we conclude that $\N_A$ satisfies invariant (I1).
Finally, the open ring faces of $A$ not used in its entry and exit connections (dark-shaded in~\autoref{fig:JNdegree2}a) can be removed from $\N_A$ without disconnecting $\N_A$, therefore $\N_A$ satisfies invariant (I3).
\end{proof}

\begin{lemma}
Let $A \in \T$ be a degree-2 node with parent $I$ and child $N$ \emph{(Case 2.4)}.
If $N$ satisfies invariants (I1)-(I3), then $A$ satisfies invariants (I1)-(I3).
\label{lem:Ndegree2}
\end{lemma}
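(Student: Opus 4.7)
My plan is to emulate the proof template of the preceding degree-2 lemmas (\cref{lem:Edegree2,lem:Wdegree2,lem:Jdegree2}): exhibit a specific \head-first unfolding path for $A$, lay the resulting net $\N_A$ flat in the plane, and then verify the three invariants using the inductive hypothesis applied to $N$. The feature that distinguishes Case~2.4 from the previous cases is that $N$ sits \emph{above} $A$ in standard position, so $T_A$ is closed and the recursive unfolding of $N$ cannot be spliced alongside $A$'s open side faces in the same way as for a horizontal or back child; instead $\N_N$ must occupy the upper portion of $A$'s inductive region.

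Concretely, I would route the unfolding path so that from $A$'s entry port on the top edge of $F_A$ it immediately transitions onto the ring of $N$ -- since $T_A$ is closed, the surface continues from the entry ring face across the entry port onto $F_N$ -- and there recursively unfolds $N$ by the inductive hypothesis. The path then re-emerges at $N$'s exit onto $K_A$ and wraps \hand-first and \head-first across $A$'s four open faces $K_A, L_A, B_A, R_A$ in the canonical cyclic order matching the staircase boundary of the \head-first inductive region, finishing at $A$'s exit port on the bottom edge of $F_A$. By invariant (I1) applied to $N$, the subregion occupied by $\N_N$ is exactly the block above $A$'s ports, and $A$'s four open side faces tile the remainder of $A$'s inductive region flat.

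Verification of the invariants then goes as follows. (I2): $e'_A$ and $x'_A$ lie directly alongside $A$'s entry and exit ports, yielding type-1 entry and exit connections. (I3): the open ring pieces of $A$ not consumed by these connections appear as dark-shaded peripheral strips in $\N_A$ and can be removed without disconnecting the net. (I1): $\N_A$ covers all open faces of $\T_A$, because $\N_N$ covers $\T_N$ by the inductive hypothesis and $A$'s own open faces are added flat; the $4 \times 4$ refinement is inherited from $\N_N$, and $\N_A$ is contained in $A$'s inductive region by construction.

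The principal obstacle I anticipate is the subcase analysis on whether $\xrightarrow{e_N}$ and $\xleftarrow{x_N}$ are open or closed -- controlling whether $\N_N$ supplies type-1 or type-2 entry/exit connections -- together with verifying in each subcase that the adjacent piece of $A$ (or, when needed, a ring strip released from $\N_N$ by invariant (I3) applied to $N$) correctly connects $\N_N$ to the wrap of $A$'s side faces, just as was handled explicitly in the proof of \cref{lem:Edegree2}. If the chosen orientation produces a net whose entry/exit alignment does not match the ports of $A$'s inductive region, the $180^\circ$-rotated version supplied by \cref{prop:ih} (equivalently, reversing the unfolding path via \cref{lem:ih-symmetry}) provides the alternative layout.
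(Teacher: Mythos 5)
Your proposal matches the paper's proof in all essentials: the same unfolding path (cross $A$'s entry port directly into the recursive unfolding of $N$, whose entry port coincides with $A$'s, re-emerge on $K_A$, wrap $A$'s open faces down to the exit port with $x'_A \in B_A$), the same type-1 entry/exit connections for $A$, and the same use of (I2) applied to $N$ to guarantee that $\N_N$'s type-1 or type-2 exit connection attaches to the piece of $K_A$ or $L_A$ placed alongside its exit port or exit port extension. The one simplification you did not notice is that no entry-side subcase actually arises: $\xrightarrow{e_N} \in R_I$ is not adjacent to $\T_N$, so $\N_N$ is forced to provide a type-1 entry connection, which serves as $A$'s type-1 entry connection because $e'_A = e'_N \in F_N$.
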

\begin{proof}
Consider the unfolding depicted in~\autoref{fig:JNdegree2}b. Note that $\xrightarrow{e_A} = \xrightarrow{e_N} \in R_I$ is not adjacent to $\T_N$, therefore $\N_N$ will provide a type-1 entry connection (by (I2) applied to $N$), which is also a type-1 entry connection for $A$ (because $e'_N = e'_A \in F_N$).  Note that $\N_A$ also provides a type-1 exit connection $x'_A \in B_A$, therefore $\N_A$ satisfies invariant (I2). Since $\xleftarrow{x_N} \in L_A$ is open, the unit square $\XX_N$ (occupied by $L_A$ in~\autoref{fig:JNdegree2}b) does not belong to the inductive region for $N$. Furthermore, since $\xleftarrow{x_N}$ is adjacent to $\T_N$, invariant (I2) applied to $N$ tells us that $\N_N$ provides a type-1 or type-2 exit connection, 
which attaches to $K_A$ or $L_A$ (located along the exit port and exit port extension). 
Thus the net $\N_A$ is connected.  
Arguments similar to those in \autoref{lem:Jdegree2} complete the proof that $\N_A$ 
satisfies (I1) and show that it satisfies invariant (I3).
%
%By the inductive hypothesis, $\N_N$ covers all open faces in $\T_N$ using a 
%$4 \times 4$ refinement. Since $\N_A$ includes the open faces of $A$ without any refinement, 
%we conclude that $\N_A$ includes all open faces of $\T_A$ and satisfies (I1) of the inductive hypothesis. 
%
%Finally, the open ring faces of $A$ not used in its entry and exit connections (dark-shaded in~\autoref{fig:JNdegree2}b) can be 
%removed from $\N_A$ without disconnecting $\N_A$, therefore $\N_A$ satisfies (I3) of the inductive hypothesis.
\end{proof}

%%%%%%%%%%%%%%%%%%%%%%%%%%%%%%%%%%%%%%
%              DEGREE 3
%%%%%%%%%%%%%%%%%%%%%%%%%%%%%%%%%%%%%%

\subsection{Unfolding Degree-3 Nodes}
\label{sec:degree3}
In this section we describe the recursive unfolding of a box $A \in \T$ of degree $3$, and show that it satisfies 
the invariants (I1)-(I3) listed in Section~\ref{sec:invariants}. 

\begin{theorem}
\label{thm:degree3}
Let $A \in \T$ be a degree-3 box. If $A$'s children satisfy invariants (I1)-(I3), then 
$A$ satisfies invariants (I1)-(I3).
\end{theorem}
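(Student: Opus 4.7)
The plan is to mirror the structure of the proof of~\cref{thm:degree2} and enumerate all configurations of $A$'s two children (recall that $A$'s parent contributes one unit to its degree). In standard position with parent $I$ attached to $F_A$, the two children can reside on any two of the five remaining faces $\{T_A, B_A, L_A, R_A, K_A\}$, giving $\binom{5}{2}=10$ raw configurations. First I would apply~\cref{lem:hand} to restrict attention to \head-first unfoldings, which effectively identifies cases related by east–west swap, and then use the vertical-reflection symmetry (as in the reduction of the $S$-child case to the $N$-child case for degree 2) to identify cases related by top–bottom swap. This should collapse the analysis into a short list of representative sub-cases: roughly $\{E,W\}$, $\{E,N\}$, $\{E,S\}$, $\{E,J\}$, $\{N,S\}$, $\{N,J\}$, and a few others, each of which I would phrase as a separate lemma of the form ``if the two children of $A$ satisfy (I1)–(I3), then $A$ satisfies (I1)–(I3)'', analogous to \cref{lem:Edegree2,lem:Wdegree2,lem:Jdegree2,lem:Ndegree2}.

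For each representative sub-case I would construct an explicit unfolding path starting at $A$'s entry port on the top edge of $F_A$, traversing $T_A$ (and possibly other ring pieces of $A$) until it reaches the entry port of the first child, recursively unfolding that child using its inductive net, returning to $A$ via the child's exit port, continuing along $A$'s surface to the entry port of the second child, recursively unfolding it, and finally routing back to $A$'s exit port on the bottom edge of $F_A$. The inductive hypothesis gives us, for each child $C$, a net $\N_C$ inside its inductive region with type-1/type-2 entry and exit connections along the appropriate ports, and invariant (I3) lets me shave off any unused ring faces of $C$ so that they can be redeployed as ``bridging'' strips between $\N_C$ and the surrounding pieces of $A$ (exactly the mechanism previewed in the discussion of invariant (I3) and~\cref{fig:NEWSdegree5}).

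Verification of (I1)–(I3) for each case would then proceed along the lines of~\cref{lem:Edegree2}: check that the chosen entry/exit ports of each child land on open faces of $A$ with the correct successor/predecessor ring faces so that $\N_C$'s guaranteed connection type (type-1 if the relevant neighboring ring face is closed or not adjacent to $\T_C$, type-1 or type-2 otherwise) attaches cleanly to the adjacent piece of $A$; check that the resulting layout fits inside $A$'s \head-first inductive region (including the conditional cells $\XE_A$ and $\XX_A$ as dictated by~\cref{def:region}); confirm that $\N_A$ itself provides a type-1 entry and type-1 exit connection at $T_A$ and $B_A$ (so (I2) holds trivially for $A$); and identify the dark-shaded unused ring faces of $A$ whose removal keeps $\N_A$ connected (verifying (I3)).

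The main obstacle will be the cases where the two children are on geometrically adjacent faces, because there the unfolding path cannot simply pass along a flat strip between them without colliding with a child's inductive region or overflowing the inductive region of $A$. The $\{N,J\}$ case is the critical one — the paper flags it explicitly as the reason the refinement must be $4\times 4$ rather than $4\times 1$ or $2\times 4$, and~\cref{fig:NJdegree3}a is cited as needing a $4$-refinement along one dimension. My plan there is to route the path after recursively unfolding $N$ along a $1/4$-width strip of $A$'s ring so that $J$'s entry/exit connections line up with the leftover staircase of $A$'s inductive region; verifying that this routing keeps $\N_A$ within the prescribed staircase shape, uses only $4\times 4$ grid cuts, and still leaves type-1 entry/exit connections at the top and bottom of $F_A$ will be the most delicate calculation. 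The remaining cases (where the two children share no common ring face, e.g.\ $\{E,W\}$, $\{E,J\}$, $\{N,S\}$) should be considerably easier because the two child subtrees can be unfolded into disjoint halves of $A$'s inductive region with no interference.
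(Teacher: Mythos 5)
Your high-level strategy matches the paper's: enumerate the $\binom{5}{2}=10$ child configurations, collapse them by reflection/rotation symmetries and by \cref{lem:ih-symmetry} into six representative cases, and for each case lay down an explicit unfolding path verified against (I1)--(I3), with the $\{N,J\}$ case correctly singled out as the one forcing the $4$-refinement (\cref{fig:NJdegree3}a). However, there are two concrete gaps. First, your claim that (I2) ``holds trivially'' because $\N_A$ provides type-1 entry and exit connections \emph{at $T_A$ and $B_A$} fails whenever a child sits on $T_A$ or $B_A$: those faces are then closed, so the entry (exit) connection must instead be supplied by a ring face of the child's subtree (e.g.\ $e'_A \in F_N$ in \cref{fig:NJdegree3}b) or be a type-2 connection on a lateral face (e.g.\ $\xrightarrow{e'_A} \in R_A$ in \cref{fig:NJdegree3}a). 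Which of these is available depends on whether $\xrightarrow{e_A}$ is open and adjacent to $\T_A$, so (I2) requires a genuine case check rather than a uniform statement.

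Second, and more substantively, your plan treats each child as a black box attached through its inductive net, but in several degree-3 configurations this is not how the construction can go. The paper's proofs of Cases 3.2, 3.5, and 3.6 must branch further on whether specific surrounding faces ($R_I$, $K_N$, $B_I$, $T_N$, \ldots) are open or closed, and in the ``closed'' branches the unfolding path cannot reach a usable entry port for the child's net at all; instead it enters the child box and directly and separately unfolds the subtrees of the \emph{grandchildren} (e.g.\ $\N_{NN}$, $\N_{NW}$, $\N_J$ in \cref{fig:NJdegree3}b, or $\N_{ES}$, $\N_{EE}$ in \cref{fig:NEdegree3a}b), which is legitimate only because the ambient induction in \cref{thm:main1} is strong induction on height. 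Your proposal neither anticipates this sub-case structure nor the need to descend past the children, and the cases you label ``considerably easier'' include $\{N,W\}$-type configurations that in the paper require four separate scenarios with grandchild-level unfoldings. Without these ingredients the plan as written would stall exactly where a child's inductive region cannot be embedded in the staircase left over after placing the first child's net.
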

\begin{proof}
Our analysis is split into five different cases, depending on the position of $A$'s children:
\begin{itemize}
%\squeezelist
\item[]
\begin{itemize}
\item[]
\begin{enumerate}
\squeezelist
\item[Case 3.1] $E$ and $J$ are children of $A$. The case where $W$ and $J$ are children of $A$ is a horizontal reflection of this case, with the unfolding path traversed in reverse. 
%The case . 
%
\item[Case 3.2] $N$ and $J$ are children of $A$. The case where $S$ and $J$ are children of $A$ is a vertical reflection of this case, with the unfolding path traversed in the reverse. 

\item[Case 3.3] $E$ and $W$ are children of $A$.
\item[Case 3.4]  $N$ and $S$ are children of $A$.
\item[Case 3.5] $N$ and $E$ are children of $A$. This is the same as the case where $S$ and $W$ are children of $A$, rotated by $180^\circ$ about the z-axis (so the unfolding path is the same, but traversed in reverse). 

\item[Case 3.6] $N$ and $W$ are children of $A$. This is the same as the case where $S$ and $E$ are children of $A$, rotated by $180^\circ$ about the z-axis (so the unfolding path is the same, but traversed in reverse). 
\end{enumerate}  
\end{itemize}
\end{itemize}
The rest of this section is devoted to a detailed analysis of Cases 3.1 and 3.2. Case 3.2 in particular is special because it 
requires $4 \times 4$ refinement. 
Cases 3.3 through 3.5, while employing different unfolding paths, use similar arguments in their correctness proofs and are detailed in Appendix~\ref{sec:degree3-appendix}. %~\cite{DF20-arxiv}. !!!
Note that the ability to "traverse in reverse" in some of the cases listed above follows from~\autoref{lem:ih-symmetry}.
%Case 3.1 is settled by~\autoref{lem:EJdegree3}. 
\end{proof}

%For the sake of avoiding repetitiveness, we defer these four cases to Appendix~\ref{sec:degree3-appendix}.

%
%%%%%%%%%%%%%%%%%%%%%%%%%%%%%%%%%Figure Begin
\begin{figure}[ht]
\centering
\includegraphics[width=\linewidth]{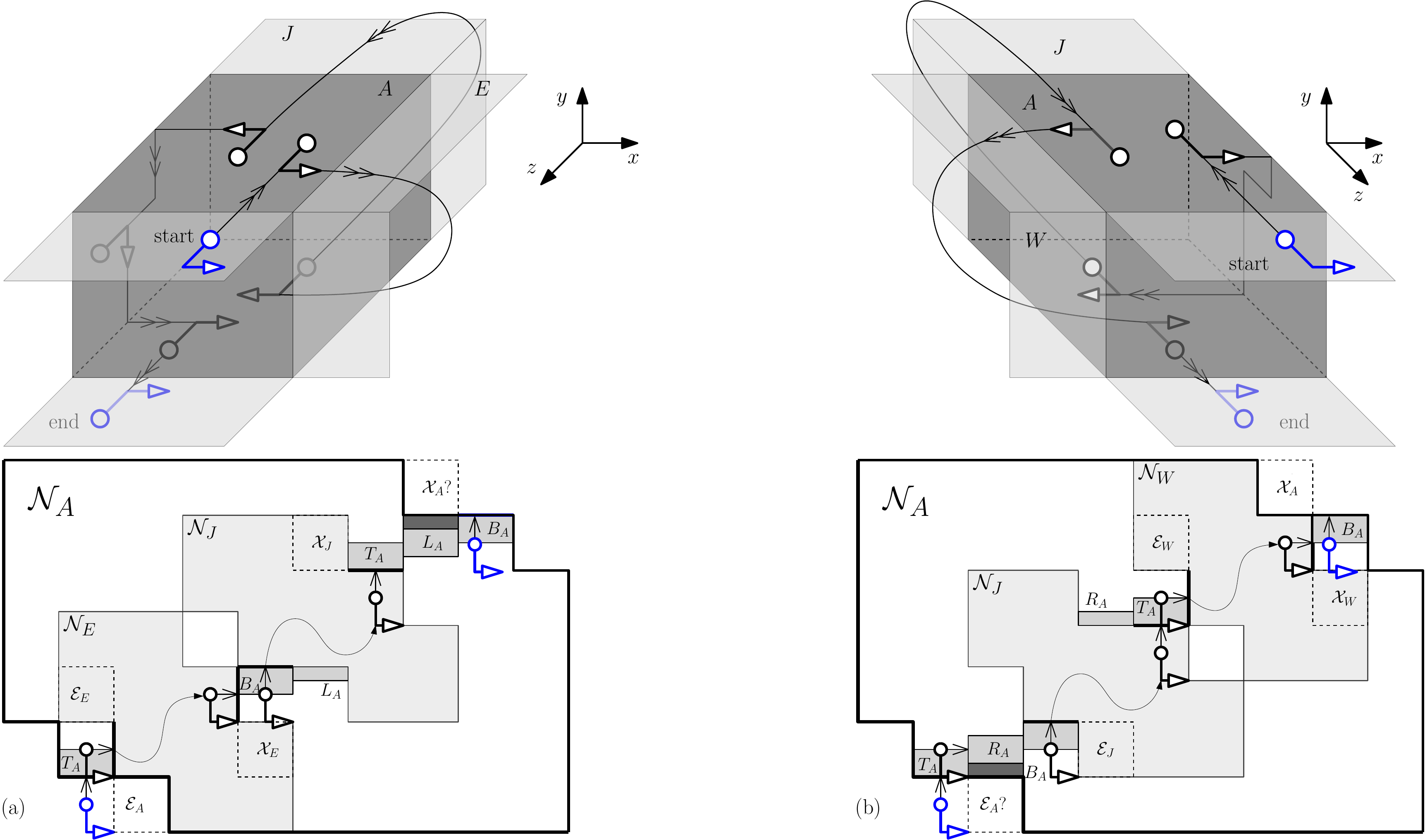}
\caption{Unfolding of degree-3 box $A$ with children $J$ and (a) $E$ (b) $W$.}
\label{fig:EJ-WJdegree3}
\end{figure}
%%%%%%%%%%%%%%%%%%%%%%%%%%%%%%%%%Figure End
%

\begin{lemma}
Let $A \in \T$ be a degree-3 node with parent $I$ and children $E$ and $J$ \emph{(Case 3.1)}.
If $A$'s children satisfy invariants (I1)-(I3), then $A$ satisfies invariants (I1)-(I3).
\label{lem:EJdegree3}
\end{lemma}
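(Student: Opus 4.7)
The plan is to follow the template of~\cref{lem:Edegree2} (Case 2.1), adapted for a second recursive subtree rooted at $J$. First I would describe the unfolding path explicitly, as shown in~\cref{fig:EJ-WJdegree3}a: starting head-first at $A$'s entry port, the path moves head-first onto $T_A$ and then head-first into $J$ (so that $\N_J$ flattens above $T_A$ in the plane, as in~\cref{fig:root}); after $J$'s recursive unfolding returns to $T_A$, the path continues hand-first eastward to recursively unfold $E$ (with $\N_E$ flattening east of $B_A$, as in~\cref{fig:EWdegree2}a); from $E$'s exit ring face on $B_A$, the path proceeds head-first up $K_A$, hand-first west across $T_A$, head-first down $L_A$, and finally head-first along $B_A$ to the exit port.

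For invariant (I2), since $A$ is in standard position with parent $I$ on $F_A$, the ring face $e'_A \in T_A$ sits alongside the entry port and $x'_A \in B_A$ sits alongside the exit port, giving $\N_A$ the required type-1 entry and type-1 exit connections. Invariant (I3) is verified by identifying the open ring faces of $A$ not used in its entry or exit connections (the dark-shaded pieces in~\cref{fig:EJ-WJdegree3}a) and noting that their removal does not disconnect $\N_A$.

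The bulk of the work goes into invariant (I1), where I would separately verify that each of $\N_J$ and $\N_E$ connects to the surrounding $A$-pieces placed alongside its boundary, that the two subtree nets together with $A$'s open faces fit within $A$'s inductive region, and that the overall refinement stays within $4 \times 4$. For $\N_J$, the argument mirrors the root unfolding: $e_J \in T_A$ and $\xrightarrow{e_J} \in R_A$; since $R_A$ is closed (because $E$ is a child), $\XE_J$ is part of $J$'s inductive region and $\N_J$ provides a type-1 entry connection attaching to $T_A$, while a symmetric analysis on the exit side handles the attachment along $B_A$ or $L_A$. For $\N_E$, the key difference from~\cref{lem:Edegree2} is that $\xrightarrow{e_E} \in K_A$ is now \emph{closed} (since $J$ is attached to $K_A$), so $\XE_E$ belongs to $E$'s inductive region and invariant (I2) applied to $E$ yields a type-1 entry connection attaching to $e_E \in T_A$; the exit side proceeds exactly as in~\cref{lem:Edegree2} (type-1 via $B_A$, since $F_A$ is closed).

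The main obstacle I expect is the geometric packing argument: with two subtree nets to accommodate in a single inductive region, I need to audit that $\N_J$ (sitting above $T_A$) and $\N_E$ (sitting east of the $T_A/K_A/B_A$ column) do not collide with each other or with the $A$-strips interposed between them, and that the staircase bites of $A$'s inductive region are correctly placed so as to absorb the optional cells $\XE_J$, $\XX_J$, $\XE_E$, $\XX_E$ that may or may not appear in the children's nets. This requires checking the width and height footprints cell-by-cell; because both children's inductive regions are at least $3 \times 3$, and $A$ contributes a further $T_A \cup K_A \cup B_A$ column and an $L_A$ strip, the combined layout fits the shape prescribed by~\cref{def:region} without exceeding the $4 \times 4$ refinement bound (no finer refinement is introduced at this case; the $4 \times 4$ bound is forced later in Case 3.2).
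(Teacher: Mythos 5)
Your overall strategy matches the paper's (generalize the Case~2.1 unfolding, check (I2) via type-1 connections on $T_A$ and $B_A$, check (I3) via removable dark-shaded ring pieces, and reduce (I1) to verifying each child's entry/exit connections), and your analysis of $\N_E$'s entry connection --- that $\xrightarrow{e_E} \in K_A$ is now closed, so $\XE_E$ is in $E$'s inductive region and $\N_E$ must provide a type-1 entry connection attaching to $e_E \in T_A$ --- is exactly the paper's first bullet. But your unfolding path is not well-formed, and the error is not cosmetic. You send the path ``head-first up $K_A$'' after finishing $E$, yet $K_A$ is closed (it is the face shared with $J$), as you yourself note two sentences later; a closed face cannot appear in the net, so this leg of the path does not exist. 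The paper's construction is precisely that the face $K_A$ from the degree-2 net of \cref{fig:EWdegree2}a is \emph{replaced} by $\N_J$; you instead insert $\N_J$ at the start (above $T_A$) \emph{and} keep the $K_A$ traversal, which double-counts a face that is not on the surface.

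There is a second inconsistency: you enter $J$ head-first from $T_A$, so $e_J \in T_A$ and $J$'s exit port is the \emph{opposite} edge of the shared face, meaning the path must return on $B_A$ --- it cannot ``return to $T_A$'' and then proceed hand-first east to enter $E$ from the top edge of $R_A$ with $x_E \in B_A$, as you claim. In the paper the order is reversed: $E$ is unfolded first exactly as in Case~2.1, and $J$ is entered at the \emph{bottom} edge of $F_J$, so $e_J \in B_A$ with $\xrightarrow{e_J} \in L_A$ open and adjacent to $\T_J$ (hence $\XE_J$ is excluded and $\N_J$ may give a type-1 or type-2 entry connection), while $x_J \in T_A$ with $\xleftarrow{x_J} \in R_A$ closed (hence a type-1 exit connection). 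Your connection analysis for $J$ (type-1 entry forced by $R_A$ closed, type-1/type-2 exit toward $B_A$ or $L_A$) is the mirror image of this and is tied to a path that, as written, does not close up; to repair the proof you would need to either adopt the paper's ordering or produce a consistent alternative layout in which no closed face is traversed and the exit of each child's net coincides with where the next segment of the path begins.
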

\begin{proof}
The unfolding for this case is depicted in~\autoref{fig:EJ-WJdegree3}a. Observe that it is a generalization of the
degree-2 unfolding from~\autoref{fig:EWdegree2}a, where the unfolded face $K_A$ is replaced by the recursive unfolding of child $J$. 
Since the two unfoldings and the proofs of their correctness are very similar, we only point out the differences here:
\begin{itemize}
%\item If $\xleftarrow{x_A}$ is closed (open), then $\XX_A$ belongs (doesn't belong) to $A$'s inductive region (by definition). 
%This dual case scenario is depicted by the cell labeled $\XX_A?$ in~\autoref{fig:EJ-WJdegree3}a. 
%
\item Because the ring face $\xrightarrow{e_E} \in K_A$ is closed, $\XE_E$ is 
part of $E$'s inductive region. By invariant (I2) applied to $E$, $\N_E$ provides a type-1 entry connection, which connects to $e_E \in T_A$.
\item Observe that the entry (exit) port for $J$ is the bottom (top) edge of $F_J$
and so the entry (exit) ring face $e_J$ ($x_J$) is part of $B_A$ ($T_A$). Because $\xrightarrow{e_J} \in L_A$ is open, the unit square $\XE_J$ (occupied by $\xrightarrow{e_J}$ in~\autoref{fig:EJ-WJdegree3}a) is
not part of $J$'s inductive region. Furthermore, since $\xrightarrow{e_J}$ is adjacent to $\T_J$, invariant (I2) applied to $J$ tells us that $\N_J$ provides a type-1 or type-2 entry connection: if type-1, then it  connects  to 
the piece $e_J \in B_A$; if type-2, then it connects to $\xrightarrow{e_J} \in L_A$. 
\item Because the ring face $\xleftarrow{x_J} \in R_A$ is closed, $\XX_J$ is 
part of $E$'s inductive region. By invariant (I2) applied to $J$, $\N_J$ provides a type-1 exit connection, which connects to $x_J \in T_A$. 
%
%\item The only open ring face of $A$ not used in $\N_A$'s entry or exit connections is $\xleftarrow{x'_A} \in L_A$ (this ring face is shown in dark gray in the net $\N_A$ from~\autoref{fig:EJ-WJdegree3}a, left of its exit port).  Its removal does not disconnect $\N_A$, therefore $\N_A$ satisfies (I3) of the inductive hypothesis.
\end{itemize}
These differences combined with arguments similar to those in \autoref{lem:Edegree2} show
that $\N_A$ satisfies invariants (I1)-(I3).
\end{proof}

The case where $W$ and $J$ are children of $A$ shown in~\autoref{fig:EJ-WJdegree3}b is the reverse
of the case shown in~\autoref{fig:EJ-WJdegree3}a. This along with~\autoref{lem:ih-symmetry} implies that the net
$\N_A$ from~\autoref{fig:EJ-WJdegree3}b also satisfies invariants (I1)-(I3).

%
%%%%%%%%%%%%%%%%%%%%%%%%%%%%%%%%%Figure Begin
\begin{figure*}[htbp] 
\centering
\includegraphics[width=\linewidth]{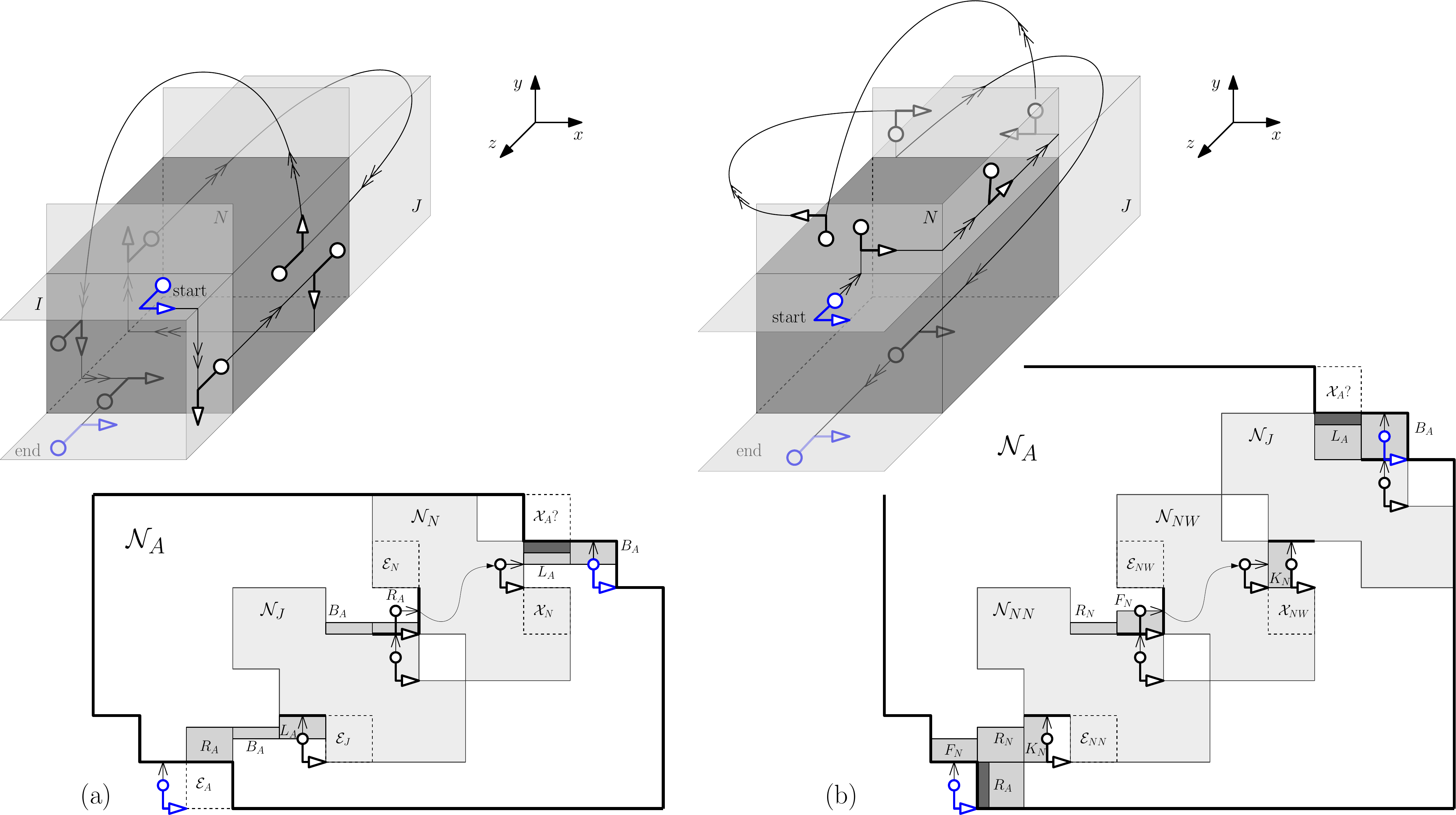}
\caption{Unfolding of degree-3 box $A$ with parent $I$ and children $N$ and $J$ (a) $R_I$ open. Note that $A$ requires a $4$-refinement along the $z$-dimension to be able to generate the strips of $B_A$ and $L_A$ shown here.} (b) $R_I$ closed (so $R_N$ open). 
\label{fig:NJdegree3}
\end{figure*}
%%%%%%%%%%%%%%%%%%%%%%%%%%%%%%%%%Figure End
%
\begin{lemma}
\label{lem:NJdegree3}
Let $A \in \T$ be a degree-3 node with parent $I$ and children $N$ and $J$ \emph{(Case 3.2)}.
If $A$'s children satisfy invariants (I1)-(I3), then $A$ satisfies invariants (I1)-(I3).
\end{lemma}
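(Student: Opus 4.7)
The plan is to follow the same template used in Lemmas~\ref{lem:Edegree2}, \ref{lem:Jdegree2}, \ref{lem:Ndegree2}, and \ref{lem:EJdegree3}: describe the unfolding path, then verify invariants (I2), (I3), and (I1) in that order, using the inductive hypothesis applied to the children $N$ and $J$. The new wrinkle here, flagged by the caption of \cref{fig:NJdegree3}, is that the case splits on the open/closed status of the ring face on the parent's east side (equivalently, on whether $R_I$ is open or $R_I$ is closed with $R_N$ open), and in one of the two subcases the argument forces a $4$-refinement along the $z$-dimension. So the first move is to split the analysis into subcase (a) ($R_I$ open) and subcase (b) ($R_I$ closed, hence $R_N$ open), matching the two panels of the figure.

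In subcase (a), I would read the unfolding path off \cref{fig:NJdegree3}a: starting at $A$'s entry port, the path moves \head-first onto $T_A$'s ring region, recurses into $N$, and after $N$'s exit port routes through narrow strips of $A$'s open faces ($R_A$, $B_A$, $L_A$) interleaved with the recursive unfolding of $J$, finally reaching $A$'s exit port on $B_A$. To verify (I2), note that $e'_A$ lies alongside $A$'s entry port and $x'_A$ lies alongside $A$'s exit port, yielding type-$1$ entry and exit connections. For (I3), the open ring faces of $A$ that are not used in its entry/exit connections appear as dark-shaded strips and can be deleted without disconnecting $\N_A$. The bulk of the work is (I1): each recursive child net $\N_N$ and $\N_J$ connects to $\N_A$ by inspecting the status of $\xrightarrow{e_N}$, $\xleftarrow{x_N}$, $\xrightarrow{e_J}$, $\xleftarrow{x_J}$ (which here lie on faces of $I$, $A$, or the sibling boxes) and invoking (I2) on each child, exactly as in \cref{lem:Edegree2}; the type-$1$/type-$2$ dichotomy tells me on which side of the corresponding port or port extension the child's net attaches, and in every combination the strips of $A$'s ring shown in the figure bridge the two child nets.

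The main obstacle, and the reason this case needs $4\times 4$ refinement, is showing that the strips of $B_A$ and $L_A$ that route around $\N_N$ in subcase (a) actually fit inside $A$'s inductive region without overlap. Because $\N_N$'s inductive region occupies a $3$-unit-tall block sitting on top of the entry port, the leftover vertical corridor available for carrying $B_A$ over to $L_A$ and then down to the exit port is only a fraction of a unit wide; the figure shows that this corridor has to accommodate four distinct $z$-direction strips stacked side by side, which is exactly why a $4$-refinement along $z$ is unavoidable. I would state this explicitly and point to \cref{fig:NJdegree3}a as the witness, then verify that the $4\times 4$ grid does suffice by noting that each strip is $1/4$ unit wide and there are at most four of them.

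Subcase (b) is easier: with $R_I$ closed (so $\xrightarrow{e_A}=\xrightarrow{e_N}\in R_I$ is closed and $\XE_N$ is part of $N$'s inductive region), (I2) applied to $N$ forces a type-$1$ entry connection for $N$, eliminating the awkward routing through $R_A$. The unfolding path in \cref{fig:NJdegree3}b then needs only the standard strips of $A$'s open faces to bridge $\N_N$ and $\N_J$, and the invariants follow by the same inspection of $\xrightarrow{\cdot}$ and $\xleftarrow{\cdot}$ on each child combined with (I1)--(I3) applied to $N$ and $J$. I would close by remarking that, having verified all invariants in both subcases, the lemma follows.
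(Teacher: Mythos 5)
Your high-level skeleton matches the paper's: the same split on whether $R_I$ is open or closed, the same invariant-checking template driven by (I2) applied to the children, and the same identification of the strips of $B_A$ in the $R_I$-open case as the source of the $4$-refinement (the paper counts them as $1/4+1/4+1/2$ rather than four quarter-strips, but that is cosmetic). However, two of your concrete claims do not survive contact with the actual construction. First, in subcase (a) you assert that $\N_A$ provides a type-$1$ entry connection with $e'_A$ alongside the entry port; in the paper's net the entry connection is type-$2$, realized by $\xrightarrow{e'_A} \in R_A$ placed along the entry port extension (legitimate precisely because $\xrightarrow{e_A} \in R_I$ is open and adjacent to $\T_A$), and the recursion into the children runs with $e_J \in L_A$, $x_J \in R_A$, $e_N \in R_A$, $x_N \in L_A$ rather than entering $N$ from $T_A$. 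This is not merely a labelling slip: that routing is exactly what forces the $B_A$ strips, so your verification of (I2) is checking the wrong connection type and your description of the path does not match the net whose properties you are claiming.

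Second, and more seriously, your subcase (b) is a genuine gap. You declare it ``easier'' and claim that the standard strips of $A$'s open faces suffice to bridge a monolithic $\N_N$ to $\N_J$. The paper does something structurally different there: it does not recurse into $N$ as a single unit at all, but unfolds $N$'s faces directly and recurses separately into $N$'s children $NN$ and $NW$ (exploiting that $R_N$ is open when $R_I$ is closed), with entry/exit ring faces $e_{NN}, x_{NW} \in K_N$, $x_{NN}, e_{NW} \in F_N$, and $e_J \in K_N$, $x_J \in B_A$. Nothing in your proposal shows that the naive routing you describe produces a connected, non-overlapping net inside $A$'s inductive region in this subcase, and the fact that the paper resorts to descending into the grandchildren strongly suggests the simple bridge does not work. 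To close the gap you would need either to exhibit your alternative path explicitly and verify it fits within the inductive region of \cref{fig:regions}a, or to reproduce the paper's decomposition of $N$'s subtree.
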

\begin{proof}
We discuss two situations, depending on whether $R_I$ is open or closed. Assume first that $R_I$ is open, and 
consider the unfolding depicted in~\autoref{fig:NJdegree3}a. Note that $\xrightarrow{e_A} \in R_I$ is open and adjacent 
to $\T_A$, and $\N_A$ provides a type-2 entry connection $\xrightarrow{e'_A} \in R_A$. Also note that 
$\N_A$ provides a type-1 exit connection $x'_A \in B_A$. These together show that $\N_A$ satisfies invariant (I2). The following observations support our claim that $\N_A$ satisfies invariant (I1):
\begin{itemize}
%\squeezelist
\item The entry and exit ring faces for $J$ are $e_J \in L_A$  and $x_J \in R_A$. Since $\xrightarrow{e_J} \in T_A$ is closed, $\N_J$ provides a type-1 entry connection, which attaches to $e_J \in L_A$. Since $\xleftarrow{x_J} \in B_A$ is open and adjacent to $\T_J$, 
the unit square $\XX_J$ (occupied by $\xleftarrow{x_J}$ in~\autoref{fig:NJdegree3}a) does not belong to the inductive region for $J$, and
$\N_J$ may provide a type-1 or type-2 exit connection: if type-1, it attaches to the ring face $x_J \in R_A$ placed alongside its exit port; if type-2, it connects to the ring face $\xrightarrow{x_J} \in B_A$ placed alongside its exit port extension. 
\item The entry and exit ring faces for $N$ are $e_N \in R_A$ and $x_N \in L_A$.
Note that $\N_N$ provides type-1 entry and exit connections (since $\xrightarrow{e_N} \in F_A$ and $\xleftarrow{x_N} \in K_A$ are both closed), which attach to the pieces of the entry and exit ring faces placed alongside its entry and exit ports.
\end{itemize}
Finally, note that the only open ring face of $A$ not involved in $A$'s entry and exit connections is the dark-shaded piece of $L_A$ from~\autoref{fig:NJdegree3}a, whose removal does not disconnect $\N_A$. Thus $\N_A$ satisfied (I3) as well.

\medskip
\noindent
Assume now that $R_I$ is closed, and consider the unfolding depicted in~\autoref{fig:NJdegree3}b. Note that $\N_A$ provides type-1 entry and exit connections $e'_A \in F_N$ and $x'_A \in B_A$, therefore it satisfies invariant (I2). 
The following observations support our claim that $\N_A$ satisfies invariant (I1):
\begin{itemize}
%\squeezelist
\item The entry and exit ring faces for $NN$, $NW$ and $J$ are as follows: $e_{NN} \in K_N$ and $x_{NN} \in F_N$; 
$e_{NW} \in F_N$ and $x_{NW} \in K_N$; 
and $e_{J} \in K_N$ and $x_{J} \in B_A$.
\item $\N_{NN}$, $\N_{NW}$ and $\N_J$ provide type-1 entry connections. This is because $\xrightarrow{e_{NN}} \in L_N$ is closed, $\xrightarrow{e_{NW}} \in B_N$ is closed, and $\xrightarrow{e_J} \in R_N$ is not adjacent to $\T_J$. 
\item Since $\xleftarrow{x_{NN}} \in R_N$ is open, the unit square $\XX_{NN}$ (occupied by $\xleftarrow{x_{NN}}$ in~\autoref{fig:NJdegree3}b) does not belong to the inductive region for $NN$. Similarly, since $\xleftarrow{x_J} \in L_A$ is open, 
the unit square $\XX_J$ (occupied by $L_A$ in~\autoref{fig:NJdegree3}b) does not belong to the inductive region for $J$. 
\item Since $\xleftarrow{x_{NW}} \in T_N$ is closed, $\N_{NW}$ provides a type-1 exit connection.
\item Since $\xrightarrow{e_A} \in R_I$ is closed, the unit square $\XE_A$ (occupied by $R_A$ in~\autoref{fig:NJdegree3}b) belongs
to the inductive region for $A$.
\end{itemize}
Finally, note that the removal of the open ring faces of $A$ not involved in $A$'s entry and exit connections (shown dark-shaded in~\autoref{fig:NJdegree3}b) does not disconnect $\N_A$. Thus $\N_A$ satisfies (I3) as well.
\end{proof}

\medskip
\noindent
As a side note, the unfolding from~\autoref{fig:NJdegree3}a is the first unfolding example that requires a $4$-refinement along one dimension of the grid: one $1/4 \times 1$ strip of $B_A$ is needed to transition from $R_A$ to $L_A$; one $1/4 \times 1$ strip of $B_A$ is needed alongside $\N_J$'s exit port extension, to connect to the type-2 connection that $\N_J$ may provide; and one $1/2 \times 1$ strip of $B_A$ is needed alongside $\N_A$'s exit port, so that it remains connected to the piece of $L_A$ to its left, once the dark-shaded ring face that lies on $L_A$ has been removed.
%\red{Double check that this is the first example of this kind. -- yes, it is!}

%%%%%%%%%%%%%%%%%%%%%%%%%%%%%%%%%%%%%%
%              DEGREE 4
%%%%%%%%%%%%%%%%%%%%%%%%%%%%%%%%%%%%%%

\subsection{Unfolding Degree-4 Nodes}
\label{sec:degree4}
In this section we describe the recursive unfolding of a box $A \in \T$ of degree $4$, and show that it 
the invariants (I1)-(I3) listed in Section~\ref{sec:invariants}. 

\begin{theorem}
\label{thm:degree4}
Let $A \in \T$ be a degree-4 box. If $A$'s children satisfy invariants (I1)-(I3), then 
$A$ satisfies invariants (I1)-(I3).
\end{theorem}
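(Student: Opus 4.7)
The plan is to mirror the organization of Theorems~\ref{thm:degree2} and~\ref{thm:degree3}: enumerate the possible positions of $A$'s three children (drawn from $\{E,W,N,S,J\}$, since the parent $I$ occupies the front face), collapse the $\binom{5}{3}=10$ configurations using the horizontal, vertical, and $180^\circ$ symmetries supplied by Lemmas~\ref{lem:hand} and~\ref{lem:ih-symmetry}, and settle each surviving case with its own lemma (with the more tedious subcases deferred to an appendix, following the convention already adopted for degree-3 Cases 3.3--3.6). I expect five representatives to remain: Case 4.1 with children $\{E, W, J\}$; Case 4.2 with $\{E, W, N\}$ (the vertical reflection gives $\{E, W, S\}$); Case 4.3 with $\{E, J, N\}$ (the remaining three of $\{W, J, N\}$, $\{E, J, S\}$, $\{W, J, S\}$ follow by horizontal reflection, vertical reflection, or $180^\circ$ rotation about the $z$-axis); Case 4.4 with $\{E, N, S\}$ (horizontal reflection covers $\{W, N, S\}$); and Case 4.5 with $\{J, N, S\}$.

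For each representative case I would obtain an unfolding path by ``inserting'' a third recursive child call into one of the degree-3 paths already constructed. For instance, Case 4.1 extends Case 3.1 by recursing into $W$ during the portion of the path that would otherwise traverse $L_A$; Case 4.3 extends Case 3.2 by diverting through $E$ after the recursive unfolding of $N$ completes; and Case 4.5 splices $J$'s recursive unfolding between the visits to $N$ and $S$ in a path of the shape used in Case 3.4. Verification of the three invariants in each case follows the template established in Lemmas~\ref{lem:Edegree2},~\ref{lem:EJdegree3}, and~\ref{lem:NJdegree3}: invariant (I2) is immediate from the placements of $e'_A$ and $x'_A$ alongside $A$'s ports; invariant (I1) is checked child by child, using (I2) applied to each child to confirm that the guaranteed type-1 or type-2 entry/exit connection of $\N_C$ glues onto the strip of $A$ (or the neighboring sibling's subnet) placed opposite it, and that the resulting net fits inside $A$'s inductive region under the $4\times 4$ refinement; and invariant (I3) is checked by marking the leftover open ring faces of $A$ as removable without disconnecting $\N_A$. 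The conditional inclusion of $\XE_A$ and $\XX_A$ is read off the open/closed status of $\xrightarrow{e_A}$ and $\xleftarrow{x_A}$ exactly as in the degree-3 proofs.

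The main obstacle I anticipate is Case 4.5 ($\{J, N, S\}$): there is no lateral child to absorb the path between the vertical detours to $N$ and $S$, so the path must route through $J$ while still producing the correct connections on both $R_A$ and $L_A$. I expect the proof to split into subcases based on whether $R_I$ and $L_I$ are open or closed (analogous to the two subcases of Lemma~\ref{lem:NJdegree3}), and in each subcase a careful allocation of $1/4\times 1$ strips of $T_A$ and $B_A$ as bridges will be required to fit the unfolding inside the $4\times 4$ refinement. Case 4.4 presents the dual difficulty on $T_A$ and $B_A$ but with $E$ available as a lateral absorber, so it should admit a simpler routing. I do not expect any subcase to force a refinement finer than $4\times 4$, because the tightest demand already appears in Lemma~\ref{lem:NJdegree3} and every degree-4 configuration inherits its bridging strips from one of the degree-3 patterns already shown to fit.
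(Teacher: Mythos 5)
Your overall strategy---enumerate the child configurations, quotient by symmetry, and settle each representative with its own lemma patterned on the degree-2 and degree-3 templates, deferring the tedious ones to an appendix---is exactly the paper's strategy, and your identification of $\{N,J,S\}$ as a delicate case with subcases on which faces of $I$ (and $J$) are open is on target. The gap is in your symmetry reduction. You collapse $\{N,W,J\}$ into the $\{N,E,J\}$ case and $\{N,W,S\}$ into the $\{N,E,S\}$ case ``by horizontal reflection,'' but a horizontal reflection is precisely the transformation of \cref{lem:hand}: it converts a \head-first unfolding of the reflected box into a \emph{\hand-first} unfolding of the original, not into the \head-first unfolding of the $E\leftrightarrow W$-swapped configuration that the induction requires. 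The only configuration symmetries available while staying within \head-first unfoldings are the vertical reflection ($N \leftrightarrow S$) and the $180^\circ$ rotation about the $z$-axis ($N \leftrightarrow S$ and $E \leftrightarrow W$ simultaneously), each combined with path reversal via \cref{lem:ih-symmetry}; neither maps $\{N,E,J\}$ to $\{N,W,J\}$. Consequently the paper needs seven representative cases, not five, with $\{N,W,J\}$ (Case 4.4) and $\{N,W,S\}$ (Case 4.6) receiving their own constructions. This is not a bookkeeping quibble: the same asymmetry already forces the paper to treat the degree-3 configurations $\{N,E\}$ and $\{N,W\}$ (Cases 3.5 and 3.6) separately, and the $\{N,W,J\}$ case turns out to be among the hardest, splitting into four subcases according to whether $R_J$, $R_I$ and $B_J$ are open or closed.

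A second, smaller omission: several of the degree-4 cases cannot be obtained by ``inserting a third recursive child call into a degree-3 path.'' Cases 4.2, 4.4 and 4.7 require splitting a child's recursive unfolding into two subnets rooted at its grandchildren (for instance $\N_{WW}$ and $\N_{WS}$ in \cref{lem:NEWdegree4}) so that a single ring face is not needed simultaneously as a bridge and as a type-2 connection, and Cases 4.5 and 4.7 additionally invoke the reduction through a non-junction parent depicted in \cref{fig:reduce}. Your verification template for (I1)--(I3) is the right one once a path is fixed, but the plan as written does not produce valid paths for these configurations.
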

\begin{proof}
Our analysis is split into seven different cases, depending on the position of $A$'s children:
\begin{itemize}
%\squeezelist
\item[]
\begin{itemize}
\item[]
\begin{enumerate}
\squeezelist
\item[Case 4.1] $J$, $E$ and $W$ are children of $A$. 
\item[Case 4.2] $N$, $E$ and $W$ are children of $A$. The case where $S$, $E$ and $W$ are children of $A$ is a vertical reflection of this case. % with the unfolding path traversed in the opposite direction. 
\item[Case 4.3] $N$, $E$ and $J$ are children of $A$. The case where $S$, $W$ and $J$ are children of $A$ is a $180^\circ$-rotation about the $z$-axis of this case. 

\item[Case 4.4] $N$, $W$ and $J$ are children of $A$. The case where $S$, $E$ and $J$ are children of $A$
is a $180^\circ$-rotation about the $z$-axis of this case.

\item[Case 4.5]  $N$, $E$ and $S$ are children of $A$. 

\item[Case 4.6]  $N$, $W$ and $S$ are children of $A$. 
\item[Case 4.7] $N$, $J$ and $S$ are children of $A$. 
\end{enumerate}  
\end{itemize}
\end{itemize}
It can be verified that this is an exhaustive list of all possible cases for a degree-$4$ node. 
Case 4.1 is settled by~\autoref{lem:EJWdegree4}. 
Cases 4.2 through 4.7, while employing different unfolding paths, use similar arguments in their correctness proofs and are detailed in Appendix~\ref{sec:degree4-appendix}. %~\cite{DF20-arxiv}. 
\end{proof}

%\medskip
%\noindent
%For the sake of avoiding repetitiveness, we defer these four cases to Appendix~\ref{sec:degree4-appendix}.

\begin{lemma}
\label{lem:EJWdegree4}
Let $A \in \T$ be a degree-$4$ node with parent $I$ and children $J$, $E$ and $W$ \emph{(Case 4.1)}. 
If $A$'s children satisfy invariants (I1)-(I3), then $A$ satisfies invariants (I1)-(I3).
\end{lemma}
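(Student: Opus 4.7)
The plan is to generalize the degree-$3$ unfolding of Case 3.1 (\cref{lem:EJdegree3}) by replacing the direct traversal of $L_A$ with a recursive unfolding of the additional west child $W$. Concretely, starting at $A$'s entry port the path proceeds \head-first onto $T_A$, then \hand-first into a recursive unfolding of $E$; from $E$'s exit on $B_A$ it moves \head-first upward into a recursive unfolding of $J$; from $J$'s exit on $T_A$ it moves \hand-first into a recursive unfolding of $W$; and from $W$'s exit on $B_A$ it finishes \head-first at $A$'s exit port. The net $\N_A$ laid flat consists of $\N_E$, $\N_J$, and $\N_W$ placed side by side and joined by narrow $1/4$-strips of $T_A$ and $B_A$, the whole arrangement fitting inside $A$'s \head-first inductive region.

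Invariant (I2) is verified directly: $e'_A \in T_A$ sits alongside the inductive region's entry port and $x'_A \in B_A$ sits alongside its exit port, giving type-$1$ connections regardless of whether $\xrightarrow{e_A}$ and $\xleftarrow{x_A}$ are open or closed. Invariant (I3) follows by identifying the unused open ring faces of $A$ as leaf appendages of $\N_A$ (in the style of the dark-shaded pieces in \cref{fig:EWdegree2}a and \cref{fig:EJ-WJdegree3}a) whose removal does not disconnect the net.

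For invariant (I1), I would invoke the inductive hypothesis on each child and check that its type-$1$ or type-$2$ connection hooks up to the adjacent strip of $A$. For $E$, the successor $\xrightarrow{e_E} \in K_A$ is closed (because $J$ is a child) and the predecessor $\xleftarrow{x_E} \in F_A$ is closed (parent side), so (I2) forces type-$1$ entry and exit connections that attach to the pieces of $T_A$ and $B_A$ laid alongside; a horizontally mirrored argument handles $W$, using $L_A$ in place of $R_A$. For $J$, entering from $B_A$ and exiting on $T_A$, I would identify $\xrightarrow{e_J}$ and $\xleftarrow{x_J}$ on $A$'s surface and argue that neither is adjacent to $\T_J$, since $E$ and $W$ live in sibling subtrees, so (I2) again forces type-$1$ connections that attach correctly to the adjoining strips of $B_A$ and $T_A$. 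Combined with (I1) applied to each child, this yields that $\N_A$ is connected, contains every open face of $\T_A$, and uses only a $4 \times 4$ refinement.

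The hard part will be the geometric bookkeeping that verifies the three children's inductive regions tile horizontally inside $A$'s inductive region without overlap, and that the connecting $1/4$-strips of $T_A$ and $B_A$ between consecutive children remain present under every combination of open/closed statuses for $\xrightarrow{e_c}$ and $\xleftarrow{x_c}$, $c \in \{E, J, W\}$. I expect this to be a routine adaptation of the accounting done for Cases 3.1 and 3.2 and to require no refinement finer than $4 \times 4$.
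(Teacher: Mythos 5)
Your proposal follows essentially the same route as the paper: the same unfolding path (entry port, $T_A$, recurse on $E$, cross $B_A$ to recurse on $J$, cross $T_A$ to recurse on $W$, exit on $B_A$), the same appeal to invariant (I2) on each child to force type-1 connections, and the same conclusion that (I3) holds because $A$'s open ring faces are exhausted by its entry and exit connections. One small slip: for $J$ you claim $\xrightarrow{e_J}$ and $\xleftarrow{x_J}$ are not adjacent to $\T_J$, but they lie on $L_A$ and $R_A$, which \emph{are} adjacent to $L_J$ and $R_J \in \T_J$; the type-1 connections are instead forced because these faces are \emph{closed} (covered by $W$ and $E$), which also places $\XE_J$ and $\XX_J$ inside $J$'s inductive region --- the conclusion is unchanged.
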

\begin{proof}
%
%%%%%%%%%%%%%%%%%%%%%%%%%%%%%%%%%Figure Begin
\begin{figure}[ht]
\centering
\includegraphics[page=1,width=.8\textwidth]{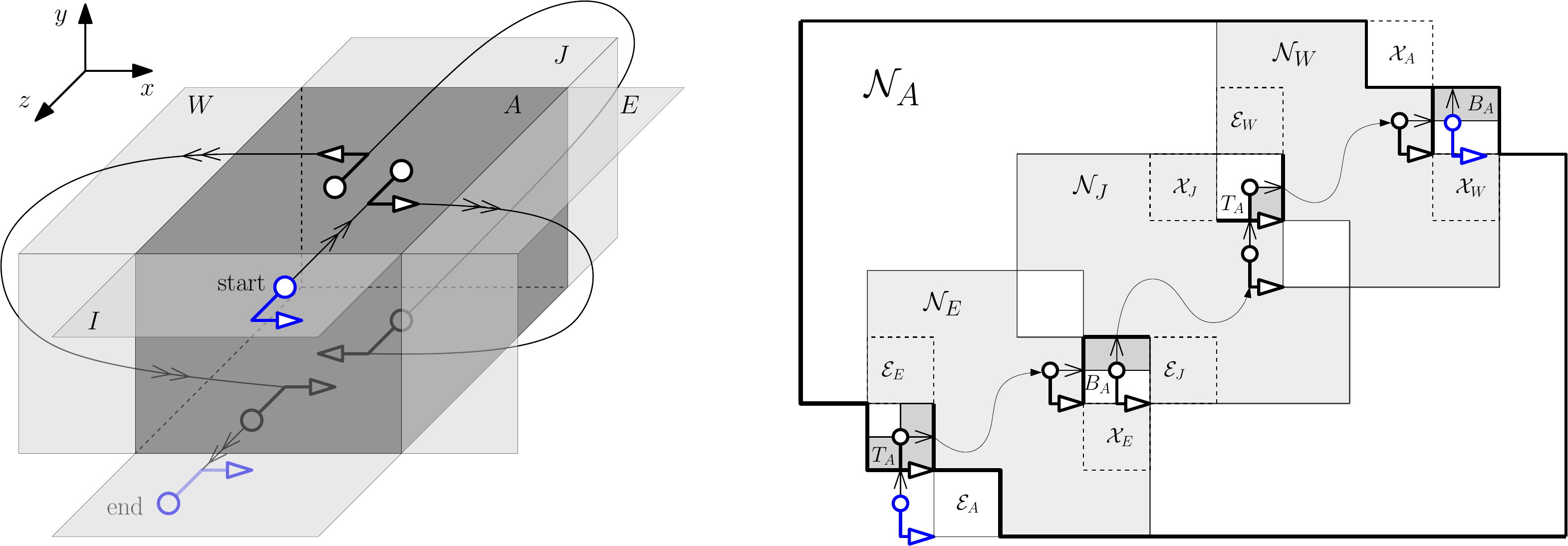}
\caption{Unfolding of degree-$4$ box $A$ with children $J$, $E$ and $W$.}
\label{fig:EJWdegree4}
\end{figure}
%%%%%%%%%%%%%%%%%%%%%%%%%%%%%%%%%Figure End
%
%
Consider the unfolding depicted in~\autoref{fig:EJWdegree4}, and note that 
it is a generalization of the
degree-3 unfolding from~\autoref{fig:EJ-WJdegree3}a, where the unfolded face $L_A$ is replaced by the recursive unfolding of child $W$. Since the two unfoldings and their proofs are very similar, we only point out the differences here: 
\begin{itemize}
\item Because $\xrightarrow{e_J} \in L_A$ is closed, $\XE_J$ is part of $J$'s 
inductive region and $\N_J$ provides a type-1 entry connection which connects to  
 $e_J \in B_A$. 
\item Observe that the entry (exit) port for $W$ is the top (bottom) edge of $R_W$
and so the entry (exit) ring face $e_W$ ($x_W$) is part of $T_A$ ($B_A$). Because $\xrightarrow{e_W} \in F_A$ ($\xleftarrow{x_W} \in K_A$) is closed, $\XE_W$
($\XX_W$) is part of $W$'s
inductive region, and  
 $\N_W$ provides a type-1 entry (exit) connection which connects to the 
piece of $e_W \in T_A$ ($x_W \in B_A$) placed along $\N_W$'s entry (exit) port. 
\end{itemize}
These differences combined with arguments similar to those in \autoref{lem:EJdegree3} show
that $\N_A$  satisfies invariants (I1) and (I2).
Finally note that (I3) is trivially satisfied, because all of $A$'s open ring faces  
are used in its entry and exit connections. We therefore conclude that $\N_A$ from \autoref{fig:EJWdegree4} satisfies invariants (I1)-(I3).
\end{proof}

%%%%%%%%%%%%%%%%%%%%%%%%%%%%%%%%%%%%%%
%              DEGREE 5
%%%%%%%%%%%%%%%%%%%%%%%%%%%%%%%%%%%%%%

\subsection{Unfolding Degree-5 Nodes}
\label{sec:degree5}
In this section we describe the recursive unfolding of a box $A \in \T$ of degree $5$, and show that it satisfies the invariants (I1)-(I3) listed in Section~\ref{sec:invariants}. 

\begin{theorem}
Let $A \in \T$ be a degree-5 box. If $A$'s children satisfy invariants (I1)-(I3), then 
$A$ satisfies invariants (I1)-(I3). 
\label{thm:degree5}
\end{theorem}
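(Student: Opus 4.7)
The plan is to mirror the organization used for degrees $2$, $3$, and $4$: assume $A$ is in standard position so that its parent lies at $I$, and then enumerate the possible configurations of the four children among the five remaining neighbor positions $\{N,S,E,W,J\}$. This yields five subcases, which by~\cref{lem:hand} (horizontal-reflection symmetry) and~\cref{lem:ih-symmetry} (reversal symmetry) collapse to three representative configurations: (5.1) children $\{N,S,E,W\}$ with $J$ absent; (5.2) children $\{S,E,W,J\}$ with $N$ absent (the $S$-absent case reduces to this by vertical reflection); and (5.3) children $\{N,S,W,J\}$ with $E$ absent (the $W$-absent case reduces to this via~\cref{lem:hand}). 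Each subcase would be handled by its own lemma, and the theorem would then follow by the case split exactly as in~\cref{thm:degree2,thm:degree3,thm:degree4}.

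For each subcase I would construct an explicit \head-first unfolding path that visits each child's subtree in a well-chosen order and verify (I1)--(I3) using the same template that was used in the earlier degree cases. At every child $C$ visited along the path, I would identify $e_C$, $x_C$, $\xrightarrow{e_C}$, $\xleftarrow{x_C}$, apply (I2) to $C$ to obtain a type-$1$ or type-$2$ entry/exit connection, and confirm that this connection glues to a piece of $A$'s surface (an open face of $A$, or one of $A$'s own ring faces) laid flat immediately beside it inside the inductive region of~\cref{def:region}. Where two consecutive children's ports are offset, I would invoke (I3) to delete unused ring strips from each child's net and reuse them as a bridge, exactly as hinted at by the remark that the strips $R_N$, $T_E$, $B_W$, $L_S$ in~\cref{fig:NEWSdegree5} are reclaimed from $\N_N,\N_E,\N_W,\N_S$ to splice those four nets together. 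The resulting $\N_A$ would then satisfy (I2) from its specified entry and exit connections, (I3) from the removability of any remaining unused ring faces of $A$, and (I1) from the fact that the flattened image stays within the staircase-bitten rectangle using only a $4 \times 4$ refinement (inherited from the children, plus at most the kind of $4$-refinement along one axis already required in~\cref{lem:NJdegree3}).

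The main obstacle is Case 5.1. When all four side neighbors are children and $J$ is absent, $A$'s back face $K_A$ is open but is not attached to any child, so $K_A$ itself must serve as the central staging strip along which the unfolding path weaves while recursively descending into $N,E,S,W$ in cyclic order. The path must still produce the correct type-$1$/type-$2$ connection at each of $A$'s own entry and exit ports under a case split on whether $\xrightarrow{e_A}$ and $\xleftarrow{x_A}$ are open or closed, while keeping the entire image inside the staircase-bitten rectangle with only a $4\times 4$ refinement. The essential tool here is invariant (I3) applied to all four children simultaneously, which frees up the ring strips needed to splice $\N_N,\N_E,\N_S,\N_W$ together around the central $K_A$ staging area without overlap; this is precisely the content illustrated in~\cref{fig:NEWSdegree5}. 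Cases 5.2 and 5.3 are strictly easier because the presence of $J$ as a child means that $\N_J$ already occupies what would otherwise be the bridged region across $K_A$, so the unfolding path can route through $\N_J$ directly and the verification reduces to arguments structurally identical to those already carried out in~\cref{lem:EJWdegree4} and~\cref{lem:NJdegree3}.
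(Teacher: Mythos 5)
Your overall architecture (case split on the missing neighbor, reduction of symmetric cases, a per-case \head-first path verified against (I1)--(I3), and the use of (I3) to harvest ring strips as bridges) matches the paper's. But there is a genuine gap at exactly the point you flag as the main obstacle. The construction of \cref{fig:NEWSdegree5} is not available for an arbitrary degree-5 box with children $N,E,W,S$: the weaving path needs $K_N$ and $K_S$ to be open (it exits $N$ onto $K_N$ and travels \hand-first across it to $R_N$, and symmetrically for $S$), and it needs the ring strips on $R_N$, $T_E$, $B_W$, $L_S$ to be open and unused so that (I3) can release them as bridges. None of this holds if, say, $N$ is a junction with a back neighbor. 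The missing ingredient is the structural fact (the paper's \cref{lem:degree5-connector}) that in this configuration either $N$ and $S$ are both non-junctions or $E$ and $W$ are both non-junctions --- proved by a short acyclicity argument about the boxes $NJ$ and $EJ$ meeting at an edge. Case 5.1 then splits into two branches: the direct construction of \cref{fig:NEWSdegree5} when $N,S$ are non-junctions, and a reduction via \cref{fig:reduce} (rerouting the entry and exit around the non-junction parent $I$ so that $E$ and $W$ take over the roles of $N$ and $S$, which turns $A$'s connections into type-2 connections) when only $E,W$ are non-junctions. Your proposed case split on whether $\xrightarrow{e_A}$ and $\xleftarrow{x_A}$ are open or closed does not substitute for this dichotomy.

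A secondary inaccuracy: the remaining cases are not ``strictly easier'' in the way you suggest. Each of the paper's Cases 5.2--5.4 again rests on a non-junction-pair dichotomy (either $I$ and $J$ are both non-junctions, or else the two remaining opposite children are), and the branch where $I$ and $J$ form the non-junction pair requires a genuinely different device --- cycling the unfolding path around the parent $I$ before and after unfolding $A$, as in \cref{fig:NESJdegree5-2} --- rather than simply routing through $\N_J$ in place of the $K_A$ staging strip. So while your template is the right one, the proof cannot be completed without the non-junction lemma and the two-branch treatment it forces in every subcase.
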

\begin{proof}
Our analysis is split into four different cases, depending on the position of $A$'s children:
\begin{itemize}
%\squeezelist
\item[]
\begin{itemize}
\item[]
\begin{enumerate}
\squeezelist
\item[Case 5.1] $J$ is not a child of $A$ (so $N$, $E$, $W$ and $S$ are children of $A$). 
\item[Case 5.2] $W$ is not a child of $A$ (so $N$, $E$, $J$ and $S$ are children of $A$). 

\item[Case 5.3] $E$ is not a child of $A$ (so $N$, $W$, $J$ and $S$ are children of $A$).
\item[Case 5.4]  $N$ is not a child of $A$ (so $E$, $W$, $J$ and $S$ are children of $A$). The case when $S$ is not a child of $A$ is a vertical reflection of this case.
\end{enumerate}  
\end{itemize}
\end{itemize}
It can be verified that this is an exhaustive list of all possible cases for a degree-$5$ node. Case 5.1 is settled by~\autoref{lem:degree5}. Cases 5.2 through 5.4, while employing different unfolding paths, use similar arguments in their correctness proofs and are detailed in Appendix~\ref{sec:degree5-appendix}. %~\cite{DF20-arxiv}. 
\end{proof}

\medskip
%\noindent
%We defer these two cases to Appendix~\ref{sec:degree5-appendix},  for the sake of avoiding repetition and improving the flow.

\medskip
Before getting into details on Case 5.1, we introduce a preliminary lemma that will simplify our analysis.

\begin{lemma}
\label{lem:degree5-connector}
Let $A \in \T$ be a degree-5 node with parent $I$ and children $N$, $E$, $W$ and $S$. Then either $N$ and $S$ are both non-junction boxes, or else $E$ and $W$ are both non-junction boxes. 
\end{lemma}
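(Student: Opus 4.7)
The plan is to argue by contradiction. Suppose some $X \in \{N,S\}$ and some $Y \in \{E,W\}$ are both junctions; by the four-fold symmetry of the configuration it suffices to handle the case $X = N$, $Y = E$. Place $A$ at the origin with the parent $I$ attached to $F_A$, so that $N, S, E, W$ occupy the unit positions in $\pm y$ and $\pm x$ and the back position $J_A$ is empty.

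The first step is to show that a child of $A$ that is a junction must have its back neighbor present in $\OO$. Focus on $N$. Since $\T$ is a tree, a box at position $NE$, $NW$, or $NI$ would combine with $A$ and one of $E$, $W$, $I$ into a $4$-cycle, so none of these positions holds a box. Hence $N$'s neighbors other than $A$ can only sit at $NN$ or $NJ$. But $A = S_N$ is opposite the face $T_N$ shared with $NN$, so if $N$ has no non-$A$ neighbor it is a leaf, and if its only non-$A$ neighbor is $NN$ it is a connector; neither outcome makes $N$ a junction. Thus $N$ being a junction forces $NJ$ to be a box of $\OO$, and the same argument applied to $E$ forces $EJ$ to be a box.

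Next I examine the corner position $NEJ$ at $(1,1,1)$, which would share a face with both $NJ$ and $EJ$. If $NEJ$ is present, then the sequence $A, N, NJ, NEJ, EJ, E, A$ yields a $6$-cycle in $\T$, contradicting that $\T$ is a tree. Otherwise $NEJ$ is absent, and I would inspect the link of $\OO$'s surface at the vertex $v = (1,1,1)$ where $A$, $N$, $E$ meet. The eight octants around $v$ correspond to the unit positions $A, E, N, NE, J_A, EJ, NJ, NEJ$, of which $\{A, E, N, EJ, NJ\}$ are filled and $\{NE, J_A, NEJ\}$ are empty. I would enumerate the arcs of the link as those spherical edges that separate a filled octant from an empty one, and observe that at the pole $p$ of the link sphere in direction $+z$ from $v$ (where the four $+z$-octants meet), the filled pair $EJ, NJ$ and the empty pair $J_A, NEJ$ occupy diagonally opposite positions, a checkerboard. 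Hence all four arcs incident to $p$ separate filled from empty, making $p$ a degree-$4$ vertex of the link, so the link fails to be a single circle. This contradicts the hypothesis that the surface of $\OO$ is a $2$-manifold.

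The remaining pairs $\{N, W\}$, $\{S, E\}$, $\{S, W\}$ reduce to the same dichotomy of cycle versus pinch by reflections across the $xz$- and $yz$-planes, so no new work is needed. The main obstacle is the $2$-manifold step, namely, confirming that the diagonal arrangement of filled and empty octants at $p$ really forces a degree-$4$ link vertex and cannot be resolved by arcs elsewhere on the sphere. This is immediate once one notes that the four arcs meeting at $p$ lie on the two great circles $\{x=1\}$ and $\{y=1\}$ of the link sphere and each of them separates two adjacent $+z$-octants whose fillings are opposite under the checkerboard pattern.
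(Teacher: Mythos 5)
Your proof is correct and follows essentially the same route as the paper's: a junction child among $\{N,E\}$ forces $NJ$ and $EJ$ to exist, these meet along an edge, and the $2$-manifold condition there forces $J_A$ or $NEJ$ to be present, each of which yields a cycle in $\T$ (or contradicts the degree of $A$). Your link-of-the-vertex analysis is just a more explicit rendering of the paper's one-line appeal to $\OO$ being homeomorphic to a sphere at the $NJ$--$EJ$ edge.
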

\begin{proof}
Assume to the contrary that at least one box in each pair ($N$, $S$) and ($E$, $W$) -- say, $N$ and $E$ -- is a junction (the argument for any choice of junctions is the same). This implies that $N$ has a back 
neighbor (because any other neighbor position that would render $N$ a junction would also render a loop in $\T$), 
and similarly for $E$. Note however that $NJ$ and $EJ$ meet at an edge, therefore $NJ$ must have either a south or an east neighbor (because $\OO$ is homeomorphic to a sphere). However, each of these cases renders a cycle in $\T$, a contradiction. 
\end{proof}

\begin{lemma}
\label{lem:degree5}
Let $A \in \T$ be a degree-$5$ node with parent $I$ and children $N$, $E$, $W$ and $S$ \emph{(Case $5.1$)}. 
If $A$'s children satisfy invariants (I1)-(I3), then $A$ satisfies invariants (I1)-(I3).
\end{lemma}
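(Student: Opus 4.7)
The plan is to leverage~\cref{lem:degree5-connector} to split the analysis into two symmetric subcases according to which pair of opposite children is guaranteed to be non-junction. Without loss of generality, assume that $N$ and $S$ are both non-junction (the case where $E$ and $W$ are non-junction is handled by a $90^\circ$ rotation about the $z$-axis, which swaps the roles of the two pairs while preserving the structure of the invariants). Since $A$ has five adjacent boxes (one parent and four children), the only open face of $A$ itself is $K_A$, so the unfolding path has extremely little room to maneuver on $A$'s own surface, and almost all connectivity must be achieved through the ring bridges provided by invariant (I3).

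Next I would describe the unfolding path as illustrated in \cref{fig:NEWSdegree5}: start \head-first at the entry port on $T_A$, enter $N$ and recursively unfold it; exit $N$ and, using the strips $R_N$ and $T_E$ released by (I3) applied to $\N_N$ and $\N_E$, bridge across to the recursive unfolding of $E$; continue \hand-first across $B_A$ and along $K_A$ to reach $W$, recursively unfold $W$, then bridge via $B_W$ and $L_S$ (again released by (I3)) to reach $S$; recursively unfold $S$ and finish \head-first at the exit port on $B_A$. The fact that $N$ and $S$ are non-junction is what makes this routing feasible: their recursive unfoldings occupy thin, predictable shapes that leave enough room in $A$'s inductive region for the bridges to $E$ and $W$ to fit without overlap, and whose ring strips $R_N$ and $L_S$ are provably open and therefore available.

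Having laid out the path, I would verify the three invariants in turn. For (I2), I would check that $\N_A$ provides a type-1 entry connection on $T_A$ adjacent to $A$'s entry port and a type-1 exit connection on $B_A$ adjacent to $A$'s exit port; the discussion of which of $\xrightarrow{e_A}$ and $\xleftarrow{x_A}$ are open or adjacent to $\T_A$ mirrors the degree-4 arguments. For (I1), I would argue inductively: each $\N_N, \N_E, \N_W, \N_S$ is a net within its inductive region with $4 \times 4$ refinement by the inductive hypothesis, and the bridges formed by swapping ring strips (as specified by (I3) of the children) connect them into a single net contained in $A$'s inductive region, using only $A$'s own open faces ($K_A$ together with small strips of $T_A, B_A, R_A, L_A$ released by the children's (I3)). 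For (I3) on $A$ itself, I would identify which open ring faces of $A$ remain unused in $A$'s own entry/exit connections and verify that removing them leaves $\N_A$ connected.

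The main obstacle I expect is the geometric verification that the four child nets, after the bridge swaps, actually tile $A$'s inductive region without overlap and without violating the staircase-bite shape prescribed by \cref{def:region}. In particular, one must check case by case whether each of the ring faces $\xrightarrow{e_X}, \xleftarrow{x_X}$ (for $X \in \{N,E,W,S\}$) is open or closed, since this controls whether the unit cells $\XE_X$ and $\XX_X$ are inside the corresponding child's inductive region and therefore determines the exact outline of each $\N_X$. The reliance on \cref{lem:degree5-connector} is what keeps this case analysis manageable: because $N$ and $S$ are non-junction, the ring faces that would need to carry bridging strips are guaranteed to be open, so the bridges constructed via (I3) always exist and can be placed consistently.
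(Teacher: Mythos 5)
Your treatment of the main subcase ($N$ and $S$ both non-junctions) matches the paper: same invocation of \cref{lem:degree5-connector}, same unfolding path through $N$, $E$, $W$, $S$ in order, and the same bridging technique that relocates the ring strips $R_N$, $T_E$, $B_W$, $L_S$ using invariant (I3) of the children. However, your dismissal of the other subcase ($E$ and $W$ non-junctions) ``by a $90^\circ$ rotation about the $z$-axis'' is a genuine gap. The entry and exit ports of $A$ are fixed by the inductive setup: in standard position they are the top and bottom edges of $F_A$, the face shared with the parent. A $90^\circ$ rotation about the $z$-axis maps the pair $(N,S)$ to $(E,W)$ but also moves the entry port to a side edge of $F_A$, a configuration that is covered by none of the definitions (inductive region, type-1/type-2 connections) or the inductive hypothesis. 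The only symmetries the framework licenses are the reflection/reversal of \cref{lem:ih-symmetry} and the rotate-and-reflect of \cref{lem:hand}, neither of which realizes this swap. The paper instead handles this subcase by a nontrivial reduction (\cref{fig:reduce}): the path first travels \hand-first around the parent $I$ to $R_I$ --- which requires observing that $I$ is necessarily a non-junction here, so $T_I$ and $R_I$ are open --- then runs the $N$/$S$ construction on the reoriented box, and returns via $L_I$ and $B_I$. Crucially, this converts the type-1 entry and exit connections of the inner net into type-2 connections for $\N_A$, which is only admissible under (I2) because $\xrightarrow{e_A} \in R_I$ and $\xleftarrow{x_A} \in L_I$ are open and adjacent to $\T_A$. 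None of this is captured by a symmetry argument.

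Two smaller inaccuracies: the type-1 entry and exit connections of $\N_A$ are the ring faces $e'_A \in F_N$ and $x'_A \in F_S$ inherited from the children's nets, not pieces of $T_A$ and $B_A$ --- the latter are closed faces of $A$ (shared with $N$ and $S$) and cannot appear in the net at all. For the same reason your later claim that the net uses ``small strips of $T_A, B_A, R_A, L_A$'' is wrong; as you correctly note at the outset, $K_A$ is $A$'s only open face, and all bridging material comes from ring faces of the child boxes. Relatedly, (I3) for $A$ is vacuous here ($A$ has no open ring faces), so there is nothing to ``identify and remove.''
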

\begin{proof}
%
%%%%%%%%%%%%%%%%%%%%%%%%%%%%%%%%%Figure Begin
\begin{figure}[ht]
\centering
\includegraphics[page=1,width=\textwidth]{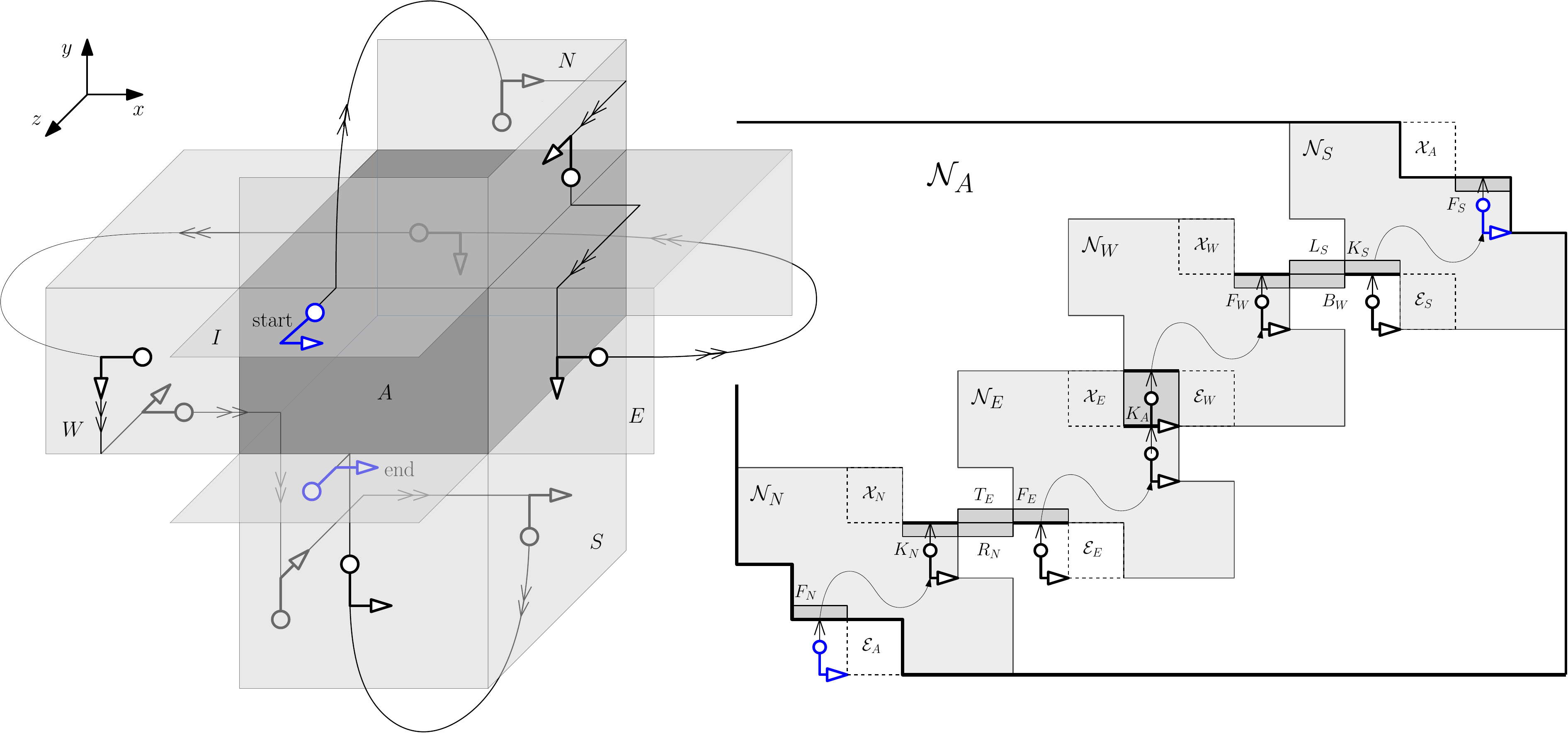}
\caption{Unfolding of degree-5 box $A$ with children $N$, $E$, $W$ and $S$ ($N$ and $S$ are non-junctions).}
\label{fig:NEWSdegree5}
\end{figure}
%%%%%%%%%%%%%%%%%%%%%%%%%%%%%%%%%Figure End
%
By~\autoref{lem:degree5-connector}, either $N$ and $S$ are both non-junctions, or $E$ and $W$ are both non-junctions.  
Assume first that $N$ and $S$ are both non-junctions and consider the unfolding depicted in~\autoref{fig:NEWSdegree5}:
starting at $A$'s entry port, the unfolding path proceeds \head-first to recursively unfold $N$; 
upon reaching $N$'s exit port on $K_N$, it moves \hand-first to $R_N$, \head-first to $T_E$, 
 \hand-first to $F_E$, then proceeds \head-first to recursively unfold $E$ and $W$;  
upon reaching $W$'s exit port on $F_W$, it moves \hand-first to $S_W$, \head-first to $L_S$, 
 \hand-first to $K_S$, then proceeds \head-first to recursively unfold $S$, ending at $A$'s exit port.
(Note that both $K_N$ and $K_S$ are open, since $N$ and $S$ are non-junctions.)  
We now show that, when visited in this order and laid flat in the plane,
the open faces in $\T_A$ form a net $\N_A$ that satisfies invariants (I1)-(I3). 

We start by showing that $\N_A$ that satisfies invariant (I2).
Note that $\xrightarrow{e_A} = \xrightarrow{e_N} \in R_I$ is open but not adjacent to $\T_N$, therefore $\N_N$ will provide a type-1 entry connection (by (I2) applied to $N$), which is also a type-1 entry 
connection for $A$ (because $e'_N = e'_A \in F_N$). 
Similarly, $\xleftarrow{x_A} = \xleftarrow{x_S} \in L_I$ is open but not adjacent to $\T_S$, 
therefore $S$ will provide a type-1 exit connection (by (I2) applied to $S$), which is also 
a type-1 exit connection for $A$ (because $x'_S = x'_A \in F_S$). This shows that $\N_A$ satisfies invariant (I2). Also note that (I3) is trivially satisfied, because $A$ has no open ring faces. 

It remains to show that $\N_A$ satisfies invariant (I1).
%Note that 
%$\XE_A$ and $\XX_A$ do not belong to $A$'s inductive region, because 
%$\xrightarrow{e_A}$ and $\xleftarrow{x_A}$ are open.
We begin by showing that $\N_A$ is connected: 
\begin{itemize}
\item
Observe that the exit port for $N$ is the top edge of $K_A$, and so $N$'s exit
ring face $x_N$ is on $K_A$. Its successor $\xleftarrow{x_{N}}$ is therefore on $L_A$ and is closed. 
 Invariant (I2) a applied to $N$ tells us that 
$\N_N$ provides a type-1 exit connection $x'_N \in K_N$ alongside its exit port. 
\item When the unfolding path reaches $N$'s exit port, it
deviates from prior unfoldings in that it doesn't move onto $x_N \in K_A$.  Instead it stays on $N$ and moves \hand-first
across $K_N$ to $R_N$ (which is open
because, if there were a box $NE$ adjacent to it, then boxes $NE, E, A, N$ would form a cycle).
Therefore, a new technique described here is used to connect $\N_N$ to the rest of $\N_A$. 
Note that the ring face
of $N$ located along the bottom of $R_N$ is adjacent to $x'_N \in K_N$.
In addition, this ring face
is not used as an entry or exit connection in $\N_N$ (because $\N_N$  has
type-1 entry/exit connections), so by invariant (I3) applied to $N$, 
it can be relocated outside of $\N_N$ without disconnecting $\N_N$.  
We relocate it to the right of $\N_N$'s exit port, where it connects to 
$\N_N$'s type-1 exit connection $x'_N  \in K_N$, as shown in \autoref{fig:NEWSdegree5}.  
This relocated piece of $R_N$ serves as a bridge to the unfolding of the next box $E$. 
\item Next we turn to $\N_E$. The recursive unfolding applied to $E$ 
uses the front edge of $R_A$ for its entry port and the back edge of $R_A$ for its
exit port. With this unfolding, $e'_E \in F_E$, $e_E \in R_I$, and while 
$\xrightarrow{e_E}  \in B_I$ is open, it is not adjacent to $\T_E$. 
Therefore the invariant (I2) applied to $E$ tells us that it provides a type-1 entry connection.  
Similarly, $x_E \in K_A$ and $\xleftarrow{x_E} \in T_A$ is closed. Thus
$\N_E$ also provides a type-1 exit connection.
The ring face of $E$ located along the left edge of $T_E$
%$\xleftarrow{e'_E}  \in T_E$ 
is not used as an entry or exit connection for $E$ and so by invariant (I3) (applied to $E$),
it can be relocated outside of $\N_E$ without disconnecting it.  In the
unfolding in \autoref{fig:NEWSdegree5}, it is relocated to the left of $\N_E$'s entry port.
This relocated piece of $T_E$ serves as a bridge to the unfolding of the previous box $N$.
Thus the two relocated ring faces (one a piece of $R_N$ taken from $\N_N$ and the other a piece of $T_E$ taken from $\N_E$) form a bridge 
between the exit connection $x'_N \in K_N$ of $\N_N$ and the entry connection $e'_E \in F_E$
of $\N_E$. 
Finally, $\N_E$'s type-1 exit connection $x'_E$ connects to $x_E \in K_A$ shown unfolded alongside
$\N_E$'s exit port.
\item Similar arguments hold for $\N_W$. Note that the entry (exit) port for $W$ is
the back (front) edge of $L_A$. Also note that $\xrightarrow{e_W} \in B_A$ is closed and $\xleftarrow{x_W} \in T_I$ is open but not adjacent to $\T_W$, therefore $\N_W$ provides type-1 entry and exit connections.  Its entry connection attaches to $e_W \in K_A$
and its exit connection attaches to the ring face of $W$ located along the right edge of $B_W$, which has been
%$\xrightarrow{x_W} \in B_W$ 
relocated right of the exit port of $\N_W$. 
\item Similar arguments hold for $\N_S$. Note that the entry (exit) port for $S$ is
the back (front) edge of $B_A$. Also note that  
$\xrightarrow{e_S} \in R_A$ is closed, therefore   
$\N_S$ provides a type-1 entry connection $e'_S \in K_S$. Its entry connection
attaches to the ring face of $S$ located along the top edge of $L_S$,
%$\xleftarrow{e'_S} \in L_S$ 
which has been relocated left of the entry port of $\N_S$.   
\end{itemize}
We conclude that $\N_A$ is connected. By invariant (I1), $\N_N$, $\N_E$, $\N_W$ and $\N_{S}$ include all open faces in $\T_N$, $\T_E$, $\T_W$ and $\T_{S}$ respectively, using a $4 \times 4$ refinement. 
Observe that the net $\N_A$ from~\autoref{fig:NEWSdegree5} also includes the open face $K_A$ of $A$ without any refinement. 
This shows that $\N_A$ includes all open faces in $\T_A$ using a $4 \times 4$ refinement.
Finally, $\N_A$ fits within $A$'s inductive region (as illustrated in \autoref{fig:NEWSdegree5}), noting that 
it does not utilize $\XE_A$ or $\XX_A$. 
We therefore conclude that $\N_A$ satisfies invariant (I1).

%
%%%%%%%%%%%%%%%%%%%%%%%%%%%%%%%%%Figure Begin
\begin{figure}[ht]
\centering
\includegraphics[page=1,width=0.7\textwidth]{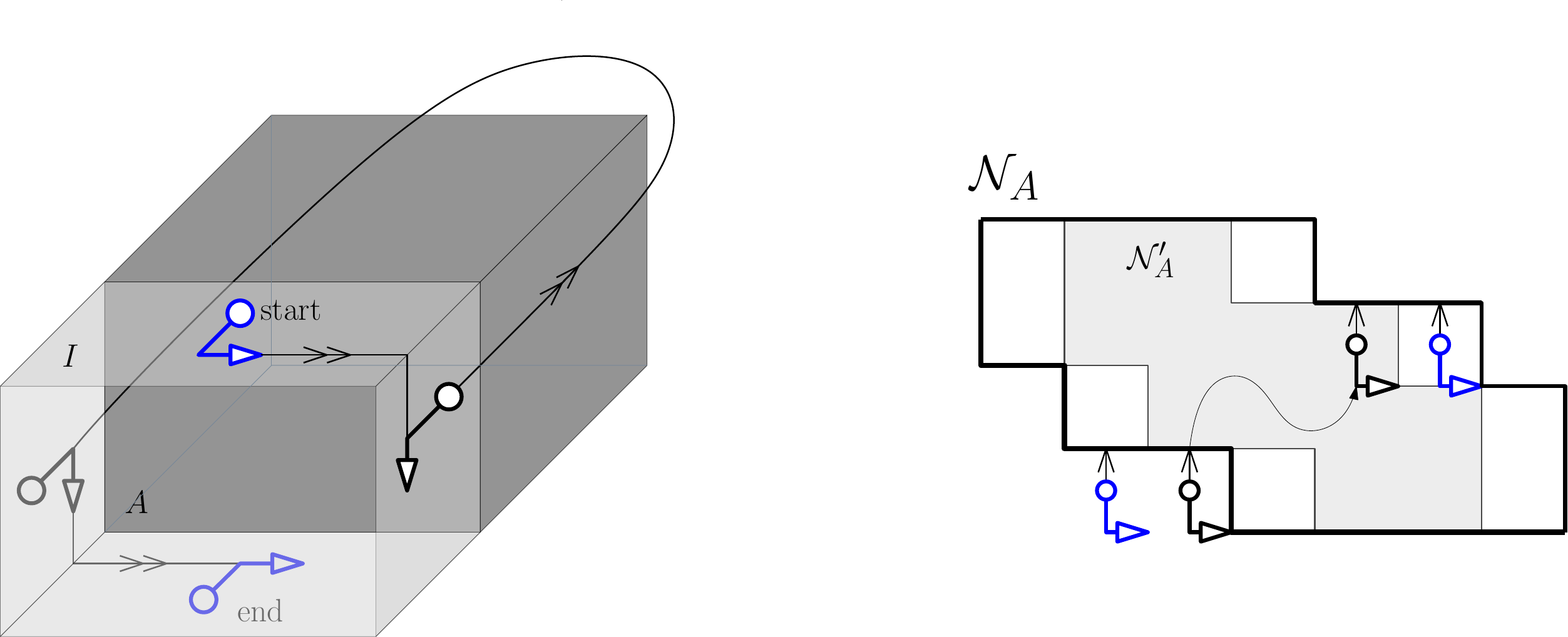}
\caption{Unfolding of box $A$ with non-junction parent $I$.}
\label{fig:reduce}
\end{figure}
%%%%%%%%%%%%%%%%%%%%%%%%%%%%%%%%%Figure End
%
The case where $E$ are $W$ are non-junctions can be reduced to the case where $N$ are $S$ are non-junctions using the method depicted in~\autoref{fig:reduce}: from the entry port, the unfolding proceeds \hand-first to $R_I$ (note that $I$ is a non-junction in our context, so both $T_I$ and $R_I$ are open), then follows the path from~\autoref{fig:NEWSdegree5} (imagine the box from~\autoref{fig:NEWSdegree5} rotated clockwise by $90^\circ$, so that its entry guide  aligns with the guide on $R_I$ from~\autoref{fig:reduce}). Then the net labeled $\N'_A$ in~\autoref{fig:reduce} is identical to the net from~\autoref{fig:NEWSdegree5}. From the exit port of $\N'_A$ on $L_I$, the unfolding proceeds \hand-first to the exit port of $\N_A$ on $B_I$. 

We have already established that $\N'_A$ satisfies invariants (I1)-(I3). Now note that the net $\N'_A$ from~\autoref{fig:NEWSdegree5} provides type-1 entry and exit connections, which implies that the net $\N_A$ from~\autoref{fig:reduce} provides type-2 entry and exit connections. These together with the fact that $\xrightarrow{e_A} \in R_I$ and $\xleftarrow{x_A} \in L_I$ are open and adjacent to $\T_A$, imply that $\N_A$ satisfies invariants (I1)-(I3). 
\end{proof}

%%%%%%%%%%%%%%%%%%%%%%%%%%%%%%%%%%%%%%
%                           DEGREE 6
%%%%%%%%%%%%%%%%%%%%%%%%%%%%%%%%%%%%%%

\subsection{Unfolding Degree-6 Nodes}
\label{sec:degree6}
The following observation follows immediately from the tree structure of $\T$. 

\begin{proposition}
\label{prop:degree6}
Every neighbor of a  degree-6 node in $\T$ is a connector or a leaf.
\end{proposition}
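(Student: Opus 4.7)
The plan is to fix an arbitrary neighbor $B$ of the degree-6 node $A$ and show that $B$ can have at most one neighbor besides $A$, with that extra neighbor necessarily lying on the face of $B$ opposite to the face shared with $A$. By the rotational symmetry of the six possible positions for $B$ around $A$, I may assume without loss of generality that $B = N$, so that the face of $B$ glued to $A$ is $B$'s bottom face. The degree-6 hypothesis is used crucially at the outset: it guarantees that all six face-neighbors $N, S, E, W, I, J$ of $A$ are present as boxes in $\OO$, hence as nodes in $\T$.

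Next I will rule out the four ``lateral'' candidate neighbors of $N$, namely $NE$, $NW$, $NJ$, and $NI$. Suppose for contradiction that $NE$ exists. Then $NE$ occupies the unit-cube position directly above $E$, so its bottom unit square coincides with the top unit square of $E$ on the surface of $\OO$. By the definition of $\T$, any two boxes sharing a face are joined by an edge in the dual tree, so there is an edge $NE - E$ in $\T$. Combining this with the edges $A - N$, $N - NE$, and $E - A$ yields the $4$-cycle $A, N, NE, E, A$, contradicting the tree property of $\T$. The analogous argument, with $E$ replaced by $W$, $J$, or $I$ respectively, eliminates $NW$, $NJ$, and $NI$ as possible neighbors of $N$. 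Note that each elimination relies on the presence of the corresponding face-neighbor of $A$, which is exactly what the degree-6 assumption supplies.

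The only remaining candidate position for a neighbor of $N$ other than $A$ is the one directly above $N$, namely $NN$. Consequently $N$ has degree one, in which case it is a leaf, or $N$ has degree two with its two neighbors $A$ and $NN$ attached on opposite faces (top and bottom), in which case it is a connector by the definition given in the introduction. An identical argument, applied to each of the other five neighbors of $A$ with the appropriate relabeling of directions, finishes the proof. There is no real obstacle: the one observation driving everything is that two existing unit cubes whose positions share a unit square are automatically adjacent in $\T$, so any ``diagonal'' neighbor of a neighbor of $A$ immediately closes a $4$-cycle with one of the face-neighbors guaranteed by the degree-6 hypothesis.
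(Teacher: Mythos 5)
Your proof is correct and is exactly the argument the paper has in mind: the paper offers no written proof, asserting only that the claim ``follows immediately from the tree structure of $\T$,'' and your cycle argument (any lateral neighbor of a neighbor of the degree-6 box would share a face with one of the six guaranteed face-neighbors and thus close a $4$-cycle in $\T$) is the standard way to make that immediate observation precise, matching the style of the paper's own proof of Lemma~\ref{lem:degree5-connector}. Nothing is missing.
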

We now show that invariants (I1)-(I3) hold for any degree-6 box. 
%
%%%%%%%%%%%%%%%%%%%%%%%%%%%%%%%%%Figure Begin
\begin{figure}[ht]
\centering
\includegraphics[width=\linewidth]{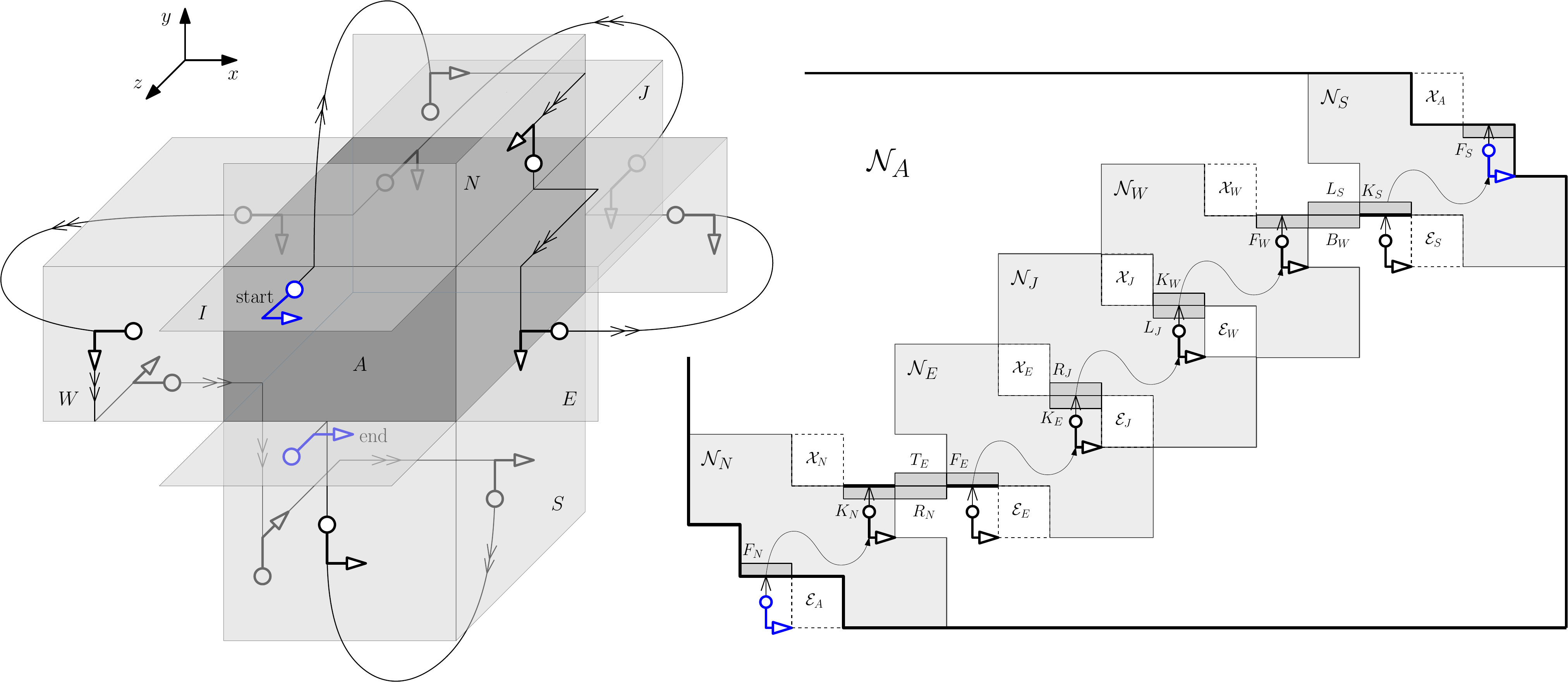}
\caption{Unfolding of degree-6 box $A$ (a) unfolding path (b) unfolding net $\N_A$.}
\label{fig:degree6}
\end{figure}
%%%%%%%%%%%%%%%%%%%%%%%%%%%%%%%%%Figure End
%

\begin{theorem}
\label{thm:degree6}
Let $A \in \T$ be a degree-6 box. If $A$'s children satisfy invariants (I1)-(I3), then 
$A$ satisfies invariants (I1)-(I3). 
\end{theorem}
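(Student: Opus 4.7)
The plan is to mirror the strategy used for the degree-5 case (Lemma~\ref{lem:degree5}), adapted to the fact that $A$ is completely surrounded. First, I would invoke \cref{prop:degree6} to conclude that each of $A$'s five children $N$, $S$, $E$, $W$, $J$ is either a leaf or a connector. In particular, each child $C$ has at most two closed faces (the one shared with $A$ and, if $C$ is a connector, the one opposite it), so every ring face of $C$ that lies on a side of $C$ not shared with $A$ is open. This strong restriction pins down, for every choice of entry/exit port used by the recursive call on each child, whether $\xrightarrow{e_C}$ and $\xleftarrow{x_C}$ are open or closed, and therefore — via invariant (I2) applied to $C$ — exactly which types of entry/exit connections each $\N_C$ provides.

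Second, I would describe and justify the unfolding path in~\cref{fig:degree6}a: it starts at $A$'s entry port, recursively unfolds the five children in a fixed cyclic order chosen so that consecutive children sit on adjacent faces of $A$, and ends at $A$'s exit port. Between the recursive unfoldings of two consecutive children $C_i$ and $C_{i+1}$, the path leaves $\N_{C_i}$ via its type-1 exit connection, crosses a bridge made of a ring strip of $C_i$ relocated outside $\N_{C_i}$ (sanctioned by invariant (I3) applied to $C_i$) joined to a ring strip of $C_{i+1}$ relocated outside $\N_{C_{i+1}}$ (sanctioned by invariant (I3) applied to $C_{i+1}$), and enters $\N_{C_{i+1}}$ through its type-1 entry connection. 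This is exactly the bridging technique already used in the degree-5 proof, with the strip relocations occurring along the shared edge between $C_i$ and $C_{i+1}$.

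Third, I would verify invariants (I1)–(I3) for the resulting $\N_A$ shown in~\cref{fig:degree6}b. For (I1): all open faces of $\T_A$ are covered because $A$ itself has no open faces (every face of $A$ is shared with a neighbor), and by (I1) applied to each child, $\N_{C_i}$ covers all open faces of $\T_{C_i}$; the $4\times4$ refinement bound is inherited from the children together with the width-$1/4$ bridges; and the layout fits inside the staircase-shaped inductive region as drawn. For (I2): the first (last) child's net provides the type-1 or type-2 entry (exit) connection required by $A$, selected according to whether $\xrightarrow{e_A}$ and $\xleftarrow{x_A}$ are open/closed and adjacent to $\T_A$, exactly as in the degree-5 analysis. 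For (I3): $A$ has no open ring faces, so (I3) holds vacuously at $A$, and the bridging strips already consumed from each $\N_{C_i}$ are those guaranteed safe by (I3) applied to $C_i$.

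The main obstacle is case-bookkeeping rather than any new idea: because $A$ has six neighbor slots and the parent $I$ could be any of them, one must confirm that in every configuration (up to the symmetry reductions provided by \cref{lem:hand} and \cref{lem:ih-symmetry}) the chosen cyclic order admits the required bridges and makes the inductive-region shape work out. I expect that, as in Case 5.1, a single canonical subcase handled by direct inspection of \cref{fig:degree6}, together with a rotation/reflection reduction and a reduction analogous to the one in \cref{fig:reduce} for the case where the parent sits on a face ``between'' two non-junction neighbors, will cover all configurations.
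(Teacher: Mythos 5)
Your proposal follows essentially the same route as the paper: the degree-6 unfolding is the degree-5 unfolding of \cref{fig:NEWSdegree5} with the unfolded face $K_A$ replaced by the recursive net $\N_J$, with \cref{prop:degree6} guaranteeing that every child is a leaf or connector (so the traversal across $K_N$, $R_N$, $K_S$, etc.\ is available and no parent-position case analysis or \cref{fig:reduce}-style reduction is needed). One small correction: invariant (I2) plus openness of $\xrightarrow{e_C}$ only yields ``type-1 \emph{or} type-2,'' so it does not by itself pin down the connection types as you claim; the paper instead uses the stronger observation that the leaf and connector unfolding patterns provide only type-1 connections, which is what makes the direct $K_E$-to-$R_J$ and $L_J$-to-$K_W$ attachments work.
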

\begin{proof}
Consider the unfolding depicted in~\autoref{fig:degree6}.
Observe that it is a generalization of the
degree-5 unfolding from~\autoref{fig:NEWSdegree5}, where the unfolded face $K_A$ is replaced by the recursive unfolding of child $J$. This generalization is possible because 
$N$ and $S$ are non-junctions by~\autoref{prop:degree6}. Since the two unfoldings and their proofs are very similar, we only point out the differences here. 

We first note that all children of $A$ provide type-1 entry and exit connectors, since they are all leaves or connector boxes by \autoref{prop:degree6}, and the unfoldings for these types of boxes use only type-1 connectors. In particular,
this means that the type-1 exit connector $x'_E \in K_E$ of $\N_E$ connects to the type-1
entry connector $e'_J \in R_J$ of $\N_J$, as shown in \autoref{fig:degree6}. 
It also means that the type-1 exit connector $x'_J \in L_J$ of $\N_J$ connects to the type-1
entry connector $e'_W \in K_W$ of $\N_W$, also shown in \autoref{fig:degree6}.
Thus $\N_A$ is connected.
 
Applying arguments similar to those in \autoref{lem:degree5} and noting that $\N_J$ includes all open faces in
$\T_J$ with $4 \times 4$ refinement (by invariant (I1) applied to $J$), we conclude that the net $\N_A$ from \autoref{fig:EJWdegree4} satisfies invariants (I1)-(I3).
 \end{proof}

\section{Complete Unfolding Example}
\label{sec:example1}
%
%%%%%%%%%%%%%%%%%%%%%%%%%%%%%%%%%Figure Begin
\begin{figure}[htp]
\centering
\includegraphics[page=1,width=0.9\linewidth]{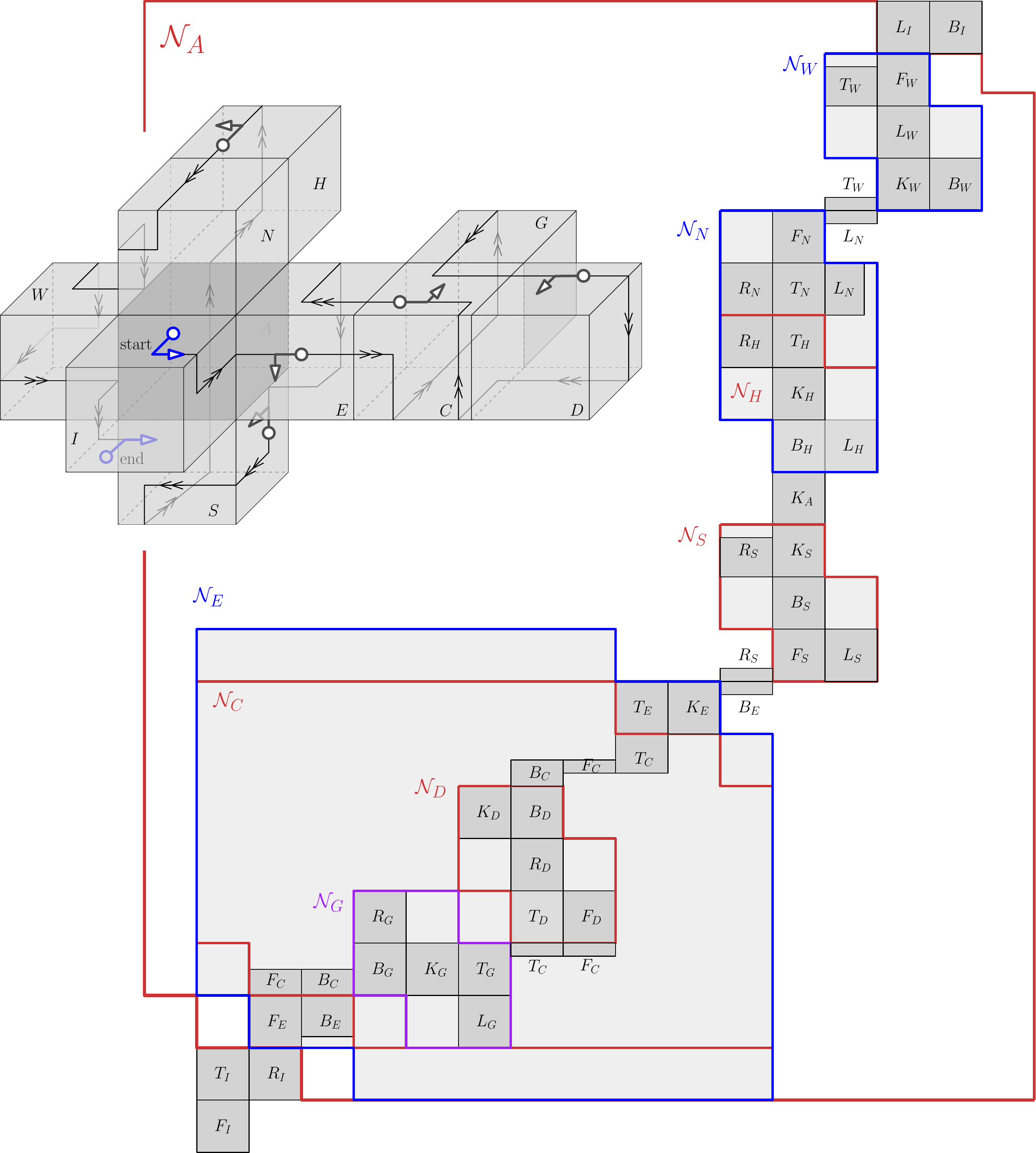}
\caption{Complete unfolding example of a polycube tree (with root $I$). Here $C$ is the box that requires a $4$-refinement along one grid dimension.}
\label{fig:example1}
\end{figure}
%%%%%%%%%%%%%%%%%%%%%%%%%%%%%%%%%Figure End
%

\autoref{fig:example1} illustrates a complete unfolding example for a polycube tree composed of ten boxes. The root $I$ of the the unfolding tree $\T$ is a degree-1 box with back child $A$, which is unfolded recursively. Observe that $A$ is a degree-5 box with non-junction children $E$ and $W$, therefore its unfolding follows the pattern from~\autoref{fig:reduce} (which employs the unfolding from~\autoref{fig:NEWSdegree5} in constructing $\N'_A$). In the following we classify the nodes in $\T$ based on their degree and orientation, and map them to the unfolding patterns discussed in earlier sections. To be able to do so, we view each node in $\T$ in standard position (with parent attached to the front face and entry and exit ports on top and bottom edges of the front face, respectively):
\begin{itemize}
%\squeezelist
\item The east child $E$ of $A$ is a degree-2 box with back child $C$, so its unfolding follows the pattern from~\autoref{fig:JNdegree2}a. 
\item $C$ is a degree-3 box with back child $D$ and south child $G$, so its unfolding follows the pattern from~\autoref{fig:NJdegree3}a, traversed in reverse. 
\item $N$ is a degree-2 box with north child $H$, so its unfolding follows the pattern from~\autoref{fig:JNdegree2}b. 
\item $D$, $S$, $H$ and $W$ are leaves that employ the \head-first unfolding pattern from~\autoref{fig:degree1}a.
\item $G$ is a leaf that uses the \hand-first unfolding pattern from~\autoref{fig:degree1}b.
\end{itemize}
The result is the net depicted in~\autoref{fig:example1}, with the subnets marked and appropriately labeled.

\section{Conclusion}
We show that every polycube tree can be unfolded with a $4 \times 4$ refinement of the grid faces. This is 
the first result on unfolding arbitrary polycube trees using a constant refinement of the grid. 
It is open whether all polycube trees can be grid-unfolded without any refinements. 

%\newpage
%\bibliographystyle{alpha}
%\bibliography{unfolding}

\newcommand{\etalchar}[1]{$^{#1}$}

\newpage
\begin{appendix}

\section{Unfolding Degree-3 Nodes (Remaining Cases)}
\label{sec:degree3-appendix}
This and subsequent appendices discuss unfoldings for cases not included in the main body of the paper. 
We illustrate the unfolding path and the resulting unfolding net for each case scenario, then present a digest of the correctness proof that focuses on the specifics of each case. When combined with arguments similar to the ones used in the main part of the paper, each proof digest yields a complete correctness proof. This way we avoid repetition and improve the readability flow. 

\medskip
\noindent
In this section we discuss the unfoldings for cases 3.3 through 3.6 listed in Section~\ref{sec:degree3}.

%
%%%%%%%%%%%%%%%%%%%%%%%%%%%%%%%%%Figure Begin
\begin{figure*}[htbp]
\centering
\includegraphics[width=\linewidth]{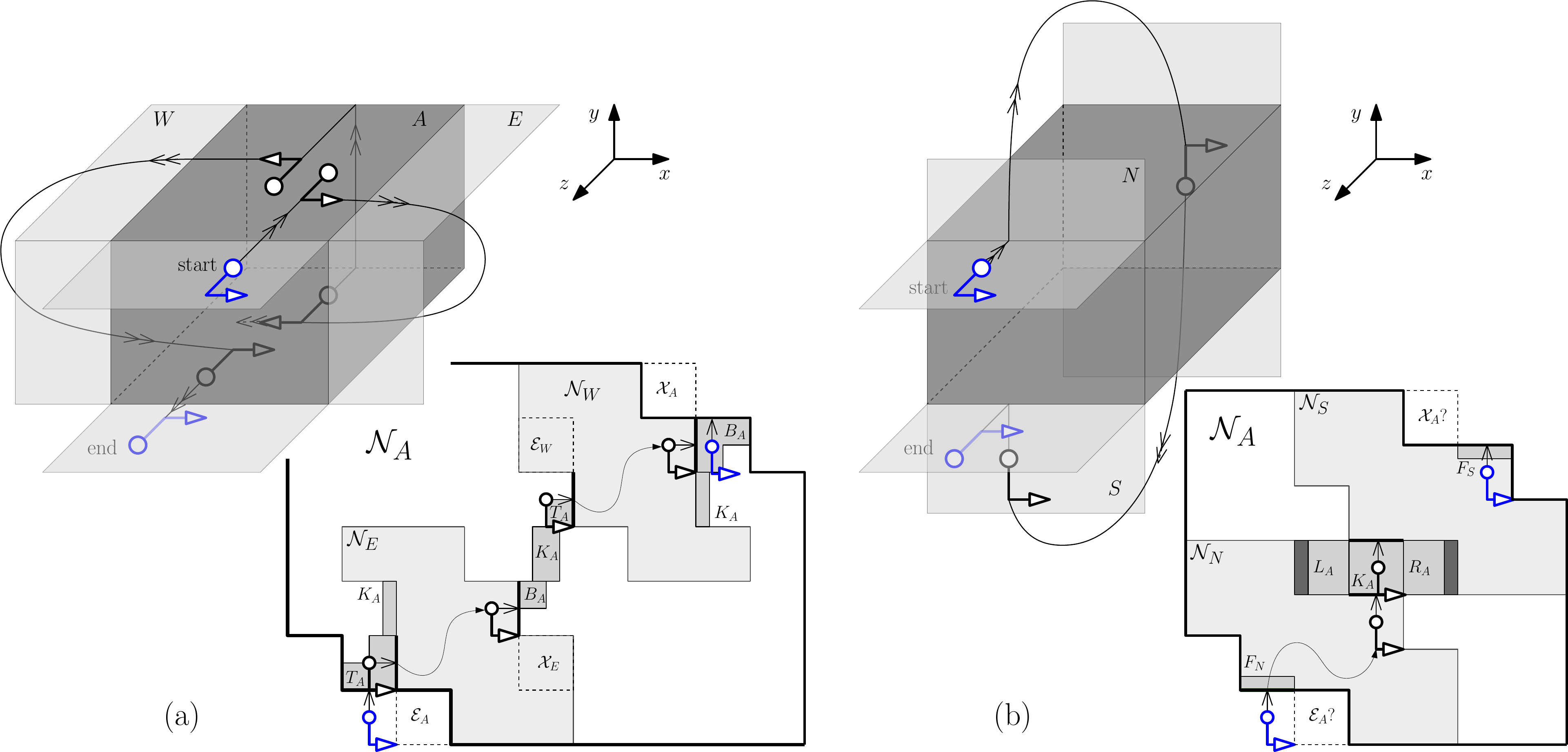}
\caption{Unfolding degree-3 box $A$ with parent $I$ (a) children $E$ and $W$ (b) children $N$ and $S$.}
\label{fig:EW-NSdegree3}
\end{figure*}
%%%%%%%%%%%%%%%%%%%%%%%%%%%%%%%%%Figure End
%
\begin{lemma}
\label{lem:EWdegree3}
Let $A \in \T$ be a degree-3 node with parent $I$ and children $E$ and $W$ \emph{(Case 3.3)}.
If $A$'s children satisfy invariants (I1)-(I3), then $A$ satisfies invariants (I1)-(I3).
\end{lemma}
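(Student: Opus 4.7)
The plan is to adapt the argumentative template used in~\cref{lem:EJdegree3} (Case 3.1) and~\cref{lem:EJWdegree4} (Case 4.1) to the configuration in which both of $A$'s children occupy its lateral faces $R_A$ and $L_A$. I would first describe the unfolding path shown in~\cref{fig:EW-NSdegree3}a: starting at $A$'s entry port, the path moves \head-first onto $T_A$, then \hand-first across $T_A$ to launch the recursive unfolding of $E$; upon returning from $E$'s exit port on $B_A$, it moves \head-first through $K_A$ back up to $T_A$; from $T_A$ it continues \hand-first to launch the recursive unfolding of $W$; and it finally exits on $B_A$ at $A$'s exit port.

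Invariant (I2) will be immediate: $\N_A$ provides a type-$1$ entry connection $e'_A \in T_A$ and a type-$1$ exit connection $x'_A \in B_A$, each aligned directly alongside the respective port. For invariant (I1), the core of the work is analyzing the two recursive subnets. For $E$, the entry ring face $e_E$ lies on $T_A$ with open successor $\xrightarrow{e_E} \in K_A$, while the exit ring face $x_E$ lies on $B_A$ with closed predecessor $\xleftarrow{x_E} \in F_A$; the inductive hypothesis applied to $E$ then guarantees either a type-$1$ or type-$2$ entry connection (attaching to $e_E \in T_A$ or $\xrightarrow{e_E} \in K_A$, respectively) and a type-$1$ exit connection (attaching to $x_E \in B_A$). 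For $W$, an analogous analysis with the orientation of the path reversed yields $e_W \in T_A$ with closed $\xrightarrow{e_W} \in F_A$ and $x_W \in B_A$ with open $\xleftarrow{x_W} \in K_A$, so $\N_W$ provides a type-$1$ entry connection and either a type-$1$ or type-$2$ exit connection. Since $T_A$, $K_A$, and $B_A$ are laid out as intact open faces within $A$'s inductive region, every required attachment can be made, so $\N_A$ is connected, covers all open faces of $\T_A$ using only a $4 \times 4$ refinement, and fits in the inductive region.

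For invariant (I3), I would point to the open ring faces of $A$ not consumed by its entry or exit connections (the dark-shaded pieces in~\cref{fig:EW-NSdegree3}a) and argue that each is attached to $\N_A$ along a single edge, so its removal preserves connectivity.

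The main obstacle I anticipate is confirming that the intermediate $K_A$ strip serving as a bridge between the exit of $\N_E$ and the entry of $\N_W$ remains consistent with both children's connection choices, since a type-$2$ selection by either child would consume part of the $K_A$ ring neighborhood that the path also wishes to traverse. I expect this to resolve by the same bookkeeping used in the previous degree-$3$ and degree-$4$ lemmas: the inductive region is constructed precisely to accommodate either connection type, and~\cref{prop:ih} together with~\cref{lem:ih-symmetry} lets me treat the westbound leg of the path as the reverse of an eastbound one, so no new refinement or case analysis is needed beyond what was handled in~\cref{lem:EJdegree3}.
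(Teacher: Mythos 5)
Your proposal is correct and follows essentially the same route as the paper, which treats this case as a degenerate version of the degree-$4$ unfolding of \cref{lem:EJWdegree4} with $\N_J$ replaced by the face $K_A$, and isolates exactly the two facts you identify: $\xrightarrow{e_E}\in K_A$ is open and adjacent to $\T_E$ (so $\N_E$ may give a type-$1$ or type-$2$ entry connection) and $\xleftarrow{x_W}\in K_A$ is open and adjacent to $\T_W$ (so $\N_W$ may give a type-$1$ or type-$2$ exit connection). The only cosmetic difference is that (I3) here is in fact vacuous (both open ring faces of $A$ serve as its type-$1$ connections), but your statement covers that case anyway.
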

\begin{proof}
Consider the unfolding from~\autoref{fig:EW-NSdegree3}a, and notice that this unfolding is a degenerate case of the 
unfolding from~\autoref{fig:EJWdegree4}, where the recursive unfolding of the child $J$ is replaced by the face $K_A$. 
Since the two unfoldings and their proofs of correctness are very similar, we only
point out the differences here:
\begin{itemize}
\item Since $\xrightarrow{e_E} \in K_A$ is open and adjacent to $\T_E$, the unit square $\XE_E$ (occupied by $\xrightarrow{e_E}$ 
in~\autoref{fig:EW-NSdegree3}a) does not belong to the inductive region for $E$ and $\N_E$ may provide a type-1 or a type-2 entry connection: if type-1, it connects to the ring face $e_E \in T_A$ placed alongside its entry port (as in the general case from~\autoref{fig:EJWdegree4}); if type-2, it connects to the ring face $\xrightarrow{e_E} \in K_A$ placed alongside its entry port extension. 
\item  Similarly, since $\xleftarrow{x_W} \in K_A$ is open and adjacent to $\T_W$, the unit square $\XX_W$ (occupied by $\xleftarrow{x_W}$ in~\autoref{fig:EW-NSdegree3}a) does not belong to the inductive region for $W$ and $\N_W$ may provide a type-1 or a type-2 exit connection: if type-1, it connects to the ring face $x_W \in B_A$ placed alongside its exit port (as in the general case from~\autoref{fig:EJWdegree4}); if type-2, it connects to the ring face $\xleftarrow{x_W} \in K_A$ placed alongside its exit port extension. 
\end{itemize}
These changes are reflected in~\autoref{fig:EW-NSdegree3}a. Arguments similar to the ones used in the proof of~\autoref{lem:EJWdegree4} show that the net $\N_A$ from~\autoref{fig:EW-NSdegree3}a satisfies invariants (I1)-(I3). 
\end{proof}

\begin{lemma}
\label{lem:NSdegree3}
Let $A \in \T$ be a degree-3 node with parent $I$ and children $N$ and $S$ \emph{(Case 3.4)}.
If $A$'s children satisfy invariants (I1)-(I3), then $A$ satisfies invariants (I1)-(I3).
\end{lemma}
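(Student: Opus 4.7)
The plan is to exhibit an unfolding path analogous to the one from Case 3.3 (\cref{lem:EWdegree3}), with the roles of the east/west children played by the north/south children $N$ and $S$. Since $A$'s only non-parent neighbors are $N$ (attached to $T_A$) and $S$ (attached to $B_A$), every face of $A$ besides $F_A$, $T_A$, and $B_A$ is open and hence freely available to the unfolding path. Concretely, starting at $A$'s entry port, the path would move \head-first onto a thin strip of $T_A$, transition into the recursive unfolding of $N$, return from $N$ back onto $T_A$, traverse $A$'s open sides (e.g.\ \hand-first down $R_A$, \head-first across $K_A$, and \hand-first up $L_A$ down to a thin strip of $B_A$), then enter the recursive unfolding of $S$, and finally terminate at $A$'s exit port. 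This is essentially the layout depicted in~\cref{fig:EW-NSdegree3}b.

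The correctness argument should mirror the structure of \cref{lem:EWdegree3}, with two main verifications. First, invariant (I2) follows immediately: the ring faces $e'_A \in T_A$ and $x'_A \in B_A$ sit alongside $A$'s entry and exit ports, giving type-1 connections for $A$. Second, for invariant (I1), I would examine where each of $e_N, x_N, e_S, x_S$ lies on $A$'s surface and inspect the open/closed status of $\xrightarrow{e_N}$, $\xleftarrow{x_N}$, $\xrightarrow{e_S}$, $\xleftarrow{x_S}$. Because the surrounding faces of $A$ (e.g.\ $K_A$, $L_A$, $R_A$) are open and adjacent to $\T_N$ (respectively $\T_S$), invariant (I2) applied to $N$ and $S$ allows $\N_N$ and $\N_W$'s analog $\N_S$ to provide either type-1 or type-2 entry/exit connections; in either case, a corresponding piece of some face of $A$ is placed alongside to receive the connection, so $\N_A$ is connected. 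Combined with invariant (I1) applied to $N$ and $S$, this shows that $\N_A$ covers all open faces of $\T_A$ within a $4 \times 4$ refinement.

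The main obstacle will be coordinating the layout so that whenever $\N_N$ or $\N_S$ may supply a type-$2$ connection, the corresponding unit cell ($\XE_N$, $\XX_N$, $\XE_S$, or $\XX_S$) is appropriately excluded from its inductive region and a matching strip of $A$ lies alongside the relevant port extension to connect to. Once that is confirmed, invariant (I3) follows just as in previous cases: the open ring faces of $A$ not participating in $A$'s entry and exit connections (the dark-shaded strips in~\cref{fig:EW-NSdegree3}b) can be removed without disconnecting $\N_A$, because each such strip is separated from $\N_A$ by at most one edge and the rest of $\N_A$ remains connected through the recursive subnets and $A$'s surrounding faces. With invariants (I1)-(I3) established, the lemma follows.
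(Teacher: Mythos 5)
There is a genuine gap in your verification of invariant (I2), and it is precisely the point that makes Case~3.4 different from Case~3.3. You claim that ``the ring faces $e'_A \in T_A$ and $x'_A \in B_A$ sit alongside $A$'s entry and exit ports, giving type-1 connections for $A$.'' But in this configuration $N$ is attached to $T_A$ and $S$ is attached to $B_A$, so both $T_A$ and $B_A$ are \emph{closed} faces --- as you yourself note one sentence earlier --- and by definition $e'_A$ and $x'_A$ must be \emph{open} ring faces of $\T_A$. Hence $A$ cannot supply its own type-1 connections from its top and bottom faces, and for the same reason your path cannot ``move \head-first onto a thin strip of $T_A$'' or hand control to $S$ from ``a thin strip of $B_A$'': those strips are not on the surface of $\OO$ and never appear in the net. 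The analogy with \cref{lem:EWdegree3} breaks exactly here, because in Case~3.3 the entry and exit ports of $A$ (always on the top and bottom edges of $F_A$ in standard position) are adjacent to the open faces $T_A$ and $B_A$, whereas in Case~3.4 they are adjacent to the closed faces $T_A$ and $B_A$.

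The paper instead treats this case as a generalization of the degree-2 unfolding with child $N$ (\cref{lem:Ndegree2}, \cref{fig:JNdegree2}b), in which the connections at $A$'s ports are \emph{delegated to the children's nets}: $N$'s entry port is chosen to coincide with $A$'s, so $e_A = e_N \in T_I$ and $\xrightarrow{e_N} \in R_I$ is not adjacent to $\T_N$, whence (I2) applied to $N$ forces $\N_N$ to provide a type-1 entry connection $e'_N = e'_A \in F_N$, which serves as $A$'s type-1 entry connection; symmetrically, $S$'s exit port coincides with $A$'s ($x_S = x_A \in B_I$, with $\xleftarrow{x_S} \in L_I$ not adjacent to $\T_S$), so $\N_S$ supplies the type-1 exit connection $x'_S = x'_A \in F_S$. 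The only place where the type-1/type-2 dichotomy you discuss actually arises is at $S$'s entry, where $e_S \in K_A$ and $\xrightarrow{e_S} \in R_A$ is open and adjacent to $\T_S$, so a strip of $R_A$ must be laid alongside the entry port extension of $\N_S$. Your proposal does not identify the mechanism by which $A$'s own entry and exit connections are obtained, so as written the proof of (I2) --- and with it the connectivity argument for (I1) --- does not go through.
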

\begin{proof}
Consider the unfolding from~\autoref{fig:EW-NSdegree3}b, and notice that it is a generalization of the 
unfolding from~\autoref{fig:JNdegree2}b, where the unfolded face $B_A$ is replaced by the recursive unfolding of $S$. 
Since the two unfoldings and their proofs of correctness are very similar, we only
point out the differences here:
\begin{itemize}
\item The entry and exit ring faces for $S$ are $e_S \in K_A$  and $x_S \in B_I$, respectively.
\item Since $\xrightarrow{e_S} \in R_A$ is open and adjacent to $\T_S$, the unit square $\XE_S$ (occupied by $R_A$ in~\autoref{fig:EW-NSdegree3}b) does not belong to the inductive region for $S$ and $\N_S$ may provide a type-1 or a type-2 entry connection: if type-1, it connects to the ring face $e_S \in K_A$ placed alongside its entry port; if type-2, it connects to the ring face $\xrightarrow{e_S} \in R_A$ placed alongside its entry port extension. 
\item Since $\xleftarrow{x_S} \in L_I$ is not adjacent to $\T_S$, $\N_S$ will provide a type-1 exit connection, which is also a type-1 exit connection for $\N_A$ (because $x_S = x_A$). 
\end{itemize}
These observations, along with the arguments used in the proof of~\autoref{lem:Ndegree2}, show that the unfolding $\N_A$ from~\autoref{fig:EW-NSdegree3}b satisfies invariants (I1)-(I3).
\end{proof}

\begin{lemma}
\label{lem:NEdegree3}
Let $A \in \T$ be a degree-3 node with parent $I$ and children $N$ and $E$ \emph{(Case 3.5)}.
If $A$'s children satisfy invariants (I1)-(I3), then $A$ satisfies invariants (I1)-(I3).
\end{lemma}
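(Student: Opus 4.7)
The plan is to mimic the structure used in earlier degree-3 cases (in particular Cases 3.1 and 3.2), adapting the unfolding path to the fact that $A$'s closed faces are $F_A$ (to parent $I$), $T_A$ (to child $N$), and $R_A$ (to child $E$), while its open faces are $L_A$, $K_A$, and $B_A$. Place $A$ in standard position and aim for a \head-first, \hand-east unfolding whose path does the following: starting at $A$'s entry port on $T_A$, enter the recursive unfolding of $N$ (whose entry port we take on the front edge of $T_A$, so that $e_N$ lies on $I$'s ring and $\xrightarrow{e_N}$ lies on $R_I$); exit $N$'s unfolding on the back edge of $T_A$ (so $x_N \in K_A$ after flattening); continue \hand-first across $K_A$ to $R_A$, then enter the recursive unfolding of $E$ whose entry port is on the back edge of $R_A$; exit $E$ onto $B_A$; and finally move \head-first along $B_A$ (and, if needed, $L_A$) to $A$'s exit port on the bottom edge of $F_A$.

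Having fixed the path, I would verify (I1)--(I3) in the usual order. For (I1) the inductive hypothesis gives $\N_N$ and $\N_E$ as nets using $4\times 4$ refinement and fitting in their respective inductive regions; $A$ itself contributes only its three open faces $L_A$, $K_A$, $B_A$, which at worst need a $1/4$-strip refinement to carry the two inductive sub-regions around the closed corner at $T_A\cap R_A$. I would then draw the composite picture (analogous to \cref{fig:EJ-WJdegree3}a and \cref{fig:NJdegree3}a) and check that $\N_A$ sits inside $A$'s \head-first inductive region, with the possible omission of $\XE_A$ and $\XX_A$ dictated by whether $\xrightarrow{e_A}\in R_I$ and $\xleftarrow{x_A}\in L_I$ are open or closed, exactly as in \cref{def:region}. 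For (I2), $\N_A$'s entry connection is inherited from $\N_N$: since $e_N=e_A$ and $e'_N=e'_A\in F_N$, $N$'s type-1 (or, if $\xrightarrow{e_N}\in R_I$ is open and adjacent to $\T_N$, type-2) entry connection is simultaneously a valid entry connection for $A$; the exit connection is $x'_A\in B_A$, a type-1 exit (with a type-2 alternative available when $\xleftarrow{x_A}\in L_I$ is open and adjacent to $\T_A$). Invariant (I3) will hold because the only open ring faces of $A$ not consumed in entry/exit connections are the strips of $L_A$ and $K_A$ sitting outside the path, and these are leaves of $\N_A$ whose removal does not disconnect it.

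The main obstacle is the seam between $\N_N$ and $\N_E$: when the unfolding path leaves $N$ at $x_N\in K_A$ and needs to arrive at $E$'s entry port on the back edge of $R_A$, it must cross $K_A$ and land flush against $\N_E$'s entry connection. By (I2) applied to $N$, $\xleftarrow{x_N}$ lies on $B_N=T_A$ and is closed, so $\N_N$ provides a type-1 exit connection $x'_N$ along $K_A$; by (I2) applied to $E$, since $\xrightarrow{e_E}$ lies on $T_A$ and is closed, $\N_E$ provides a type-1 entry connection along $K_A$. Thus the $K_A$ strip placed between them closes the bridge without any relocation. If instead $\xrightarrow{e_E}$ or $\xleftarrow{x_N}$ turns out to be open and adjacent to $\T_E$ or $\T_N$, the same bridging argument used in \cref{lem:degree5} applies: invoke invariant (I3) on the relevant child to relocate one ring strip and use it as a bridge. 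The two potentially awkward subcases (whether the entry successor on $R_I$, respectively the exit predecessor on $L_I$, is open and adjacent to $\T_A$) are handled as in \cref{lem:NJdegree3}: either $\XE_A$ (respectively $\XX_A$) lies inside $A$'s inductive region and we get a type-1 connection there, or it does not and we place the type-2 connection alongside the entry/exit extension.

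Finally, the case with children $S$ and $W$ follows at once by rotating $\OO$ by $180^\circ$ about the $z$-axis and traversing the above path in reverse, invoking \cref{lem:ih-symmetry} and \cref{prop:ih} to transfer the three invariants to the reversed net. Together these pieces establish the lemma.
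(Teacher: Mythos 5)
There is a genuine gap here, in both the proposed path and the case structure. On the path: you enter $E$ at the back edge of $R_A$ and exit it at the bottom edge of $R_A$, but entry and exit ports are by definition a pair of \emph{opposite} edges of the face shared with the parent, and those two edges are adjacent. Even repairing that, routing the path from $N$'s exit across $K_A$ \hand-first (eastward) forces $E$ to be unfolded \hand-first from a vertically oriented port, so $\N_E$ occupies a \hand-first inductive region that grows horizontally; there is then no way to bring the path back along $B_A$ to an exit port sitting at the upper right of $A$'s \head-first region. This is exactly why the paper does \emph{not} travel over $K_A$ between the two children: in its first subcase the path stays on $N$'s own surface, crossing $K_N$ and $R_N$, and uses ring strips of $R_N$ and $T_E$ (relocated via invariant (I3)) as a bridge that re-enters $E$ \head-first at the \emph{front} edge of $R_A$, exactly as in \cref{lem:degree5}.

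That bridge requires $K_N$ and $R_N$ to be open, and the paper's proof of \cref{lem:NEdegree3} therefore splits into two cases: $K_N$ open (a degenerate form of the degree-5 path, \cref{fig:NEdegree3a}a) and $K_N$ closed, in which case $K_E$ must be open and an entirely different unfolding is used --- a type-2 entry connection $\xrightarrow{e'_A}\in F_E$ whose path winds through $E$'s children $ES$ and $EE$ before reaching $N$ via $T_E$ (\cref{fig:NEdegree3a}b). Your proposal never engages with this dichotomy, and the fallback you offer (``relocate one ring strip as in \cref{lem:degree5}'') is precisely the device that becomes unavailable when $K_N$ is closed. A smaller slip: $\xleftarrow{x_N}$ cannot lie on $B_N=T_A$, since the exit ring of $N$ consists of the strips of $F_A$, $R_A$, $K_A$, $L_A$ adjacent to $T_A$; the predecessor of $x_N\in K_A$ is on $R_A$ or $L_A$, and in the latter case it is open and adjacent to $\T_N$, which changes the connection analysis you give for the seam.
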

\begin{proof}
%
%%%%%%%%%%%%%%%%%%%%%%%%%%%%%%%%%Figure Begin
\begin{figure*}[htpb]
\centering
\includegraphics[page=1,width=\linewidth]{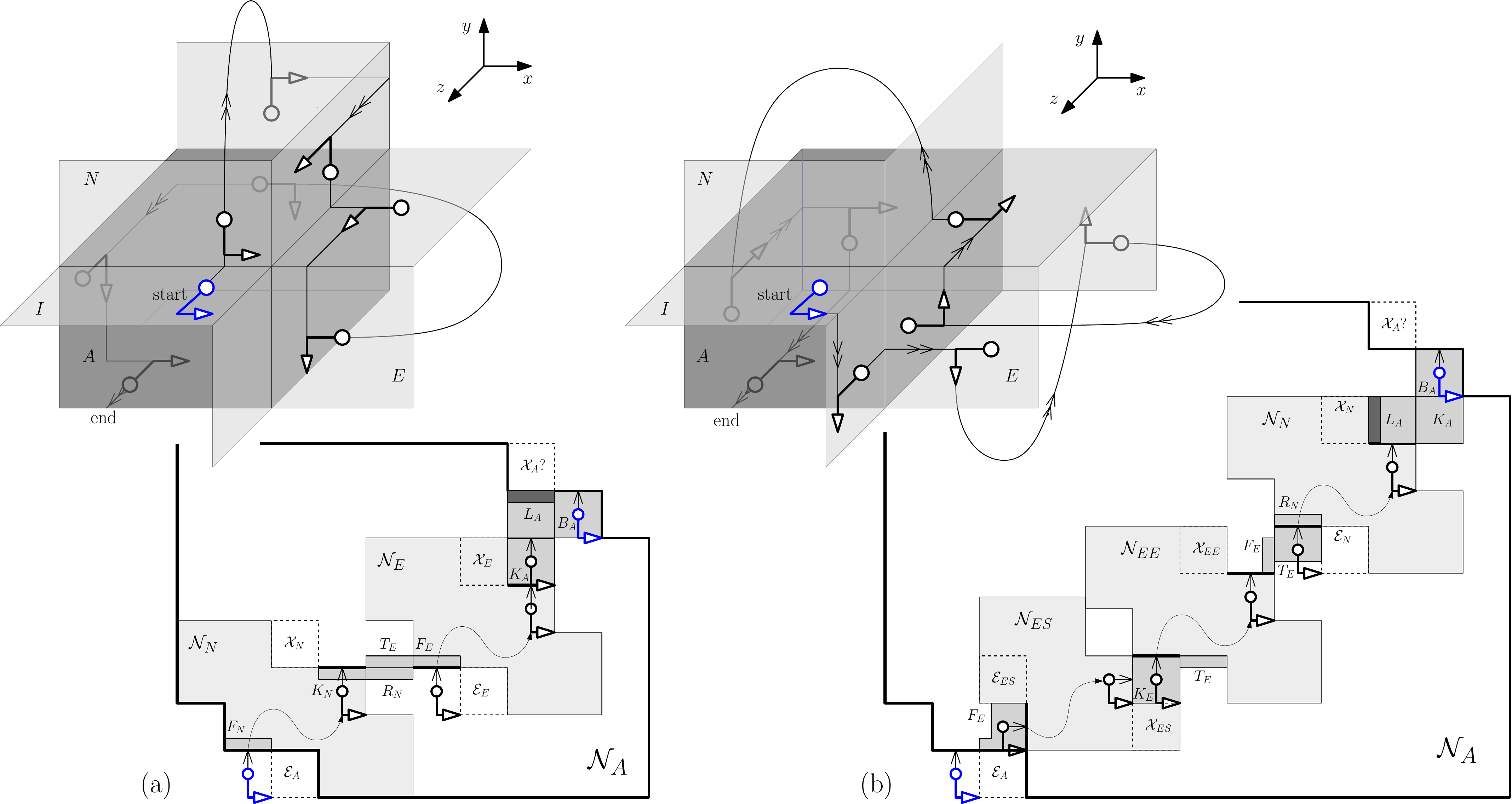}
\caption{\hand-east unfolding of degree-3 box $A$ with parent $I$ and children $N$ and $E$ (a) $K_N$ open 
(b) $K_N$ closed (so $K_E$ open); unfolding shown for case when $ES$ and $EE$ exist. }
\label{fig:NEdegree3a}
\end{figure*}
%%%%%%%%%%%%%%%%%%%%%%%%%%%%%%%%%Figure End

We discuss two different scenarios,  depending on whether $K_N$ is open or closed. Assume first that 
$K_N$ is open, and consider the \hand-east unfolding depicted in~\autoref{fig:NEdegree3a}a. Notice that this unfolding follows a path very similar to the one from~\autoref{fig:NEWSdegree5} (which depicts the case where $A$ has two additional children $W$ and $S$), so in a way this case can be viewed as a degenerate case of the one from~\autoref{fig:NEWSdegree5}. The only difference is that, in~\autoref{fig:NEdegree3a}a, once the unfolding path reaches the back face $K_A$, it continues \head-first to $L_A$ and then \hand-first to the exit port of $A$. Note that the resulting net $N_A$ provides a type-1 exit connection $x'_A \in B_A$, and the ring face $\xleftarrow{x'_A} \in L_A$ (dark-shaded in~\autoref{fig:NEdegree3a}a) can be removed from $\N_A$ without disconnecting $\N_A$. These observations, combined with the arguments used in the proof of~\autoref{lem:degree5}, show that $\N_A$ satisfies invariants (I1)-(I3). 

Assume now that $K_N$ is closed (note that in this case $K_E$ is open), and consider the \hand-east unfolding depicted in~\autoref{fig:NEdegree3a}b, which handles the more general case where $ES$ and $EE$ exist (handling cases when one or both of these boxes are missing requires only minor modifications). Note that $\xrightarrow{e_A} \in R_I$ is open and adjacent to $\T_A$, and 
$\N_A$ provides a type-2 entry connection $\xrightarrow{e'_A} \in F_E$. Also note that $\N_A$ provides a type-1 exit connection $\xrightarrow{x'_A} \in B_E$. These together show that $\N_A$ satisfies invariant (I2).
The following observations support our claim that $\N_A$ satisfies invariant (I1):
\begin{itemize}
%\squeezelist
\item The entry and exit ring faces for $ES$, $EE$ and $N$ are as follows: $e_{ES} \in F_E$ and $x_{ES} \in K_E$; 
$e_{EE} \in K_E$ and $x_{EE} \in F_E$; and $e_{N} \in T_E$ and $x_{N} \in L_A$.
\item $\N_{ES}$ and $\N_N$ provide type-1 entry connections. This is because 
$\xrightarrow{e_{ES}} \in R_E$ is closed, and $\xrightarrow{e_{N}} \in K_E$ is not adjacent to $\T_E$. 
\item Since $\xrightarrow{e_{EE}} \in T_E$ is open, % and adjacent to $\T_{EE}$, 
the unit square $\XE_{EE}$ (occupied by $\xrightarrow{e_{EE}}$ in~\autoref{fig:NEdegree3a}b) does not belong to 
the inductive region for $EE$.
%and $\N_{EE}$ may provide a type-1 or type-2 entry connection. 
%
\item $\N_{ES}$, $N_{EE}$ and $\N_N$ provide type-1 exit connections. This is because 
$\xleftarrow{x_{ES}} \in L_E$, $\xleftarrow{x_{EE}} \in B_E$ and $\xleftarrow{x_{N}} \in F_A$ are all closed. 
\end{itemize}
Regarding invariant (I3), note that the open ring face of $A$ located on $L_A$ 
(dark-shaded in~\autoref{fig:NEdegree3a}b) can be removed from $\N_A$ without disconnecting $\N_A$, therefore $\N_A$ satisfies (I3).

\end{proof}

\begin{lemma}
\label{lem:NWdegree3}
Let $A \in \T$ be a degree-3 node with parent $I$ and children $N$ and $W$ \emph{(Case 3.6)}.
If $A$'s children satisfy invariants (I1)-(I3), then $A$ satisfies invariants (I1)-(I3).
\end{lemma}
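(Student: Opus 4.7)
My plan is to reduce Case 3.6 to the already-established Case 3.5 (Lemma~\ref{lem:NEdegree3}) via a symmetry argument, exploiting the fact that the NW configuration is the horizontal mirror of the NE configuration. Concretely, let $d$ be the box corresponding to $A$ under a horizontal reflection of $\OO$; then $d$ is a degree-3 node with parent and children arranged exactly as in Case 3.5 (children $N$ and $E$). By Lemma~\ref{lem:NEdegree3}, a \head-first recursive unfolding of $d$ yields a net $\N_d$ satisfying invariants (I1)-(I3). Applying Lemma~\ref{lem:hand} (the counterclockwise $90^\circ$-rotation followed by horizontal reflection) to $\N_d$ produces a \hand-first unfolding of $A$. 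The invariants (I1)-(I3) are orientation-agnostic --- they constrain only the net's connectivity, its containment in the inductive region, and the existence of the required entry/exit connection pieces --- so the transformed net still satisfies them, and the paper's explicit convention (following the remark after Lemma~\ref{lem:hand}) treats \hand-first results as carrying over from \head-first results.

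Alternatively, and in keeping with the style of the other degree-3 lemmas in this appendix, one can prove the lemma by direct construction. The unfolding path would mirror Figure~\ref{fig:NEdegree3a} horizontally, splitting into the same two subcases. In the first subcase ($K_N$ open), the path proceeds \head-first to recursively unfold $N$, then bridges to $W$ via a relocated piece of $L_N$ and a relocated piece of $T_W$ (mirroring the $R_N$-to-$T_E$ bridge in the NE proof, justified by invariant (I3) applied to $\N_N$ and $\N_W$), recursively unfolds $W$, and finishes by traversing $K_A$, $R_A$, and $B_A$ to the exit port. In the second subcase ($K_N$ closed, hence $K_W$ open), the unfolding reaches into $W$'s subtree through grandchildren of $A$ analogous to $EE$ and $ES$ in the NE case, yielding a type-2 entry connection on $F_W$ and a type-1 exit connection on $B_W$. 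The invariants (I1)-(I3) are then verified by arguments parallel to those in Lemma~\ref{lem:NEdegree3}, noting once again that the dark-shaded ring pieces (this time located on $R_A$) are removable by (I3).

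The main obstacle, in either approach, is a conceptual one rather than a technical one: confirming that the symmetry genuinely respects all of the definitions that drive the case analysis --- in particular, that the successor $\xrightarrow{e}$ and predecessor $\xleftarrow{x}$ of $A$'s entry/exit ring faces, together with the open/closed status of the relevant neighboring ring faces, behave consistently under the horizontal reflection. This is exactly the content verified once and for all by Lemma~\ref{lem:hand}, so invoking that lemma disposes of the case immediately; the direct construction simply reproduces the same checks face by face. Either way, the conclusion is that $\N_A$ satisfies invariants (I1)-(I3), completing Case 3.6.
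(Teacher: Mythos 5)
Your reduction of Case 3.6 to Case 3.5 has a genuine gap, and it is exactly the ``conceptual obstacle'' you flag at the end. \cref{lem:hand} converts a \head-first unfolding of the \emph{mirrored} box $d$ into a \emph{\hand-first} unfolding of $b$; the $90^\circ$ rotation is there precisely because a horizontal reflection alone flips the chirality of the net (the staircase bites of the \head-first inductive region land in the wrong corners, and the entry/exit ports end up on the wrong sides). So applying \cref{lem:NEdegree3} to the reflected box and then \cref{lem:hand} yields only a \hand-first unfolding of the $NW$ configuration --- which is already a free consequence of \cref{lem:NEdegree3} and \cref{lem:hand} and is not what \cref{lem:NWdegree3} is responsible for. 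What the lemma must supply is the \head-first unfolding of the $NW$ configuration (needed whenever a parent's path enters such a box \head-first, and needed to obtain the \hand-first unfolding of the $NE$ configuration via \cref{lem:hand}). No composition of the paper's symmetry lemmas produces a \head-first $NW$ unfolding from a \head-first $NE$ unfolding; the only symmetry the paper does exploit here is the $180^\circ$ rotation about the $z$-axis plus path reversal (\cref{lem:ih-symmetry}), which identifies $NW$ with $SE$, not with $NE$.

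Your fallback ``direct construction'' inherits the same flaw: a horizontal mirror of the path in \cref{fig:NEdegree3a} is not a valid \head-first unfolding, and the case structure does not transfer. The paper's actual proof of Case 3.6 is not a two-way split on $K_N$ but a four-way split on the open/closed status of $B_I$, $T_N$, and $K_N$, with unfolding paths that differ substantially from the $NE$ case --- e.g., routing over $B_I$ when it is open, climbing onto $T_N$ and recursing into grandchildren $NE$ and $NJ$ when $T_N$ is open, and in the hardest subcase ($B_I$, $T_N$, $K_N$ all closed) exploiting that $NJ$ must exist with $T_{NJ}$ and $L_{NJ}$ open and using a relocated piece of $T_{NJ}$ as a bridge. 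The asymmetry is real: in a \hand-east, \head-first unfolding the path flows naturally toward an east child, whereas a west child sits against the flow and the path must find a detour whose existence depends on which neighboring faces are open. So the lemma genuinely requires its own construction and case analysis; it cannot be discharged by symmetry from Case 3.5.
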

\begin{proof}

\medskip
\noindent
This case is slightly more complex and spans four different case scenarios:
\begin{enumerate}
%\squeezelist
\item $B_I$ open
\item $T_N$ open
\item $B_I$ closed, $T_N$ closed and $K_N$ open
\item $B_I$ closed, $T_N$ closed and $K_N$ closed
\end{enumerate}

%
%%%%%%%%%%%%%%%%%%%%%%%%%%%%%%%%%Figure Begin
\begin{figure*}[htp]
\centering
\includegraphics[page=1,width=\linewidth]{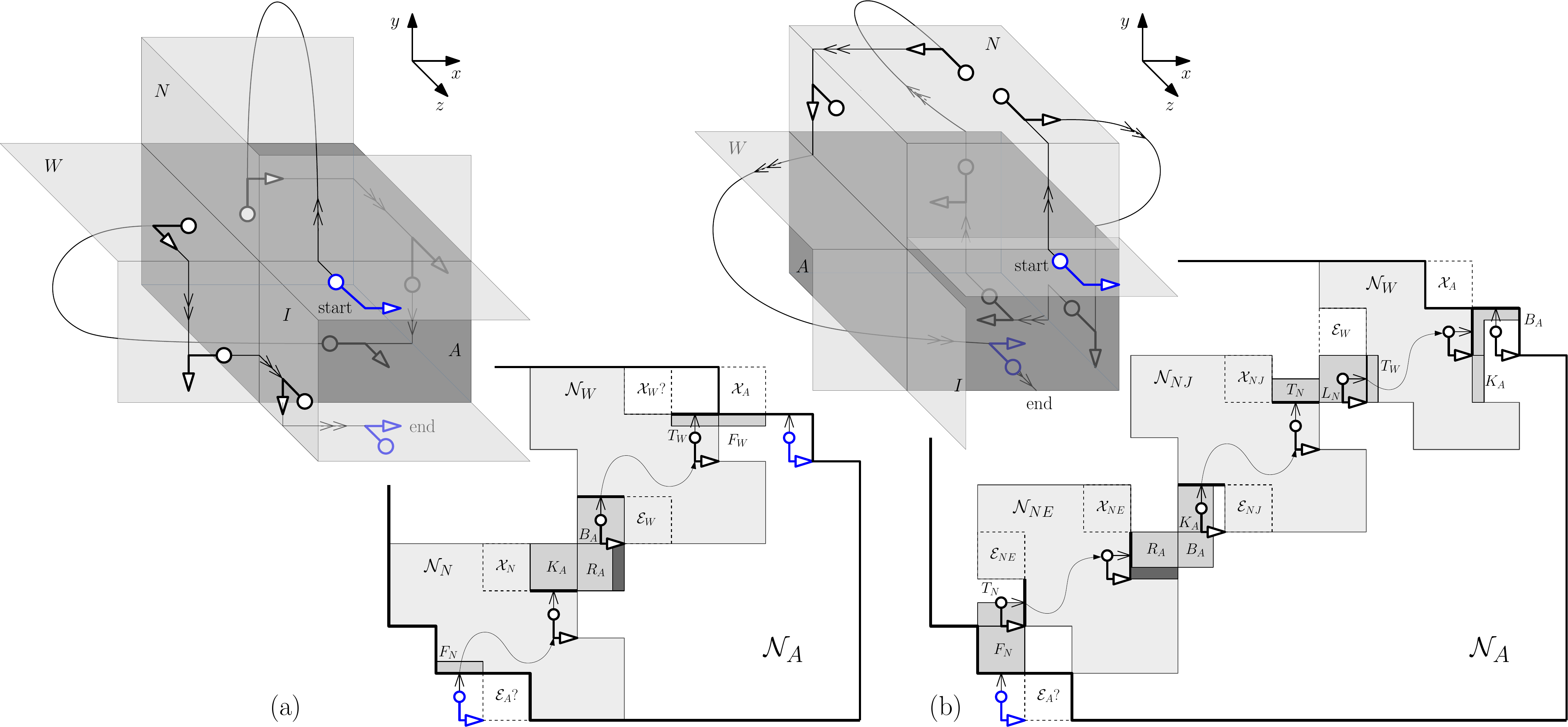}
\caption{Unfolding of degree-3 box $A$ with parent $I$ and children $N$ and $W$
(a) $B_I$ open  
(b) $T_N$ open; unfolding shown for the case when $NE$ and $NJ$ exist. 
}
\label{fig:NEdegree3b}
\end{figure*}
%%%%%%%%%%%%%%%%%%%%%%%%%%%%%%%%%Figure End
%

\noindent
{\bf Case 1: $B_I$ open.} Consider the unfolding depicted in~\autoref{fig:NEdegree3b}a, and identify the following entry and exit ring faces for $N$ and $W$: $e_N \in T_I$ and $x_N \in K_A$; and $e_W \in B_A$ and $x_W \in L_N$. 
Note that $\xrightarrow{e_N} \in R_I$ is not adjacent to $\T_N$, therefore $\N_N$ provides a type-1 entry connection $\xrightarrow{e'_N} \in F_N$, which is also a type-1 entry connection for $\N_A$ (since $e_A = e_N$). Also note that $\xleftarrow{x_A} \in L_I$ is open and adjacent to $\T_A$, and $\N_A$ provides a type-2 exit connection $\xleftarrow{x'_A} \in F_W$. These together show that $\N_A$ satisfies invariant (I2). Turning to (I1), note that $\N_N$ and $\N_W$ provide type-1 entry and exit connections. This is because 
$\xleftarrow{x_N} \in L_A$ is closed, 
$\xrightarrow{e_W} \in F_A$ is closed, 
and $\xleftarrow{x_{W}} \in K_N$ is not adjacent to $\T_W$.
These together imply that $\N_A$ satisfies invariant (I1).
Finally, note that the ring face of $A$ located on $R_A$
%$\xrightarrow{x'_A} \in R_A$ 
(dark-shaded in~\autoref{fig:NEdegree3b}a) can be removed without disconnecting $\N_A$, so $\N_A$ satisfies invariant (I3) as well. 

\medskip
\noindent
{\bf Case 2: $T_N$ open.} Consider the unfolding depicted in~\autoref{fig:NEdegree3b}b, which handles the more general case where $NE$ and $NJ$ exist (handling cases when one or both of these boxes are missing requires only minor modifications). Note that $\N_A$ provides type-1 entry and exit connections $e'_A \in F_N$ and $x'_A \in B_A$, therefore $\N_A$ satisfies invariant (I2). 
The following observations support our claim that $\N_A$ satisfies invariant (I1):
\begin{itemize}
%\squeezelist
\item The entry and exit ring faces for $NE$, $NJ$ and $W$ are as follows: 
$e_{NE} \in T_N$ and $x_{NE} \in R_A$; 
$e_{NJ} \in K_A$ and $x_{NJ} \in T_N$; 
and $e_{W} \in L_N$ and $x_{W} \in B_A$. 
\item $\N_{NE}$, $\N_{NJ}$ and $\N_W$ provide type-1 entry connections. This is because
$\xrightarrow{e_{NE}} \in K_N$ and $\xrightarrow{e_{NJ}} \in L_A$ are closed, and 
$\xrightarrow{e_{W}} \in F_N$ is not adjacent to $\T_W$. 
\item $\N_{NE}$ and $\N_{NJ}$ provide type-1 exit connections, since 
$\xleftarrow{x_{NE}} \in F_A$ and $\xleftarrow{x_{NJ}} \in R_N$ are closed.
\item Since $\xleftarrow{x_W} \in K_A$ is open, the unit square $\XX_W$ (occupied by $\xleftarrow{x_W}$ in~\autoref{fig:NEdegree3b}b) does not belong to the inductive region for $W$. 
%Furthermore, since $\xleftarrow{x_W}$ is adjacent to $\T_W$, $\N_W$ may provide a type-1 or type-2 exit connection. 
\end{itemize}
Arguments similar to the ones above show that $\N_A$ satisfies invariant (I3) as well.

%%%%%%%%%%%%%%%%%%%%%%%%%%%%%%%%%Figure Begin
\begin{figure*}[htbp]
\centering
\includegraphics[page=2,width=0.9\linewidth]{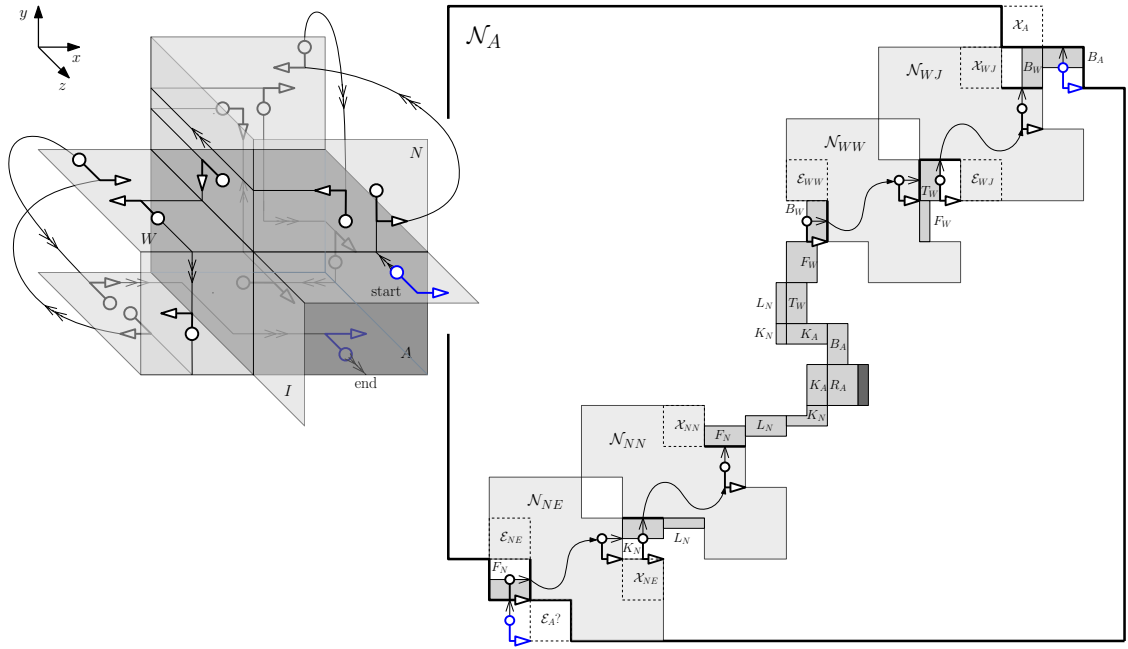}
\caption{Unfolding of degree-3 box $A$ with parent $I$ and children $N$ and $W$, case 
$B_I$ closed (so $B_W$ open), $T_N$ closed and $K_N$ open.
}
\label{fig:NEdegree3c}
\end{figure*}
%%%%%%%%%%%%%%%%%%%%%%%%%%%%%%%%%Figure End
%

\medskip
\noindent
{\bf Case 3: $B_I$, $T_N$ closed and $K_N$ open.} Note that in this case $B_W$ is open. 
Consider the unfolding depicted in~\autoref{fig:NEdegree3c}, which handles the more general case where $NE$, $WW$ and $WJ$ exist (handling cases when one or more of these boxes do not exist requires only minor modifications). 
Arguments similar to the ones above show that $\N_A$ satisfies invariants (I2) and (I3).
The following observations support our claim that $\N_A$ satisfies invariant (I1):
\begin{itemize}
%\squeezelist
\item The entry and exit ring faces for $NE$, $NN$, $WW$ and $WJ$ are as follows: 
$e_{NE} \in F_N$ and $x_{NE} \in K_N$; 
$e_{NN} \in K_N$ and $x_{NN} \in F_N$; 
$e_{WW} \in B_W$ and $x_{WW} \in T_W$; 
and $e_{WJ} \in T_W$ and $x_{WJ} \in B_W$. 
\item $\N_{NE}$, $\N_{WW}$ and $\N_{WJ}$ provide type-1 entry connections, since 
$\xrightarrow{e_{NE}} \in T_N$, $\xrightarrow{e_{WW}} \in K_W$ and $\xrightarrow{e_{WJ}} \in R_W$ are closed.
Since $\xrightarrow{e_{NN}} \in L_N$ is open, %and adjacent to $\T_{NN}$, 
the unit square $\XE_{NN}$ (occupied 
by $\xrightarrow{e_{NN}}$ in~\autoref{fig:NEdegree3c}) does not belong to the inductive region for $NN$.
% and $\N_{NN}$ may provide a type-1 or type-2 entry connection. 
%
\item $\N_{NE}$, $\N_{NN}$ and $\N_{WJ}$ provide type-1 exit connections, 
since $\xleftarrow{x_{NE}} \in B_N$, $\xleftarrow{x_{NN}} \in R_N$ and $\xleftarrow{x_{WJ}} \in L_W$ are closed.
Since $\xleftarrow{x_{WW}} \in F_W$ is open, % and adjacent to $\T_W$, 
the unit square $\XX_{WW}$ (occupied by $\xleftarrow{x_{WW}}$ in~\autoref{fig:NEdegree3c}) does not belong to the inductive region for $WW$.
%and $\N_{EE}$ may provide a type-1 or type-2 exit connection. 
\end{itemize}

%%%%%%%%%%%%%%%%%%%%%%%%%%%%%%%%%Figure Begin
\begin{figure*}[htbp]
\centering
\includegraphics[page=3,width=0.9\linewidth]{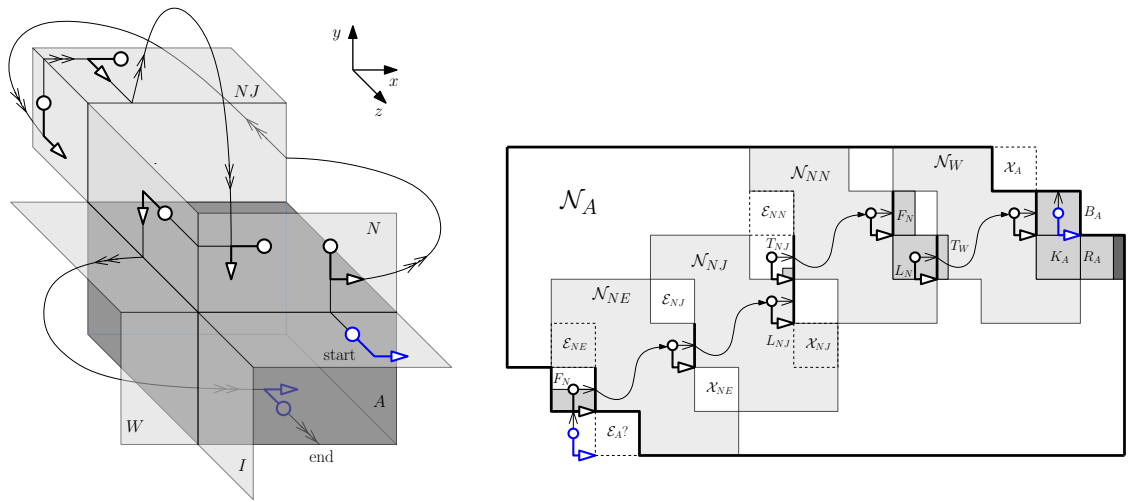}
\caption{Unfolding of degree-3 box $A$ with parent $I$ and children $N$ and $E$, case 
$B_I$ closed (so $B_W$ open), $T_N$ closed and $K_N$ closed (and so $T_{NJ}$ and $L_{NJ}$ open). 
}
\label{fig:NEdegree3d}
\end{figure*}
%%%%%%%%%%%%%%%%%%%%%%%%%%%%%%%%%Figure End
%

\medskip
\noindent
{\bf Case 4: $B_I$, $T_N$ and $K_N$ closed.} Note that in this case $NJ$ exists, and $T_{NJ}$ and $L_{NJ}$ are open. 
Consider the unfolding depicted in~\autoref{fig:NEdegree3d}, which handles the more general case where $NE$ exists (handling the case when $NE$ does not exist requires only minor modifications). 
Arguments similar to the ones above show that $\N_A$ satisfies invariants (I2) and (I3).
The following observations support our claim that $\N_A$ satisfies invariant (I1):
\begin{itemize}
%\squeezelist
\item The entry and exit ring faces for $NE$, $NJ$, $NN$ and $W$ are as follows: 
$e_{NE} \in F_N$ and $x_{NE} \in R_{NJ}$; 
$e_{NJ} \in K_{NE}$ and $x_{NJ} \in L_N$; 
$e_{NN} \in T_{NJ}$ and $x_{NN} \in F_N$; 
and $e_{W} \in L_N$ and $x_{W} \in B_A$. 
\item $\N_{NE}$, $\N_{NJ}$, $\N_{NN}$ and $\N_{W}$ provide type-1 entry connections. This is because 
$\xrightarrow{e_{NE}} \in T_N$ is closed, 
$\xrightarrow{e_{NJ}} \in T_{NE}$ is not adjacent to $\T_{NJ}$, 
$\xrightarrow{e_{NN}} \in R_{NJ}$ is not adjacent to $\T_{NN}$, and 
$\xrightarrow{e_{W}} \in F_N$ is not adjacent to $\T_W$. 
\item $\N_{NE}$ and $\N_{NJ}$ provide type-1 exit connections. This is because 
 $\xleftarrow{x_{NE}} \in B_{NJ}$ is not adjacent to $\T_{NE}$, and 
 $\xleftarrow{x_{NJ}} \in B_N$ is closed. Note that the type-1 exit connection of 
 $\N_{NE}$ connects to the type-1 entry connection of $\N_{NJ}$. 
\item Since $\xleftarrow{x_{NN}} \in L_N$ is open,  %and adjacent to $\T_{NN}$, 
the unit square $\XX_{NN}$ (occupied by $L_N$ in~\autoref{fig:NEdegree3d}) does not belong to the inductive region 
for $NN$.
% and $\N_{NN}$ may provide a type-1 or type-2 exit connection. 
%
\item Similarly, since $\xleftarrow{x_{W}} \in K_A$ is open, % and adjacent to $\T_W$, 
the unit square $\XX_W$ (occupied by $K_A$ in~\autoref{fig:NEdegree3d}) does not belong to the inductive region 
for $W$.
% and $\N_{E}$ may provide a type-1 or type-2 exit connection. 
\item By invariant (I3) applied to $NJ$, the ring face that lies on $T_{NJ}$ (not used in the 
entry or exit connections for $NJ$) can be relocated outside of $\N_{NJ}$.  
In~\autoref{fig:NEdegree3d}, we use a piece of $T_{NJ}$ to connect $\N_{NJ}$ and $\N_{NN}$ together. 
\end{itemize}
Having exhausted all possible cases, we conclude that this lemma holds. 
\end{proof}

\section{Unfolding Degree-4 Nodes (Remaining Cases)}
\label{sec:degree4-appendix}
In this section we discuss the unfoldings for cases 4.2 through 4.7 listed in Section~\ref{sec:degree4}.

\begin{lemma}
\label{lem:NEWdegree4}
Let $A \in \T$ be a degree-$4$ node with parent $I$ and children $N$, $E$, and $W$ \emph{(Case 4.2)}. 
If $A$'s children satisfy invariants (I1)-(I3), then $A$ satisfies invariants (I1)-(I3).
\end{lemma}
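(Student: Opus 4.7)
The plan is to adapt the degree-5 unfolding strategy from~\cref{lem:degree5} (Case 5.1) to the degree-4 setting, recognizing that Case 4.2 is essentially Case 5.1 with the south child $S$ removed. Since $A$ has no south child, both $K_A$ and $B_A$ are open, giving flexibility to route the unfolding path across these faces to bridge the recursive unfoldings of $N$, $E$, and $W$.

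Concretely, I would define the \head-first unfolding path to start at $A$'s entry port on the top edge of $F_A$ and proceed \head-first into a recursive unfolding of $N$. Upon reaching $N$'s exit, the path bridges to $E$ by relocating a strip of $R_N$ out of $\N_N$ and a strip of $T_E$ out of $\N_E$ (applying invariant (I3) to both $N$ and $E$), exactly as done in~\cref{fig:NEWSdegree5}. The path then recursively unfolds $E$, and instead of traveling through an $S$ subtree (as in Case 5.1), it crosses directly across open pieces of $K_A$ and $B_A$ to reach $W$'s entry port on $L_A$, bridging via relocated ring strips as needed. Finally, the path recursively unfolds $W$ and ends at $A$'s exit port on the bottom of $F_A$.

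For the resulting net, the verification of invariants (I1)--(I3) follows the pattern established in earlier cases. Invariant (I2) is confirmed by checking that $e'_A$ and $x'_A$ provide the required type-1 (or type-2) entry and exit connections, with the type depending on whether $\xrightarrow{e_A} \in R_I$ and $\xleftarrow{x_A} \in L_I$ are open and adjacent to $\T_A$. Invariant (I1) is confirmed by tracing connectivity through each child's net (using (I2) applied to $N$, $E$, $W$ to ensure their entry and exit connections attach correctly) and verifying that the resulting net fits within $A$'s inductive region using at most a $4 \times 4$ refinement. Invariant (I3) is confirmed by identifying any open ring faces of $A$ that are unused and verifying their removal does not disconnect $\N_A$.

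The main obstacle I anticipate is handling sub-cases based on the openness of $K_N$ and $K_W$: if both are open, the Case~5.1 strategy adapts cleanly; if either is closed (so $N$ or $W$ has a back child), a detour through that child becomes necessary, analogous to the two-sub-case analysis in~\cref{lem:NJdegree3}. A secondary subtlety is the transition from $\N_E$ to $\N_W$: unlike Case 5.1, this transition must be accomplished using only ring pieces of $A$ on $K_A$ and $B_A$ (possibly augmented by relocated strips via invariant (I3) applied to $E$ and $W$), and care is needed to ensure connectivity to $\N_W$'s entry connection, which may be type-1 or type-2 depending on whether $\xrightarrow{e_W}$ is open and adjacent to $\T_W$.
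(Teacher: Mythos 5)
Your high-level instinct---treat Case 4.2 as Case 5.1 with the south child deleted and bridge $N \to E \to W$ using relocated ring strips---matches only one of the paper's three sub-cases (the paper's Case 3: $K_N$ open and $B_I$ open, \cref{fig:NEWdegree4a-3}, which routes as in \cref{fig:NEdegree3a}a and then unfolds $W$ from $K_A$). But your case split is on the wrong variables, and you are missing the technique that makes the other sub-cases work. The paper splits on $K_N$ and on $B_I$, not on $K_N$ and $K_W$. The face $B_I$ matters because it determines $A$'s exit box: if $B_I$ is closed, $x_A$ is no longer on $B_I$, the type-2 exit option through $L_I$ disappears, and the entire exit-side routing changes (\cref{fig:NEWdegree4a-2}). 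Your sketch treats the entry/exit connection type as a detail to ``check at the end,'' but here it dictates the path itself. Relatedly, when $K_N$ is closed you propose ``a detour through that child analogous to \cref{lem:NJdegree3}''; what actually happens (and what you would need to prove, via an argument like \cref{lem:degree5-connector}) is that $K_N$ closed forces $K_E$ and $K_W$ open, which enables a completely different routing (\cref{fig:NEWdegree4a-1}), not a local detour through $NJ$.

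The more serious omission is the transition into and out of $E$ and $W$ in the cases where $K_N$ is closed or $B_I$ is closed. Your plan bridges $\N_E$ to $\N_W$ ``across open pieces of $K_A$ and $B_A$,'' but the ring faces of $E$ and $W$ on $K_A$'s side are exactly the faces that $\N_E$ or $\N_W$ may need for a type-2 connection under invariant (I2); you cannot use the same $1/4\times 1$ strip both as a bridge inside $\N_A$ and as the child's connection piece. The paper resolves this by \emph{splitting} the recursive unfolding of a child into two grandchild subnets (e.g., $\N_{EE}$ and $\N_{ES}$, or $\N_{WJ}$ and $\N_{WW}$), entering and exiting the child's surface at ports chosen so that every needed connection is forced to be type-1. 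This splitting device does not appear anywhere in your proposal, and without it (or an equivalent idea) the connectivity argument for Cases 1 and 2 does not go through. Finally, even in the clean sub-case, your claim that the path ``ends at $A$'s exit port on the bottom of $F_A$'' after crossing $B_A$ glosses over the fact that the paper instead exits $W$ through $L_I$ and supplies a type-2 exit connection for $A$; you would need to verify that whatever exit routing you choose is compatible with (I2) when $\xleftarrow{x_A}$ is or is not adjacent to $\T_A$.
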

\begin{proof}
We discuss the following three exhaustive scenarios:
\begin{enumerate}
%\squeezelist
\item $K_N$ closed
\item $K_N$ open and $B_I$ closed 
\item $K_N$ open and $B_I$ open
\end{enumerate}
%
%%%%%%%%%%%%%%%%%%%%%%%%%%%%%%%%%Figure Begin
\begin{figure}[h]
\centering
\includegraphics[page=1,width=\textwidth]{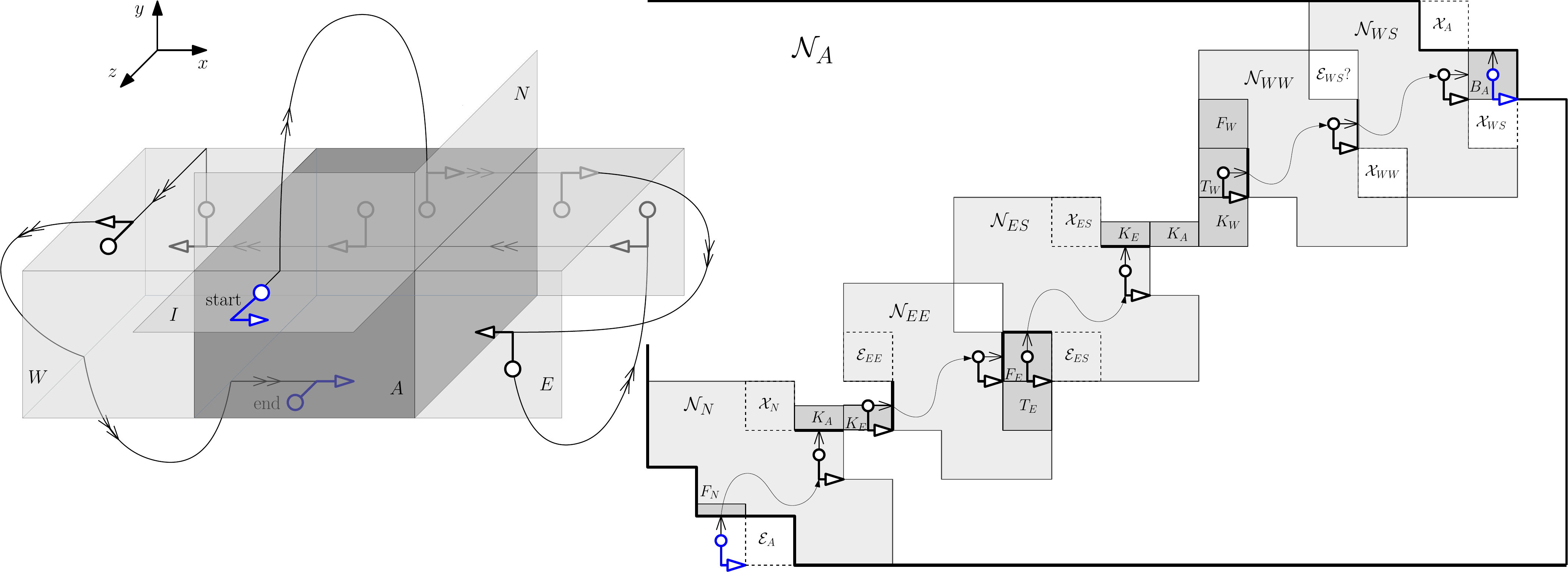}
\caption{Unfolding of degree-4 box $A$ with $N$, $E$ and $W$ children, case $K_N$ closed (so $K_E$, $K_W$ open); unfolding shown for the case when $EE$, $ES$, $WW$ and $WS$ exist.}
\label{fig:NEWdegree4a-1}
\end{figure}
%%%%%%%%%%%%%%%%%%%%%%%%%%%%%%%%%Figure End
%
{\bf Case 1: $K_N$ closed.} It can be easily verified that in this case $K_E$ and $K_W$ are open. 
Consider the unfolding depicted in~\autoref{fig:NEWdegree4a-1}, which handles the more general case where $EE$, $ES$, $WW$ and $WS$ exist 
(handling cases when one or more of these boxes do not exist requires only minor modifications). 
Note that $\N_A$ provides a type-1 entry connection (by arguments similar to the ones used in the proof of~\autoref{lem:degree5}) and a type-1 exit connection $x'_A \in B_A$, therefore 
$\N_A$ that satisfies invariant (I2).
Also note that the only open ring face of $A$ is the exit ring face, so $\N_A$ trivially satisfies (I3).  
The following observations support our claim that $\N_A$ is connected and satisfies invariant (I1):
\begin{itemize}
%\squeezelist
\item The entry and exit ring faces for $N$, $EE$, $ES$, $WW$ and $WS$ are as follows: 
$e_{N} \in T_I$ and $x_{N} \in K_A$; 
$e_{EE} \in K_{E}$ and $x_{EE} \in F_E$; 
$e_{ES} \in F_E$ and $x_{ES} \in K_E$; 
$e_{WW} \in T_W$ and $x_{WW} \in L_{WS}$; 
and $e_{WS} \in B_{WW}$ and $x_{WS} \in B_A$. 
\item $\N_N$, $\N_{ES}$, $\N_{WW}$ and $\N_{WS}$ provide type-1 exit connections. This is because 
$\xleftarrow{x_{N}} \in L_A$ is closed, $\xleftarrow{x_{ES}} \in R_E$ is closed, 
$\xleftarrow{x_{WW}} \in K_{WS}$ is not adjacent to $\T_{WW}$, 
and $\xleftarrow{x_{WS}} \in K_A$ is open but not adjacent to $\T_{WS}$. 
\item $\N_{EE}$, $\N_{ES}$ and $\N_{WS}$ provide type-1 entry connections. This is because $\xrightarrow{e_{EE}} \in B_E$ is closed (since $ES$ exists), $\xrightarrow{e_{ES}} \in L_E$ is closed, and $\xrightarrow{e_{WS}} \in F_{WW}$ is not adjacent to $\T_{WS}$. 
Note that the type-1 entry connection of $\N_{WS}$ connects to the type-1 exit connection of $\N_{WW}$.
\item Since $\xleftarrow{x_{EE}} \in T_E$ is open, the unit square 
$\XX_{EE}$ (occupied by $T_E$ in~\autoref{fig:NEWdegree4a-1}) does not belong to the inductive region 
for $EE$. 
\item Since $\xrightarrow{e_{WW}} \in F_W$ is open, the unit square 
$\XE_{WW}$ (occupied by $F_W$ in~\autoref{fig:NEWdegree4a-1}) does not belong to the inductive region 
for $WW$. 
\end{itemize}
Note that we split the unfolding of $W$ into two subnets ($\N_{WW}$ and $\N_{WS}$) so as to avoid sharing the ring face on $K_W$ between its current position in $\N_A$ and the type-2 exit connection that $\N_W$ would have provided (had it not been split). A similar intuition was used to split the unfolding of $E$ into $\N_{EE}$ and $\N_{ES}$. 

%
%%%%%%%%%%%%%%%%%%%%%%%%%%%%%%%%%Figure Begin
\begin{figure}[h]
\centering
\includegraphics[page=2,width=\textwidth]{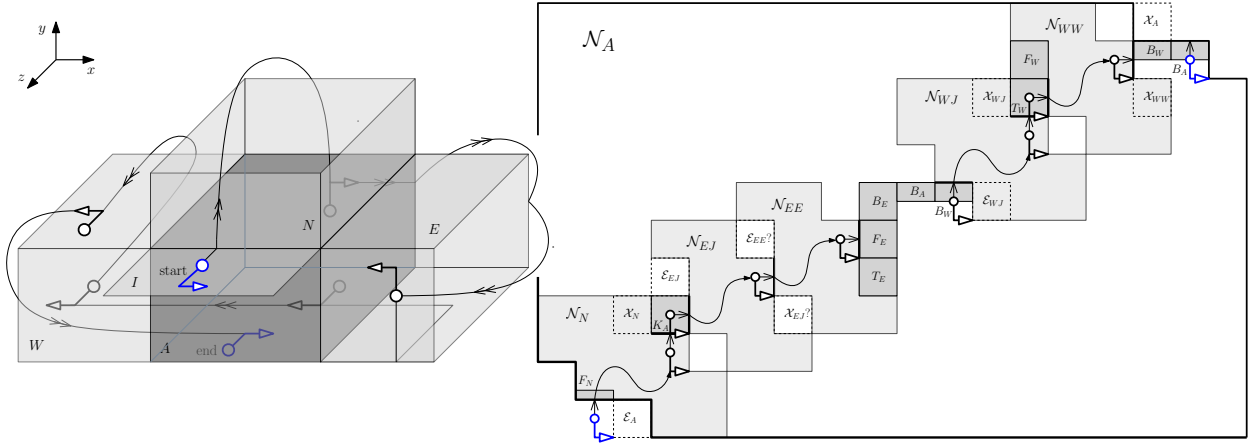}
\caption{Unfolding of degree-4 box $A$ with $N$, $E$ and $W$ children, case $K_N$ open and $B_I$ closed (so $B_E$, $B_W$ open); unfolding shown for the case when $EJ$, $EE$, $WJ$ and $WW$ exist.}
\label{fig:NEWdegree4a-2}
\end{figure}
%%%%%%%%%%%%%%%%%%%%%%%%%%%%%%%%%Figure End
%

\medskip
\noindent
{\bf Case 2: $K_N$ open and $B_I$ closed.} 
In this case $B_E$ and $B_W$ are open (refer to~\autoref{fig:NEWdegree4a-2}, which shows the unfolding for the case when $EJ$, $EE$, $WJ$ and $WW$ exist). 
Arguments similar to ones used in the previous case show that 
$\N_A$ satisfies invariants (I2) and (I3).
The following observations support our claim that $\N_A$ is connected and satisfies invariant (I1):
\begin{itemize}
%\squeezelist
%
\item Same arguments as in Case 1 apply to the entry and exit ports of $\N_N$.
\item The entry and exit ring faces for $EJ$, $EE$, $WJ$ and $WW$ are as follows: 
$e_{EJ} \in K_A$ and $x_{EJ} \in K_{EE}$; 
$e_{EE} \in R_{EJ}$ and $x_{EE} \in F_E$; 
$e_{WJ} \in B_W$ and $x_{WJ} \in T_W$; 
and $e_{WW} \in T_W$ and $x_{WW} \in B_W$. 
\item $\N_{EJ}$ provides a type-1 entry connection, since $\xrightarrow{e_{EJ}} \in B_A$ is not adjacent to $\T_{EJ}$. 
Also note that $\xleftarrow{x_{EJ}} \in T_{EE}$ is not adjacent to $\T_{EJ}$, therefore $\N_{EJ}$ provides a type-1 exit connection. 
\item Since  $\xrightarrow{e_{EE}} \in B_{EJ}$ is not adjacent to $\T_{EE}$, 
$\N_{EE}$ provides a type-1 entry connection (which attaches to the type-1 exit connection of $\N_{EJ}$).
Also, since $\xleftarrow{x_{EE}} \in T_E$ is open, the unit square $\XX_{EE}$ (occupied by $T_E$ in~\autoref{fig:NEWdegree4a-2}) does not belong to the inductive region for $EE$.  
\item $\N_{WJ}$ provides type-1 entry and exit connections, since 
$\xrightarrow{e_{WJ}} \in L_W$ is closed (by our assumption that $WW$ exists) and $\xleftarrow{x_{WJ}} \in R_W$ is also closed.
\item Since $\xrightarrow{e_{WW}} \in F_W$ is open, the unit square $\XE_{WW}$ (occupied by $F_W$ in~\autoref{fig:NEWdegree4a-2}) does not belong to the inductive region for $WW$. 
\item Since $\xleftarrow{x_{WW}} \in K_W$ is closed, 
$\N_{WW}$ provides a type-1 exit connection.
\end{itemize}
As in the previous case, we split the unfolding of $E$ into two subnets, $\N_{EJ}$ and $\N_{EE}$, so as to avoid sharing part of $A$'s exit ring face with the type-2 entry connection that $\N_E$ would have provided (had it not been split). 
%
%%%%%%%%%%%%%%%%%%%%%%%%%%%%%%%%%Figure Begin
\begin{figure}[h]
\centering
\includegraphics[page=3,width=.85\textwidth]{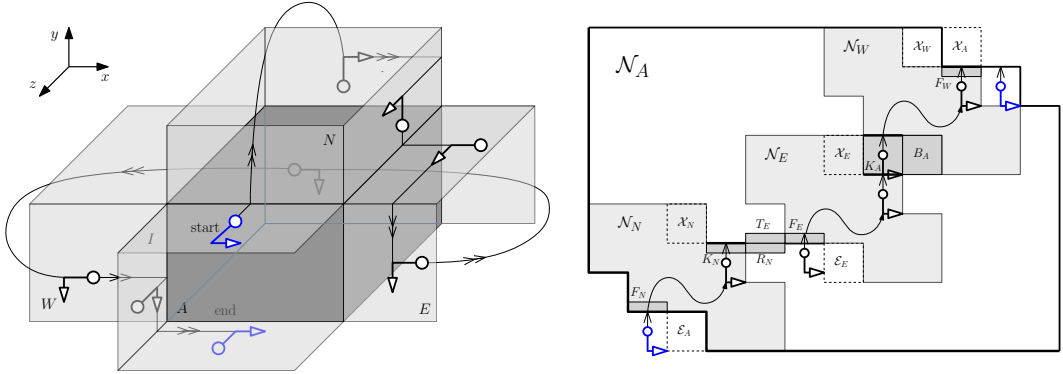}
\caption{Unfolding of degree-4 box $A$ with $N$, $E$ and $W$ children, case $K_N$ open and $B_I$ open.}
\label{fig:NEWdegree4a-3}
\end{figure}
%%%%%%%%%%%%%%%%%%%%%%%%%%%%%%%%%Figure End
%

\medskip
\noindent
{\bf Case 3: $K_N$ open and $B_I$ open.}  The unfolding for this case is depicted in~\autoref{fig:NEWdegree4a-3}. Note that this unfolding follows a path similar to the one depicted in~\autoref{fig:NEdegree3a}a up to the point where it reaches $K_A$, where it deviates and proceeds with the recursive unfolding of $W$. Note that the entry and exit ring faces for $W$ are $e_W \in K_A$ and $x_W \in L_I$. 

Observe that $\xleftarrow{x_{W}} \in T_I$ is open but not adjacent to $\T_W$, therefore $\N_W$ provides a type-1 exit connection $x'_W \in F_W$, which is a type-2 exit connection for $\N_A$. This along with the fact that $\xleftarrow{x_{A}} \in L_I$ is adjacent to $\T_A$, shows that the exit port of $\N_A$ satisfies invariant (I2).
Arguments similar to the ones used in Case 1 above show that the entry port of $\N_A$ also satisfies (I2), and that $\N_A$ satisfies (I3) as well. 

Turning to (I1), notice that $\xrightarrow{e_{W}} \in B_A$ is open, therefore the unit square $\XE_W$ (occupied by $B_A$ in~\autoref{fig:NEWdegree4a-3}) does not belong to the inductive region got $W$. Furthermore, since $\xrightarrow{e_{W}}$ is 
adjacent to $\T_W$, $\N_W$ may provide a type-1 or type-2 entry connection, which attaches to $K_A$ or $B_A$ placed alongside its entry port and entry port extension, respectively. This, along with the arguments used in the proof of~\autoref{lem:NEdegree3} showing that $\N_N$ and $\N_E$ connect together, shows that $\N_A$ is connected and satisfies invariant (I1).
\end{proof}

\begin{lemma}
\label{lem:NEJdegree4}
Let $A \in \T$ be a degree-$4$ node with parent $I$ and children $N$, $E$, and $J$  \emph{(Case 4.3)}. 
If $A$'s children satisfy invariants (I1)-(I3), then $A$ satisfies invariants (I1)-(I3).
\end{lemma}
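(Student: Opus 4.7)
The plan is to view Case 4.3 as a generalization of Case 3.5 (\cref{lem:NEdegree3}) in which the open back face $K_A$ traversed there is now closed (shared with the back child $J$), so the traversal of $K_A$ is replaced by a recursive unfolding of $J$. A preliminary observation simplifies matters: because $\T$ is a tree and both $N$ and $J$ are children of $A$, no box can occupy position $NJ$, for otherwise $A$, $N$, $NJ$, $J$ would form a 4-cycle in $\T$. Consequently $K_N$ is open, and $J$ can be approached through $K_N$ exactly as Case 3.5 approached $K_A$ through $K_N$ in its $K_N$-open subcase. This rules out the $K_N$-closed branch of Case 3.5, so the dichotomy reduces to a single geometric template, possibly refined by a secondary split on whether $R_I$ is open or closed, analogous to the split in \cref{lem:NJdegree3}.

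I would first describe the unfolding path: starting at $A$'s entry port, recursively unfold $N$; bridge via a relocated piece of $R_N$ (justified by (I3) applied to $N$) to $T_E$; continue \hand-first across $T_E$ and $F_E$; recursively unfold $E$; bridge into $J$'s entry port on $F_J = K_A$; recursively unfold $J$; and finally pass through $L_A$ and $B_A$ to $A$'s exit port. Next I would verify the invariants. For (I2): the entry connection of $\N_A$ is inherited from the type-1 entry connection of $\N_N$ on $F_N$, and the exit connection $x'_A \in B_A$ is type-1. For (I1): confirm that each child's recursive net docks with its neighbors through the bridging pieces, using the fact that the successor $\xrightarrow{e_J}$ lies on a face of $N$ and the predecessor $\xleftarrow{x_J}$ lies on a face adjacent to $L_A$, so by (I2) applied to $J$ the net $\N_J$ supplies type-1 (or when appropriate, type-2) entry and exit connections that align with the bridges; then check that $\N_A$ fits inside $A$'s inductive region using at most a $4 \times 4$ refinement, inheriting the refinement bound from each recursive subnet. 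Invariant (I3) follows because the only unused open ring faces of $A$ are small dark-shaded pieces of $L_A$ whose removal leaves $\N_A$ connected.

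The main obstacle I foresee is the bookkeeping of the two bridges: the exit connection of $\N_E$ lives on $K_E$, while $J$'s entry connection lives on $F_J = K_A$, and these must be joined by ring-face reallocations guaranteed by (I3) applied to $E$ and $J$, while respecting the $4 \times 4$ refinement bound. Verifying that these relocations fit the planar layout without overlap, particularly alongside the inductive regions of $\N_N$, $\N_E$, and $\N_J$, and that $J$'s entry and exit ports can be oriented on $F_J$ consistent with the direction the unfolding path approaches and leaves that region, is the delicate combinatorial check. Once that geometric bookkeeping is done, the arguments for (I1)--(I3) follow the same template as in Cases 3.2 and 3.5 and require no new ideas.
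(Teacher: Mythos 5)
Your proposal matches the paper's proof: the paper likewise treats Case 4.3 as the \hand-east unfolding of Case 3.5 (the $K_N$-open variant, \cref{fig:NEdegree3a}a) with the traversed face $K_A$ replaced by the recursive unfolding of $J$, and verifies (I1)--(I3) by the same connection arguments, relying exactly on the cycle-freeness observation that forces $K_N$ to be open. One small slip in your bookkeeping: since $J$ is entered from $E$'s side, $\xrightarrow{e_J}$ lies on $B_E$ (a face of $E$, not of $N$), and the $\N_E$--$\N_J$ junction is made directly by their type-1 exit/entry connections ($x'_E \in K_E$ abutting $e'_J \in R_J$, since $\xleftarrow{x_E} \in T_J$ and $\xrightarrow{e_J} \in B_E$ are not adjacent to $\T_E$ and $\T_J$ respectively) with no (I3) relocations needed there; otherwise the checks resolve exactly as you anticipate, with $\N_J$'s type-1 exit connection attaching to $x_J \in L_A$ and no secondary split on $R_I$ required.
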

\begin{proof}
%
%%%%%%%%%%%%%%%%%%%%%%%%%%%%%%%%%Figure Begin
\begin{figure}[h]
\centering
\includegraphics[page=1,width=\textwidth]{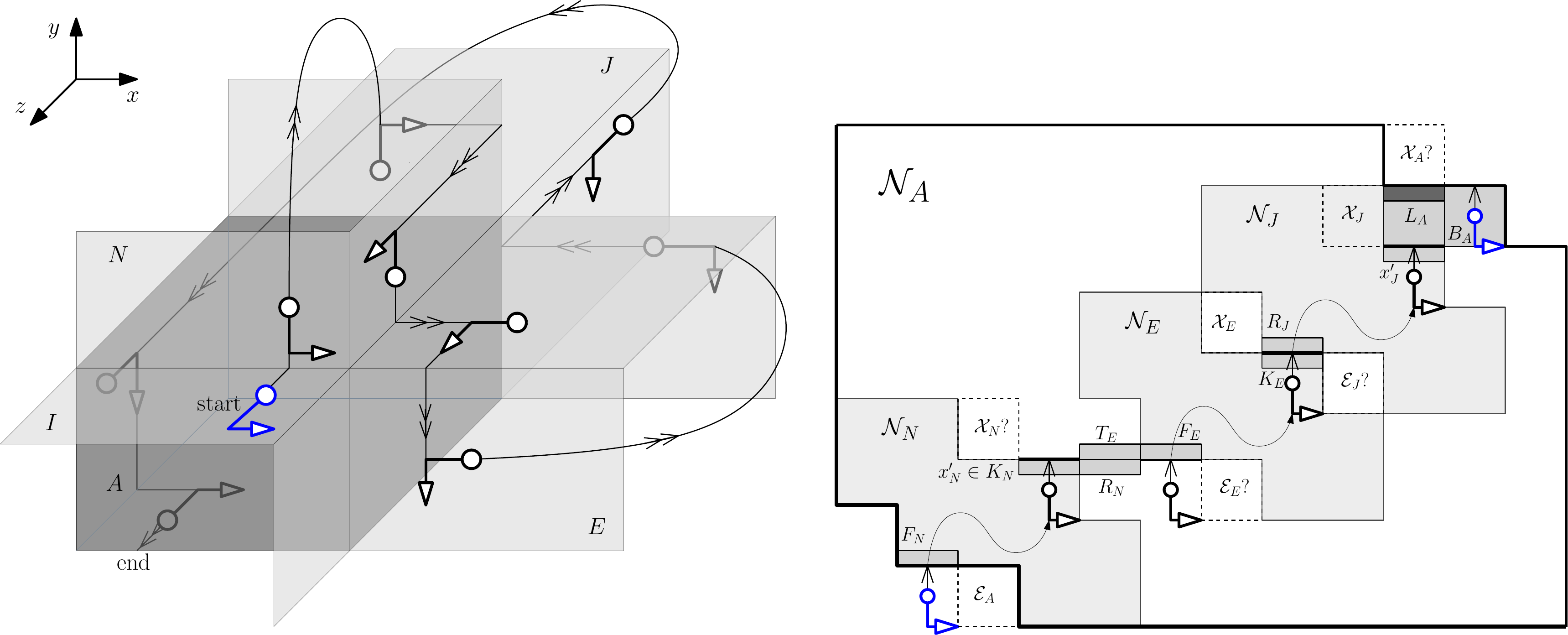}
\caption{\hand-east unfolding of degree-4 box $A$ with $N$, $E$ and $J$ children.}
\label{fig:NEJdegree4a}
\end{figure}
%%%%%%%%%%%%%%%%%%%%%%%%%%%%%%%%%Figure End
%
Consider the \hand-east unfolding depicted in~\autoref{fig:NEJdegree4a}, and notice that this unfolding is a 
generalization of the degree-3 unfolding from~\autoref{fig:NEdegree3a}a, where the unfolded face $K_A$ is replaced by the recursive unfolding of child $J$. Arguments similar to the ones used in the proof of~\autoref{lem:NEdegree3} show that the unfolding $\N_A$ from~\autoref{fig:NEJdegree4a} satisfies invariants (I2) and (I3). 
The following observations support our claim that $\N_A$ is connected and satisfies invariant (I1):
\begin{itemize}
%\squeezelist
\item $\N_N$ and $\N_E$ are connected (by the proof of~\autoref{lem:NEdegree3}).
\item $\N_E$ provides a type-1 exit connection, since $\xleftarrow{x_E} \in T_J$ is not adjacent to $\T_E$. This connection attaches to the type-1 entry connection provided by $\N_J$ (since $\xrightarrow{e_J} \in B_E$ is not adjacent to $\T_J$). 
Also, $\N_J$ provides type-1 exit connection (since $\xleftarrow{x_J} \in T_A$ is closed), which attaches to the exit ring face $x_J \in L_A$ placed alongside its exit port. 
\end{itemize}
This concludes the proof.
\end{proof}

\begin{lemma}
\label{lem:NWJdegree4}
Let $A \in \T$ be a degree-$4$ node with parent $I$ and children $N$, $W$, and $J$  \emph{(Case 4.4)}. 
If $A$'s children satisfy invariants (I1)-(I3), then $A$ satisfies invariants (I1)-(I3).
\end{lemma}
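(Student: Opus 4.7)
The plan is to treat Case 4.4 as a generalization of Case 3.6 (Lemma~\ref{lem:NWdegree3}), in precisely the same way that Case 4.3 (Lemma~\ref{lem:NEJdegree4}) generalized Case 3.5. Specifically, in each sub-case of Lemma~\ref{lem:NWdegree3}, the place where $K_A$ previously appeared as an unfolded face on $\N_A$ can be replaced by the recursive net $\N_J$ of the back child $J$. Because $J$ is a child, $K_A$ is no longer open and hence not available to be laid flat directly; instead, the unfolding path that previously moved onto $K_A$ now enters $J$ through its entry port on $K_A$, recursively unfolds $\T_J$, and then returns via $J$'s exit port on $K_A$ to continue along the same trajectory as before.

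Concretely, I would re-derive the four sub-case analysis of Lemma~\ref{lem:NWdegree3}, corresponding to: (i) $B_I$ open, (ii) $T_N$ open, (iii) $B_I$ and $T_N$ closed and $K_N$ open, and (iv) $B_I$, $T_N$, $K_N$ all closed (so $NJ$ exists with $T_{NJ}$ and $L_{NJ}$ open). For each sub-case I would reuse the unfolding path from the corresponding figure of Case 3.6, with the single modification that the portion that visited $K_A$ now invokes the recursive unfolding of $J$. Since $J$ is a back child in standard position, $e_J$ and $x_J$ lie on $K_A$, so the boundary occupied by $\N_J$ aligns with the same region that the unfolded $K_A$ previously occupied.

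To verify invariants (I1)--(I3), I would proceed exactly as in Lemma~\ref{lem:NWdegree3}, adding to each sub-case the inductive-hypothesis analysis for $J$: determine $\xrightarrow{e_J}$ and $\xleftarrow{x_J}$, check whether they are open and adjacent to $\T_J$ (which dictates whether $\XE_J$ and $\XX_J$ lie in $J$'s inductive region per~\cref{def:region}), and then apply (I2) to $J$ to conclude that $\N_J$ supplies the type-1 or type-2 entry/exit connections needed to attach $\N_J$ to the surrounding pieces of $N$, $W$, and $A$. The entry and exit connections of $\N_A$ are inherited from $N$ and $W$ just as in the original Case 3.6 proof, so (I2) for $\N_A$ follows immediately; (I3) is checked by listing the unused open ring faces of $A$, which are the same dark-shaded pieces of $L_A$ or $R_A$ identified in Lemma~\ref{lem:NWdegree3}.

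The main obstacle is sub-case (iv), where $NJ$ exists and its subtree is unfolded adjacent to $\N_J$. Here I must make sure that the footprint of $\N_J$ (placed where $K_A$ was in Figure~\ref{fig:NEdegree3d}) does not overlap the relocated bridge strip of $T_{NJ}$ used to glue $\N_{NJ}$ to $\N_{NN}$, and that $\N_J$'s exit connection still reaches $\N_W$'s entry port correctly after the reroute. This is a matter of checking that the $4 \times 4$ refinement of $K_A$ provides enough room for both $J$'s inductive region and the auxiliary bridge pieces; the arguments are routine but require going through the four sub-cases individually, analogous to the figure-by-figure verification carried out in Lemma~\ref{lem:NWdegree3}. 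Once these checks are complete, the conclusion that $\N_A$ satisfies (I1)--(I3) follows directly from the inductive hypothesis applied to $N$, $W$, and $J$.
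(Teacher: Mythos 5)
There is a genuine gap here: the proposed reduction from Case 3.6 does not go through. The analogy with Case 4.3 generalizing Case 3.5 works in the paper only because the unfolding of \cref{fig:NEdegree3a}a was designed so that $K_A$ is laid out as a \emph{full face} in a position where an entire inductive region can be substituted for it. The Case 3.6 unfoldings (\cref{lem:NWdegree3}) do not have this property. In sub-case (i) of that lemma, $K_A$ appears only as the $1/4\times 1$ exit ring strip $x_N$, and in sub-case (ii) only as the entry ring strip $e_{NJ}$; there is no full unfolded copy of $K_A$ whose footprint could accommodate $\N_J$, so the phrase ``replace the unfolded face $K_A$ by $\N_J$'' does not even apply. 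Worse, the sub-case structure itself is incompatible with $J$ being a child of $A$: if $N$ and $J$ are both children of $A$, then $NJ$ cannot exist (the boxes $A$, $N$, $NJ$, $J$ would form a $4$-cycle in $\T$), so $K_N$ is forced open and your sub-cases (ii) and (iv) are vacuous or rely on boxes ($NJ$, and similarly $WJ$ in sub-case (iii)) that cannot be present. Your assertion that $e_J$ and $x_J$ lie on $K_A$ is also inconsistent with the definitions: entry and exit ring faces are open by definition, and $K_A$ is closed precisely because $J$ is attached there; $e_J$ and $x_J$ must lie on open faces of $A$ or of neighboring boxes adjacent to $J$'s ports.

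The paper's actual proof of this case abandons the Case 3.6 template entirely. It splits on properties of $J$ and $I$ --- ($R_J$ open), ($R_J$, $R_I$ closed), ($R_J$ closed, $R_I$ open, $B_J$ open), ($R_J$ closed, $R_I$ open, $B_J$ closed) --- and builds new unfolding paths that descend directly into grandchildren (e.g., $NE$, $NN$, $WW$, $WS$, $JJ$, $JE$), in some scenarios deliberately splitting a child's unfolding into two subnets such as $\N_{JJ}$ and $\N_{JE}$ to avoid using the same ring face twice. None of this is recoverable from your substitution scheme, so the key constructive content of the lemma --- exhibiting valid unfolding paths for each configuration --- is missing from your proposal.
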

\begin{proof}%
The unfolding for this case is slightly more complex and involves four exhaustive scenarios:
\begin{enumerate}
%\squeezelist
\item $R_J$ open
\item $R_J$ closed and $R_I$ closed
\item $R_J$ closed, $R_I$ open and $B_J$ open
\item $R_J$ closed, $R_I$ open and $B_J$ closed
\end{enumerate}
%
%%%%%%%%%%%%%%%%%%%%%%%%%%%%%%%%%Figure Begin

%%%%%%%%%%%%%%%%%%%%%%%%%%%%%%%%%Figure Begin
\begin{figure}[h]
\centering
\includegraphics[page=1,width=\textwidth]{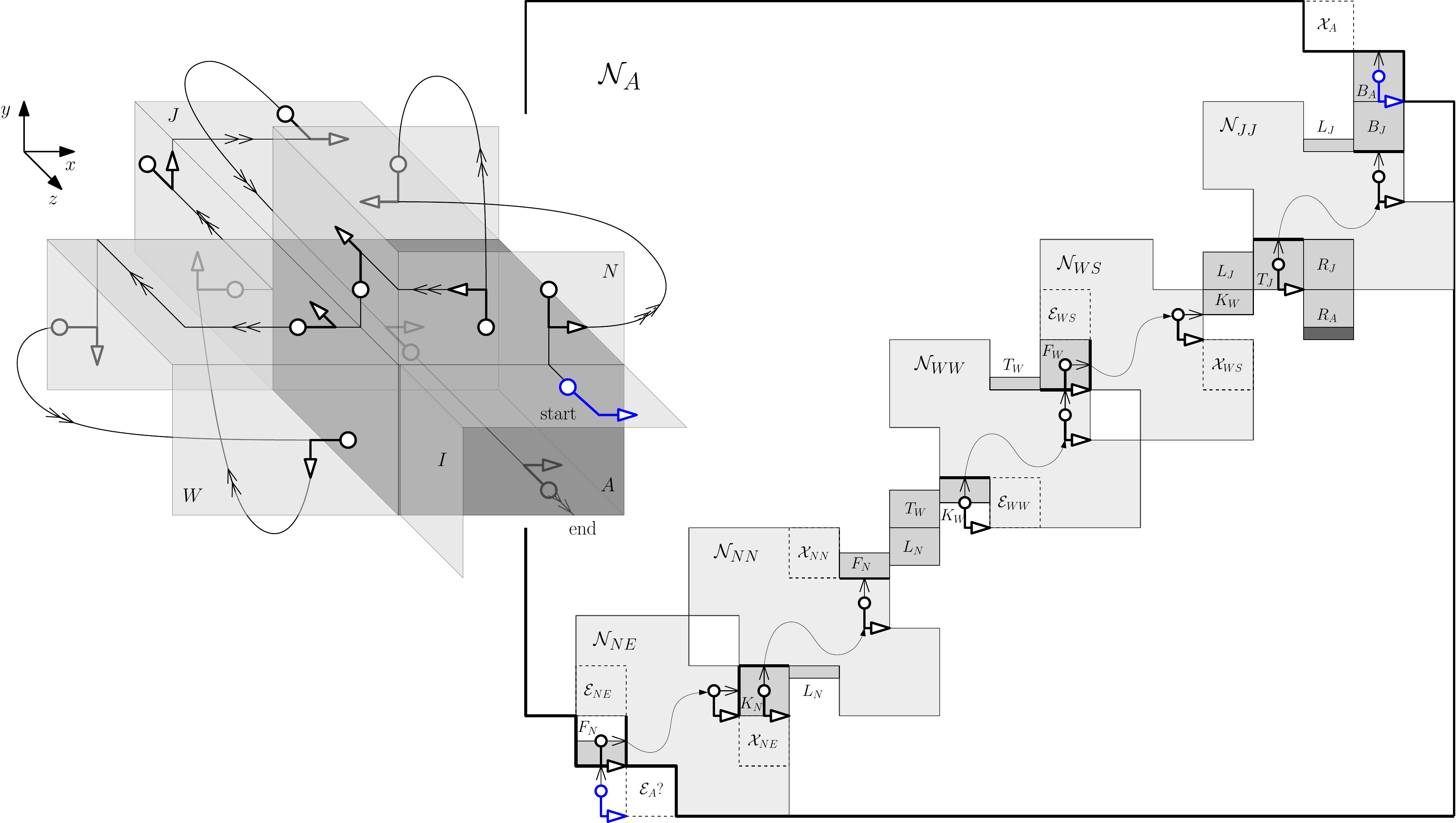}
\caption{Unfolding for box $A$ of degree 4 with $N$, $W$ and $J$ children, case $R_J$ open; unfolding shown for general case when $NE$, $NN$, $WW$ and $WS$ exist.}
\label{fig:NEJdegree4b-1}
\end{figure}
%%%%%%%%%%%%%%%%%%%%%%%%%%%%%%%%%Figure End
%
\medskip
\noindent
{\bf Case 1: $R_J$ open.} An unfolding for this case is depicted in~\autoref{fig:NEJdegree4b-1}, which handles the more general case where $NE$, $NN$, $WW$ and $WS$ exist (handling cases where  one or more of these boxes do not exist requires only minor modifications). First note that $\N_A$ provides a type-1 entry connection 
$e'_A \in F_N$ and a type-1 exit connection $x'_A \in B_A$, therefore it satisfies invariant (I2). The following observations support our claim that $\N_A$ satisfies invariant (I1):
\begin{itemize}
%\squeezelist
\item The entry and exit ring faces for $NE$, $NN$, $WW$, $WS$ and $JJ$ are as follows: 
$e_{NE} \in F_N$ and $x_{NE} \in K_N$; 
$e_{NN} \in K_N$ and $x_{NN} \in F_N$; 
$e_{WW} \in K_W$ and $x_{WW} \in F_W$; 
$e_{WS} \in F_W$ and $x_{WS} \in K_W$; 
and $e_{JJ} \in T_J$ and $x_{JJ} \in B_J$ (which is open, since we assume that $B_W$ is closed). 
\item $\N_{NE}$, $\N_{WW}$ and $\N_{WS}$ provide type-1 entry connections. This is because 
$\xrightarrow{e_{NE}} \in T_N$ is closed (since $NN$ exists), $\xrightarrow{e_{WW}} \in B_W$ is closed, and  
$\xrightarrow{e_{WS}} \in R_W$ is closed. 
\item Since $\xrightarrow{e_{NN}} \in L_N$ is open, the unit square $\XE_{NN}$ (occupied by $\xrightarrow{e_{NN}}$ in~\autoref{fig:NEJdegree4b-1}) does not belong to the inductive region for $NN$.
\item Since $\xrightarrow{e_{JJ}} \in R_J$ is open, the unit square $\XE_{JJ}$ (occupied by $R_J$ in~\autoref{fig:NEJdegree4b-1}) does not belong to the inductive region for $JJ$. Notice that we place $R_A$ right underneath it. 
\item $\N_{NE}$, $\N_{NN}$ and $\N_{WS}$ provide type-1 exit connections. This is because $\xleftarrow{x_{NE}} \in B_N$ is closed, $\xleftarrow{x_{NN}} \in R_N$ is closed (since $NE$ exists), and $\xleftarrow{x_{WS}} \in L_W$ is  closed (since $WW$ exists). 
\item Since $\xleftarrow{x_{WW}} \in T_W$ is open, the unit square $\XX_{WW}$ (occupied by $\xleftarrow{x_{WW}}$ in~\autoref{fig:NEJdegree4b-1}) does not belong to the inductive region for $WW$.
\item Since $\xleftarrow{x_{JJ}} \in L_J$ is open, the unit square $\XX_{JJ}$ (occupied by $\xleftarrow{x_{JJ}}$ in~\autoref{fig:NEJdegree4b-1}) does not belong to the inductive region for $JJ$.
\end{itemize}
Turning to invariant (I3), observe that the ring face $\xrightarrow{x'_A} \in R_A$ (dark-shaded in~\autoref{fig:NEJdegree4b-1}) can be removed from $\N_A$ without disconnecting $\N_A$, so (I3) is met. 

\begin{figure}[htpb]
\centering
\includegraphics[page=2,width=\textwidth]{degree4/NWJdegree4.pdf}
\caption{Unfolding of degree-4 box $A$ with $N$, $W$ and $J$ children (case $R_J$, $R_I$ closed).}
\label{fig:NEJdegree4b-2}
\end{figure}
%%%%%%%%%%%%%%%%%%%%%%%%%%%%%%%%%Figure End

\medskip
\noindent
{\bf Case 2: $R_J$ and $R_I$ closed.} An unfolding for this case is depicted in~\autoref{fig:NEJdegree4b-2}. 
Note that this unfolding is very similar to the one shown in~\autoref{fig:NEJdegree4b-1}, with only a few minor modifications: 
\begin{itemize}
%\squeezelist
\item $R_N$ is open, so $\N_{NE}$ reduces to a single face $R_N$. Since $R_I$ is closed, $\XE_A$ belongs to the inductive region for $A$, therefore we can place $R_A$ underneath $R_N$. 
\item From $R_N$ we proceed directly to recursively unfold $NN$, and in this case $e_{NN} \in R_N$ and $x_{NN} \in L_N$. Since $\xrightarrow{e_{NN}} \in K_N$ is open, the unit square $\XE_{NN}$ (occupied by $K_N$ in~\autoref{fig:NEJdegree4b-2}) does not belong to the inductive region for $NN$. Similarly, since $\xleftarrow{x_{NN}} \in F_N$ is open, 
the unit square $\XX_{NN}$ (occupied by $\xleftarrow{x_{NN}}$ in~\autoref{fig:NEJdegree4b-2}) does not belong to the inductive region for $NN$.
\item The entry and exit ring faces for $J$ are $e_J \in K_N$ and $x_J \in B_A$. Since $\xrightarrow{e_J} \in R_N$ is not adjacent to $\T_J$, and since $\xleftarrow{x_J} \in L_A$ is closed, $\N_J$ provides type-1 entry and exit connections. By invarient (I3), ring face $L_J$ can be removed from $\N_J$ and used as bridge to connect to $K_W$. 
\end{itemize}

%
%%%%%%%%%%%%%%%%%%%%%%%%%%%%%%%%%Figure Begin
\begin{figure}[htpb]
\centering
\includegraphics[page=3,width=0.9\textwidth]{degree4/NWJdegree4.pdf}
\caption{Unfolding of degree-4 box $A$ with $N$, $W$ and $J$ children (case $R_J$ closed, $R_I$ and $B_J$ open).}
\label{fig:NEJdegree4b-3}
\end{figure}
%%%%%%%%%%%%%%%%%%%%%%%%%%%%%%%%%Figure End
%
\medskip
\noindent
{\bf Case 3: $R_J$ closed, $R_I$ and $B_J$ open.}  An unfolding for this case is depicted in~\autoref{fig:NEJdegree4b-3}, which handles the case where $JJ$ exists (handling the case where $JJ$ does not exist requires only minor modifications). 
First note that $\xrightarrow{e_A} \in R_I$ is open and adjacent to $\T_A$, and $\N_A$ provides a type-2 entry connection $\xrightarrow{e'_A} \in R_A$. Also note that $\xleftarrow{x'_A} \in L_A$ is closed and $\N_A$ provides a type-1 exit connection $x'_A \in B_A$. These together show that $\N_A$ satisfies invariant (I2). Since all open ring faces of $A$ are used in entry and exit connections, $\N_A$ trivially satisfies invariant (I3). 
The following observations support our claim that $\N_A$ satisfies invariant (I1):
\begin{itemize}
%\squeezelist
\item The entry and exit ring faces for $JJ$, $JE$, $N$ and $W$ are as follows: 
$e_{JJ} \in L_J$ and $x_{JJ} \in K_{JE}$; 
$e_{JE} \in R_{JJ}$ and $x_{JE} \in R_A$; 
$e_N \in R_A$ and $x_N \in T_W$; 
and $e_W \in L_N$ (which connects to $T_W$) and $x_W \in B_A$. 
\item Since $\xrightarrow{e_{JJ}} \in T_J$ is open, the unit square $\XE_{JJ}$ (occupied by $T_J$ in~\autoref{fig:NEJdegree4b-3}) does not belong to the inductive region for $JJ$.
\item $\N_{JE}$, $\N_N$ and $\N_W$ provide type-1 entry connections. This is because 
$\xrightarrow{e_{JE}} \in T_{JJ}$ is not adjacent to $\T_{JE}$, 
$\xrightarrow{e_{N}} \in F_A$ is closed, and $\xrightarrow{e_W} \in F_N$ is not adjacent to $\T_W$. 
\item $\N_{JJ}$, $\N_{JE}$, $\N_N$ and $\N_W$ provide type-1 exit connections. This is because 
$\xleftarrow{x_{JJ}} \in B_{JE}$ is not adjacent to $\T_{JJ}$, 
$\xleftarrow{x_{JE}} \in B_A$ is not adjacent to $\T_{JE}$, 
$\xleftarrow{x_N} \in K_W$ is not adjacent to $\T_N$, and $\xleftarrow{x_W} \in K_A$ is closed. Note that 
the type-1 exit connection of $\N_{JJ}$ attaches to the type-1 entry connection of $\N_{JE}$, and 
the type-1 exit connection of $\N_N$ attaches to the type-1 entry connection of $\N_W$. 
\end{itemize}

Note that we split the unfolding of $J$ into two subnets ($\N_{JJ}$ and $\N_{JE}$) so as to avoid sharing the ring face $\xleftarrow{x'_J} \in B_J$ between its current position in $\N_A$ (where it serves as a bridge between $B_A$ and $L_J$) and the type-2 exit connection that $\N_J$ would have provided (had it not been split).

%%%%%%%%%%%%%%%%%%%%%%%%
%
%%%%%%%%%%%%%%%%%%%%%%%%%%%%%%%%%Figure Begin
\begin{figure}[htpb]
\centering
\includegraphics[page=4,width=0.87\textwidth]{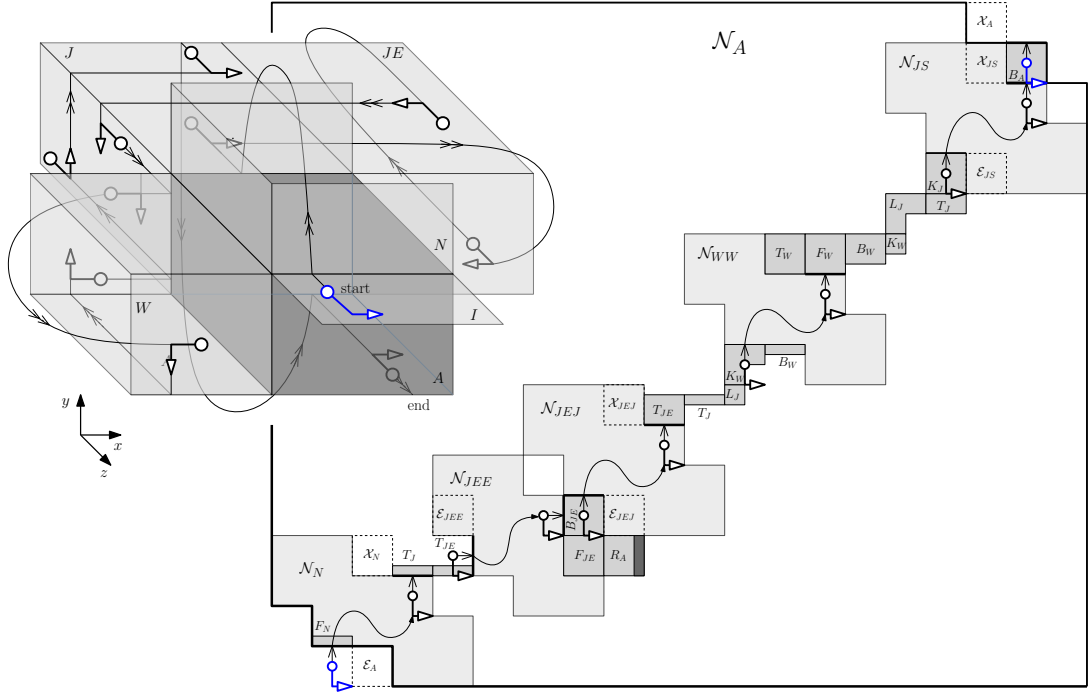}
\caption{Unfolding of degree-4 box $A$ with $N$, $W$ and $J$ children (case $R_J$ closed, $R_I$ open and $B_{J}$ closed -- so $B_{JE}$ and $B_W$ open); unfolding shown for the case when $JEJ$ and $JEE$ exist (so $K_J$ open).}
\label{fig:NEJdegree4b-4}
\end{figure}
%%%%%%%%%%%%%%%%%%%%%%%%%%%%%%%%%Figure End
%

\medskip
\noindent
{\bf Case 4: $R_J$ closed, $R_I$ open and $B_J$ closed.}  An unfolding for this case is depicted in~\autoref{fig:NEJdegree4b-4}, which handles the more general case when $JEJ$ and $JEE$ exist. 
Arguments similar to the ones used in the proof of Case 1 of~\autoref{lem:NEWdegree4} show that the unfolding $\N_A$ from~\autoref{fig:NEJdegree4b-4} satisfies invariants (I2) and (I3). The following observations support our claim that $\N_A$ satisfies invariant (I1):
\begin{itemize}
%\squeezelist
\item The entry and exit ring faces for $N$, $JEE$, $JEJ$, $WW$ and $JS$ are as follows: 
$e_N \in T_I$ and $x_N \in T_J$; 
$e_{JEE} \in T_{JE}$ and $x_{JEE} \in B_{JE}$ (which is open, since $B_J$ is closed); 
$e_{JEJ} \in B_{JE}$ and $x_{JEJ} \in T_{JE}$; 
$e_{WW} \in K_W$ and $x_{WW} \in F_W$; 
and $e_{JS} \in K_J$ (which is open, by our assumption that $K_{JE}$ is closed) and $x_{JS} \in B_A$.
\item $\N_N$, $\N_{JEJ}$ and $\N_{JS}$ provide type-1 exit connections. This is because 
$\xleftarrow{x_N} \in L_J$ is not adjacent to $\T_N$,
$\xleftarrow{x_{JEJ}} \in R_{JE}$ is closed (since $JEE$ exists),
and $\xleftarrow{x_{JS}} \in L_A$ is closed.
\item Since $\xleftarrow{x_{JEE}} \in F_{JE}$ is open, the unit square $\XX_{JEE}$ (occupied by $F_{JE}$ in~\autoref{fig:NEJdegree4b-4}) does not belong to the inductive region for $JEE$.
\item Since $\xrightarrow{e_{WW}} \in B_W$ and $\xleftarrow{x_{WW}} \in T_W$ are open, the unit squares $\XE_{WW}$ and $\XX_{WW}$ (occupied in~\autoref{fig:NEJdegree4b-4} by $\xrightarrow{e_{WW}}$ and $T_W$, respectively) do not belong to the inductive region for $WW$.
\item $\N_{JEE}$, $\N_{JEJ}$ and $\N_{JS}$ provide type-1 entry connections. This is because
$\xrightarrow{e_{JEE}} \in K_{JE}$ is closed (since $JEJ$ exists), 
$\xrightarrow{e_{JEJ}} \in L_{JE}$ is closed, and 
$\xrightarrow{e_{JS}} \in R_{J}$ is closed. (Observe that the existence of $JEJ$ implies that $K_J$ is open.)
\end{itemize}
\end{proof}

\begin{lemma}
\label{lem:NESdegree4}
Let $A \in \T$ be a degree-$4$ node with parent $I$ and children $N$, $E$, and $S$  \emph{(Case 4.5)}. 
If $A$'s children satisfy invariants (I1)-(I3), then $A$ satisfies invariants (I1)-(I3).
\end{lemma}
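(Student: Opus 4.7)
The plan is to adapt the unfolding scheme from Case 5.1 (\cref{lem:degree5}) to the present setting in which $W$ is absent. Since $W$ is not a child of $A$, the face $L_A$ is open, and together with the already-open $K_A$ it gives us enough routing surface on $A$ itself to connect the subnets $\N_N$, $\N_E$, and $\N_S$. The analysis will split according to which of the ring faces $K_N$, $K_E$, and $B_I$ are open or closed, exactly as in the preceding degree-3 and degree-4 lemmas.

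The main generic subcase is when $K_N$ and $K_E$ are both open, so that $N$'s unfolding exits on $K_N$ and $E$'s unfolding exits on $K_E$. Starting at $A$'s entry port, the path proceeds \head-first to recursively unfold $N$; upon reaching $N$'s exit port on $K_N$, it relocates a ring-face strip of $R_N$ out of $\N_N$ (justified by invariant (I3) applied to $N$) and bridges across to the entry of $\N_E$ on $F_E$, pairing with a relocated strip of $T_E$ taken out of $\N_E$. After recursively unfolding $E$, a third relocated ring strip (taken from $B_W$/$L_S$ in the degree-5 template, and played here by a strip of $L_A$ combined with a strip of $L_S$ removed from $\N_S$) carries the path down to the entry of $\N_S$ on $K_S$, and $\N_S$ is unfolded to terminate at $A$'s exit port on $B_A$. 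Verification of the invariants then mirrors \cref{lem:degree5}: invariant (I2) follows from $e'_A \in F_N$ and $x'_A \in F_S$ being placed along $A$'s entry and exit ports; invariant (I3) follows because the unused open ring faces of $A$ (the dark-shaded strips on $R_A$ and $L_A$) can be removed without disconnecting $\N_A$; and for invariant (I1) one tracks, for each child, the successor $\xrightarrow{e}$ and predecessor $\xleftarrow{x}$ of its entry and exit ring faces and checks that each is either closed or non-adjacent to the corresponding subtree, so type-1 entry/exit connections suffice and line up with the placement in $\N_A$.

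When some of $K_N$, $K_E$, or $B_S$-side faces are closed, the plan is to split the offending child's unfolding into sub-unfoldings along the lines of \cref{lem:NEWdegree4} (Case 1) and \cref{lem:NJdegree3} (Case $R_I$ closed), handling in particular a $K_N$-closed subcase by unfolding $NN$, $NE$, and (if present) $NJ$ separately, and similarly for $E$ and $S$. A further branching on whether $B_I$ is open or closed controls whether $\N_A$ provides a type-2 or a type-1 exit connection at $A$'s exit port, exactly as in \cref{lem:NEJdegree4}. The hardest subcase I expect is the one where both $K_N$ and $K_E$ are closed, since then both $N$ and $E$ carry back children that must be embedded and the back face $K_A$ is the only remaining back-routing surface; here I would borrow the bridging trick from Case 4 of \cref{lem:NWdegree3} (see \cref{fig:NEdegree3d}), using a strip from a top ring face made available by invariant (I3) to join the sub-nets of the back-child unfoldings. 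In every subcase the verification of (I1)--(I3) reduces to the same bookkeeping on $\xrightarrow{e}$, $\xleftarrow{x}$, and dark-shaded ring faces that has been carried out in the earlier lemmas, so the real obstacle is just choosing a path in the trickiest subcase that covers $\T_A$ without overlap while still producing the required type-1 or type-2 connections at $A$'s own entry and exit ports.
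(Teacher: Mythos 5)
Your overall framework --- bridging sub-nets with ring faces relocated via (I3), and verifying (I1)--(I3) by bookkeeping $\xrightarrow{e}$ and $\xleftarrow{x}$ for each child --- is the right kind, and your type-1 connections $e'_A \in F_N$, $x'_A \in F_S$ do match one of the paper's subcases. But your case analysis is organized around the wrong faces, and this leaves real gaps. First, the subcase you single out as hardest, $K_N$ and $K_E$ both closed, cannot occur: $N$ and $E$ are adjacent children of $A$, so if both had back neighbors $NJ$ and $EJ$, these would share an edge whose other two incident cells are the (necessarily empty) $J$-cell and the $NEJ$-cell; the 2-manifold condition then forces $NEJ \in \OO$, producing the cycle $A$--$N$--$NJ$--$NEJ$--$EJ$--$E$--$A$ in $\T$. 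This is exactly the argument of \cref{lem:degree5-connector}, and it is what makes the paper's split on $K_E$ alone exhaustive. Second, your ``generic'' subcase ($K_N$, $K_E$ open) silently also requires $K_S$ open, since your bridge into $S$ enters on $K_S$; yet $K_E$ open is compatible with $K_N$ and $K_S$ both closed ($N$ and $S$ are not edge-adjacent, so no manifold obstruction applies), and that realizable configuration falls outside your generic subcase and is covered only by your unspecified ``split into sub-unfoldings'' fallback. The paper avoids this entirely: when $K_E$ is open it enters $A$ on $F_E$ via a type-2 entry connection (legal because $R_I$ is forced open and adjacent to $\T_A$), unfolds $E$ and its east child first, and only then visits $N$ and $S$, so no assumption on $K_N$ or $K_S$ is needed (\cref{fig:NESdegree4a-1}).

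The genuinely delicate case is $K_E$ closed, and there your plan is missing the branching that actually controls feasibility, namely on $L_I$. When $L_I$ is closed, the faces $K_N$, $K_S$, $L_N$, $L_S$ are all forced open and the paper routes $N \to NN \to E \to SS \to S$ through them with type-1 connections at $F_N$ and $F_S$ (\cref{fig:NESdegree4a-2}). When $L_I$ is open, that routing is unavailable, and the paper instead wraps the unfolding path around the non-junction parent $I$ as in \cref{fig:reduce}, re-entering $A$ from $R_I$ so that the configuration becomes the already-solved $N$, $E$, $W$ case of \cref{lem:NEWdegree4}, with $\N_A$ now supplying type-2 entry and exit connections. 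Neither mechanism appears in your proposal, and the branching you do propose (on $B_I$, to choose the exit-connection type) is not the relevant one. So while your verification strategy is sound in style, the decomposition you chose does not cover all realizable configurations, and the reduction-around-the-parent idea needed for the $K_E$-closed, $L_I$-open case is absent.
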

\begin{proof}
The unfolding for this case involves three different case scenarios:
\begin{enumerate}
%\squeezelist
\item $K_E$ open
\item $K_E$ closed and $L_I$ closed
\item $K_E$ closed and $L_I$ open 
\end{enumerate}
%
%
%%%%%%%%%%%%%%%%%%%%%%%%%%%%%%%%%Figure Begin
\begin{figure}[htpb]
\centering
\includegraphics[page=1,width=.9\textwidth]{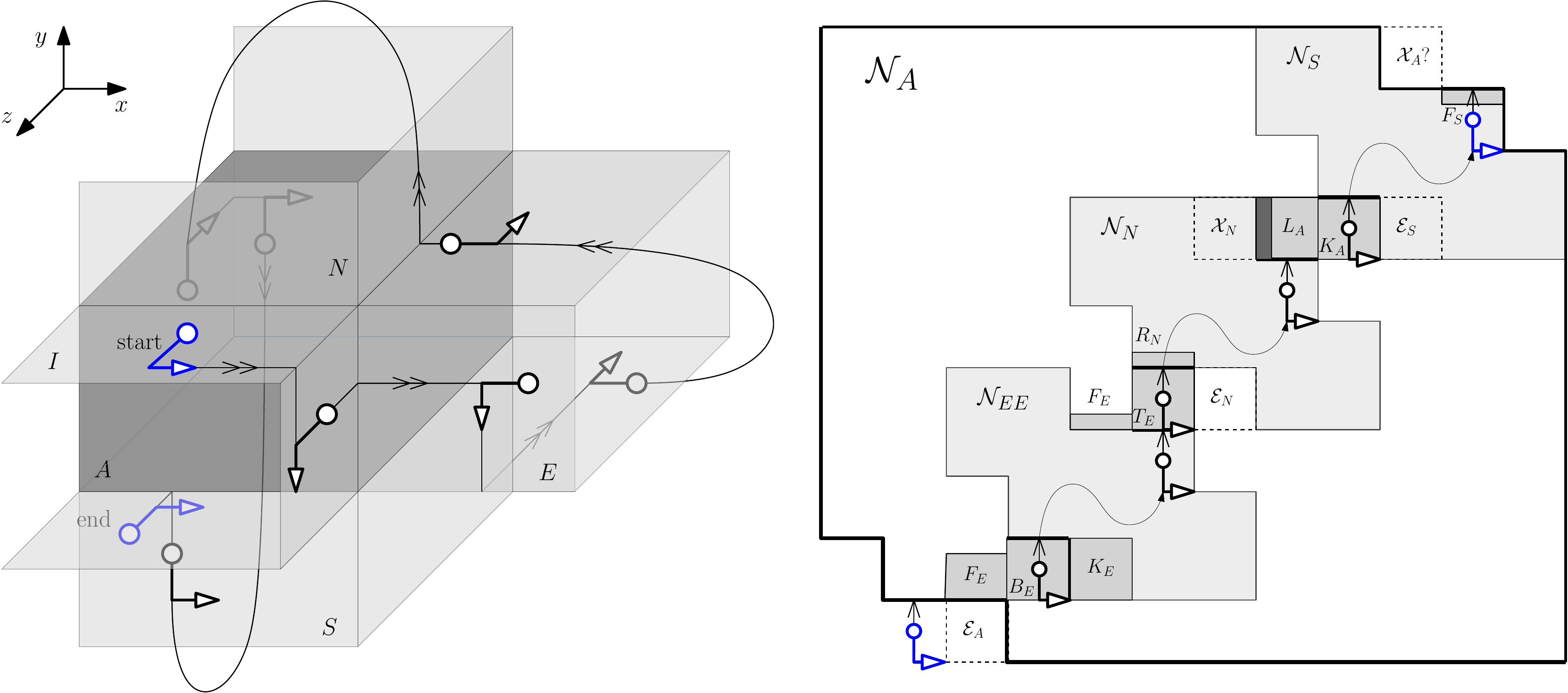}
\caption{\hand-east unfolding of degree-4 box $A$ with $N$, $E$ and $S$ children (case $K_E$ open).}
\label{fig:NESdegree4a-1}
\end{figure}
%%%%%%%%%%%%%%%%%%%%%%%%%%%%%%%%%Figure End
%
\medskip
\noindent
{\bf Case 1: $K_E$ open.} Note that in this case $E$ is either a leaf or a connector. 
Consider the \hand-east unfolding depicted in~\autoref{fig:NESdegree4a-1}. Note that $\xrightarrow{e_A} \in R_I$ is open and adjacent to $\T_A$, and $\N_A$ provides a type-2 entry connection $\xrightarrow{e'_A} \in F_E$. Also, since $\xleftarrow{x_S} \in L_I$ is not adjacent to $\T_S$, invariant (I2) applied to $S$ tells us that $\N_S$ provides a type-1 exit connection, which is also a type-1 exit connection for $A$ (since $e'_A = e'_S \in F_S$). These together show that $\N_A$ satisfies invariant (I2). The following observations support our claim that $\N_A$ satisfies invariant (I1):
\begin{itemize}
%\squeezelist
%
\item The entry and exit ring faces for $EE$, $N$ and $S$ are as follows: 
$e_{EE} \in B_E$ and $x_{EE} \in T_E$; 
$e_N \in T_E$ and $x_N \in L_A$; 
and $e_S \in K_A$ and $x_S \in B_I$.
\item Since $\xrightarrow{e_{EE}} \in K_E$ and $\xleftarrow{x_{EE}} \in F_E$ are open, the unit squares $\XE_{EE}$ and $\XX_{EE}$  (occupied in~\autoref{fig:NESdegree4a-1} by $K_E$ and $\xleftarrow{x_{EE}}$, respectively) do not belong to the inductive region for $EE$.
\item $\N_N$ provides type-1 entry and exit connections. This is because 
$\xrightarrow{e_N} \in K_E$ is not adjacent to $\T_N$ (note however that it is open, so $\XE_N$ does not belong to the $N$'s inductive region), 
and $\xleftarrow{x_N} \in F_A$ is closed. Also  
$\N_S$ provides a type-1 entry connection, since
$\xrightarrow{e_{S}} \in R_A$ is closed.
\end{itemize}
Finally, observe that the ring face $\xrightarrow{x'_A} \in L_A$ (dark-shaded in~\autoref{fig:NESdegree4a-1}) can be removed from $\N_A$ without disconnecting $\N_A$, so $\N_A$ satisfies invariant (I3).

%
%%%%%%%%%%%%%%%%%%%%%%%%%%%%%%%%%Figure Begin
\begin{figure}[htpb]
\centering
\includegraphics[page=2,width=.9\textwidth]{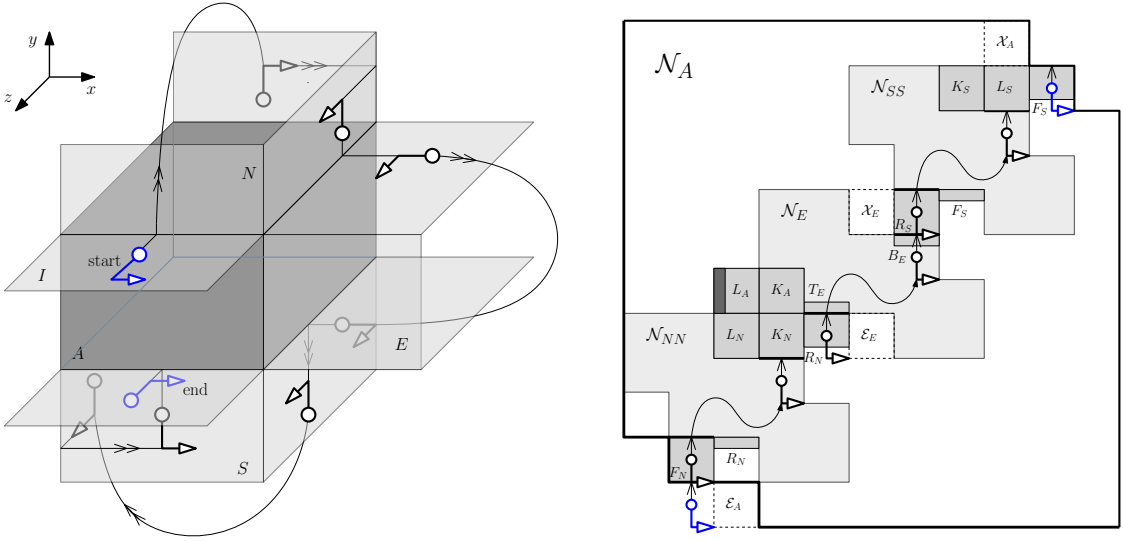}
\caption{\hand-east unfolding of degree-4 box $A$ with $N$, $E$ and $S$ children, case $K_E$ closed (so $K_N$, $K_S$ open) and $L_I$ closed (so $L_N$, $L_S$ open).}
\label{fig:NESdegree4a-2}
\end{figure}
%%%%%%%%%%%%%%%%%%%%%%%%%%%%%%%%%Figure End
%
\medskip
\noindent
{\bf Case 2: $K_E$ closed and $L_I$ closed.} Note that in this case $K_N$ and $K_S$ are open
(since $K_E$ is closed) and $L_N$ and $L_S$ are also open (since $L_I$ is closed). 
Consider the \hand-east unfolding from~\autoref{fig:NESdegree4a-2}, and note that 
$\N_A$ provides a type-1 entry connection $e'_A \in F_N$ and a type-1 exit connection $x'_A \in F_S$. 
Thus $\N_A$ satisfies invariant (I2). 
The following observations support our claim that $\N_A$ satisfies invariant (I1):
\begin{itemize}
%\squeezelist
%
\item The entry and exit ring faces for $NN$, $E$ and $SS$ are as follows: 
$e_{NN} \in F_N$ and $x_{NN} \in K_N$; 
$e_E \in R_N$ and $x_E \in R_S$; 
and $e_{SS} \in R_S$ and $x_{SS} \in L_S$. 
\item Since $\xrightarrow{e_{NN}} \in R_N$ and $\xleftarrow{x_{NN}} \in L_N$ are open, 
the unit squares $\XE_{NN}$ and $\XX_{NN}$ (occupied in~\autoref{fig:NESdegree4a-2} by 
$\xrightarrow{e_{NN}}$ and $L_N$, respectively) do not belong to the inductive region for $NN$.
\item Similarly, since $\xrightarrow{e_{SS}} \in F_S$ and $\xleftarrow{x_{SS}} \in K_S$ are open, 
the unit squares $\XE_{SS}$ and $\XX_{SS}$ (occupied in~\autoref{fig:NESdegree4a-2} by 
$\xrightarrow{e_{SS}}$ and $K_S$, respectively) do not belong to the inductive region for $SS$.
\item $\N_E$ provides type-1 entry and exit connections, 
since $\xrightarrow{e_{E}} \in F_N$ and $\xleftarrow{x_{E}} \in K_S$ are not adjacent to $\T_E$. 
%(not however that they are  open, so $\XE_E$ and $\XX_E$ do not belong to $E$'s inductive region).
\end{itemize}
Arguments similar to the ones above show that $\N_A$ satisfies invariant (I3).

\medskip
\noindent
{\bf Case 3: $K_E$ closed and $L_I$ open.}  
In this case we use $\xrightarrow{e_A} \in R_I$ and $\xleftarrow{x_A} \in L_I$ as entry and exit ring faces for the unfolding case when $A$ has $N$, $E$ and $W$ children and $K_E$ is closed. This approach is depicted in~\autoref{fig:reduce}, with the understanding that $\N'_A$ is the net from~\autoref{fig:NEWdegree4a-1}. Because $\xrightarrow{e_A} \in R_I$ and $\xleftarrow{x_A} \in L_I$ are both open and adjacent to $\T_A$, $A$ may provide type-1 or type-2 entry and exit connections. Note that the unfolding net from~\autoref{fig:NEWdegree4a-1} provides a type-1 entry/exit connection, which is a type-2 entry/exit connection for $\N_A$. Since $\N'_A$ satisfies invariants (I1)-(I3) (by~\autoref{lem:NEWdegree4}), we conclude that $\N_A$ satisfies invariants (I1)-(I3).

\end{proof}

\begin{lemma}
\label{lem:NWSdegree4}
Let $A \in \T$ be a degree-$4$ node with parent $I$ and children $N$, $W$, and $S$  \emph{(Case 4.6)}. 
If $A$'s children satisfy invariants (I1)-(I3), then $A$ satisfies invariants (I1)-(I3).
\end{lemma}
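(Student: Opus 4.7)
The plan is to follow the structure of Lemma~\ref{lem:NESdegree4} (Case 4.5) under the horizontal reflection that swaps $E$ with $W$ and interchanges the roles of $R_I$ and $L_I$. I would partition the analysis into three exhaustive subcases mirroring those of Case 4.5: (1) $K_W$ open; (2) $K_W$ closed and $R_I$ closed; and (3) $K_W$ closed and $R_I$ open. In each subcase, the target is a \head-first unfolding whose path is the horizontal mirror of the corresponding \hand-east path from Case 4.5, so the resulting unfolding here is \hand-west, \head-first.

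For subcase (1), since $K_W$ is open, $W$ is a leaf or a connector and the recursive unfolding of $W$ (together with any back child $WW$) fits compactly around $T_W$ and $F_W$. The unfolding path would begin at $A$'s entry port, proceed \head-first into a recursive unfolding of $N$, transition through $T_W \to F_W$ (or $K_W$) to unfold $W$, and finally reach $F_A$ to recursively unfold $S$, yielding type-$1$ entry and exit connections for $\N_A$. For subcase (2), the openness of $K_N, K_S, L_N, L_S$ allows a path that threads through these back and west-facing faces, so that $N$, $W$, and $S$ are each recursively unfolded without $\N_A$ needing to occupy $\XE_A$ or $\XX_A$. For subcase (3), I would apply the wrapper construction of~\cref{fig:reduce} with $\xrightarrow{e_A} \in R_I$ and $\xleftarrow{x_A} \in L_I$ acting as the effective entry and exit ring faces, reducing the inner unfolding to the (suitably reoriented) Case 4.2 unfolding from~\cref{fig:NEWdegree4a-1}; because $\xrightarrow{e_A}$ and $\xleftarrow{x_A}$ are open and adjacent to $\T_A$, the resulting type-$2$ entry and exit connections for $\N_A$ are valid.

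The main obstacle is the careful bookkeeping of the entry and exit connection types for each child's net. For every child $c \in \{N, W, S\}$ (and for the deeper descendants obtained if the recursion of a child is split into two subnets, as is done in Case 4.5 to avoid sharing a ring face between a type-$2$ connection and a bridge piece), one must determine from the paper's open/closed and $\T_c$-adjacency tests whether $\N_c$ provides a type-$1$ or type-$2$ connection on each side, and verify that the ring face of $A$ placed alongside $\N_c$'s port (or port extension) attaches correctly. Once invariant (I2) is verified for $\N_A$ and its children, invariant (I1) follows by a geometric check that $\N_A$ fits inside $A$'s \head-first inductive region together with the inductive hypothesis applied to each child (so the $4 \times 4$ refinement is preserved), and invariant (I3) follows by identifying the few open ring faces of $A$ not used in its entry/exit connections and noting that their removal does not disconnect $\N_A$.
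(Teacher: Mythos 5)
Your overall strategy---deriving Case 4.6 from Case 4.5 by a symmetry---is genuinely different from the paper's proof, which gives a single direct unfolding (\cref{fig:NESdegree4b-1}) with no subcases at all: it generalizes the degree-3 $\{N,W\}$ unfolding by replacing the unfolded face $B_A$ with the recursive unfolding of $S$, and verifies that $\N_N$, $\N_W$ and $\N_S$ all provide type-1 entry and exit connections unconditionally (for instance $\xleftarrow{x_N}\in L_A$ and $\xrightarrow{e_S}\in F_A$ are automatically closed because $W$ and $I$ occupy those positions). So the three-way split on $K_W$ and $R_I$ that you import from Case 4.5 is not needed here.

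The reduction as you state it also has a real gap. Replacing the Case 4.5 path by its horizontal mirror image, traversed in the same direction, produces the mirror image of the Case 4.5 net. But the \head-first inductive region of \cref{def:region} is chiral: its entry port is the \emph{lower-left} horizontal segment, its exit port the upper-right one, and the staircase bites sit in the lower-left and upper-right corners. The mirror of a net satisfying (I1) has its entry port at the lower right and its bites in the wrong corners, so it violates (I1) and could not be glued into a parent's layout as the induction requires. This is precisely why \cref{lem:hand} pairs a horizontal reflection with a conversion from \head-first to \emph{\hand-first} (not to another \head-first unfolding), and why the only net symmetry the paper exploits for \head-first-to-\head-first reductions is the $180^\circ$ rotation of \cref{prop:ih}, realized by traversing the path in reverse (\cref{lem:ih-symmetry}). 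Your reduction can be repaired: a $180^\circ$ rotation about the $z$-axis maps the child set $\{N,E,S\}$ to $\{N,W,S\}$ (it fixes $\{N,S\}$ as a set and sends $E$ to $W$), swaps the entry and exit ports, and combined with path reversal yields exactly the three subcase conditions you list, with $K_E$ becoming $K_W$ and $L_I$ becoming $R_I$; this is the device the paper itself uses for Cases 4.3 and 4.4. But the argument must go through that rotation-plus-reversal (with entry and exit connections swapping roles), not through a ``\hand-west, \head-first'' mirror of the \hand-east path. A smaller slip: you have the path reaching $F_A$ to unfold $S$, but $F_A$ is the closed face shared with the parent $I$; the child $S$ is entered through a ring face adjacent to $B_A$.
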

\begin{proof}
%
%%%%%%%%%%%%%%%%%%%%%%%%%%%%%%%%%Figure Begin
\begin{figure}[htpb]
\centering
\includegraphics[page=3,width=.9\textwidth]{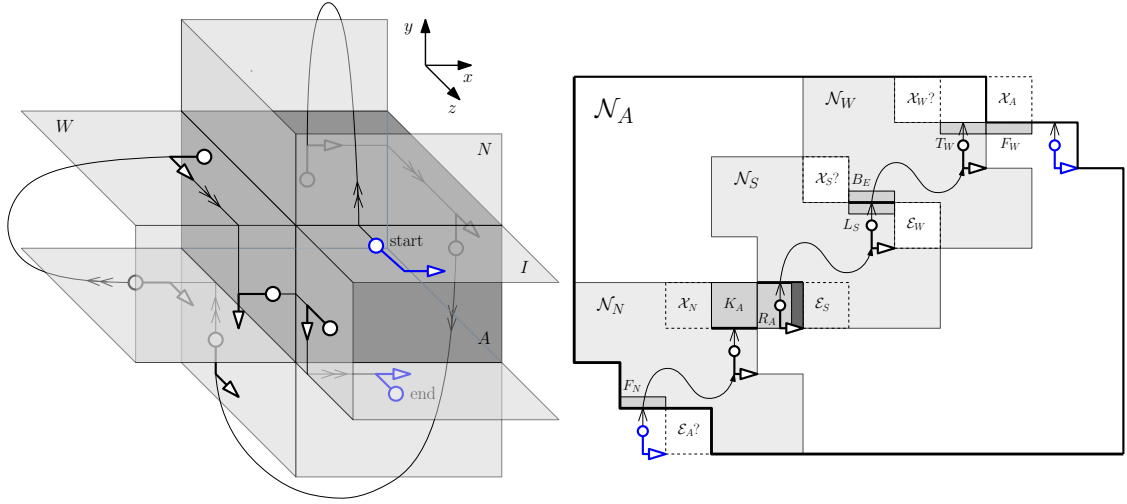}
\caption{Unfolding of degree-4 box $A$ with $N$, $W$ and $S$ children.}
\label{fig:NESdegree4b-1}
\end{figure}
%%%%%%%%%%%%%%%%%%%%%%%%%%%%%%%%%Figure End
%
Consider the unfolding from~\autoref{fig:NESdegree4b-1}, and notice that this is a general case of the degree-3 unfolding from~\autoref{fig:NEdegree3b}, where the unfolded face $B_A$ is replaced by the recursive unfolding of child $S$. 
Arguments similar to the ones used in the proof of Case 1 of~\autoref{lem:NEdegree3} show that the unfolding $\N_A$ from~\autoref{fig:NESdegree4b-1} satisfies invariants (I2) and (I3). Turning to (I1), note that $\N_N$, 
$\N_S$ and $\N_W$ all provide type-1 entry and exit connections. This is because 
$\xrightarrow{e_N} \in R_I$ is not adjacent to $\T_N$,
$\xleftarrow{x_N} \in L_A$ and $\xrightarrow{e_S} \in F_A$ are closed, 
$\xleftarrow{x_S} \in K_W$ is not adjacent to $\T_S$, and 
$\xrightarrow{e_W} \in F_S$ and $\xleftarrow{x_{W}} \in K_N$ are not adjacent to $\T_W$.
These together show that $\N_A$ satisfies invariant (I1).
\end{proof}

%
%%%%%%%%%%%%%%%%%%%%%%%%%%%%%%%%%Figure Begin
\begin{figure}[htpb]
\centering
\includegraphics[page=1,width=.9\textwidth]{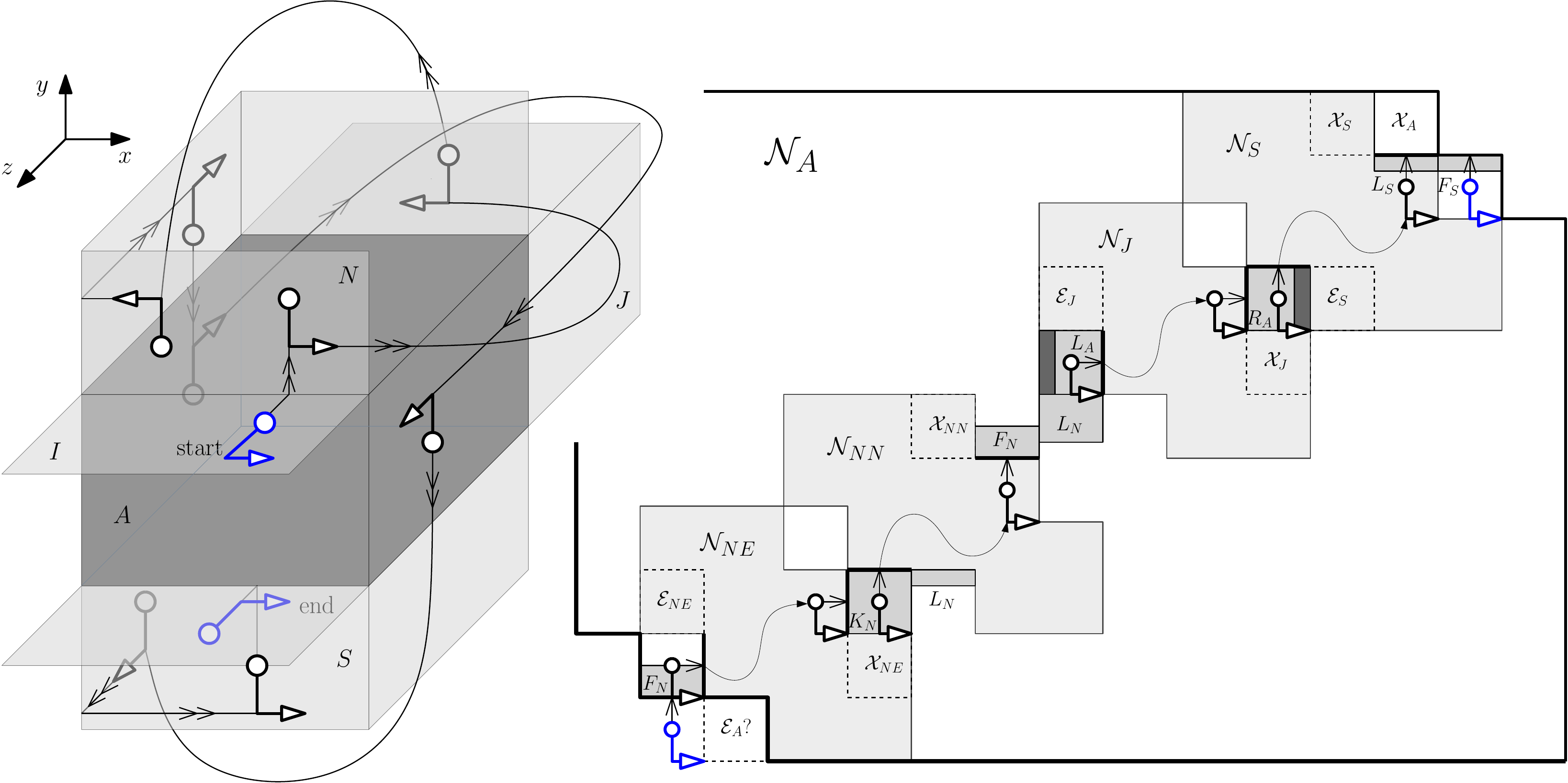}
\caption{Unfolding of degree-$4$ box $A$ with children $N$, $J$ and $S$, case $L_I$ closed (so $L_N$, $L_S$ open); 
unfolding shown for the case when $NN$ and $NE$ exist.}
\label{fig:NJSdegree4a-1}
\end{figure}
%%%%%%%%%%%%%%%%%%%%%%%%%%%%%%%%%Figure End
%

\begin{lemma}
\label{lem:NJSdegree4}
Let $A \in \T$ be a degree-$4$ node with parent $I$ and children $N$, $J$, and $S$  \emph{(Case 4.7)}. 
If $A$'s children satisfy invariants (I1)-(I3), then $A$ satisfies invariants (I1)-(I3).
\end{lemma}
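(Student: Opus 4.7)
The plan is to follow the template established for degree-3 nodes with vertical children (Case 3.4, \cref{lem:NSdegree3}) and for the other degree-4 cases, splitting the argument according to which side faces of the parent $I$ are open. Since $A$'s three children occupy the north, south, and back positions, the geometry around $A$ is symmetric with respect to the east/west axis of $I$, so I expect two principal sub-cases governed by whether $L_I$ is open or closed; the $L_I$-closed case is the one depicted in~\cref{fig:NJSdegree4a-1}, and the $L_I$-open case can be reduced to a previously-handled configuration via the technique used in Case~3 of~\cref{lem:NESdegree4} (see~\cref{fig:reduce}), so that the main technical work sits in the $L_I$-closed case.

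For the $L_I$-closed case I would construct the unfolding path much like the one for Case~4.5 shown in~\cref{fig:NESdegree4a-2}, replacing the role of the unfolded face $K_A$ used in the degree-3 analog (\cref{fig:EW-NSdegree3}b) by the recursive unfolding of $J$, and generalizing the east-side handling to accommodate $N$ and $S$ as junction-capable children (whence the need to possibly split $N$ into $\N_{NN}, \N_{NE}$, and analogously $S$, as foreshadowed by~\cref{fig:NJSdegree4a-1}). Concretely, starting at $A$'s entry port, move \head-first onto $F_N$ to recursively unfold $N$ (or the $NN$/$NE$ sub-decomposition when $K_N$ is closed), then traverse \hand-first across the back strip to the top face of $J$, recursively unfold $J$, and continue \hand-first back down into $S$'s entry ring face to finish on the exit port of $\N_S$. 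The checks for invariant (I2) then amount to confirming that $e'_A \in F_N$ and $x'_A \in F_S$ serve as type-1 entry and exit connections; invariant (I3) follows because the only open ring faces of $A$ not used in these connections (e.g., the dark-shaded strip on $L_A$ in~\cref{fig:NJSdegree4a-1}) can be detached without disconnecting $\N_A$.

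The bulk of the work is in (I1), which requires verifying case-by-case that each child's net provides the connection type demanded by the geometry. I would enumerate, for each of $N$, $J$, $S$ (and the possible sub-children $NN$, $NE$, $SS$, $SE$), the entry and exit ring faces and check whether the successor $\xrightarrow{e}$ and predecessor $\xleftarrow{x}$ are open or closed, invoking (I2) on each child to obtain the corresponding type-1 or type-2 connection, and confirming that each such connection lines up either with a face of $A$ laid alongside or with a relocated ring-face bridge coming from another child (using (I3) on that child). The refinement count stays within $4\times 4$ by the same accounting used in~\cref{lem:NJdegree3}: at worst one quarter-strip is needed to bridge two adjacent children's nets, and another quarter-strip to host a type-2 extension, which are absorbed within the inductive region depicted in~\cref{fig:NJSdegree4a-1}.

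I expect the main obstacle to be the bookkeeping when $L_I$ is open: there the natural \head-first unfolding no longer closes up along the front of $A$, and I would reduce this situation to Case~4.5 or Case~4.2 using the composition trick of~\cref{fig:reduce}, arguing that the reduced net $\N'_A$ (which has already been shown to satisfy (I1)-(I3)) yields type-1 entry and exit connections that become type-2 connections for $\N_A$, which is compatible with (I2) because $\xrightarrow{e_A}\in R_I$ and $\xleftarrow{x_A}\in L_I$ are both open and adjacent to $\T_A$. Verifying that no sub-case of the $L_I$-closed scenario forces a conflict between a type-2 connection of some child and a strip of $A$ already consumed by the unfolding path is the delicate part; if such a conflict arises (as it did in Case~3 of~\cref{lem:NWJdegree4}), I would split the offending child's unfolding into two subnets along an internal closed face, exactly as done for $J$ in that earlier lemma.
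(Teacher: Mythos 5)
There is a genuine gap in your case decomposition. You split on whether $L_I$ is open or closed and claim the open case reduces to an earlier configuration via the wrap-around trick of~\cref{fig:reduce}. But that trick requires the unfolding path to travel \hand-first from the entry port across $T_I$ onto $R_I$ and, after unfolding $A$, to exit through $L_I$ onto $B_I$ --- i.e., it requires $I$ to be a non-junction, so \emph{both} $R_I$ and $L_I$ must be open. Your dichotomy therefore leaves the configuration ``$L_I$ open, $R_I$ closed'' uncovered: the reduction branch is unavailable there, and your direct branch (modeled on~\cref{fig:NJSdegree4a-1}) relies on $L_N$ and $L_S$ being open, which is guaranteed when $L_I$ is closed but not when $L_I$ is open. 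The paper resolves this with two further sub-cases keyed on $L_J$: if $R_I$ and $L_J$ are both closed then $L_N,L_S$ are again open and the~\cref{fig:NJSdegree4a-1} unfolding still applies, but if $R_I$ is closed and $L_J$ is open a genuinely different unfolding (\cref{fig:NJSdegree4a-2}) is needed, one that exploits $R_N,R_S$ being open, places $R_A$ in the cell $\XE_A$ (legitimate only because $R_I$ closed puts $\XE_A$ in $A$'s inductive region), and splits $N$ into $NN/NW$ and $J$ into $JE/JJ$. Your closing remark about splitting a child along an internal closed face is the right instinct, but it is aimed at the wrong scenario (you only contemplate conflicts inside the $L_I$-closed case), and without the missing case analysis the proof does not go through.

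A secondary error: the reduction target for the case where $I$ is a non-junction should be Case~4.1 (\cref{lem:EJWdegree4}, children $J$, $E$, $W$), not Case~4.5 or Case~4.2. Under the $90^\circ$ rotation that realigns the entry guide with $R_I$, the children $\{N,J,S\}$ map to $\{E,J,W\}$, so the substituted net $\N'_A$ must be the one from~\cref{fig:EJWdegree4}; the nets for $\{N,E,S\}$ or $\{N,E,W\}$ children do not match the rotated geometry.
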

\begin{proof}
We discuss the following four exhaustive scenarios:
\begin{enumerate}
%\squeezelist
\item $L_I$ and $R_I$ open
\item $L_I$ closed
\item $R_I$ closed and $L_J$ closed
\item $R_I$ closed and $L_J$ open
\end{enumerate}

\noindent
{\bf Case 1: $L_I$ and $R_I$ open.}  
In this case $I$ is a non-junction and we can use the unfolding from~\autoref{fig:reduce}, where we substitute $\N'_A$ with the net from~\autoref{fig:EJWdegree4}. Because $\xrightarrow{e_A} \in R_I$ and $\xleftarrow{x_A} \in L_I$ are both open and adjacent to $\T_A$, $\N_A$ may provide type-1 or type-2 entry and exit connections. Note that the unfolding net from~\autoref{fig:EJWdegree4} provides a type-1 entry/exit connection, which is a type-2 entry/exit connection for $\N_A$. 
Since $\N'_A$ satisfies invariants (I1)-(I3) (by~\autoref{lem:EJWdegree4}), we conclude that $\N_A$ satisfies invariants (I1)-(I3). 

\medskip
\noindent
{\bf Case 2: $L_I$ closed.}  Note that in this case $L_N$ and $L_S$ are open. Consider the unfolding from~\autoref{fig:NJSdegree4a-1}, which handles the more general case when $NN$ and $NE$ exist (handling cases where one or both are missing requires only minor modifications). Note that $\N_A$ provides a type-1 entry connection 
$e'_A \in F_N$ and a type-1 exit connection $x'_A \in F_S$, therefore $\N_A$ satisfies invariant (I2).
The following observations support our claim that $\N_A$ satisfies invariant (I1):
\begin{itemize}
%\squeezelist
%
\item The entry and exit ring faces for $NE$, $NN$, $J$ and $S$ are as follows: 
$e_{NE} \in F_N$ and $x_{NE} \in K_N$; 
$e_{NN} \in K_N$ and $x_{NN} \in F_N$; 
$e_J \in L_A$ and $x_J \in R_A$; 
and $e_S \in R_A$ and $x_S \in L_A$.
\item $\N_{NE}$, $N_J$ and $\N_S$ provide type-1 entry and exit connections. This is because 
$\xrightarrow{e_{NE}} \in T_N$ and $\xleftarrow{x_{NE}} \in B_N$ are closed (recall our assumption that $NN$ exists),
$\xrightarrow{e_J} \in B_A$ and $\xleftarrow{x_J} \in T_A$ are closed, and
$\xrightarrow{e_S} \in F_A$ and $\xleftarrow{x_S} \in K_A$ are also closed.
\item Since $\xrightarrow{e_{NN}} \in L_N$ is open, the unit square $\XE_{NN}$ (occupied by $\xrightarrow{e_{NN}}$ in~\autoref{fig:NJSdegree4a-1}) does not belong to the inductive region for $NN$. Since $\xleftarrow{x_{NN}} \in R_N$ is closed, 
$\N_{NN}$ provides a type-1 exit connection.
\end{itemize}
The two open ring faces of $A$ not used in entry and exit connections are on $L_A$ and $R_A$ (dark-shaded in~\autoref{fig:NJSdegree4a-1}), and they can be removed without disconnecting $\N_A$. It follows that $\N_A$ satisfies invariant (I3).
\end{proof}

\medskip
\noindent
{\bf Case 3: $R_I$ and $L_J$ closed.}  Note that in this case $L_N$ and $L_S$ are open, and the unfolding for this case is identical to the one shown in~\autoref{fig:NJSdegree4a-1}.

%
%%%%%%%%%%%%%%%%%%%%%%%%%%%%%%%%%Figure Begin
\begin{figure}[htpb]
\centering
\includegraphics[page=2,width=\textwidth]{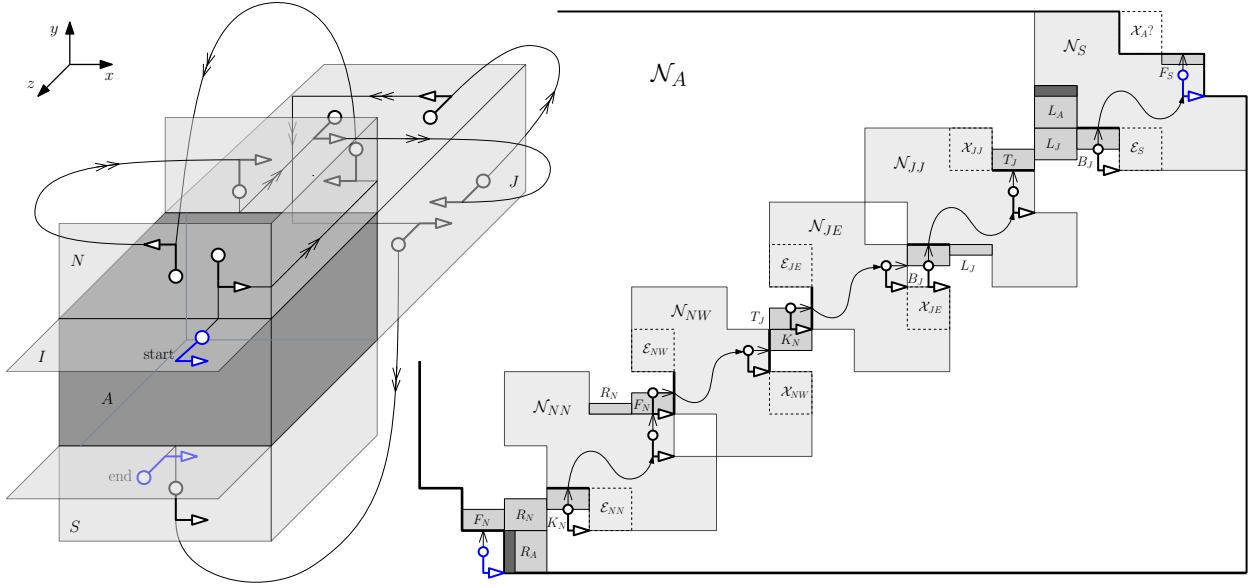}
\caption{Unfolding of degree-$4$ box $A$ with children $N$, $J$ and $S$, case $R_I$ closed (so $R_N$, $R_S$ open) and $L_J$ open; unfolding shown for the case when $NN$, $NW$, $JE$ and $JJ$ exist.}
\label{fig:NJSdegree4a-2}
\end{figure}
%%%%%%%%%%%%%%%%%%%%%%%%%%%%%%%%%Figure End
%

\medskip
\noindent
{\bf Case 4: $R_I$ closed and $L_J$ open.}  
Note that in this case $R_N$ and $R_S$ are open.  Consider the unfolding from~\autoref{fig:NJSdegree4a-2}, which handles the more general case when $NN$, $NW$, $JE$ and $JJ$ exist (handling cases when one or more of these boxes do not exist requires only minor modifications). Note that $\N_A$ provides a type-1 entry connection $e'_A \in F_N$. 
Also note that $\xleftarrow{x_S} \in L_I$ is not adjacent to $\T_S$, therefore $\N_S$ provides a type-1 exit connection, which is also a type-1 exit connection for $\N_A$ (since $x_A = x_S$). These together show that $\N_A$ satisfies invariant (I2).
The following observations support our claim that $\N_A$ satisfies invariant (I1):
\begin{itemize}
%\squeezelist
%
\item Since $\xrightarrow{e_A} \in R_I$ is closed, the unit square $\XE_A$ belongs to the inductive region of $A$.
\item The entry and exit ring faces for $NN$, $NW$, $JE$, $JJ$ and $S$ are as follows: 
$e_{NN} \in K_N$ and $x_{NN} \in F_N$; 
$e_{NW} \in F_N$ and $x_{NW} \in K_N$; 
$e_{JE} \in T_J$ and $x_{JE} \in B_J$; 
$e_{JJ} \in B_J$ and $x_{JJ} \in T_J$; 
and $e_S \in B_J$ and $x_S \in B_I$.
\item $\N_{NN}$, $\N_{NW}$, $\N_{JE}$ and $\N_S$ provide type-1 entry connections. This is because 
$\xrightarrow{e_{NN}} \in L_N$ is closed (since $NW$ exists),  
$\xrightarrow{e_{NW}} \in B_N$ is closed,
$\xrightarrow{e_{JE}} \in K_J$ is closed (since $JJ$ exists), and
$\xrightarrow{e_S} \in R_J$ is closed (since $JE$ exists). 
\item Since $\xrightarrow{e_{JJ}} \in L_J$ is open, the unit square $\XE_{JJ}$ (occupied by $\xrightarrow{e_{JJ}}$ in~\autoref{fig:NJSdegree4a-2}) does not belong to the inductive region for $JJ$.  
\item Since $\xleftarrow{x_{NN}} \in R_N$ is open, the unit square $\XX_{NN}$ (occupied by $\xleftarrow{x_{NN}}$ in~\autoref{fig:NJSdegree4a-2}) does not belong to the inductive region for $NN$.  
\item $\N_{NW}$, $\N_{JE}$ and $\N_{JJ}$ provide type-1 exit connections. This is because 
$\xleftarrow{x_{NW}} \in T_N$ is closed (since $NN$ exists), 
$\xleftarrow{x_{JE}} \in F_J$ is closed, and 
$\xleftarrow{x_{JJ}} \in R_J$ is closed (since $JE$ exists).
\end{itemize}
Arguments similar to the ones used in the previous case show that $\N_A$ satisfies invariant (I3) as well.

\section{Unfolding Degree-5 Nodes (Remaining Cases)}
\label{sec:degree5-appendix}
In this section we discuss the unfoldings for cases 5.2 through 5.4 listed in Section~\ref{sec:degree5}.

\begin{lemma}
\label{lem:NESJdegree5}
Let $A \in \T$ be a degree-$5$ node with parent $I$ and children $N$, $E$, $J$ and $S$ \emph{(Case $5.2$)}. 
If $A$'s children satisfy invariants (I1)-(I3), then $A$ satisfies invariants (I1)-(I3).
\end{lemma}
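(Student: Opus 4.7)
The plan is to mirror the structure of Lemma~\ref{lem:degree5} (Case 5.1), substituting the open back face $K_A$ used there with a recursive unfolding of the back child $J$, while exploiting the now-open left face $L_A$ (since $W$ is absent) as flexible real estate for routing. The reduction trick of Figure~\ref{fig:reduce} may also be invoked to handle situations where the entry/exit connections for $\N_A$ must be promoted from type-1 to type-2.

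First I would establish an analog of Lemma~\ref{lem:degree5-connector} adapted to the child configuration $\{N,E,J,S\}$: the 2-manifold condition on $\OO$ together with the tree structure of $\T$ should force at least one of the pairs $(N,S)$ or $(E,J)$ to consist of non-junction boxes. As in Lemma~\ref{lem:degree5-connector}, the argument is topological — if both members of both pairs were junctions, the extra neighbors needed to make each of them a junction would either close a cycle in $\T$ or violate the 2-manifold property. In the principal subcase where $N$ and $S$ are both non-junctions, my proposed unfolding path is a direct generalization of Figure~\ref{fig:NEWSdegree5}: begin \head-first through $\N_N$, cross \hand-first from $K_N$ to $R_N$, bridge through a relocated strip of $T_E$ into $\N_E$, continue \head-first through $\N_E$ and emerge onto $K_A$; instead of traversing $K_A$ as an open face, recursively unfold $J$ in that position; then cross to $L_A$ (open because $W$ is absent) and use it together with a relocated ring strip of $S$ as a bridge into $\N_S$, ending at $A$'s exit port. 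By invariant (I3) applied to each child, the relocated ring strips detach from their donor nets without disconnecting them.

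Verification of invariants (I1)--(I3) for $\N_A$ would then follow the template of Lemma~\ref{lem:degree5}: (I2) holds because $\xrightarrow{e_A}\in R_I$ and $\xleftarrow{x_A}\in L_I$ are not adjacent to $\T_A$, so $\N_N$'s type-1 entry and $\N_S$'s type-1 exit connections promote directly to type-1 connections for $\N_A$; (I3) is essentially vacuous, as $A$ has no open ring face left unused; and (I1) reduces to a refinement-count check confirming that the $1/4$-strip bridges fit within the $4\times 4$ refinement and that $\N_A$ lies inside $A$'s inductive region (in particular, neither $\XE_A$ nor $\XX_A$ is used). The main obstacle I anticipate is routing the path cleanly through $\N_J$: because $J$'s inductive region has its own entry and exit ports with prescribed $L$-guide orientations, one must check that the path entering $J$ from $\N_E$'s exit is consistent with $J$'s \head-first entry conventions, and that leaving $J$ aligns with the bridge into $\N_S$; moreover, $\N_J$ may provide either a type-1 or a type-2 connection at each of its ports, and both possibilities must be absorbed by the surrounding surface of $A$. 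A secondary subtlety is the auxiliary subcase in which $(E,J)$ rather than $(N,S)$ are the non-junction pair; there $\N_E$ and $\N_J$ play the roles of $\N_N$ and $\N_S$, which I would handle either by a $90^\circ$ reorientation combined with Lemma~\ref{lem:ih-symmetry}, or by deferring the detailed verification to the appendix along the lines already used for Cases 5.3 and 5.4.
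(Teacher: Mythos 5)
Your overall strategy (a connector-type dichotomy followed by a generalization of the Case~5.1 path) matches the paper's, and your principal subcase --- $N$ and $S$ both non-junctions, with $J$ threaded horizontally in place of the open face $K_A$ and with $L_A$ bridging toward $\N_S$ --- is a plausible variant: it is essentially the degree-6 path of \cref{fig:degree6} with $\N_W$ replaced by the open face $L_A$, whereas the paper's \cref{fig:NESJdegree5-3} instead visits $E$ first via a type-2 entry connection and threads $J$ \emph{vertically} through the ports between $K_N/T_J$ and $B_J/K_S$. The first gap is in your dichotomy: you pair $E$ with $J$, but $E$ and $J$ are perpendicular, not opposite, neighbors, so the edge-meeting argument of \cref{lem:degree5-connector} does not apply to that pair as you state it. The paper pairs $I$ with $J$ (the front--back opposite pair); the exclusion actually needed is that a junction in $\{N,S\}$ cannot coexist with a junction in $\{I,J\}$, because each such junction is forced to have a west neighbor and two such west neighbors meet along an edge whose $2$-manifold resolution creates a cycle. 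Your statement happens to be true only because $E$ is automatically a non-junction in this configuration, which your proposed argument does not observe.

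Second, and more seriously, the complementary subcase (one of $N,S$ a junction) is not actually handled. Having ``$\N_E$ and $\N_J$ play the roles of $\N_N$ and $\N_S$'' cannot be achieved by a $90^\circ$ reorientation: $A$'s entry and exit ports are fixed on the top and bottom edges of $F_A$, adjacent to $N$ and $S$, and a $90^\circ$ rotation about the $z$-axis produces a different child configuration rather than permuting roles within this one, while \cref{lem:hand} and \cref{lem:ih-symmetry} supply only a reflection-based head/hand swap and a reversal. Concretely, if $S$ has a west neighbor then $L_S$ is closed and the $L_A$-to-$\N_S$ bridge in your path (which needs a relocated strip of $L_S$) fails. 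The paper resolves this case by noting that $I$ and $J$ must then both be non-junctions and extending the degree-4 unfolding of \cref{fig:NEJdegree4a}: the path exits $\N_J$ at $L_J$ and slides across $B_J$ (both open precisely because $J$ is a non-junction) into the recursive unfolding of $S$. A minor further slip: $\xrightarrow{e_A}\in R_I$ and $\xleftarrow{x_A}\in L_I$ \emph{are} adjacent to $\T_A$ (they abut $F_E$ and $L_A$ respectively), although invariant (I2) still permits the type-1 connections you invoke.
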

\begin{proof}
%
%%%%%%%%%%%%%%%%%%%%%%%%%%%%%%%%%Figure Begin
\begin{figure}[htpb]
\centering
\includegraphics[page=1,width=\textwidth]{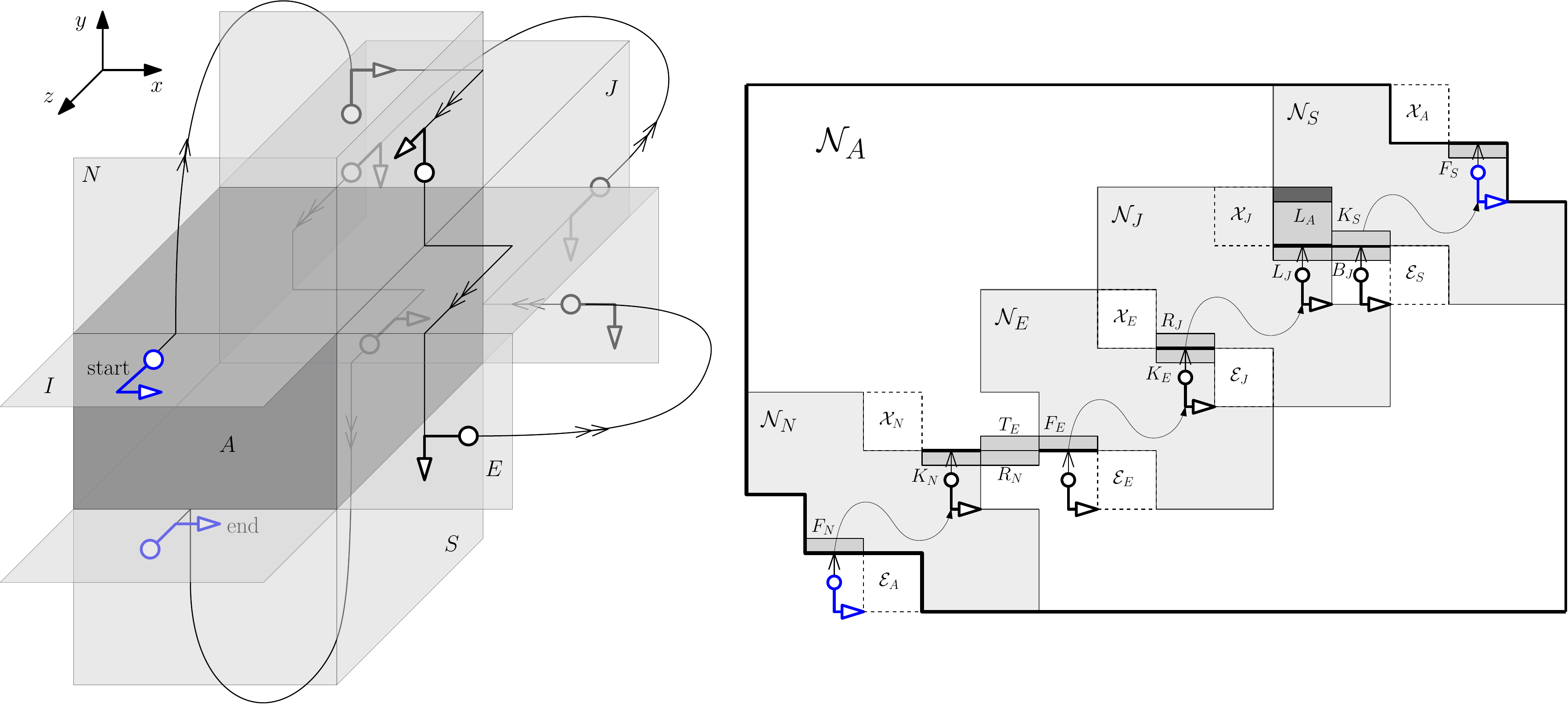}
\caption{\hand-east unfolding of degree-5 box $A$ with $N$, $E$, $J$ and $S$ children, case when $I$ and $J$ are both non-junctions.}
\label{fig:NESJdegree5}
\end{figure}
%%%%%%%%%%%%%%%%%%%%%%%%%%%%%%%%%Figure End
%
Arguments similar to the ones used in the proof of~\autoref{lem:degree5-connector} show that either $I$ and $J$ are both non-junctions, or else $N$ and $S$ are both non-junctions.

The unfolding  when $I$ and $J$ are both non-junctions is depicted in~\autoref{fig:NESJdegree5}. Note that this unfolding follows the same path as the one for the degree-4 case depicted in~\autoref{fig:NEJdegree4a}, up to the point where it reaches $L_J$, where it slides to $B_J$ to begin the recursive unfolding of $S$. Since these unfoldings and their correctness proofs are very similar, we only point out the differences here:
\begin{itemize}
%\squeezelist
\item Since $\xleftarrow{x_S} \in L_I$ is not adjacent to $\T_S$, $\N_S$ provides a type-1 exit connection $x'_S \in F_S$, which is also a type-1 exit connection for $A$.
\item The ring face of $J$ that lies on $B_J$ is not used in $\N_J$'s entry and exit connection, therefore it can be relocated outside of $\N_J$ (by invariant (I3) applied to $J$). We place it to the right of $x'_J \in L_J$ to serve as entry ring face for $\N_S$.
\item Since $\xrightarrow{e_S} \in R_J$ is not adjacent to $\T_S$, $\N_S$ provides a type-1 entry connection $e'_S \in K_S$.
\end{itemize}

%
%%%%%%%%%%%%%%%%%%%%%%%%%%%%%%%%%Figure Begin
\begin{figure}[htpb]
\centering
\includegraphics[page=2,width=0.9\textwidth]{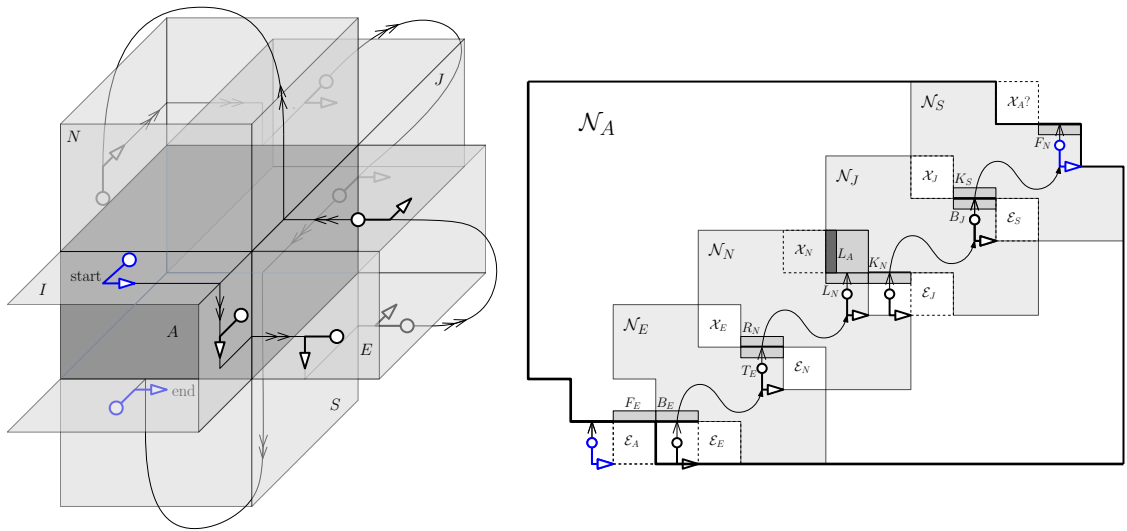}
\caption{\hand-east unfolding of degree-5 box $A$ with $N$, $E$, $J$ and $S$ children, case when $N$ and $S$ are both non-junctions.}
\label{fig:NESJdegree5-3}
\end{figure}
%%%%%%%%%%%%%%%%%%%%%%%%%%%%%%%%%Figure End
%
Assume now that $N$ and $S$ are both non-junctions. The unfolding for this case is depicted in~\autoref{fig:NESJdegree5-3}.  Because
$\xrightarrow{e_A} \in R_I$ is open and adjacent to
$\T_A$, $\N_A$ has the option of providing a type-2 entry connection, which it does by placing $\xrightarrow{e'_A} \in F_E$
adjacent to the entry port extension.  It also provides a type-1
exit connection $x'_A = x'_S \in F_S$, and therefore $\N_A$ satisfies invariant (I2).  Also note that (I3) is satisfied, since $A$'s ring face on $L_A$ (darkened in~\autoref{fig:NESJdegree5-3}) is not used in entry or exit connections and can be removed without disconnecting $\N_A$. The following observations support our claim that $\N_A$ is connected and satisfies invariant (I1):
\begin{itemize}
%\squeezelist
\item The entry and exit ring faces for $E$, $N$, $J$ and $S$ are as follows: 
$e_E \in R_S$ and $x_E \in R_N$; 
$e_N \in T_E$ and $x_N \in L_A$; 
$e_J \in K_N$ and $x_J \in K_S$; 
and $e_S \in B_J$ and $x_S \in B_I$.
\item All children nets provide type-1 entry connections (by invariant (I2)). This is because 
$\xrightarrow{e_E} \in K_S$ is not adjacent to $\T_E$, 
$\xrightarrow{e_N} \in K_E$ is not adjacent to $\T_N$, 
$\xrightarrow{e_J} \in R_N$ is not adjacent to $\T_J$,  and
$\xrightarrow{e_S} \in R_J$ is not adjacent to $\T_S$.
In addition, all children provide type-1 exit connections because
$\xleftarrow{x_E} \in F_N$ is not adjacent to $\T_E$,
 $\xleftarrow{x_N} \in F_A$ is closed,
 $\xleftarrow{x_J} \in L_S$ is not adjacent to $\T_J$, and 
 $\xleftarrow{x_S} \in L_I$ is 
 %either closed or 
 not adjacent to $\T_S$.
 %(depending on whether or not there is a box adjacent to the left of $I$).
 \item Ring faces that lie on $F_E$ and $K_N$ can be relocated anywhere outside of $\N_E$ and $\N_N$ respectively, by invariant (I3), noting that none of these ring faces are used in any entry or exit connections.
 \item The exit connection $x'_N \in L_N$, because $N$ is a non-junction and $L_N$ is open. Thus $L_A$ is attached to $x'_N \in L_N$ in the unfolding.
\end{itemize}
This concludes the proof. 
\end{proof}

\begin{lemma}
\label{lem:NWSJdegree5}
Let $A \in \T$ be a degree-$5$ node with parent $I$ and children $N$, $W$, $J$ and $S$ \emph{(Case $5.3$)}. 
If $A$'s children satisfy invariants (I1)-(I3), then $A$ satisfies invariants (I1)-(I3). 
\end{lemma}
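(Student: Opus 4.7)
The plan is to reduce Case 5.3 directly to Case 5.2 via a $180^\circ$-rotation symmetry about the $z$-axis, paralleling the reduction used from~\cref{lem:Wdegree2} back to~\cref{lem:Edegree2}. Under a $180^\circ$-rotation about the $z$-axis, the labels swap as $N \leftrightarrow S$ and $E \leftrightarrow W$, while $J$ and $I$ are fixed; this maps the Case 5.3 child configuration $\{N, W, J, S\}$ onto the Case 5.2 configuration $\{N, E, J, S\}$ already handled by~\cref{lem:NESJdegree5}. The same rotation exchanges the top and bottom edges of $F_A$, so the entry and exit ports switch roles.

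The key steps would be: (i)~apply~\cref{lem:NESJdegree5} to the rotated (Case 5.2) configuration to obtain an unfolding net satisfying invariants (I1)-(I3); (ii)~invoke~\cref{prop:ih} to conclude that the $180^\circ$-rotated planar net still satisfies (I1)-(I3), with entry and exit switching roles; the result is a net for the Case 5.3 configuration but with entry and exit port orientations opposite to what the lemma demands; (iii)~reverse the unfolding path and invoke~\cref{lem:ih-symmetry} to realign the entry and exit ports while preserving (I1)-(I3). Since~\cref{lem:NESJdegree5} internally splits into two sub-subcases according to the analog of~\cref{lem:degree5-connector} (either $I$ and $J$ are both non-junctions, or $N$ and $S$ are both non-junctions), both sub-subcases of Case 5.3 are settled simultaneously by this reduction.

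The main obstacle I anticipate is bookkeeping: I must check that every ingredient of the invariants---the ring-face designations $e$, $x$, $\xrightarrow{e}$, $\xleftarrow{x}$, the type-1 versus type-2 classification of the entry and exit connections, and the shape of the inductive region (including whether $\XE_A$ and $\XX_A$ are included)---transforms consistently under the combined rotate-and-reverse operation. Fortunately, this bookkeeping is already absorbed into~\cref{prop:ih} (which handles the $180^\circ$-rotation of the net) and~\cref{lem:ih-symmetry} (which handles path reversal), so the final proof should be a short composition of those two results with~\cref{lem:NESJdegree5}, requiring no new detailed case analysis of its own. If any subtle mismatch were to arise---for instance, because the type-2 entry connection chosen in the second sub-subcase of~\cref{lem:NESJdegree5} interacts unexpectedly with the path reversal---the fallback would be to replay the case analysis of~\cref{lem:NESJdegree5} directly, substituting $W$ for $E$ throughout and mirroring Figures~\ref{fig:NESJdegree5} and~\ref{fig:NESJdegree5-3} left-to-right.
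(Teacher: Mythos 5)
Your reduction is sound in its core idea, but it is not the route the paper takes, so a comparison is in order. The paper proves Case 5.3 from scratch: it first shows (as in \cref{lem:degree5-connector}) that either $I$ and $J$ or $N$ and $S$ are both non-junctions; it handles the first alternative, and the sub-alternative in which $I$ is a non-junction, by cycling the unfolding path around the parent $I$ so that $A$ is entered through the bottom edge of $F_A$ and unfolded as a horizontal reflection of the Case~5.2 nets (the device of \cref{fig:NESJdegree5-2}); and it introduces a genuinely new unfolding (\cref{fig:NESJdegree5-4}) for the remaining sub-case in which $I$ has an east neighbor, exploiting the fact that $R_I$ closed puts $\XE_A$ inside the inductive region so that the open face $R_A$ can be parked there. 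Your observation that the $180^\circ$-rotation about the $z$-axis carries the child set $\{N,E,J,S\}$ onto $\{N,W,J,S\}$ (because $N$ and $S$ merely trade places while $E\mapsto W$) is correct, and the rotate-then-reverse reduction is exactly the device the paper itself invokes for Cases 3.5/3.6 and 4.3/4.4, so your argument is no less rigorous than the paper's own symmetry reductions; it also dispenses with the paper's extra split on whether $I$ has an east neighbor. What each approach buys: the paper's construction is self-contained and exhibits concrete nets; yours is shorter but leans on an unstated step. Neither \cref{prop:ih} nor \cref{lem:ih-symmetry} says anything about applying a rigid motion to the orthotree itself --- they only transform the planar net and reverse the path. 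You therefore also need the (true, but nowhere isolated as a lemma) fact that invariants (I1)--(I3) for a box transfer to the corresponding box of a rotated copy of $\OO$, with the induced relabeling of $e$, $x$, $\xrightarrow{e}$, $\xleftarrow{x}$, the swap of $\XE_A$ with $\XX_A$, and the exchange of the open/closed status of $R_I$ and $L_I$; you should check in particular that the type-2 entry connection of \cref{fig:NESJdegree5-3} becomes a legitimate type-2 \emph{exit} connection, which holds because $L_I$ is forced open in Case~5.3 (a west neighbor of $I$ would create the cycle $I$--$A$--$W$--$IW$--$I$).

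Two smaller points to tighten. First, your steps (ii) and (iii) describe one transformation, not two: by \cref{lem:ih-symmetry} the net induced by the reversed path \emph{is} the $180^\circ$-rotation of the original net, so rotating the net via \cref{prop:ih} and then separately reversing the path would rotate it twice and leave the entry at the bottom; state it as a single rotate-and-reverse step, with \cref{prop:ih} serving only to certify that the rotated net still satisfies the invariants. Second, your fallback of ``mirroring Figures~\ref{fig:NESJdegree5} and~\ref{fig:NESJdegree5-3} left-to-right'' does not work as stated: a left-right reflection is orientation-reversing, so it reflects the planar net, and the inductive region of \cref{def:region} is not mirror-symmetric (its staircase bites sit at the lower-left and upper-right); this is precisely why \cref{lem:hand} needs a composite rotation-plus-reflection rather than a bare reflection. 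Stay with the orientation-preserving rotation, or else replay the Case~5.2 analysis explicitly as the paper does.
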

\begin{proof}
Arguments similar to the ones used in the proof of~\autoref{lem:degree5-connector} show that either $I$ and $J$ are both non-junctions, or else $N$ and $S$ are both non-junctions.

%%%%%%%%%%%%%%%%%%%%%%%%%%%%%%%%%Figure Begin
\begin{figure}[htpb]
\centering
\includegraphics[page=1,width=0.8\textwidth]{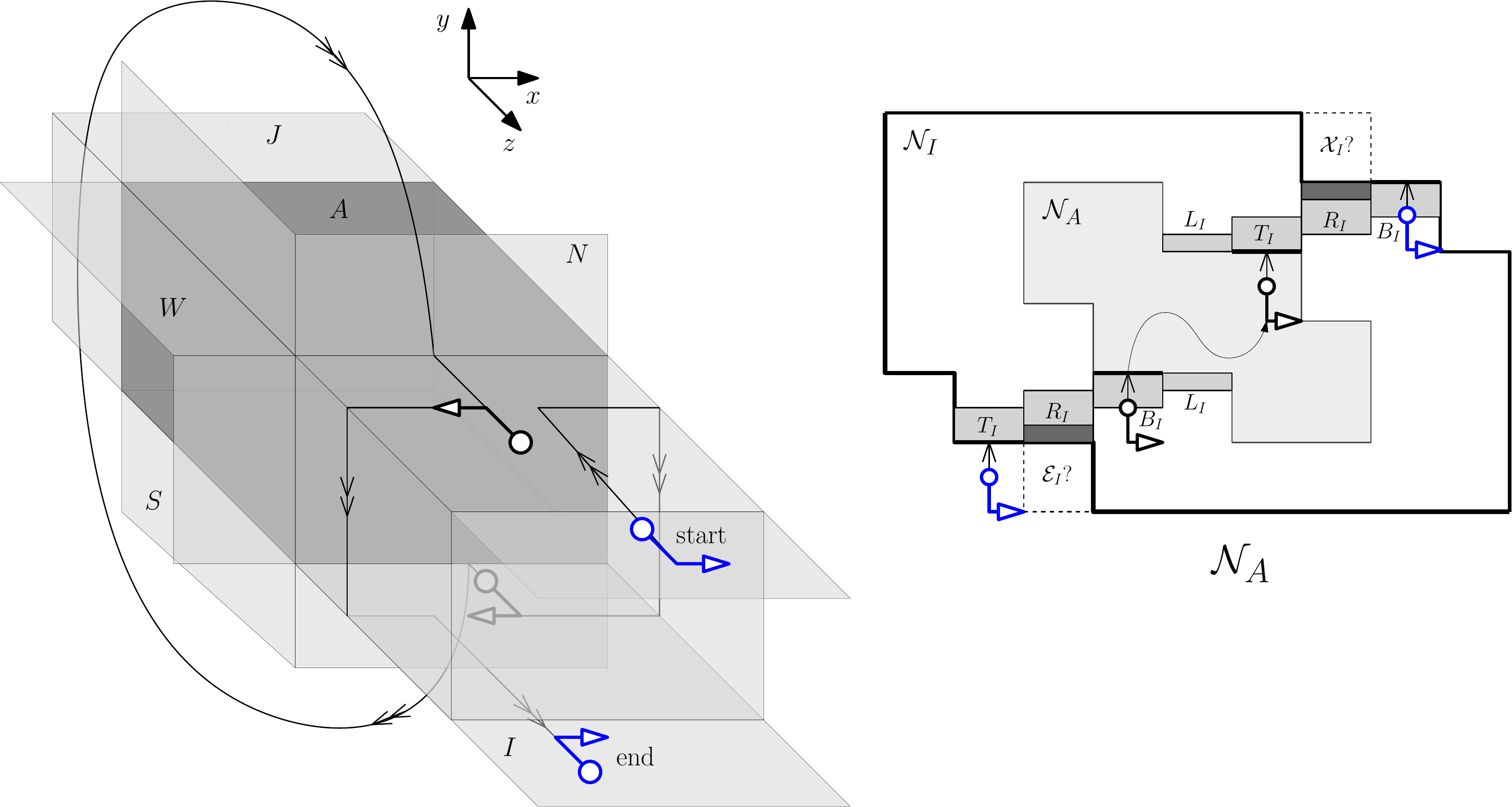}
\caption{Unfolding of degree-5 box $A$ with $N$, $W$, $J$ and $S$ children, case when $I$ and $J$ are both non-junctions.}
\label{fig:NESJdegree5-2}
\end{figure}
%%%%%%%%%%%%%%%%%%%%%%%%%%%%%%%%%Figure End
%

The unfolding for the case when $I$ and $J$ are both non-junctions can be reduced to the case from~\autoref{fig:NESJdegree5} using the method depicted in~\autoref{fig:NESJdegree5-2}. In this case, $I$'s unfolding is handled specially, so we describe the recursive unfolding of $I$ assuming $I$ is in standard position (with $A$ in the back).  The unfolding path starts at the top front edge of $I$ and cycles clockwise  to $I$'s bottom back edge, which is $A$'s entry port.   By using this bottom entry port, $A$'s unfolding is a horizontal reflection of that in~\autoref{fig:NESJdegree5}.  After unfolding $A$, the unfolding path cycles counter-clockwise from the top back edge of $I$ (which is $A$'s exit port) to $I$'s bottom front edge.

Observe in the unfolding shown in~\autoref{fig:NESJdegree5-2} that $\N_I$ provides type-1 entry and exit connections, $I$'s ring faces not used in entry or exit connections (shown darkened) can be removed without disconnecting $\N_I$, and all open faces of $I$ are unfolded. We have already established that $\N_A$ satisfies invariants (I1)-(I3) 
and provides type-1 entry and exit connections, which connect to    
the pieces $B_I$ and $T_I$ placed adjacent to them. Thus $\N_I$ satisfies invariants (I1)-(I3).

%
%%%%%%%%%%%%%%%%%%%%%%%%%%%%%%%%%Figure Begin
\begin{figure}[h]
\centering
\includegraphics[page=2,width=0.9\textwidth]{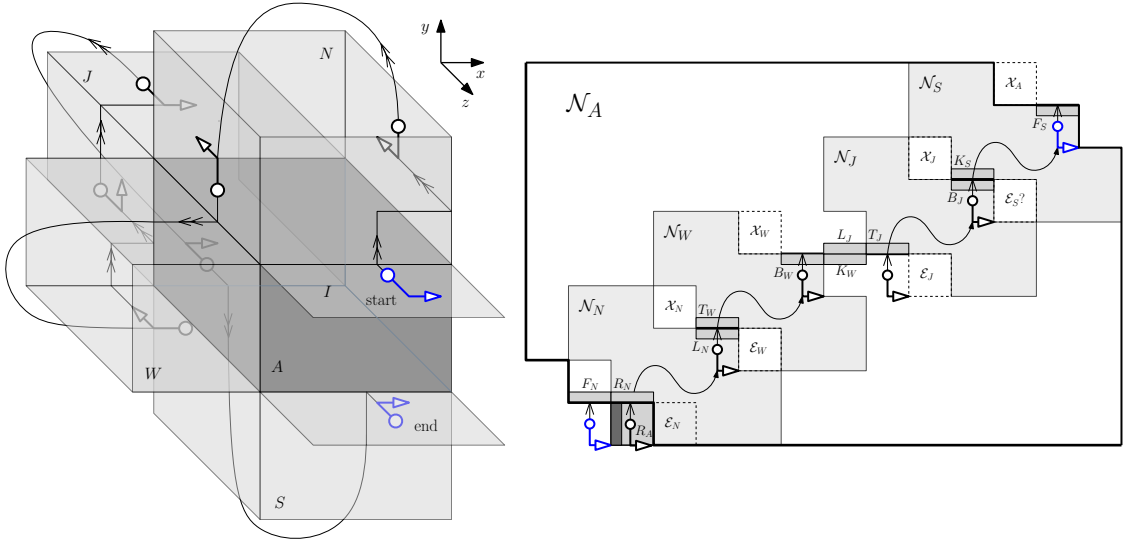}
\caption{Unfolding of degree-5 box $A$ with $N$, $W$, $J$ and $S$ children, case when $N$ and $S$ are both non-junctions and $I$ has an east neighbor.}
\label{fig:NESJdegree5-4}
\end{figure}
%%%%%%%%%%%%%%%%%%%%%%%%%%%%%%%%%Figure End
%
Consider now the case when $N$ and $S$ are both non-junctions.  If $I$ has a neighbor to its east, then the
unfolding is as depicted in~\autoref{fig:NESJdegree5-4}. The net 
$\N_A$ provides a type-1 entry connection $e'_A \in F_N$ and a type-1
exit connection $x'_A = x'_S \in F_S$. Therefore $\N_A$ satisfies invariant (I2).  Also note that (I3) is satisfied, since $A$'s ring face on $R_A$ (darkened in~\autoref{fig:NESJdegree5-4}) is not used in entry or exit connections and can be removed without disconnecting $\N_A$.   Because $R_I$ is closed, $\XE_A$ is part of $\N_A$'s inductive region and therefore the face $R_A$ can be placed there.
The following observations support our claim that $\N_A$ is connected and satisfies invariant (I1):
\begin{itemize}
%\squeezelist
\item The entry and exit ring faces for $N$, $W$, $J$ and $S$ are as follows: 
$e_N \in R_A$ and $x_N \in T_W$; 
$e_W \in L_N$ and $x_W \in L_S$; 
$e_J \in K_N$ and $x_J \in K_S$; 
and $e_S \in B_J$ and $x_S \in B_I$.
\item All children nets provide type-1 entry connections (by invariant (I2)). This is because 
$\xrightarrow{e_N} \in K_A$ is closed, 
$\xrightarrow{e_W} \in K_N$ is not adjacent to $\T_W$, 
%$\xrightarrow{e_J} \in T_W$ 
$\xrightarrow{e_J} \in R_N$ is not adjacent to $\T_J$,  and
$\xrightarrow{e_S} \in R_J$ is not adjacent to $\T_S$.
In addition, all children provide type-1 exit connections because
$\xleftarrow{x_N} \in F_W$ is not adjacent to $\T_N$,
 $\xleftarrow{x_W} \in F_S$ is not adjacent to $\T_W$,
 $\xleftarrow{x_J} \in L_S$ is not adjacent to $\T_J$, and  
 $\xleftarrow{x_S} \in L_I$ is not adjacent to $\T_S$.
 \item Ring faces that lie on $F_N$, $L_J$ and $K_W$ can be relocated anywhere outside of $\N_N$, $\N_J$ and $\N_W$ respectively, by invariant (I3), noting that none of these ring faces are used in any entry or exit connections.
\end{itemize}

If $I$ does not have a neighbor to its west, then $I$ is a non-junction and the unfolding can be reduced to the case depicted in~\autoref{fig:NESJdegree5-3} using the technique outlined in~\autoref{fig:NESJdegree5-2}:  the path cycles around $I$ to $B_I$ and $A$ is unfolded using 
a horizontal reflection of \autoref{fig:NESJdegree5-3}.  The proof that this satisfies invariants (I1)-(I3) is similar to the one used for the unfolding in ~\autoref{fig:NESJdegree5-2}, noting that here $\N_A$ has a type-2 entry connection that attaches to
$L_I$ in~\autoref{fig:NESJdegree5-2}.
\end{proof}

\begin{lemma}
Let $A \in \T$ be a degree-$5$ node with parent $I$ and children $E$, $W$, $J$ and $S$ \emph{(Case 5.4)}. 
If $A$'s children satisfy invariants (I1)-(I3), then $A$ satisfies invariants (I1)-(I3).
\end{lemma}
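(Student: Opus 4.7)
The plan is to follow the template used for~\cref{lem:NESJdegree5,lem:NWSJdegree5}: first establish a structural dichotomy analogous to~\cref{lem:degree5-connector}, and then dispatch each sub-case by reusing an earlier unfolding pattern. The structural claim I would prove, by the same cycle-avoidance argument used for~\cref{lem:degree5-connector}, is that either $I$ and $J$ are both non-junctions, or else $E$ and $W$ are both non-junctions. The two opposite pairs that remain fully occupied (now that $N$ is absent) are $(I,J)$ and $(E,W)$; if a junction were chosen from each pair, the extra neighbors forced to exist for $\OO$ to be a $2$-manifold would close a cycle in $\T$, contradicting that $\T$ is a tree.

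For the sub-case where $I$ and $J$ are both non-junctions, I would reuse the reduction depicted in~\cref{fig:NESJdegree5-2}: because $I$ is a non-junction, all of $L_I$, $R_I$, $T_I$, $B_I$ are open, so the unfolding path can cycle around $I$ from its top front edge clockwise to its bottom back edge (which becomes $A$'s entry port), recursively unfold $A$ using a horizontal reflection of the degree-$5$ pattern, and then cycle back around $I$ to finish at $I$'s bottom front edge. The verification that $\N_I$ (and hence $\N_A$) satisfies invariants (I1)-(I3) mirrors the argument in~\cref{lem:NWSJdegree5}: the open ring pieces of $I$ not used in entry/exit connections are removable by (I3), and $A$'s type-1 entry/exit connections attach directly to the $B_I$ and $T_I$ strips placed alongside them.

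For the sub-case where $E$ and $W$ are both non-junctions, observe that this configuration is a $90^\circ$ rotation (about the $z$-axis) of Case~5.2 with $N$ and $S$ both non-junctions, handled in~\cref{fig:NESJdegree5-3}: the child set maps $\{N,E,J,S\} \to \{E,S,J,W\}$ and the non-junction pair maps $(N,S) \to (E,W)$. I would construct the unfolding by applying the same rotation to that figure, so the path starts head-first at $A$'s entry port, sweeps through $E$, $J$, $W$, $S$ using bridges supplied by invariant~(I3) applied to each intermediate child, and exits at $A$'s exit port. Since $E$ and $W$ are non-junctions, $\xrightarrow{e_E}$ and $\xleftarrow{x_W}$ lie on open ring faces not adjacent to $\T_E$ and $\T_W$, so by (I2) applied to $E$ and $W$ they supply type-1 connections that lift to type-1 entry and exit connections for $\N_A$. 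Invariant~(I3) is immediate because every open ring face of $A$ is used in its entry or exit connection, and (I1) follows by the standard $4 \times 4$-refinement accounting used throughout the paper.

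The main obstacle will be the bookkeeping in the $E,W$-non-junction sub-case when $J$ is itself a junction: one must verify that $\xrightarrow{e_J}$ and $\xleftarrow{x_J}$ are closed (so that $\N_J$ supplies type-1 connections at both ends) and that the bridges between $\N_J$ and its neighbors pass through open ring faces of $J$ disjoint from the strips already relocated out of $\N_E$ and $\N_S$ by~(I3). These are precisely the kinds of checks performed in the proof of~\cref{lem:degree5}, so no essentially new ideas are required, but they must be redone carefully for this orientation and children configuration before concluding that $\N_A$ satisfies all three invariants.
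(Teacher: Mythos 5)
Your structural dichotomy (either $I$ and $J$ are both non-junctions, or $E$ and $W$ are both non-junctions) is the same one the paper uses, and your cycle-avoidance justification for it is sound. The problem is that both of your sub-case reductions are misapplied. For the sub-case where $I$ and $J$ are non-junctions, cycling around $I$ to its bottom back edge (the technique of \cref{fig:NESJdegree5-2}) only relocates $A$'s entry port from the top edge of $F_A$ to its bottom edge; this amounts to a vertical reflection of the configuration, which turns ``missing $N$'' into ``missing $S$'' --- still an instance of Case~5.4 --- so the reduction is circular and there is no previously established pattern for you to reflect. What is needed is a \emph{quarter}-turn, and that is what the paper does: it routes the path through $R_I$ and $L_I$ as in \cref{fig:reduce}, taking $\N'_A$ to be the net of \cref{fig:NESJdegree5} (Case~5.2 with $I,J$ non-junctions), and observes that the resulting $\N_A$ then provides type-2 entry and exit connections, which is permitted because $\xrightarrow{e_A} \in R_I$ and $\xleftarrow{x_A} \in L_I$ are open and adjacent to $\T_A$.

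For the sub-case where $E$ and $W$ are non-junctions, your observation that the configuration is a $90^\circ$ rotation of Case~5.2 with $N,S$ non-junctions is geometrically correct, but ``applying the same rotation to that figure'' is not a legitimate move: $A$'s entry and exit ports are anchored to the top and bottom edges of the face shared with $I$ and do not rotate with the children, so a spatial quarter-turn can only be realized by detouring through $R_I$ and $L_I$ --- which requires those faces to be open, i.e.\ requires $I$ to be a non-junction, and in this sub-case the dichotomy tells you nothing about $I$. (The paper's only free symmetries are reflections, $180^\circ$ rotations with the path reversed, and the \head/\hand\ swap of \cref{lem:hand}; none of these is a spatial quarter-turn.) The paper therefore builds this case directly (\cref{fig:JEWSdegree5}): the path starts at the entry port with a type-1 entry connection $e'_A \in T_A$, lays out the open faces of the non-junction children $E$ and $W$ in place while recursing only into $EE$, $WW$, $J$ and $S$, and exits with a type-1 exit connection $x'_A \in F_S$. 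In short, the dichotomy is right, but each branch needs the opposite kind of argument from the one you assigned to it: a direct construction where you proposed a rotation, and a quarter-turn detour where you proposed a half-turn.
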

\begin{proof}
Arguments similar to the ones used in the proof of~\autoref{lem:degree5-connector} show that either $E$ and $W$ are both non-junctions, or else $I$ and $J$ are both non-junctions.

%
%%%%%%%%%%%%%%%%%%%%%%%%%%%%%%%%%Figure Begin
\begin{figure}[h]
\centering
\includegraphics[page=1,width=0.95\textwidth]{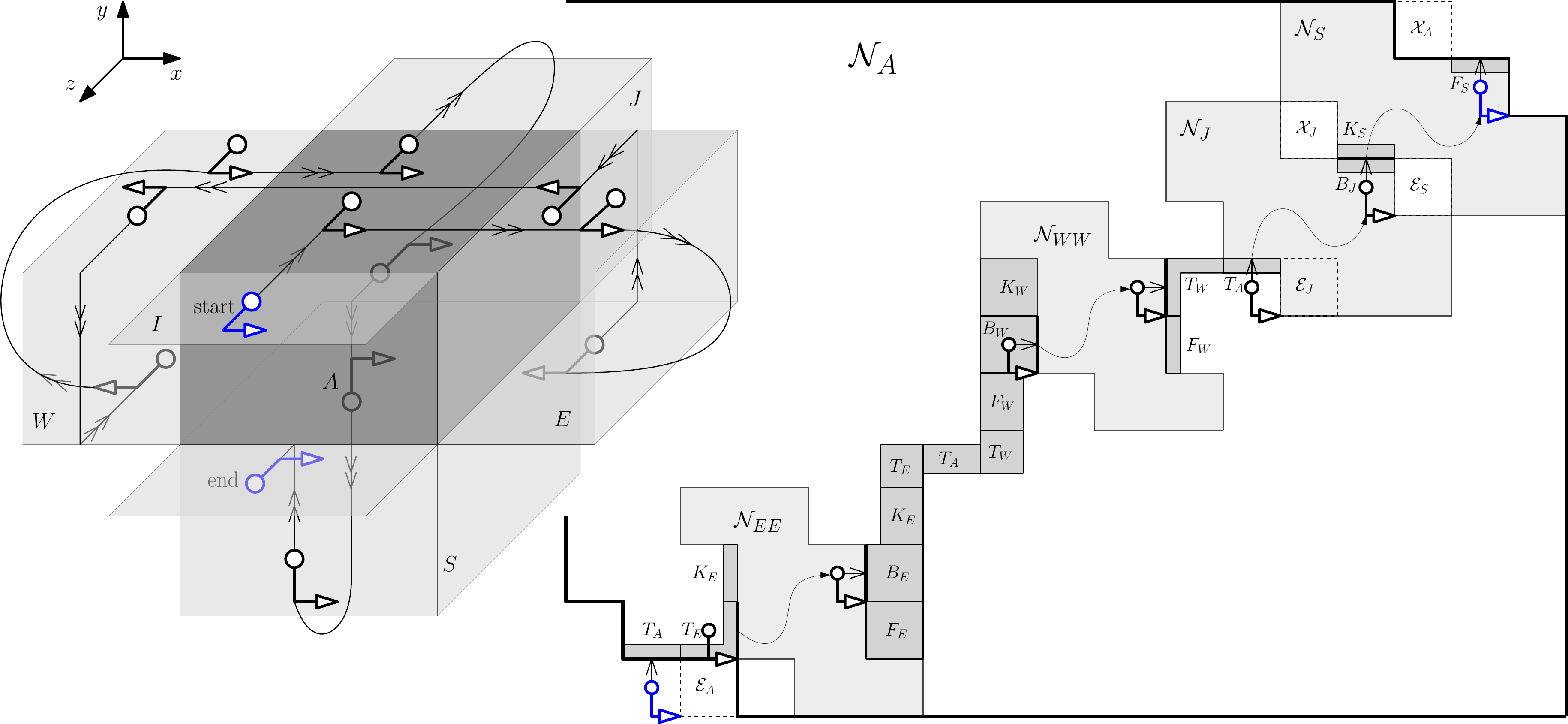}
\caption{Unfolding for box $A$ of degree $5$ with $E$, $W$, $J$ and $S$ children, case when $E$ and $W$ are both non-junctions.}
\label{fig:JEWSdegree5}
\end{figure}
%%%%%%%%%%%%%%%%%%%%%%%%%%%%%%%%%Figure End
The unfolding for the case when $E$ and $W$ are both non-junctions is depicted in~\autoref{fig:JEWSdegree5}. Note that $\N_A$ provides a type-1 entry connection $e'_A \in T_A$ and a type-1 exit connection $x'_A \in F_S$, therefore $\N_A$ satisfies invariant (I2). Also note that (I3) is trivially satisfied, since $A$ has a single open ring face $e'_A \in T_A$ that  is an entry connection. 
The following observations support our claim that $\N_A$ is connected and satisfies invariant (I1):
\begin{itemize}
%\squeezelist
\item The entry and exit ring faces for $EE$, $WW$, $J$ and $S$ are as follows: 
$e_{EE} \in T_E$ and $x_{EE} \in B_E$; 
$e_{WW} \in B_W$ and $x_{WW} \in T_W$; 
$e_J \in T_A$ and $x_J \in K_S$; 
and $e_S \in B_J$ and $x_S \in B_I$.
\item $\N_J$ and $\N_S$ provide type-1 entry and exit connections. This is because $\xrightarrow{e_J} \in R_A$ is closed, 
$\xleftarrow{x_J} \in L_S$ is not adjacent to $\T_J$, and $\xrightarrow{e_S} \in R_J$ and $\xleftarrow{x_S} \in L_I$ are open but not adjacent to $\T_S$. 
\item Since $\xrightarrow{e_{EE}} \in K_E$ and $\xleftarrow{x_{EE}} \in F_E$ are open, the unit squares $\XE_{EE}$ and $\XX_{EE}$ (occupied in~\autoref{fig:JEWSdegree5} by $\xrightarrow{e_{EE}}$ and $F_E$, respectively) do not belong to the inductive region for $EE$. 
\item Since $\xrightarrow{e_{WW}} \in K_W$ and $\xleftarrow{x_{WW}} \in F_W$ are open, the unit squares $\XE_{WW}$ and $\XX_{WW}$ (occupied in~\autoref{fig:JEWSdegree5} by $K_W$ and  $\xleftarrow{x_{WW}}$, respectively) do not belong to the inductive region for $WW$. 
\end{itemize}
If $I$ are $J$ are non-junctions, then we use the unfolding from~\autoref{fig:reduce}, with the understanding that $\N'_A$ is the net from~\autoref{fig:NESJdegree5}. Note that $\N'_A$ provides type-1 entry and exit connections, which implies that the net $\N_A$ from~\autoref{fig:reduce} provides type-2 entry and exit connections. Since $\xrightarrow{e_A} \in R_I$ and $\xleftarrow{x_A} \in L_I$ are open and adjacent to $\T_A$, and $\N'_A$ satisfies invariants (I1)-(I3), we conclude that $\N_A$ satisfies invariants (I1)-(I3). 
\end{proof}

\section{Another Complete Unfolding Example}
\label{sec:example2}
We conclude this paper with another complete unfolding example that incorporates some of the cases presented in the appendices (which could not be included in the first example from Section~\ref{sec:example1}).
%
%%%%%%%%%%%%%%%%%%%%%%%%%%%%%%%%%Figure Begin
\begin{figure*}[h]
\centering
\includegraphics[page=2,width=\linewidth]{example.pdf}
\caption{Unfolding of polycube tree with root $A$ (back child of $A$ is $J$). 
}
\label{fig:example2}
\end{figure*}
%%%%%%%%%%%%%%%%%%%%%%%%%%%%%%%%%Figure End
%
We use as running example a polycube tree composed of nine boxes, depicted in~\autoref{fig:example2}.
The root $A$ of the the unfolding tree is a degree-1 box with back child $J$, which is unfolded recursively. The unfolding of $J$ follows the pattern depicted in~\autoref{fig:NEdegree3a}b, slightly adjusted to accommodate for the fact that $J$ does not have a south-east child. The east-east child of $J$ (labeled $C$ in~\autoref{fig:example2}) follows the unfolding pattern depicted in~\autoref{fig:EWdegree2}a.
The north child of $J$ (labeled $F$ in~\autoref{fig:example2}) follows the unfolding pattern from~\autoref{fig:NEdegree3b}b, 
traversed on reverse (note that the subtree rooted at $F$ is a horizontal mirror plane reflection of the case depicted in~\autoref{fig:NEdegree3b}b, after a clockwise $90^\circ$-rotation about a vertical axis followed by a clockwise $90^\circ$-rotation about a horizontal axis, to bring it in standard position).  Finally, the leaves are unfolded as in~\autoref{fig:degree1}. 

\end{appendix}

\end{document}